\documentclass[pra,twocolumn,superscriptaddress,nofootinbib]{revtex4}
\usepackage[letterpaper, left=0.8in, right=0.8in, top=0.9in, bottom=0.9in]{geometry}
\usepackage{titlesec}

\usepackage{cmap} 
\usepackage[utf8]{inputenc}
\usepackage[english]{babel}
\usepackage[T1]{fontenc}
\usepackage{amsmath}
\usepackage{amsfonts}
\usepackage{bm}
\usepackage{xcolor}
\definecolor{blueviolet}{rgb}{0.2, 0.2, 0.6}
\definecolor{webgreen}{rgb}{0,.5,0}
\definecolor{webbrown}{rgb}{.6,0,0}
\usepackage[pdftex,
	bookmarks=false,
	colorlinks=true, 
	urlcolor=webbrown, 
	linkcolor=blueviolet, 
	citecolor=webgreen,
	pdfstartpage=1,
	pdfstartview={FitH},  
	bookmarksopen=false
	]{hyperref}
\usepackage{tikz}
\usepackage{natbib}
\usepackage{mathtools}
\usepackage{graphicx}
\usepackage{csquotes}
\usepackage{braket}
\usepackage{dsfont}
\usepackage{listings}
\usepackage{comment}
\allowdisplaybreaks

\usepackage{multirow}
\usepackage{amssymb}

\usepackage{amsthm}

\usepackage{color}
\usepackage{nicefrac}

\mathtoolsset{showonlyrefs}     

\DeclareFixedFont{\ttb}{T1}{txtt}{bx}{n}{9} 
\DeclareFixedFont{\ttm}{T1}{txtt}{m}{n}{9}  

\usepackage{color}
\definecolor{deepblue}{rgb}{0,0,0.5}
\definecolor{deepred}{rgb}{0.6,0,0}
\definecolor{deepgreen}{rgb}{0,0.5,0}

\newcommand\pythonstyle{\lstset{
language=Python,
basicstyle=\ttm,
morekeywords={self},              
keywordstyle=\ttb\color{deepblue},
emph={MyClass,__init__},          
emphstyle=\ttb\color{deepred},    
stringstyle=\color{deepgreen},
frame=tb,                         
showstringspaces=false
}}

\lstnewenvironment{python}[1][]
{
\pythonstyle
\lstset{#1}
}
{}

\newcommand\pythoninline[1]{{\pythonstyle\lstinline!#1!}}

\definecolor{orange}{RGB}{255,127,0}

\def\bra#1{\ensuremath{\mathinner{\langle{#1}|}}}
\def\ket#1{\ensuremath{\mathinner{|{#1}\rangle}}}

\newcommand{\ketbra}[2]{\lvert #1 \rangle \! \langle #2 \rvert}

\newcommand{\disp}{{\hat{D}}}
\newcommand{\xh}{{\hat{x}}}
\newcommand{\ph}{{\hat{p}}}

\newcommand{\R}{\mathbb{R}}
\newcommand{\C}{\mathbb{C}}
\newcommand{\Lip}{\mathrm{Lip}}

\newcommand{\indicator}{\mathds{1}}

\newtheorem{example}{Example}
\newtheorem{proposition}{Proposition}
\DeclareMathOperator{\tr}{tr}
\DeclareMathOperator*{\E}{{\mathbb{E}}}
\DeclareMathOperator*{\Var}{\mathrm{Var}}

\newtheorem{theorem}{Theorem}
\newtheorem{corollary}{Corollary}
\newtheorem{definition}{Definition}

\newtheorem{lemma}{Lemma}
\newtheorem{fact}{Fact}

\newtheorem{remark}{Remark}
\newtheorem{algorithmsimp}{Algorithm}

\usepackage[noend]{algpseudocode}
\usepackage{algorithm,algorithmicx}

\algrenewcommand\alglinenumber[1]{\sf\scriptsize\color{blue}{#1}}
\algrenewcommand\algorithmicrequire{\textbf{Input:}}
\algrenewcommand\algorithmicensure{\textbf{Output:}}

\begin{document}

\title{Quantum Advantage for Sensing Properties of Classical Fields}

\author{Jordan Cotler}
\affiliation{Department of Physics, Harvard University, Cambridge, MA 02138 USA}
\affiliation{Harvard Quantum Initiative, Harvard University, Cambridge, MA 02138 USA}

\author{Daine L. Danielson}
\affiliation{Department of Physics, Harvard University, Cambridge, MA 02138 USA}
\affiliation{Black Hole Initiative, Harvard University, Cambridge, MA 02138 USA}
\affiliation{Center for Theoretical Physics, Massachusetts Institute of Technology, Cambridge, MA 02139 USA}

\author{Ishaan Kannan}
\affiliation{Harvard Quantum Initiative, Harvard University, Cambridge, MA 02138 USA}

\date{\today}

\begin{abstract}
Modern precision experiments often probe unknown classical fields with bosonic sensors in quantum-noise–limited regimes where vacuum fluctuations limit conventional readout. We introduce Quantum Signal Learning (QSL), a sensing framework that extends metrology to a broader property-learning setting, and propose a quantum-enhanced protocol that simultaneously estimates many properties of a classical signal with shot noise suppressed below the vacuum level. Our scheme requires only two-mode squeezing, passive optics, and static homodyne measurements, and enables post-hoc classical estimation of many properties from the same experimental dataset. We prove that our protocol enables a  quantum speedup for common classical sensing tasks, including measuring electromagnetic correlations, real-time feedback control of interferometric cavities, and Fourier-domain matched filtering. To establish these separations, we introduce an optimal-transport conditioning method, and show both worst-case exponential separations from all entanglement-free strategies and practical speedups over homodyne and heterodyne baselines. We further show that when squeezing is treated as a resource, a protocol with squeezed light can sense a structured classical background exponentially faster than any coherent classical probe.

\end{abstract}

\maketitle

\section{Introduction}
\vspace{-1em}
From gravitational-wave interferometers and microwave cavities to optical and circuit-QED transducers, many leading precision platforms now operate in regimes limited not by instrumental noise but by quantum fluctuations of the electromagnetic field \cite{Aasi2015AdvancedLIGO,Tse2019QuantumEnhancedLIGO,Brubaker2017HAYSTAC,Backes2021QuantumEnhancedAxion,Macklin2015JTWPA,Purdy2013RPShotNoise}. In such experiments the signal is often a classical field, for example ambient radiation, a fluctuating force, or a time-dependent control imperfection, coupled linearly to a bosonic sensor. The resulting inference tasks are rarely captured by the textbook single-parameter model $U_\theta(T) = e^{-i\theta G T}$ with known generator $G$. Instead one typically cares about structured properties of an uncertain environment, such as quadrature correlations in an electromagnetic transducer \cite{Malnou2019SqueezedVacuumAxion,Gould2021EntangledWitnessQNC,Qiu2023BroadbandSqueezedMicrowaves,Vaartjes2024StrongMicrowaveSqueezing}, phase-space mixing in an interferometric cavity \cite{Oelker2016AudioBandFDS,McCuller2020FDSAdvLIGO,Ganapathy2023BroadbandFDSSqueezing,Jia2024SQLScience}, or the ability to score many candidate waveforms using post-hoc template-bank queries \cite{Allen2005ChiSquare,Allen2012FINDCHIRP,Babak2013IHOPE,Usman2016PyCBC,Messick2017GstLAL,
DalCantonHarry2017O2TemplateBank,Roulet2019GeometricTemplateBank,Mukherjee2021GstLALTemplateBank}. These are distributional questions, since uncertainty in the field and in prior information induces a nontrivial distribution over sensor responses, and the object of interest is a functional of that distribution rather than a single parameter.

\begin{figure*}[t!]
\setlength{\abovecaptionskip}{2pt}
\setlength{\belowcaptionskip}{-4pt}

  \includegraphics[width=\textwidth]{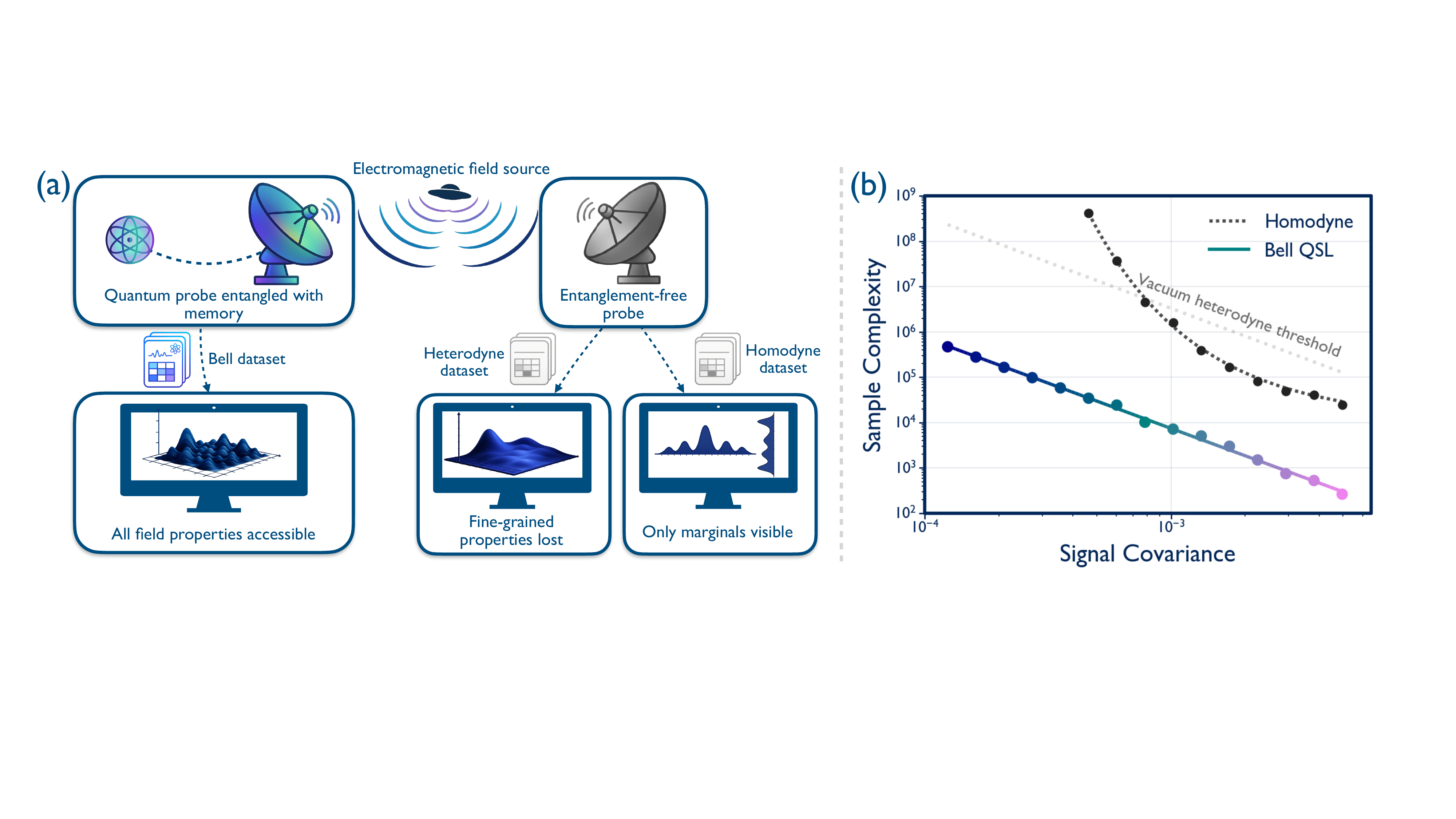}
 \caption{(a) \textit{Quantum-enhanced signal learning vs.~conventional electromagnetic (EM) sensing.} Bell QSL leverages probes entangled with quantum memory to collect a classical record that is used for quantum-limited, informationally-complete estimation of the EM field distribution. Heterodyne measurements suffer a vacuum noise penalty, blurring fine-grained features, while homodyne measurements only resolve marginals that may not contain higher-dimensional properties. (b) \textit{Quantum advantage for estimating correlations in an electromagnetic field.} Minimum sample complexity required to estimate a Gaussian EM field covariance $c$ in the shot-noise-limited regime, as in \eqref{eq:Sigma_c}. At $r=2$ and $|\sigma_x^2 - \sigma_p^2| \ll c$, in the presence of white noise with amplitude $0.2 c$, Bell QSL scales polynomially and orders of magnitude below the vacuum noise limit, achieving quantum advantage over both homodyne and heterodyne tomography.}
\label{fig:em_field}
\vspace{-3pt}
\end{figure*}

A basic obstacle is that the standard measurement primitives do not simultaneously deliver quantum-limited precision and broad learnability. Heterodyne detection, and more generally entanglement-free joint quadrature readout, is informationally complete and supports post-hoc analysis, but it carries an irreducible vacuum noise penalty that cannot be circumvented with probe squeezing \cite{Serafini2017QuantumContinuousVariables}. Homodyne measurement can resolve sub-vacuum fluctuations along one quadrature and is therefore well matched to carefully aligned tasks, but it only reveals one-dimensional marginals of the underlying phase-space response. In practice, rapidly and reliably preparing squeezed probes aligned with many homodyne quadrature angles is technically costly, so the available baseline is often a small fixed set of angles. This creates a dilemma in quantum-noise-limited sensing. One can either collect flexible data at the vacuum noise floor, or sharper, quantum-limited data that cannot access intricate phase-space structure.

We show that entanglement resolves this dilemma and yields a quantum advantage for sensing classical fields. Our protocol prepares a two-mode squeezed vacuum (TMSV) between a sensing mode and an idler, applies the unknown field-induced channel to the sensing mode, and performs a standard continuous-variable (CV) Bell measurement of commuting EPR quadratures \cite{BraunsteinKimble1998,Furusawa1998, BraunsteinKimble2000DenseCodingCV, Serafini2017QuantumContinuousVariables, jiang2024bellfix}. Operationally, the classical field induces an unknown phase-space displacement $\alpha$ on the sensing mode, and the Bell measurement produces a complex outcome $\zeta$ that is a blurred snapshot of this displacement. $\zeta$ satisfies an additive-noise model $\zeta = \alpha + Z$, where $Z$ is an independent complex Gaussian with law $\mathcal{N}_{\mathbb{C}}(0,e^{-2r})$ set by the finite EPR correlations of the two-mode squeezed state. Thus each channel use yields simultaneous estimates of both quadratures, and the noise in both estimates is suppressed below the vacuum level by a factor $e^{-2r}$. This should be contrasted with heterodyne, which yields the same two-quadrature structure but with an irreducible vacuum variance. The resulting dataset is informationally complete and supports post-hoc learning, since many properties of the underlying signal can be estimated from the same measurement record without rerunning the experiment.

We formalize these tasks via a learning framework we call Quantum Signal Learning (QSL), which models classical-field sensing as estimating properties of the phase-space response induced by a linear bosonic coupling. Within this framework we give estimator families for a wide class of physical properties and prove sample-complexity guarantees that improve with squeezing. We then establish quantum advantage in two complementary senses. First, in worst-case regimes relevant to template-bank scoring, our Bell QSL protocol enables exponential improvements for Fourier-domain matched-filter queries over any entanglement-free data collection strategy. Second, motivated by realistic laboratory constraints, we introduce an optimal-transport-based conditioning framework that quantifies when restricted-angle homodyne becomes ill-conditioned for learning genuinely two-dimensional phase-space features, while Bell QSL remains efficient. We illustrate these advantages in experimentally commonplace sensing problems motivated by correlated classical electromagnetic fields and phase rotation noise in interferometric cavities.

These separations follow the bosonic-metrology premise that probe energy governs sensing precision, placing squeezing and displacement on equal footing \cite{Safranek_2016} and treating entanglement as the advantage-generating resource.  While we argue for the importance of this interpretation, we also demonstrate that unentangled, squeezed-probe strategies can sense a fluctuating classical field exponentially faster than any coherent-state classical probe. In complementary work, \cite{Robert_unreleased} studies the unique power of squeezed light for sensing time-varying signals, achieving exponential gains over any squeezing-free classical protocol. Squeezing and entanglement thus emerge as distinct sources of quantum advantage in classical-field sensing.

\vspace{-1em}
\section{Quantum Signal Learning}
\vspace{-1em}

An $n$-mode bosonic signal Hamiltonian consists of linear terms in the quadratures. This Hamiltonian circumscribes most classical-field sensing tasks, where quantum-classical interactions usually enter as c-number couplings to the canonical quadratures.
\begin{equation}
H(t) = \sum_{j=1}^n f_{x, j}(t) \xh_j + f_{p, j}(t) \ph_j.
\end{equation}
 QSL begins with the observation that the time-evolution channel induced by evolving $n$ bosonic probes under $H(t)$ for total time $T$ can be described by a quantum channel
\begin{equation}
\mathcal{E}_H^T(\rho) = \int d^{2n}\alpha \ P_H^{(T)}(\alpha)\disp(\alpha)\rho\disp(\alpha)^\dagger\ ,
\end{equation}
where the linear structure of $H$ is realized in the channel consisting of only displacement terms. With this representation, any property $\Psi$ of the signal is encoded in the classical density $P_H^{(T)}$, which can be extracted by selecting a function $\psi(\alpha)$, called the \textit{property kernel}, and convolving with $P_H^{(T)}$:
\begin{equation}
\Psi\!\left(P_H^{(T)}\right) = \int d^{2n}\alpha \  P_H^{(T)}(\alpha)\psi(\alpha) . 
\end{equation}
In this manner, classical field properties are linear functionals of the displacement distribution induced by the signal. The QSL problem follows from this observation.
\begin{definition}[Quantum Signal Learning] Given a linear Hamiltonian $H$ and an evolution time $T$, the Quantum Signal Learning problem \textnormal{\textsc{QSL}}$(H, T, \Psi, \epsilon, \delta)$ is to estimate the selected linear functional $\Psi(P_H^{(T)})$ to within absolute error $\epsilon$, with success probability at least $1-\delta$, given query access to $\mathcal{E}_{H}^{(T)}$. 
\end{definition}
\begin{figure}[t]
    \centering
    \includegraphics[width=\linewidth]{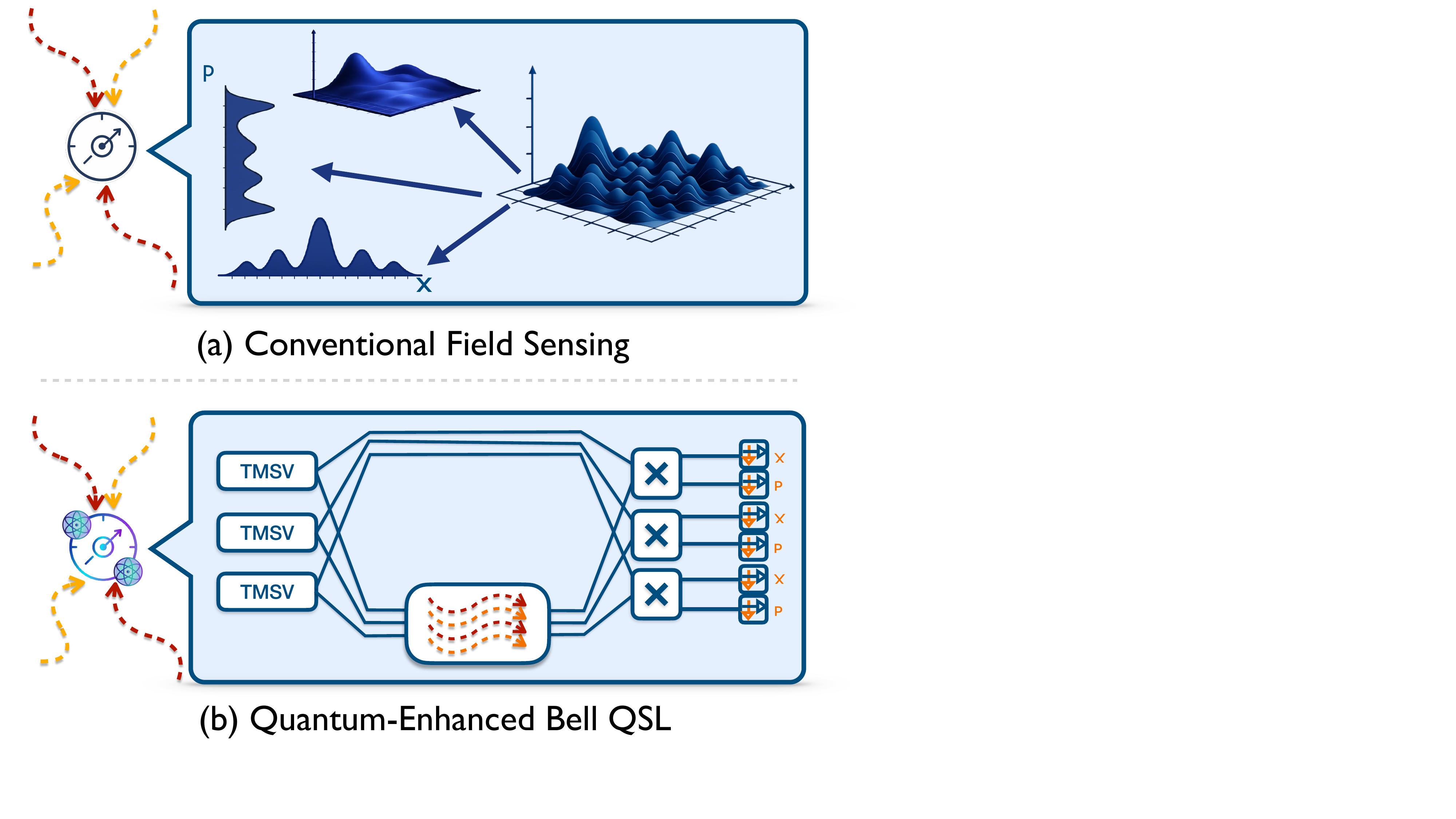}
    \caption{(a)\textit{ Conventional bosonic sensing of classical fields}. Homodyne measurements project high-dimensional field distributions onto chosen marginal axes, losing information, while heterodyne measurements blur sharp features with vacuum noise.  (b) \textit{Bell QSL}. Entangled two-mode squeezed vacuum probes are prepared, with half the modes acting as probes and the other half stored in quantum memory. Beamsplitters and homodyne measurements constitute CV Bell-basis measurement, enabling informationally complete, sharp property recovery.}
    \label{fig:algorithm}
\end{figure}
Several natural properties of classical fields can be viewed neatly in this framework. For instance, for a single mode, the kernel $\psi_{\mathrm{cov}} = xp = \Re(\alpha)\Im(\alpha)$ specifies a property $\Psi_{\mathrm{cov}}$ which is exactly the covariance of a (mean zero) classical field. Alternatively, $\psi_{k}(\alpha) = e^{ik\alpha}$ induces that $\Psi_{k}(\alpha)$ is exactly the characteristic function (i.e.~symplectic Fourier transform) of $P_H^{(T)}$, evaluated at $k\in\mathbb{C}^n$. QSL thus subsumes the canonical single-parameter sensing setting, where distributions are delta functions, and encompasses a wide array of multiparameter fields, stochastic signals and noise sources. Moreover, beyond linearity, which holds for all relevant classical sources, QSL does not assume prior knowledge of the signal's structure as is otherwise necessary in the canonical metrology setting.

\vspace{-1em}
\subsubsection*{The Bell QSL Algorithm}
\vspace{-1em}

Now we provide an efficient, entanglement-enabled algorithm for the QSL problem. Our algorithm, called Bell Quantum Signal Learning (Bell QSL), has several practically useful properties. The algorithm collects a classical dataset that can be reused for arbitrary post-hoc property estimation as it is informationally complete. While our estimators provide rigorous guarantees for property estimation, the dataset is more feature-rich than entanglement-free counterparts, and can be used as input to machine learning pipelines (see Fig.~\ref{fig:unsupervised} later on). Experimentally, the algorithm only requires constant-depth quantum information processing in the form of two-mode entangled state preparation and passive beamsplitters. Moreover, it only utilizes measurements in a static homodyne basis, requiring no real-time control or re-locking of squeezing axes, and achieves optimal uncertainties in all quadrature readouts. 

Let $T_r$ denote the operator that maps any $f:\mathbb{C}^n\rightarrow \mathbb{C}$ to its squeezed-vacuum-blurred measurement distribution by convolution against a mean-zero Gaussian density with variance $e^{-2r}$. The algorithm, depicted in Figure~\ref{fig:algorithm}, proceeds in the following steps.

\begin{algorithmsimp}[Bell QSL] Given access to an $n$-mode linear Hamiltonian $H$ for time $T$, perform the following steps.
\begin{enumerate}
    \item Prepare $n$ two-mode squeezed vacuum states, and allow one bipartition to evolve under $H$ for time $T$ while storing the other half in quantum memory. Then, interfere each pair on a 50:50 beamsplitter, and for each two-mode pair $j$, measure quadratures $\xh_1^{(j)}, \ph_2^{(j)}$.

    \item Collect outcome $\zeta = \{\zeta_1, ...,\zeta_n\}$, with $\zeta^{(j)} = \xh_1^{(j)}+i \ph_2^{(j)}$. Repeat $N$ times.
    
    \item Given property kernel $\psi$, return the estimator
    \begin{equation} \label{eq:general_estimator_main_text}
        \hat{\Psi} = \frac{1}{N}\sum_{i=1}^N g_\psi(\zeta^{(i)}) \ ,
    \end{equation}
    where $g_\psi$ is the inverse of $\psi$ under $T_r$.
\end{enumerate}
\end{algorithmsimp}
In practice, $\hat{\Psi}$ can always be computed using the Fourier transform of $\psi$ when it is well defined, or by a particular polynomial transformation of measurement shots; these estimators are discussed in detail, with fully general sample complexity analyses, in Appendices \ref{sec:Fourier_alg} and \ref{sec:distribution_alg}.

The data-collection strategy of Bell QSL is a well-studied primitive across CV quantum optics and information. Originally established for CV teleportation and dense coding \cite{BennettWiesner1992, Bennett1993Teleportation}, recent works have used Bell-basis measurements to establish quantum speedups in certain sensing and learning settings \cite{jiang2024bellfix, Prabhu_2026}. Our central contributions lie in applying this  experimental primitive within the QSL framework to extract utility for a wide array of practical sensing tasks, and in developing novel theoretical methods to establish quantum speedups in these regimes. In the following sections, we analyze the performance of Bell QSL in multiple regimes, establishing that our algorithm can achieve practical quantum advantages over conventional benchmarks.

\vspace{-1em}
\section{Applications of Bell QSL} \label{sec:applications}
\vspace{-1em}
Here we demonstrate the utility of Bell QSL for shot-noise-limited field sensing, including characterizing correlated electromagnetic fields and real-time feedback control in interferometers. 

\vspace{-1em}
\subsection{Electromagnetic Field Sensing}
\label{sec:em_correlation}
\vspace{-1em}
A broad class of electromagnetic sensors, from microwave and circuit-QED resonators to narrowband optical cavities, can be modeled as a single bosonic mode driven by an incident classical field.  An $LC$ mode with canonical charge and flux operators $[\hat\Phi,\hat Q]=i\hbar$ has the free Hamiltonian $H_0=\hat Q^{\,2}/(2C)+\hat\Phi^{\,2}/(2L)$. An incident classical electromagnetic field induces an effective electromotive force and flux bias determined by the sensor geometry,
\begin{equation}
V_{\rm eff}(t)\equiv \oint_{\mathcal C}\mathbf E(\mathbf r,t)\cdot d\boldsymbol\ell, \quad \Phi_{\rm ext}(t)\equiv \int_{\mathcal S}\mathbf B(\mathbf r,t)\cdot d\mathbf A,
\end{equation}
so that (up to an irrelevant c-number) the driven Hamiltonian can be written as
\begin{equation}
H(t) = H_0 - V_{\rm eff}(t)\,\hat Q - I_{\rm eff}(t)\,\hat\Phi, \quad I_{\rm eff}(t):=\Phi_{\rm ext}(t)/L.
\label{eq:em_transducer_H}
\end{equation}
An unknown EM environment enters through two linear couplings to conjugate quadratures, one ``electric-like'' ($V_{\rm eff}\hat Q$) and one ``magnetic-like'' ($I_{\rm eff}\hat\Phi$). Treating $(V_{\rm eff}(t), I_{\rm eff}(t))$ as stochastic processes, natural when modeling uncertain signals, clutter, or fluctuating backgrounds, bounded-time evolution induces a displacement channel with endpoint distribution $P_H^{(T)}(\alpha)$. (The reduction to this channel, including the rotating-frame drive and explicit displacement amplitude, is given in App.~\ref{sec:prac_Hams}.)

In many common operating regimes (e.g.~narrowband sensing, aggregation over many microscopic field contributions, or approximately Gaussian priors), the induced endpoint displacement $\alpha$ is well modeled as a centered Gaussian
on phase space,
\begin{equation}
\alpha \sim \mathcal N_\mathbb{C}\big(0,\Sigma(c)\big),\quad \Sigma(c)=
\begin{pmatrix}
\sigma_x^2 & c\\
c & \sigma_p^2
\end{pmatrix},
\label{eq:Sigma_c}
\end{equation}
where $c$ encodes a \emph{joint} quadrature covariance of the incident field as seen by the transducer \cite{Menzel2010DualPath,RevModPhys.82.1155,Tsang2011,Mallet2011ItinerantSqueezed}. In general, electromagnetic dynamics and boundary conditions tie ``electric-like'' and ``magnetic-like'' degrees of freedom, and after transduction they appear precisely as correlations between conjugate quadratures.

In the shot-noise limited regime, we are interested in estimating values of $c \ll 1$, below the vacuum-limited heterodyne variance. Moreover, squeezed homodyne tomography can only be executed along fixed measurement axes, because of the challenge of re-locking squeezing axes during a short sensing window. Even in this simple Gaussian-covariance estimation problem, one can see that e.g.~a pair of hypothesis covariances $P_{\pm}=\mathcal N\big(0,\Sigma(\pm c)\big)$ share identical $x$ and $p$ marginals, differing in the sign of the correlation. Common homodyne measurements at angles $\Theta=\{0,\pi/2\}$, which image these marginals up to squeezed-Gaussian smoothing, cannot distinguish these hypotheses.

This ``hard instance'' can be extended in a practically meaningful way by moving slightly away from exact isotropy.
Concretely, consider a near-isotropic family in which the marginal variances are almost equal
(e.g.\
$\sigma_x^2=\sigma^2+\Delta/2$ and $\sigma_p^2=\sigma^2-\Delta/2$ with $|\Delta|\ll \sigma^2$)
, so that the $x$- and
$p$-marginals are nearly insensitive to the correlation parameter even though the physically relevant information
resides in the joint quadrature correlation $c$.
The key observation is that while instances which are strictly \textit{impossible} for homodyne learners are
knife-edge phenomena, common learning tasks \textit{near} this isotropic worst-case will be very hard for
entanglement-free homodyne protocols, which access $c$ only through small symmetry-breaking effects, resulting in an
ill-conditioned inference problem. Meanwhile, Bell QSL will extract $c$ with optimal estimator variance, regardless of the instance, simply by using the empirical covariance $\hat c=\frac{1}{N}\sum_{i=1}^N \zeta_x^{(i)}\zeta_p^{(i)}$.

\begin{theorem}[Practical quantum advantage in learning an EM correlation, informal] 
\label{thm:EM_advantage_informal}
Consider the near-isotropic EM family characterized by \eqref{eq:Sigma_c} with $\sigma_x^2,\sigma_p^2, c \ll O(1)$ and $\Delta = |\sigma_x^2 - \sigma_p^2| \ll |c|$. Then Bell QSL distinguishes the pair using $O(e^{-4r}/c^2)$ shots, where $r$ is the squeezing parameter, while any adaptive $\{x, p\}$ homodyne strategy requires at least $\Omega(e^{-4r}/\Delta^2)$ shots to do so with high probability.
\end{theorem}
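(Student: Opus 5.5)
Both bounds come from comparing how far apart the two hypotheses $P_\pm=\mathcal N\big(0,\Sigma(\pm c)\big)$ land after the data-collection step of each strategy; a KL or Hellinger bound on the per-shot law then controls the sample complexity. Throughout I work in the shot-noise-limited scaling in which all relevant variances, $\sigma_x^2,\sigma_p^2,|c|$ and the added noise, are of order $e^{-2r}$. The informal $e^{-4r}$ prefactors are the squared noise scale under exactly this regime, and I will state it explicitly as an assumption.

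\emph{Upper bound (Bell QSL).} By the additive model $\zeta=\alpha+Z$ with $Z\sim\mathcal N_{\mathbb C}(0,e^{-2r})$ independent of $\alpha$, each Bell shot is distributed as $\mathcal N\big(0,\Sigma(\pm c)+\tfrac{e^{-2r}}{2}I\big)$. The noise adds only to the diagonal, so the off-diagonal entry is exactly $\pm c$. I will use the estimator $\hat c=\frac1N\sum_i\zeta_x^{(i)}\zeta_p^{(i)}$, which is unbiased for $\pm c$. Its variance is $\big[(\sigma_x^2+e^{-2r}/2)(\sigma_p^2+e^{-2r}/2)+c^2\big]/N$ by Isserlis' theorem, which is $O(e^{-4r})/N$ in this regime. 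The products $\zeta_x\zeta_p$ are sub-exponential, so Bernstein's inequality shows that the sign test $\operatorname{sign}(\hat c)$ errs with probability at most $\delta$ once $N=O\big(e^{-4r}\log(1/\delta)/c^2\big)$. This rests on the general Bell QSL sample-complexity guarantees of the appendices, specialized to the kernel $\psi_{\mathrm{cov}}=xp$, for which $T_r$ inverts trivially because $g_\psi=\psi$.

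\emph{Lower bound (homodyne).} An adaptive $\{x,p\}$ homodyne strategy, even with squeezed probes, observes one shot at a time from either the $x$-marginal $\mathcal N(0,\sigma_x^2+s_x)$ or the $p$-marginal $\mathcal N(0,\sigma_p^2+s_p)$. Here $s_{x},s_p\ge e^{-2r}/2$ are the residual probe noises, and the choice of quadrature may depend on past outcomes. A single quadrature of a Gaussian does not depend on the off-diagonal entry of $\Sigma$ at all, so exactly as written, $P_+$ and $P_-$ give identical shot distributions. The homodyne learner then fails at every $N$, and the lower bound holds trivially.

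The main obstacle is to make the bound meaningful in the near-isotropic family the theorem intends, where I will define the pair so that the signal enters the marginals only through $\Delta$. The natural choice is two instances related by the correlation rotation, so that the hypotheses differ by $c\to -c$ together with a marginal shift $\sigma_x^2\leftrightarrow\sigma_p^2$. Their $x$-marginals then differ in variance by $\Delta$, on a baseline variance $v=\Theta(e^{-2r})$. The per-shot KL divergence between $\mathcal N(0,v)$ and $\mathcal N(0,v+\Delta)$ is $\approx \Delta^2/(4v^2)=O(\Delta^2 e^{4r})$. For adaptive strategies I will apply the chain rule for KL: the divergence of the full transcript is at most $N\max_{\text{quadrature}}\mathrm{KL}_{\text{shot}}$, since conditioned on the history each shot comes from one of two fixed Gaussian pairs. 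Pinsker's inequality, or Bretagnolle--Huber, then forces $N=\Omega\big(e^{-4r}/\Delta^2\big)$ for success probability $2/3$. This is below the Bell cost exactly when $\Delta\ll|c|$, which gives the separation.

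The expected difficulty is fixing the hypothesis pair so that the statement is non-vacuous. I will try the rotated pair first and check that the Bell bound is unaffected, since its covariances still differ by $2c$ off the diagonal.
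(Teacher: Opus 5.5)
Your proposal is correct and follows the paper's route (Appendix~\ref{sec:ex_gaussian_covar}). On the Bell side you use the empirical product estimator with its Isserlis variance at precision $c/2$, as in Proposition~\ref{prop:prac_bell_cov} and Corollary~\ref{cor:prac_eps_gauss_bell}; on the homodyne side you use the KL bound for variance-shifted Gaussian marginals together with Pinsker, as in Lemma~\ref{lem:prac_eps_gauss_kl} and Theorem~\ref{thm:prac_eps_gauss_lower}. The differences are minor. The paper's pair is $\mathcal N(0,\Sigma_+)$ versus $\mathcal N(0,\Sigma_{-,\varepsilon})$, which shifts only the $x$-variance, rather than your quarter-turn pair. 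Your KL chain-rule step explicitly covers adaptive strategies, whereas the paper treats only i.i.d.\ rounds. Finally, your closing sentence is backwards: when $\Delta\ll|c|$, the homodyne cost lies \emph{above} the Bell cost, not below it.
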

Theorem \ref{thm:EM_advantage_informal} implies a separation on the sample complexity of learning $c$, because homodyne strategies can often fail by a sign error. The performance gap between Bell QSL and entanglement-free benchmarks, which improves with squeezing, is demonstrated in Fig. \ref{fig:em_field}. In  Sec.~\ref{sec:OT_main_text} (and App.~\ref{sec:transport_QA}) we formalize this notion of ill-conditioning and a ``distance to marginal blindness'' using an optimal-transport stability framework.

\vspace{-1em}
\subsection{Feedback Control in Interferometric Cavities}
\label{sec:cavity_noise}
\vspace{-1em}
Beyond direct sensing of external fields, a natural application is learning internal cavity or interferometer noise parameters that appear as Gaussian distortions of optical quadratures. In high-sensitivity interferometers, including gravitational-wave detectors such as LIGO, many relevant technical and quantum noise processes are well approximated as Gaussian, so their net effect on a probe field is captured by a small number of covariance parameters: loss, squeezing strength, and quadrature rotation~\cite{CavesSchumaker1985TwoPhotonI, BuonannoChen2001SRQuantumNoise, DanilishinKhalili2012LRR}. However, these parameters can be difficult to track in real-time sensing from homodyne marginals alone.

A common effective model for phase or control imperfections is an unknown quadrature rotation,
\begin{equation}
\begin{pmatrix}\hat x\\ \hat p\end{pmatrix}
\longmapsto
R(\theta)\begin{pmatrix}\hat x\\ \hat p\end{pmatrix},
\quad
R(\theta)=
\begin{pmatrix}
\cos\theta & -\sin\theta\\
\sin\theta & \cos\theta
\end{pmatrix},
\label{eq:rotation_channel}
\end{equation}
where $\theta$ may represent squeezing-angle drift, detuning-induced quadrature mixing, or more general phase-space misalignment. Bell QSL provides a natural real-time monitoring strategy that can be used to learn $\theta$ with a practical speedup. 

Concretely, one can apply a known displacement $\beta$ onto the probe arm of a TMSV, inject it into a sideband of the interferometer near the primary sensing band, and run the remainder of Bell QSL as usual on the returned probe and stored idler. As shown in App.~\ref{sec:D4-interferometer-noise}, the resulting data $\zeta^{(i)}$ then has mean $\E[\zeta] = e^{i\theta}\beta$ while its per-quadrature measurement noise is $\nu_r = \tfrac12 e^{-2r}$ around standard, small $\theta$. Thus the sample mean $\bar\zeta=\frac{1}{N}\sum_{i=1}^N \zeta^{(i)}$ yields the signed phase estimate $\hat\theta=\arg\!\left(e^{-i\arg\beta}\bar\zeta/|\beta|\right)$, with sample complexity $N=O(\nu_r/(|\beta|^2\epsilon^2))$ for $\epsilon$-accurate estimation, in direct analogy with the moment estimators used for Gaussian covariance learning (Sec.~\ref{sec:ex_gaussian_covar}). Both quadrature components are recoverable from every measurement. 

Meanwhile, heterodyne measurements incur the standard multiplicative overhead of $e^{2r}$. Because $\theta$ lies along an unknown axis, homodyne measurements must be performed adaptively to identify both its magnitude and sign, and each measurement shot may carry little information about one of the two quadratures. In real-time measurements, this adaptive scanning may be experimentally intractable in conjunction with axis-aligned squeezing, preventing entanglement-free error-mitigation beyond the vacuum-noise limit.

Operationally, the two-mode-squeezed light may be injected into optical sidebands at each analysis frequency, preventing interaction with the primary signal, while the idler can be kept anywhere outside the main readout band. This enables non-invasive quantum-enhanced noise monitoring, and data can be fed back to stabilize operating parameters such as squeezing angle or detuning \cite{McCuller2020FDSAdvLIGO,Yap2020EPRFDS, Gould2021EntangledWitnessQNC}. In this way, Bell QSL interfaces naturally with the control layer that often determines whether shot-noise-limited interferometry can be sustained over long integration times.

\vspace{-1em}
\section{Quantum Advantage for QSL} \label{sec:QA_main_text}
\vspace{-1em}
In Section~\ref{sec:applications}, we proposed applications of Bell QSL to learning electromagnetic signals and noise, and demonstrated a simple example of practical quantum advantage in learning field correlations. Here, we establish much more general results on quantum advantage in QSL.

\vspace{-1em}
\subsection{Matched Filtering}
\vspace{-1em}
The quantum advantage in Theorem~\ref{thm:EM_advantage_informal} was established against a practically motivated restricted-angle homodyne baseline. Bell QSL also achieves a worst-case, advantage
against \emph{any} entanglement-free data-collection strategy, including adaptive protocols with arbitrarily complex measurement bases.  This separation arises in matched filtering and beamforming, ubiquitous sensing primitives in which one scores a large bank of candidate templates against a signal measurement record and selects the hypotheses most consistent with the data. Variants of this procedure appear throughout signal processing, from communications to radar, and are central to modern high-precision searches such as template-bank pipelines in gravitational-wave detectors, where the number of candidate waveforms can be enormous~\cite{Allen2005ChiSquare,Allen2012FINDCHIRP,Babak2013IHOPE,Usman2016PyCBC,Messick2017GstLAL,
DalCantonHarry2017O2TemplateBank,Roulet2019GeometricTemplateBank,Mukherjee2021GstLALTemplateBank}.

Concretely, consider an $n$-mode resonant waveform
\begin{equation}
    F(t)=\sum_{k=1}^n \sqrt{\omega_k}\big(A_k\cos(\omega_k t)+B_k\sin(\omega_k t)\big),
\end{equation}
which induces a random displacement $\alpha \in \mathbb{C}^n$ with law $P(\alpha)$. A template $w \in \mathbb{C}^n$, corresponding to a hypothesis for the 2$n$ values $\{A_k, B_k\}$, defines the matched-filter output $Y_w = \Re(\alpha^\dagger w)$. For Fourier-domain scoring we use the bounded feature $\varphi_{Y_w}(t) = \E[e^{i t Y_w}]$, which is a numerically stable surrogate for maximum-likelihood-based scores and is especially natural when the signal modes are independent in the Fourier basis. Crucially, a query $(w,t)$ is equivalent to a characteristic-function evaluation via the relation $\varphi_{Y_w}(t) = \chi_P(itw/2)$ (App.~\ref{sec:worst_case_QA}). Scoring a bank of $M$ templates thus reduces to estimating $\chi_P$ at $M$ points. With this observation, we find that Bell QSL achieves an exponential worst-case advantage in estimating sinusoidal waveforms via matched filtering.

\begin{theorem}[Worst-case quantum advantage for matched filtering, informal]
\label{thm:matched_filtering_informal}
There exist priors over the sinusoidal coefficients $\{A_k,B_k\}$ for which any entanglement-free protocol (including those with squeezed probe states) requires $N \ge \Omega(3^n)$ channel uses to score any single template to constant accuracy. In contrast, Bell QSL with two-mode squeezing $r = \Theta(\log n)$, can estimate all $M$ scores to accuracy $\epsilon$ using only $N = O(\epsilon^{-2}\log M)$ shots, independent of $n$.
\end{theorem}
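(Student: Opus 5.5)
The upper bound and the lower bound can be handled separately. The upper bound uses the additive-noise model already established for Bell QSL. Each shot gives $\zeta=\alpha+Z$ with $Z\sim\mathcal N_{\mathbb C}(0,e^{-2r}I_n)$ independent of $\alpha$, so $\chi_\zeta(k)=\chi_P(k)\,\chi_Z(k)$, where $\chi_Z(k)=e^{-c\,e^{-2r}|k|^2}$ is a known Gaussian factor. Using the relation $\varphi_{Y_w}(t)=\chi_P(itw/2)$, the unbiased estimator for query $(w,t)$ is
\[
\hat\varphi_{w,t}=\frac{1}{N}\sum_{i=1}^N e^{\,c\,e^{-2r}t^2|w|^2/4}\,e^{\,i t\,\Re(\zeta^{(i)\dagger}w)} .
\]
This is exactly $g_\psi$ from the Fourier estimator, with the inverse of $T_r$ being multiplication by $1/\chi_Z$. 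Each summand is bounded in modulus by $e^{c e^{-2r}t^2|w|^2/4}$. For templates with $t|w|=O(\sqrt n)$, which is the relevant regime because the hard instance lives at this scale, taking $r=\tfrac12\log n+O(1)$ makes this factor $O(1)$. Hoeffding's inequality on real and imaginary parts, followed by a union bound over the $M$ queries, then gives $N=O(\epsilon^{-2}\log(M/\delta))$, independent of $n$. All queries reuse the same dataset.

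For the lower bound, I would import the worst-case construction for entanglement-free displacement-channel learning from the CV Bell-sampling literature, in the spirit of \cite{jiang2024bellfix}, and fit it to the matched-filtering setting. Fix a lattice of scale $\Theta(\sqrt n)$. Consider the hypothesis $P_0=$ an isotropic Gaussian of suitable variance, against a family $P_{k}(\alpha)\propto P_0(\alpha)\,[1+\cos(\Re(k^\dagger\alpha))]$, where $k$ is drawn uniformly from a set of $3^n$ directions, for instance $k\in\{-1,0,1\}^n$ scaled by $\Theta(\sqrt n)$. These correspond to priors on $\{A_k,B_k\}$ via the linear map from waveform coefficients to the displacement, which is given in App.~\ref{sec:prac_Hams}.

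Under $P_k$, the score $\varphi_{Y_w}(t)$ at the template matched to $k$ differs from the $P_0$ value by a constant. So scoring that template to constant accuracy solves the many-versus-one distinguishing task $P_0$ versus $\E_k P_k$.

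The argument proceeds as follows. First, model any entanglement-free adaptive protocol as a learning tree in which each node prepares an arbitrary (possibly squeezed) $n$-mode state $\rho$, applies $\mathcal E_{P}$, and measures an arbitrary POVM. Second, by the standard Le Cam two-point / learning-tree argument, it suffices to show that for every node the averaged likelihood ratio satisfies
\[
\E_k\Big[\tfrac{p_k(o)}{p_0(o)}\Big]\ge 1-O(3^{-n})
\]
for every outcome $o$, up to a one-sided deviation that is summable. Third, compute
\[
\E_k\,\mathcal E_{P_k}(\rho)-\mathcal E_{P_0}(\rho)=\tfrac12\E_k\int P_0(\alpha)\,(e^{i\ldots}+e^{-i\ldots})\,D(\alpha)\rho D(\alpha)^\dagger\,d\alpha .
\]
This equals a Gaussian-smoothed displacement twirl of $\rho$ evaluated at the characteristic-function points $\pm k$. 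The total contribution over $k$ is bounded by $\sum_k|\chi_\rho(k)|^2 e^{-\sigma^2|k|^2}$, and Parseval/purity gives $\int|\chi_\rho|^2\le(\pi)^n$. Hence an entanglement-free single-copy probe can only have $O(1)$ total weight across the $3^n$ points, so the average correlation is $O(3^{-n})$ per shot. This yields $N=\Omega(3^n)$.

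The main obstacle is the third step: making the bound hold uniformly over all probe states, including highly squeezed ones. A squeezed probe can concentrate $|\chi_\rho|$ along one direction, so the design must ensure that the $3^n$ points are spread so no state correlates with more than $O(1)$ of them. Because the Gaussian smoothing tames high-energy states, I would choose the prior variance so that $\sigma^2|k|^2$ exactly balances the $3^n$ count. I would first try the quadratic-moment (Parseval) bound above. If squeezing breaks it, I would instead bound the energy via an explicit mean-photon-number cutoff, treating probe energy as the constrained resource.
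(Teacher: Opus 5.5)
Your upper bound matches the paper's argument: the same noise-corrected estimator, Hoeffding, a union bound over the $M$ templates, and $r=\tfrac12\log n+O(1)$.

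\textbf{The lower bound has a genuine gap.} The hard family you write down is easy for an entanglement-free learner.

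\begin{itemize}
\item With $k\in s\{-1,0,1\}^n\subset\mathbb{R}^n$, the modulation $\cos(\Re(k^\dagger\alpha))=\cos(k\cdot\Re\alpha)$ depends only on the $\hat x$-shifts. The informative templates are real, so $Y_w=w\cdot\Re\alpha$. Equivalently, every query point $\beta=\pm itw/2$ lies in the single Lagrangian subspace $i\mathbb{R}^n$, on which all displacements commute.
\item Prepare every mode in an $\hat x$-squeezed vacuum and homodyne $\hat x$ on every mode. Each shot then returns $y=\sqrt{2}\,\Re\alpha+g$ with $g\sim\mathcal{N}(0,\nu_r I_n)$ and $\nu_r=\tfrac12 e^{-2r}$.
\item The quantity $e^{\nu_r|k|^2/4}\,\frac{1}{N}\sum_j e^{ik\cdot y_j/\sqrt{2}}$ is then an unbiased estimator of $\mathbb{E}[e^{ik\cdot\Re\alpha}]$, for every $k$, post hoc.
\item With $r=\Theta(\log n)$, so that $e^{-2r}|k|^2=O(1)$, this needs only $O(\epsilon^{-2}\log(M/\delta))$ shots. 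That is exactly the Bell QSL rate, and it is the same mechanism as Theorem~\ref{thm:exp_qa_squeezing_only}.
\end{itemize}

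The step that fails is the passage from Parseval to your sum. The bound $\int|\chi_\rho|^2\,d^{2n}\beta\le\pi^n$ controls a phase-space integral. It says nothing about $\sum_k|\chi_\rho(k)|^2$ over a discrete set. Squeezed or grid (GKP-type) probes have $|\chi_\rho|\approx 1$ on exponentially many isolated points.

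Your energy-cutoff fallback cannot repair this. The squeezed product learner above uses the same $r=\Theta(\log n)$, hence the same energy, as Bell QSL. No energy normalization that leaves Bell QSL its resources excludes it. In any case, the statement is meant to cover entanglement-free probes of unbounded energy.

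\textbf{Your second step has two further problems.}
\begin{itemize}
\item All $N$ rounds share the same hidden $k$. A per-round bound on $\mathbb{E}_k[p_k(o)/p_0(o)]$ therefore does not control the leaf quantity $\mathbb{E}_k\prod_t p_k(o_t)/p_0(o_t)$. A learning-tree argument has to go through Jensen and needs a uniform bound on $\mathbb{E}_k(L_k-1)^2$, together with $|L_k-1|\le c<1$.
\item Your full-amplitude cosine allows $L_k=0$, so the second condition fails.
\end{itemize}

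\textbf{What the paper does instead.} It does not re-derive the lower bound.
\begin{itemize}
\item It takes the family $P_\gamma$ of \cite{jiang2024bellfix}: a Gaussian envelope times $1+4\epsilon_0\sin(2\Im(\gamma^\dagger\alpha))$.
\item Because the envelope is even and the sine is odd and weighted by a nonnegative quantity, the likelihood ratio satisfies $|L_\gamma-1|\le 4\epsilon_0$.
\item The hidden $\gamma$ ranges over the continuum $|\gamma|^2\le 1.03n$. That is the regime in which Plancherel/purity-type bounds, which only control phase-space averages of $|\chi_\rho|^2$, can yield an exponentially small signal.
\item It pulls this family back to a prior on $\{A_k,B_k\}$ through the linear isomorphism \eqref{eq:matched_filtering_iso}.
\item It then invokes Theorem~2 of \cite{jiang2024bellfix} with $\kappa=1.03$.
\end{itemize}

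To make your argument self-contained, replace your discrete set inside a Lagrangian subspace with hidden frequencies averaged over a region that fills all $2n$ real directions, as in that family.
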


While the sinusoidal-waveform sensing task considered in Theorem~\ref{thm:matched_filtering_informal} is commonplace, this worst-case advantage occupies only a narrow region of the full problem space, since it relies on a highly structured prior over the coefficients $\{A_k,B_k\}$ that hides the relevant information from entanglement-free readout. Next, we return to experimentally motivated baselines and develop a framework that quantifies the tradeoff between the magnitude of quantum advantage and the fraction of physically relevant instances for which that advantage persists.

\vspace{-1em}
\subsection{Optimal Transport and Practical Advantage}\label{sec:OT_main_text}
\vspace{-1em}

We now introduce a general conditioning methodology that turns measurement constraints into sample-complexity lower bounds. Any single-shot measurement strategy can be described as a map $\mathcal{M}$ that sends the underlying displacement law $P$ to a one-round outcome distribution $\mathcal{M}(P)$. For instance, homodyne measurements made along a fixed set of axes in one round consists of projection to finitely many quadrature marginals followed by known Gaussian blurring, and $\mathcal{M}$ is simply the resulting product distribution. Let $\mathcal{C}$ denote the hypothesis class of possible phase-space distributions (e.g.~the family of centered Gaussians earlier). Now consider the following optimal-transport (OT) stability modulus, 
\begin{align}
&\omega_{\mathcal{M},\mathcal{C}}(\eta)
:=  \\
&\quad \sup\{W_1(P,Q): P,Q\in\mathcal{C},\,
\mathrm{TV}(\mathcal{M}(P),\mathcal{M}(Q))\!\le\! \eta\},
\end{align}
where $W_1$ is the Wasserstein-1 (earthmover) distance. Intuitively, $\omega_{\mathcal{M},\mathcal{C}}$ quantifies how far apart two distributions in $\mathcal{C}$ can be while remaining nearly indistinguishable after measurement. Equivalently, it measures how much phase-space structure is hidden from $\mathcal{M}$, and how ill-conditioned the inverse problem is. This intuition can be directly tied to sample lower bounds.

\begin{theorem}[OT ambiguity implies minimax lower bounds, informal]
\label{thm:OT_informal}
Fix an experiment $\mathcal{M}$ and a class $\mathcal{C}$ with finite first moment. Then there exists a $1$-Lipschitz function $f$ such that any estimator based on $N$ i.i.d.~samples from $\mathcal{M}(P)$ has worst-case error at least $\Omega(\omega_{\mathcal{M},\mathcal{C}}(1/N))$ in estimating $\E_{P}[f]$.
\end{theorem}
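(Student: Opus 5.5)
The plan is a standard Le Cam two-point argument, using Kantorovich--Rubinstein duality to build the test function. Set $\eta = 1/N$ and take $P,Q\in\mathcal{C}$ with $\mathrm{TV}(\mathcal{M}(P),\mathcal{M}(Q))\le\eta$ and
\begin{equation}
W_1(P,Q)\ge \tfrac12\,\omega_{\mathcal{M},\mathcal{C}}(\eta).
\end{equation}
Such a pair exists by the definition of the supremum; if the supremum is infinite, take $W_1(P,Q)$ arbitrarily large. The finite-first-moment assumption makes $W_1(P,Q)$ finite, so duality applies:
\begin{equation}
W_1(P,Q)=\sup_{\Lip(f)\le 1}\big(\E_P[f]-\E_Q[f]\big).
\end{equation}
Choose a $1$-Lipschitz $f$ that attains this supremum up to a factor $1/2$. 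Then $|\E_P[f]-\E_Q[f]|\ge \tfrac14\,\omega_{\mathcal{M},\mathcal{C}}(1/N)$. This $f$ is the function promised in the statement.

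Next, turn indistinguishability of the two one-shot outcome laws into indistinguishability of the $N$-sample laws. TV distance tensorizes only as $\mathrm{TV}(\mathcal{M}(P)^{\otimes N},\mathcal{M}(Q)^{\otimes N})\le N\eta = 1$, which is useless. I would instead rescale and set $\eta = c/N$ for a small constant $c$, say $c=1/4$. Then the $N$-fold TV is at most $1/4$, and Le Cam's lemma gives, for any estimator $\hat\Psi$ built from the $N$ samples,
\begin{equation}
\max_{R\in\{P,Q\}}\Pr_R\!\Big(|\hat\Psi-\E_R[f]|\ge \tfrac12|\E_P[f]-\E_Q[f]|\Big)\ge \tfrac{1-1/4}{2}.
\end{equation}
Hence the worst-case error, whether measured in probability or in expectation, is $\Omega\big(\omega_{\mathcal{M},\mathcal{C}}(c/N)\big)$.

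To match the stated $\Omega(\omega(1/N))$ exactly, I would compare $\omega(c/N)$ with $\omega(1/N)$. The modulus is monotone, but it can decrease in passing from $1/N$ to $c/N$, so constants cannot simply be absorbed. There are three possible fixes. The first is to use Hellinger distance, which tensorizes well: $H^2$ of the product is at most $N H^2$. One would then redefine $\omega$ via Hellinger, or use $\mathrm{TV}\le\sqrt2\,H$ and $H^2\le \mathrm{TV}$, which gives a bound in terms of $\omega(\Theta(1/\sqrt N))$. The second is to note that the informal statement hides constants inside $\Omega$ and its argument. The third is to assume convexity of $\mathcal{C}$, so that mixtures $(1-\lambda)P+\lambda Q$ lie in $\mathcal{C}$. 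Since $\mathcal{M}$ is linear, TV scales by $\lambda$, and $W_1$ of the mixture pair also scales by $\lambda$. This gives $\omega(c\eta)\ge c\,\omega(\eta)$ and closes the gap with constant loss.

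The main obstacle is exactly this tensorization and constant-rescaling step, i.e.\ reconciling the single-round TV in the definition of $\omega$ with $N$ i.i.d.\ rounds. I would state the formal version using convexity of $\mathcal{C}$, or else an $\omega(c/N)$ argument. The other technicalities are routine. Attainment of the duality supremum is not needed, since approximate maximizers suffice. Because $f$ is only Lipschitz, $\E_P[f]$ could be undefined, and the finite first moment is what makes it finite.
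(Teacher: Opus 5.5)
Your proposal is essentially the paper's own proof of the formal version, Theorem~\ref{thm:prac_ot_minimax}. That proof sets $\varepsilon_N=1/(4N)$, takes a near-extremal pair, extracts a $1$-Lipschitz witness by Kantorovich--Rubinstein duality, bounds the $N$-fold total variation by $N\varepsilon_N=1/4$, and applies the two-point Le Cam bound to get $\frac{3}{16}\,\omega_{\mathcal{M},\mathcal{C}}(1/(4N))$, so the paper proves exactly your $\omega(c/N)$ version with $c=1/4$, and the informal $\Omega(\omega(1/N))$ just hides that constant. (Your multiplicative $\tfrac12$ slack is slightly cleaner than the paper's additive $\eta\to0$ limit, since it fixes one $f$ rather than one per $\eta$.) For the literal $\omega(1/N)$ with $N\ge2$ you need neither convexity (your linearity claim fails for the paper's $\mathcal{M}_\Theta(P)=\bigotimes_j\mu_{\theta_j}(P)$, which is not linear in $P$) nor a Hellinger modulus: since $\int\sqrt{pq}\ge\int\min(p,q)=1-\mathrm{TV}$, the affinity is multiplicative over products, and $1-\mathrm{TV}\ge\frac12\big(\int\sqrt{pq}\big)^2$, a one-shot $\mathrm{TV}\le1/N$ already gives $1-\mathrm{TV}\big(\mathcal{M}(P)^{\otimes N},\mathcal{M}(Q)^{\otimes N}\big)\ge\frac12(1-1/N)^{2N}\ge\frac{1}{32}$, whereas the $\omega(\Theta(1/\sqrt{N}))$ you mention holds only for a Hellinger-defined modulus, not the TV one.
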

\noindent Note that when $\omega_{\mathcal{M}, \mathcal{C}}(0) > 0$, no number of measurements can drive the error below a constant, meaning some property is completely non-identifiable. This optimal-transport viewpoint also supports an inverse approach to finding a quantum advantage: given a sensing constraint, one identifies nearly indistinguishable instance pairs by maximizing $\omega$, and then relaxes around them to quantify how much entanglement improves sample complexity.

\begin{figure}
    \centering
    \includegraphics[width=\linewidth]{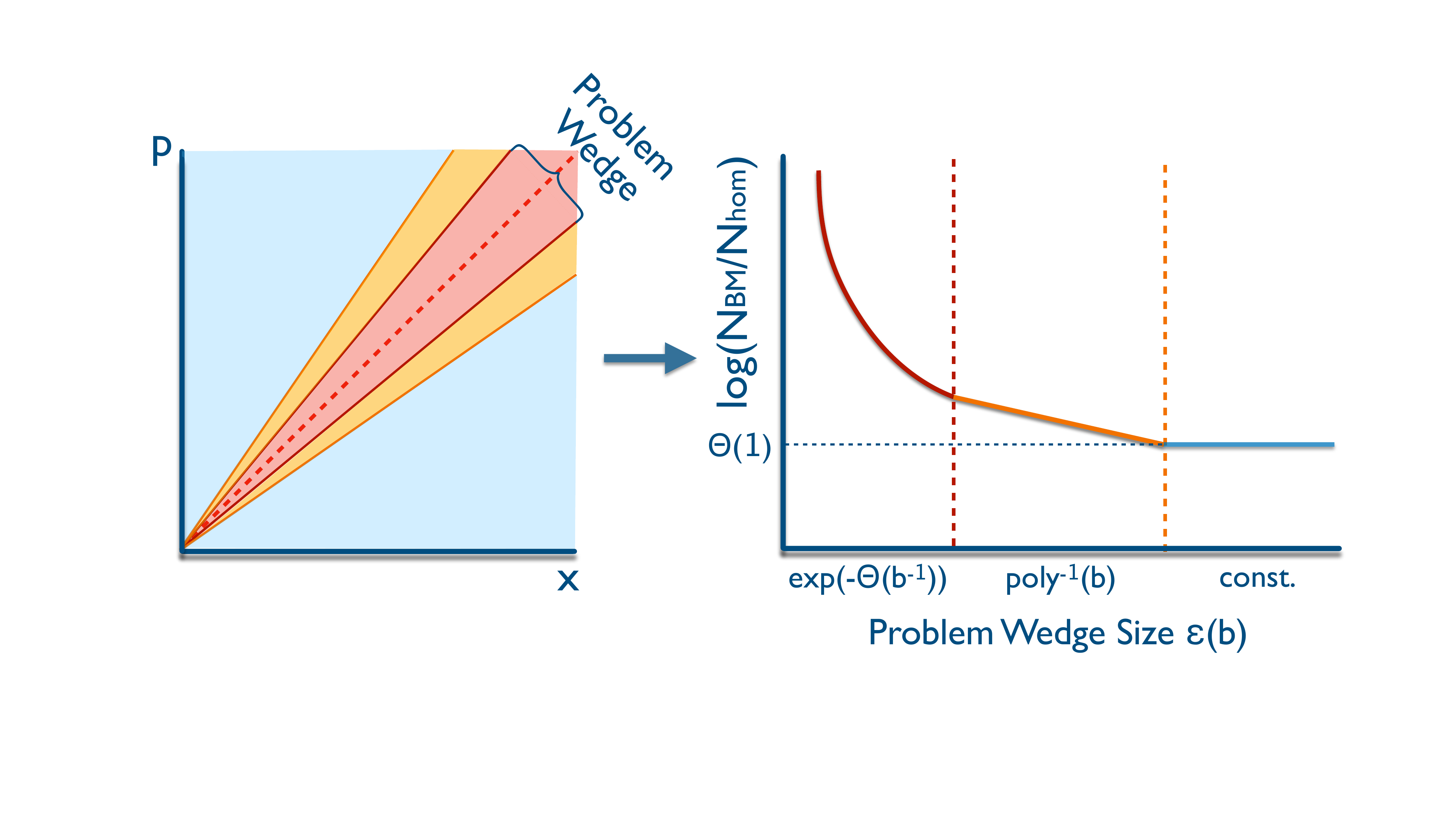}
    \caption{The \textit{problem wedge} of quantum advantage, illustrated for the delta-mixture toy example. As the symmetry-breaking parameter $\varepsilon \rightarrow 0$, homodyne shots become non-identifiable and sample complexity diverges. In the red region, $\varepsilon$ is exponentially small relative to $b$, yielding exponential advantage, and in the orange region, it is polynomially small. The same scaling controls the advantage in Theorem \ref{thm:EM_advantage_informal}, with $\varepsilon\rightarrow\Delta$.}
    \label{fig:deltas_prob_wedge}
\end{figure}

The following toy example illustrates a geometric intuition for Theorem \ref{thm:OT_informal}: quantum advantage depends on how much of phase space the problem class explores. Consider the distributions
\begin{equation}
P = \frac{\delta_{(a,b)}+\delta_{(-a,-b)}}{2}, \ \ 
Q = \frac{\delta_{(a,-b-\varepsilon)}+\delta_{(-a, b+\varepsilon)}}{2}\,.
\label{eq:deltas_eq_main_text}
\end{equation}
When $\varepsilon = 0$, $P$ and $Q$ have identical $\{x,p\}$ marginals despite living in disjoint quadrants of phase space (depicted in App.~\ref{sec:ex_sum_of_deltas}), so $\{x,p\}$-limited homodyne can never distinguish them. This example is used to demonstrate the utility of Bell measurement in collecting a robust dataset for classical learning algorithms beyond our estimators in Fig.~\ref{fig:unsupervised}. We therefore interpret $\varepsilon$ as a symmetry-breaking parameter that controls how strongly the hidden two-dimensional structure becomes visible to restricted homodyne. We call the corresponding restricted instance family a \emph{problem wedge}: a subset of the full problem space whose scale is determined by $\varepsilon$, viewed as a 
function of the intrinsic scale $b$.

For $\varepsilon>0$, the homodyne marginals differ only at $O(\varepsilon^2)$ (in e.g.~KL divergence), so $\{x,p\}$-limited homodyne incurs sample complexity $\Theta(e^{-2r}/\varepsilon^2)$, while Bell QSL distinguishes using only $O(e^{-2r}/b^2)$ shots (App.~\ref{sec:ex_sum_of_deltas}). The advantage factor is thus controlled by $(b/\varepsilon)^2$: in the red wedge of Fig.~\ref{fig:deltas_prob_wedge}, where $\varepsilon$ is exponentially small in $b$, the advantage is exponential, while for polynomially small $\varepsilon$ it is polynomial. Although the delta-mixture makes this geometry particularly transparent, the same $\varepsilon^{-2}$ conditioning governs the advantage found in Theorem \ref{thm:EM_advantage_informal} for EM field sensing, where the role of $\varepsilon$ is played by the symmetry-breaking $\Delta$ that leaks $c$ into the accessible homodyne statistics. In our optimal transport language, the $W_1$ separation is $\Theta(b)$ while $\{x,p\}$-restricted homodyne only sees the symmetry breaking through $O(\varepsilon^2)$ leakage, yielding the corresponding $1/\varepsilon^2$ lower bound. Using Theorem \ref{thm:OT_informal}, these simple, illustrative examples of problem-wedge tradeoff can be generalized to less structured, multimodal sensing tasks, and against homodyne learners with different measurement settings.

Prior exponential quantum learning advantages often arise only under worst-case benchmarks against all entanglement-free learners \cite{chen2021exponentialseparationslearningquantum, aharonov2022quantum, Huang_2022, jiang2024bellfix, chen2024optimaltradeoffsestimatingpauli}, a class broad enough to include exponentially deep, computationally intractable measurement strategies. Quantum advantage is then forced into vanishingly small corners of the instance space that do not coincide with the set of physically interesting instances. By instead benchmarking against weaker, but experimentally feasible conventional strategies, our perspective reveals sizable regimes where quantum information processing yields genuine, usable advantages. Going forward, benchmarking against experimentally feasible strategies will be essential to avoid artificially narrowing the search for quantum-enhanced experiments in ways that obscure such useful advantages.

\begin{figure}
    \centering
    \setlength{\belowcaptionskip}{-3pt}

    \includegraphics[width=\linewidth]{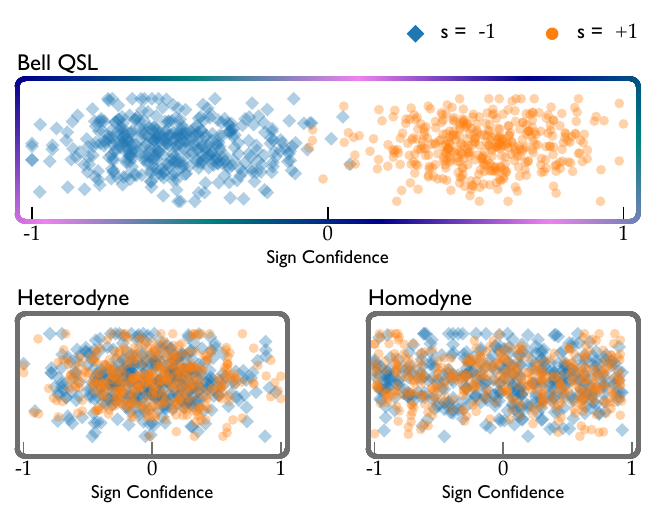}
    \caption{Classical unsupervised learning employing Bell QSL data can learn signal properties beyond the scope of heterodyne and homodyne tomography. 
    Each measurement method collects many draws from the underlying distributions $P, Q$ from \eqref{eq:deltas_eq_main_text}, with squeezing $r=1.2$, $a=0.35$, $b = 0.06$, and $\varepsilon = 0$. Each dataset is fed into unsupervised ML (Gaussian kernel PCA) to learn whether each shot was drawn from $P$ (sign +1) or $Q$ (sign -1). Even at modest $b<1$ and squeezing, and with appreciably large $\epsilon$, the classical Bell dataset constitutes a far more feature-rich training dataset for ML algorithms than conventional strategies.}
    \label{fig:unsupervised}
\end{figure}

\vspace{-1em}
\subsection{Entanglement as the Key Resource}

We have shown that Bell QSL, leveraging entanglement, can operate at equal probe-state energy to homodyne and heterodyne tomography while resolving features of classical fields inaccessible to entanglement-free protocols. Here we show that it is also possible to demonstrate exponential separations that arise from squeezing alone.

We consider the following problem. Place a sensor in a background where it experiences many random kicks of fixed amplitude. In each sensing timestep, the sensor is thus evolved by the displacement operator $\disp(X)$, with $X = \sum_{i=1}^n z_i$, and $z_i = \pm 1$; however, we impose that the $z_i$'s are either random or satisfy a parity constraint. 

\begin{theorem}[Exponential quantum advantage with only squeezed probes] \label{thm:exp_qa_squeezing_only}
Any strategy without squeezed probes requires $\exp(\Omega(n))$ homodyne measurement shots to distinguish random and parity-constrained classical backgrounds, while there is a strategy using a single squeezed probe of $r = \Theta(\log n)$ that requires $O(1)$ shots.
\end{theorem}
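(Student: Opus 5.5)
Let me fix the setup. The displacement is $\disp(X)$ along one quadrature, say the $x$-direction, with $X=\sum_i z_i\in\{-n,-n+2,\dots,n\}$. Under the random background, $z$ is uniform on $\{\pm1\}^n$. Under the parity-constrained background, $z$ is uniform on strings with $\prod_i z_i=+1$. Since $X\equiv n \pmod 2$ always, I will instead use the mod-4 structure: the constraint $\prod_i z_i=+1$ forces the number of $-1$'s to be even, so $X\equiv n \pmod 4$. Under the random background $X\equiv n$ or $n+2 \pmod 4$, each with probability $1/2$. So the two hypotheses differ only in the lattice spacing of $X$ (spacing 2 versus spacing 4). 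This difference is an extremely fine, high-frequency feature of the law of $X$. Its only Fourier signature is the characteristic function at $k=\pi/2$:
\[
\E[e^{i\pi X/2}]=\prod_i \cos(\pi/2)=0 \quad\text{(random)},
\]
while it is $\pm 1$ (or $\pm i$) for the parity law. The difference of the two laws is $2^{-n}\prod_i(\delta_{+1}-\delta_{-1})$ pushed forward, and its Fourier transform is $\propto \prod_i \sin k$, namely $(\sin k)^n$. This is exponentially small except near $k=\pm\pi/2$.

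\textbf{Lower bound.} Any strategy without squeezing uses a coherent (classical) probe; I will also allow classical mixtures and adaptivity. The outcome distribution of a homodyne measurement at angle $\theta$ on a coherent state displaced by $X$ is the law of $X\cos\theta+(\text{const})+G$ with $G\sim\mathcal N(0,1/2)$ vacuum noise. This is the convolution of the pushforward law of $X$ with a unit-scale Gaussian. First I bound the total-variation (or $\chi^2$) distance between the two one-shot outcome distributions. Writing the difference density in Fourier space, I get the Fourier transform of the signed measure, $(\sin(k\cos\theta))^n$ times a constant, multiplied by the Gaussian factor $e^{-k^2/4}$. By Plancherel,
\[
\|\Delta\|_2^2\lesssim\int |\sin(k\cos\theta)|^{2n}e^{-k^2/2}\,dk .
\]
For this integral, $|\sin u|^{2n}$ is at most $e^{-\Omega(n)}$ unless $u$ is within $O(1)$ of $\pi/2$ modulo $\pi$. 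That in turn requires $|k|\ge \pi/2$, where the Gaussian weight contributes only a constant, not $e^{-\Omega(n)}$. So the Gaussian alone does not suffice, and this is the main obstacle. I would redo the estimate more carefully, using $|\sin u|\le e^{-(u-\pi/2)^2/2}$ near the peak. The integral then becomes a Gaussian in $k$ of width $\sim 1/\sqrt n$, centered at $k=\pi/(2\cos\theta)$, with weight $e^{-\pi^2/(8\cos^2\theta)}$. That weight is bounded below by a constant, so the naive $L^2$ bound gives only $O(n^{-1/2})$.

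The fix is that the lattice has spacing $2$ and the vacuum noise has variance $1/2$. The relevant frequency $\pi/2$ carries Gaussian damping $e^{-\pi^2/16}$, a constant, so a genuine exponential gap needs an adjustment of the model. I would rescale the kicks to amplitude $\lambda=\Theta(1/\sqrt{n})$ or smaller, consistent with the paper's sensing-scale conventions; equivalently, the relevant frequency is $\pi/(2\lambda)$. Then the damping factor $e^{-\pi^2/(16\lambda^2)}$ at the only non-negligible frequencies can be made $e^{-\Omega(n)}$ by choosing $\lambda\sim 1/\sqrt n$. With this rescaling, I show the TV distance per shot is $e^{-\Omega(n)}$ uniformly in $\theta$ and in any product or mixed classical input, since mixing only convexifies. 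Adaptivity is handled by the chain rule for KL divergence, or by a Le Cam sum, which yields $N\ge e^{\Omega(n)}$.

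\textbf{Upper bound.} With a single squeezed probe of squeezing $r$, antisqueezed in $p$, I do homodyne on $x$. The outcome is $X\lambda+G_r$ with $\mathrm{Var}(G_r)=e^{-2r}/2$. With $r=\Theta(\log n)$ the noise standard deviation is $\ll\lambda\sim n^{-1/2}$. Then rounding the outcome to the nearest point of the lattice $\lambda(n+2\mathbb Z)$ recovers $X$ exactly with probability $1-o(1)$. Computing $X \bmod 4$ from the rounded value, one shot lands in the forbidden residue class with probability $1/2$ under the random background and never under the parity background. So $O(1)$ shots distinguish the two backgrounds with constant error, and repetition gives any fixed confidence.
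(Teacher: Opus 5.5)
Your argument is correct. Its lower-bound skeleton is the paper's, but the integral bound and the upper-bound test differ.

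\textbf{Setup.} You use kicks of amplitude $1/\sqrt n$, which is exactly the paper's choice $B=1/\sqrt n$. Your observation that unit kicks admit no exponential gap is right: unsqueezed $\hat x$-homodyne with the feature $e^{i\pi\zeta/2}$ already has constant bias $e^{-\pi^2/16}$. You then take the Fourier transform $i^n\sin^n(k\lambda)e^{-V_xk^2/2}$ of the difference of blurred outcome densities, apply Plancherel, and sum a per-shot TV bound over shots, as the paper does.

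\textbf{Lower bound.} The paper bounds $\int\sin^{2n}(k/\sqrt n)\,e^{-V_xk^2}\,dk$ using $|\sin x|\le|x|$, an exact Gaussian moment and Stirling. This gives the explicit rate $(eV_x)^{-n}$, hence $N=\Omega\big((e/2)^{n/2}/\sqrt n\big)$. It also shows where the bound fails ($V_x<1/e$), which feeds the paper's energy discussion. Your localization argument uses two regimes: away from $\pi/2+\pi\mathbb{Z}$ the factor $|\sin|^{2n}$ is $e^{-\Omega(n)}$, and near it $|k|\gtrsim\sqrt n$, so the vacuum Gaussian is $e^{-\Omega(n)}$. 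This has a worse constant but exposes the mechanism. It is also uniform over homodyne angle, coherent amplitude, classical mixtures and adaptive choices, which the paper treats only informally.

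You should still write out the step the paper does explicitly. $L^2$-smallness of the density difference does not by itself bound TV. Restrict to $|y|\le R$ with $R=\mathrm{poly}(n)$ and use Cauchy--Schwarz there, at a $\sqrt{2R}$ loss. Then control the remainder by Gaussian tails, using $|X|\le\sqrt n$.

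\textbf{Upper bound.} Here the routes genuinely differ. The paper estimates $e^{ik_\star\zeta}$ at $k_\star=\frac{\pi}{2}\sqrt n$, whose bias $e^{-\pi^2/16}$ is constant precisely at $r=\frac12\log n$. Your decoding of $X/\lambda \bmod 4$ is more elementary and essentially error-free per shot. However, it needs $e^{-r}\ll n^{-1/2}$, i.e.\ $r\ge\frac12\log n+\omega(1)$, which is still within $r=\Theta(\log n)$. For example, $r=\log n$ gives per-shot decoding failure at most $2e^{-n}$. At $r=\frac12\log n$, rounding errs with probability $\approx0.16$. The forbidden-class frequencies ($1/2$ versus $\approx0.16$) still differ by a constant, so $O(1)$ shots remain enough.
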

\noindent By embedding the effective instance size $n$ into a single-mode task, Theorem \ref{thm:exp_qa_squeezing_only} (proven in App.~\ref{sec:entanglement_req}) shows that exponential-in-$n$ advantages are possible purely from using quantum (squeezed) light vs.~classical (coherent) light. 

We remark that this result falls under a different operational definition of quantum advantage than is assumed in our previous separations. Note that classical coherent-state probes can differ in energy based on mean photon number (i.e.~phase-space displacement), while squeezed light is generated from coherent states by non-energy-conserving symplectic unitaries. Because both displacements and squeezing can enhance probe sensitivity, the canonical bosonic Heisenberg limit treats the two transformations on equal footing and pertains only to probe state preparation with a bounded energy budget~\cite{Safranek_2016}. Moreover, a single mode squeezed at $r = \Theta(\log n)$ effectively occupies a Fock basis of \textit{dimension} $\Theta(n)$.  Because bosonic states live in an infinite-dimensional Hilbert space, an energy budget quantifies the ``effective dimension'' of the probe mode, and having access to higher single-probe energies can be viewed as equivalent to having more probes. For these reasons, our previous results take quantum advantage in bosonic sensing to mean separations with respect to conventional protocols with the same energy budget, a form of advantage that is only possible when entanglement and squeezing coincide. In Theorem \ref{thm:exp_qa_squeezing_only}, we take the complementary viewpoint that squeezing itself is a uniquely quantum-mechanical phenomenon, and treat it as a unique, quantum-enhanced resource in a different category than displacement.

\vspace{-1em}
\section{Discussion}

Our work establishes that quantum advantage for sensing classical fields becomes both sharper and more useful once we formulate the task in terms of the practical limitations of real-world experiments. Having introduced the Quantum Signal Learning framework to capture these features, we established that quantum sensors enhanced by quadrature squeezing and entangled quantum memory can extract properties of classical signals far more efficiently than their conventional counterparts. Our work makes such quantum learning advantage statements precise, demonstrating utility for ubiquitous physical tasks such as electromagnetic field sensing and interferometric feedback control. Moreover, our algorithm requires only a single layer of entanglement-generating operations and conventional homodyne measurements, while enabling post-hoc estimation of many properties from the collected classical record. We introduce an optimal transport-based method to prove quantum learning speedups over practical, experimentally feasible conventional benchmarks, going beyond previous worst-case advantage statements.

We have demonstrated preliminary applications of Bell QSL to tasks of electromagnetic field sensing and noise characterization. Expanding on these concepts may be fruitful for radar sensing and detectors in high-precision fundamental physics experiments and astronomy \cite{SilvaFeaver2016DMRadioPathfinder,Malnou2019SqueezedVacuumAxion,Tse2019QuantumEnhancedLIGO,Chang2019QuantumNoiseRadar,Ouellet2019ABRACADABRA10cm,Backes2021QuantumEnhancedAxion,Romanenko2023DarkSRF}. Entanglement-enabled real-time noise monitoring could enable real-time error mitigation in interferometers for gravitational wave astronomy, and meaningfully improve large-hypothesis matched filter testing in e.g.~extreme-high-bandwidth multiplexing. 

Theoretically, our work establishes a preliminary framework for proving \textit{practical} quantum advantage in quantum sensing by associating a limited measurement model with an optimal-transport stability modulus. Interfacing these concepts with the learning tree formalism \cite{chen2021exponentialseparationslearningquantum}, which is already equipped with sophisticated lower-bound proof tools, may allow the quantum advantage statements in this work to generalize to other quantum-enhanced experimental settings.

Searching for meaningful quantum-enhanced experiments requires keeping experimental constraints firmly in view. Taking this approach, our work reveals quantum speedups across a wide range of important sensing tasks, establishing that practical quantum advantage extends naturally to the sensing of classical fields.

\vspace{-1em}
\subsection*{Acknowledgments}
\vspace{-1em}
JC is supported by an Alfred P. Sloan Fellowship.  DLD acknowledges support as a Black Hole Initiative Fellow. His contribution to this project/publication is funded in part by the Gordon and Betty Moore Foundation (Grant \#13526). It was also made possible through the support of a grant from the John Templeton Foundation (Grant \#63445). The opinions expressed in this publication are those of the author(s) and do not necessarily reflect the views of these Foundations.  IK is supported in part by the Nobile Research Initiative.

\bibliographystyle{apsrev4-1_with_title}
\bibliography{references}

\appendix
\onecolumngrid

$$$$


\begin{center}
\textbf{\Large Supplementary Information}
\end{center}

\renewcommand{\appendixname}{APPENDIX}
\renewcommand{\thesubsection}{\Alph{section}.\arabic{subsection}}
\renewcommand{\thesubsubsection}{\Alph{section}.\arabic{subsection}.\alph{subsubsection}}
\makeatletter
\renewcommand{\p@subsection}{}
\renewcommand{\p@subsubsection}{}
\makeatother

\renewcommand{\figurename}{Supplementary Figure}
\setcounter{figure}{0}
\setcounter{secnumdepth}{3}

\bigskip

\noindent \textbf{\ref{sec:related}.~~\hyperref[sec:related]{Related Work}} \dotfill\textbf{\pageref{sec:related}}
\medskip

\noindent \textbf{\ref{sec:prelims}.~~\hyperref[sec:prelims]{Preliminaries}} \dotfill\textbf{\pageref{sec:prelims}}
\medskip

\noindent \qquad \begin{minipage}{\dimexpr\textwidth-1.3cm}
 \hyperref[sec:cv_overview]{Continuous-variable quantum mechanics} $\bullet$ 
\hyperref[sec:cv_measurements]{Homodyne, Heterodyne, and Bell measurements} $\bullet$ 
 \hyperref[sec:OT_overview]{Review of optimal transport theory}
\end{minipage}
\medskip

\noindent \textbf{\ref{sec:QSL_protocol}.~~\hyperref[sec:QSL_protocol]{Quantum Signal Learning through Bell Sensing}} \dotfill\textbf{\pageref{sec:QSL_protocol}}
\medskip

\noindent \qquad \begin{minipage}{\dimexpr\textwidth-1.3cm}
\hyperref[sec:QSL_problem]{The Quantum Signal Learning problem} $\bullet$
 \hyperref[sec:alg]{General form of Bell QSL algorithm} $\bullet$
 \hyperref[sec:Fourier_alg]{Bell QSL for Fourier-defined kernels} $\bullet$
 \hyperref[sec:distribution_alg]{Bell QSL for polynomial moment kernels}
\end{minipage}
\medskip

\noindent \textbf{\ref{sec:qa_theory}.~~\hyperref[sec:qa_theory]{Quantum Advantage in Learning Noisy Classical Signals}} \dotfill\textbf{\pageref{sec:qa_theory}}
\medskip

\noindent \qquad \begin{minipage}{\dimexpr\textwidth-1.3cm}
 \hyperref[sec:worst_case_QA]{Advantage against all entanglement-free strategies in matched filtering} $\bullet$ 
 \hyperref[sec:transport_QA]{Theory of practical advantage against restricted learners}
\end{minipage}
\medskip

\noindent \textbf{\ref{sec:prac_Gaussian_QA}.~~\hyperref[sec:prac_Gaussian_QA]{Examples of Practical Quantum Advantage}} \dotfill\textbf{\pageref{sec:prac_Gaussian_QA}}
\medskip

\noindent \qquad \begin{minipage}{\dimexpr\textwidth-1.3cm}
  \hyperref[sec:ex_sum_of_deltas]{A classical mixture of peaked displacements} $\bullet$
\hyperref[sec:ex_gaussian_covar]{Learning Gaussian correlations} $\bullet$
  \hyperref[sec:prac_Hams]{Sensing correlated EM fields and cavity phase noise} $\bullet$
 \hyperref[sec:entanglement_req]{Interpreting quantum advantage without entanglement}
\end{minipage}
\medskip

\section{Related Work}\label{sec:related}
\noindent \textbf{Quantum radar.} At the level of terminology, quantum radar \cite{Karsa_2024} may seem closely related to QSL, particularly in its application to electromagnetic signals. This line of work, however, is operationally very distinct, its only technical similarity being the initial preparation of entangled EPR pairs. The concept of quantum radar was inspired by \cite{lloyd2008quantumillumination}, which made the observation that objects surrounded by thermal noise could be imaged with beyond-classical signal-to-noise ratio (SNR) by first preparing maximally-entangled pairs of photons, then shining half towards the object while the other half remain in coherent quantum memory. Despite noise sources causing photon loss and breaking the generated entanglement, the received light has boosted SNR because of the entanglement imprint that signal photons carry. This quantum illumination (QI) protocol was concretely formulated in \cite{Tan_2008} for Gaussian states. Limitations were soon realized, showing e.g. that QI loses its utility in the single-photon detection regime \cite{Shapiro_2009}, and that the advantage is degraded when the returning photons do not have known amplitudes and phases (as is often the case in practice) \cite{Zhuang_2017}. Regardless, QI, or quantum radar, is a useful early demonstration that entanglement can be traded for SNR, and was practically realized in experiments such as \cite{Lopaeva_2013}.

Conceptually, QI is a substantial departure from Bell QSL: we prepare entangled bosonic sensors, and keep them coherent in the presence of an external field, rather than sending out photons to receive later. Moreover, the quantum advantage in our setting relies on the preservation of entanglement, and the performance gap that our algorithm achieves over unentangled benchmarks grows with the complexity of the ambient noise, unlike prior quantum sensing and illumination applications which are generally degraded to classical benchmarks when noise is taken into account.

\medskip

\noindent\textbf{Continuous-variable quantum learning.} A recent line of work in quantum learning theory has approached  canonical tomography problems, for which qubit-based learning guarantees (e.g.~\cite{Bisio_2010, Zhao_2024, mele2026optimallearningquantumchannels, huang2023learningpredictarbitraryquantum, schuster2023learning}) have been generalized to the bosonic continuous-variable setting. These include, for instance, algorithms to learn observables of continuous-variable quantum states \cite{gandhari2023precisionboundscontinuousvariablestate, Wu_2024}, or full tomography of Gaussian states \cite{Weedbrook2012}, and Gaussian unitaries \cite{fanizza2025efficientlearningbosonicgaussian}. These results can be viewed as the continuous-variable counterparts of canonical quantum learning theory, and are concerned with worst-case bounds on learning arbitrary observables, or complete classical descriptions of bosonic quantum systems. While related in quantum-mechanical formalism, our work departs from the usual worst-case quantum learning setting over arbitrarily complex quantum systems, instead learning specific, physically-motivated properties of classical fields realized as linear bosonic Hamiltonians.

More closely aligned to our work is \cite{jiang2024bellfix}, which uses the Bell data-collection strategy in Bell QSL to achieve a worst-case quantum advantage for learning highly non-Gaussian displacement properties. We apply their results to demonstrate worst-case quantum advantage in the practical setting of matched filtering to learn sinusoidal signals, and the estimators we develop allow for the efficient post-hoc evaluation of physically-motivated properties such as moments of the signal beyond the characteristic function. Their analysis of algorithmic noise also ports to our setting, suggesting that Bell QSL retains beyond-classical capabilities even in the presence of constant local error in the implementation. Leveraging this robustness, our work builds on a body of work towards achieving quantum advantages that remain practically meaningful in the presence of noise \cite{cotler2026noisyquantumlearningtheory}.

In the qubit setting, \cite{Prabhu_2026} leverages Bell measurements and other entangled probes (e.g. GHZ states) to achieve exponential (noise-free) advantages in sensing stochastic fields. This work observes that quantum advantages are controlled by total variation between characteristic functions of the classical stochasticity, supplementing the observation from \cite{jiang2024bellfix} that information-theoretic hardness in the absence of entanglement usually requires a sensing objective encoded in highly nonlocal (high Fourier frequency) components that cannot be estimated efficiently with local measurements. Importantly, we identify that this condition can be concretely studied and instantiated through the marginals of the state's high-dimensional distribution, and we use this observation to introduce a probabilistic framework for practical quantum advantages. This framework, based in optimal transport theory, provides a systematic method to identify sensing and learning tasks which may admit entanglement-enabled advantage under a given set of experimental constraints, enabling us to identify simple, single-mode sensing tasks which admit a speedup rather than tailoring high-frequency, parity-constrained problems as in \cite{Prabhu_2026}.

\medskip
\noindent \textbf{Quantum metrology.} Quantum Signal Learning departs from quantum metrology in the modern quantum Fisher information (QFI) sense, which typically studies estimation of a low-dimensional parameter vector $\theta$ encoded in a known parametric family of states or channels, and optimizes probe preparation, interaction time, and measurement to minimize a chosen risk for that \emph{prespecified} model \cite{Helstrom1976,Paris2009,Giovannetti2011}. Multi-parameter generalizations still adopt this fixed-structure viewpoint: one commits to a particular parameterization and then confronts additional incompatibility constraints (e.g.\ the noncommutativity of optimal measurements across parameters), so optimal strategies remain tailored to the chosen parameter set rather than yielding a broadly reusable measurement record \cite{Szczykulska2016,Ragy2016}. Even when the parameter is promoted to a time-dependent classical waveform, the objective is typically to reconstruct that waveform under a specified prior and cost functional, with performance characterized by waveform-level quantum Cram\'er--Rao bounds \cite{Tsang2011}. Closely related system-identification and Hamiltonian-learning approaches likewise target structured generator models (often with coherent control and adaptive design), aiming to infer a compact Hamiltonian description rather than to answer many post-hoc queries about an induced response distribution \cite{Wiebe2014}.

By contrast, QSL is designed for regimes common in quantum-noise-limited platforms where the signal and ambient fluctuations induce a nontrivial \emph{distribution} over phase-space responses, and where one wants to query many properties of that distribution (matched-filter scores, correlations, higher moments, etc.) from the same experimental dataset. This is precisely the setting in which QFI-based optimality can be misaligned with practical learnability: a strategy that maximizes QFI for one assumed signature is often brittle to model mismatch and typically provides little information about other noncommuting or out-of-model features. Our framework instead emphasizes post-hoc learnability, leveraging a universal Bell-sensing primitive that returns an informationally complete two-quadrature record with sub-vacuum additive noise, and providing sample-complexity guarantees whose dependence is governed by the complexity of the chosen property rather than by committing to a particular parametric signal model in advance.

Finally, a central lesson of quantum metrology is that entanglement-enabled Heisenberg scaling is generically fragile under Markovian loss and decoherence, leading to at most constant-factor improvements unless additional structure is available \cite{Escher2011,Demkowicz2012,Demkowicz2015ProgOpt}. A major thread of recent work therefore develops error-corrected metrological protocols that embed the signal generator into a code space to restore improved scalings in the presence of noise \cite{Kessler2014,Arrad2014,Dur2014}. Our advantage statements do not rely on such long-coherence or quantum error-correction assumptions: QSL treats the net effect of signal and ambient uncertainty as the displacement distribution to be learned, and the Bell-sensing estimator family retains robustness to constant implementation noise while still improving the effective two-quadrature noise floor with squeezing \cite{jiang2024bellfix}.

\medskip
\noindent \textbf{Experimental predecessors.} While its formal presentation in the language of sample complexity and quantum advantage is relatively contemporary, joint measurement of noncommuting quadratures using EPR pairs has been a standard primitive in quantum optics for decades. Historically, Bell pairs were introduced to quantum information theory as a core component of superdense coding and quantum teleportation \cite{BennettWiesner1992,Bennett1993Teleportation, BraunsteinKimble1998, BraunsteinKimble2000DenseCodingCV}, and prepared experimentally for early realizations of the EPR paradox \cite{Ou1992}. Early experimental applications of Bell-basis measurements appeared in \cite{PhysRevLett.76.4656, Furusawa1998} to realize these theoretical primitives. These experiments established the core benefits of realizing entanglement-enabled quantum protocols on continuous variable systems, as the Bell measurement strategy is compatible with static-axis, offline squeezing, linear optics, and complex classical postprocessing.

As the utility of Bell-basis measurement began to be realized, closely related joint-quadrature and conditional-measurement ideas were pursued as routes to surpassing vacuum-limited readout in precision interferometry. Seminal work in gravitational-wave detection \cite{Kimble2001} emphasized that optomechanical dynamics imprint frequency-dependent correlations between amplitude and phase quadratures, motivating readout strategies that effectively access appropriate correlated quadrature combinations rather than a single fixed homodyne axis. Along similar lines, quantum-dense metrology demonstrated in a laser interferometer that EPR resources and dual homodyne readout can provide simultaneous information about two noncommuting observables with sub-vacuum uncertainty, enabling a sub-shot-noise error-detection channel that operates outside the primary science band and monitors whether disturbances populate an undesired quadrature \cite{Steinlechner2013}. More recently, entangled signal-idler schemes have been realized as practical alternatives to long filter cavities in gravitational-wave architectures \cite{Ma2017,Sudbeck2020}. Our contribution is to lift this mature experimental primitive into a distributional learning framework. We treat Bell measurement as an informationally complete, low-noise data-collection module for classical-field-induced displacement distributions, and we give explicit estimators with end-to-end sample-complexity guarantees for broad families of post-hoc queries, thereby quantifying advantage regimes that are only implicit in much of the historical literature.

\medskip
\noindent \textbf{Quantum advantage in bosonic sensing.} Our work establishes quantum advantages enabled jointly by squeezing and entanglement, and pertains to a wide class of practical sensing tasks due to the optimality of Bell-basis readout. Meanwhile, \cite{Robert_unreleased} establishes exponential quantum advantages for bosonic sensing in a different context: for specific choices of time-varying classical fields, they demonstrate that squeezed probes, absent entanglement, enable exponential advantages over classical ones. This is manifestly distinct from our approach, and aligns more closely with the setting of our Theorem \ref{thm:exp_qa_squeezing_only}.

\section{Preliminaries}
\label{sec:prelims}

\subsection{Overview of Gaussian continuous-variable quantum mechanics}
\label{sec:cv_overview}
\vspace{-1em}

Here we collect standard facts in continuous-variable quantum mechanics used throughout this work.
\vspace{-1em}
\subsubsection{Phase space as a representation of the bosonic Hilbert space.}\vspace{-1em}
An $n$-mode bosonic quantum state lives in the Hilbert space $L^2(\mathcal{\mathbb{R}}^{n})$. All such states have convenient representations as functions on the \textit{phase space} $\mathbb{R}^{2n}$. To elucidate this, we require basic notational conventions. A bosonic mode is labeled in phase space by canonical quadrature operators $\xh, \ph$. Generalizing to $n$ modes, the quadrature vector 
\begin{equation}
    \hat{R} = (\xh_1, \xh_2, ..., \xh_n, \ph_1, \ph_2, ..., \ph_n)
\end{equation}
labels phase space coordinates. Under the convention $\hbar = 1$, the quadrature operators satisfy the canonical symplectic commutation relations
\begin{equation}
    [\hat{R}_j, \hat{R_k}] = i\Omega_{jk}\ , \quad \Omega = \begin{pmatrix} 0 & \mathds{1}_n \\
    -\mathds{1}_n & 0
    \end{pmatrix} \ .
\end{equation}
Physically, the operators $\xh, \ph$ can be obtained from the creation and annihilation operators $\hat{a}, \hat{a}^\dagger$ according to
\begin{equation}
    \xh = \frac{\hat{a}^\dagger + \hat{a}}{\sqrt{2}}, \quad \ph = \frac{i(\hat{a}^\dagger - \hat{a})}{\sqrt{2}}
\end{equation}
The canonical commutation relations fix the vacuum noise convention, such that vacuum states, which we review shortly, satisfy
\begin{equation}
    \Delta x^2 = \Delta p^2 = \frac{1}{2} \ .
\end{equation}
The representation of the bosonic Hilbert space through phase space is obtained via understanding quadrature vectors as the Lie algebra of the Weyl-Heisenberg group. In particular, given a particular coordinate vector $\alpha \in \mathbb{R}^{2n}$, we can define the displacement (or Weyl) operator 
\begin{equation}
    D(r) = \exp(-ir^T \Omega \hat{R})
\end{equation}
which generates the coherent state $\ket{r} = D(r)\ket{0}$. In the basis of creation and annihilation operators, where coordinates are described in $\mathbb{C}^n$ rather than $\mathbb{R}^{2n}$, the displacement operator may be written
\begin{equation}
    D(\alpha) = \exp(\alpha^\dagger \hat{A} -\hat{A}^\dagger \alpha) \ ,
\end{equation}
where $\hat{A} = (a_1, a_2, ..., a_n)^T$. Simply put, the kernel $\alpha^\dagger \hat{A} - \hat{A}^\dagger\alpha$, using complex numbers, encodes the same algebra as the symplectic form, allowing us to trivially interchange  complex and real numbers within the symplectic algebra. We will often alternate between describing bosonic systems in terms of canonical quadrature operators and displacements in the creation/annihilation basis, where coherent states will be written as $\ket{\alpha}$. 

A well-known result in continuous-variable quantum mechanics is that the basis of coherent states is overcomplete, which allows us to represent any operator $\rho\in L^2(\mathbb{R}^n)$ with a phase-space distribution that is a function on $\mathbb{R}^{2n}$ (equipped with the symplectic inner product). The most common phase-space distribution is the Wigner function,
\begin{equation}
\label{eq:Wigner_func}
    W_\rho(\beta) = \frac{1}{\pi^n} \int d^{2n}\alpha \ \exp(\beta^\dagger \alpha - \alpha^\dagger \beta)\  \tr(\rho D(\alpha)) \ .
\end{equation}
Crucially, the map $\rho\rightarrow W_\rho$ is invertible, so phase space gives a faithful representation of any bosonic operator on the full Hilbert space. 

For much of this work, it is convenient to move between phase-space representations using characteristic functions. Given a state $\rho$, we define its (Weyl) characteristic function as
\begin{equation}
    \chi_\rho(\alpha) = \tr(\rho D(\alpha)),
\end{equation}
where $D(\alpha)$ is the displacement operator in the creation-annihilation basis. The characteristic function is the symplectic Fourier transform of the Wigner function, which is immediate from the definition in Equation \ref{eq:Wigner_func}. Once again, the pair $(W_\rho,\chi_\rho)$ uniquely determine one another. For Gaussian states, which we review next, $\chi_\rho$ is a Gaussian function of $\zeta$ and is completely specified by $(\bm m,\bm\Sigma)$.

Two transformation rules will be used repeatedly. First, displacements act on characteristic functions by a phase factor:
\begin{equation}
    \chi_{D(r)\rho D(r)^\dagger}(\zeta) = \exp(\zeta^\dagger \alpha_r - \alpha_r^\dagger \zeta)\chi_\rho(\zeta),
\end{equation}
where $\alpha_r \in \mathbb{C}^n$ is the complex displacement corresponding to $r \in \mathbb{R}^{2n}$. Second, symplectic Gaussian unitaries act by a linear change of variables in Fourier space:
\begin{equation}
    \chi_{U_S \rho U_S^\dagger}(\zeta) = \chi_\rho(\zeta'),
\end{equation}
where $\zeta'$ denotes the transformed phase-space coordinate induced by $S$. These identities make explicit why characteristic functions are natural for analyzing Gaussian processes and Gaussian measurements in the following sections.

\vspace{-1em}
\subsubsection{Gaussian states and squeezing}\label{subsubsec:Gaussian_states}
\vspace{-1em}

A central class of continuous-variable states are \textit{Gaussian states}, defined as those whose phase-space quasiprobability representations are Gaussian functions on $\mathbb{R}^{2n}$. To state this precisely, we define the first and second moments of a bosonic state $\rho\in L^2(\mathbb{R}^n)$ in the quadrature representation. The mean vector is
\begin{equation}
    \bm m = (m_1,\dots,m_{2n})^T, \quad m_j = \tr(\rho \hat{R}_j)
\end{equation}
and the covariance matrix is
\begin{equation}
    \bm\Sigma_{jk} = \frac{1}{2}\tr\!\left(\rho\left\{\hat{R}_j - m_j, \hat{R}_k - m_k\right\}\right),
\end{equation}
where $\{A,B\} = AB + BA$ denotes the anticommutator. When $\rho$ is a Gaussian state, its Wigner function is a Gaussian density on phase space:
\begin{equation}
    W_\rho(r) = \mathcal{N}(r;\bm m,\bm\Sigma),
\end{equation}
where, for $r \in \mathbb{R}^{2n}$,
\begin{equation}
    \mathcal{N}(r;\bm m,\bm\Sigma) = \frac{\exp\!\left(-\frac{1}{2}(r-\bm m)^T\bm\Sigma^{-1}(r-\bm m)\right)}{(2\pi)^n \sqrt{\det(\bm\Sigma)}}.
\end{equation}
Thus, a Gaussian state is fully specified by its mean vector and covariance matrix. When utilizing complex coordinates, we will describe Gaussian states with Wigner functions which are \textit{complex} Gaussian distributions, e.g. for $\nu>0$, we let $\mathcal{N}_\mathbb{C}(0, \nu)$ denote the distribution with complex density 
\begin{equation}
    z\rightarrow \frac{1}{\pi\nu}\exp\left(-\frac{|z|^2}{\nu}\right) \ .
\end{equation}
Equivalently, given a random variable $Z\sim \mathcal{N}_\mathbb{C}(0, \nu)$, $\Re(Z), \Im(Z)$ are independent Gaussians distributed as $\mathcal{N}(0, \nu/2)$. The only operational difference between the conventions is the factor of $2$ in the variance parameter. 

Gaussian states are preserved under Gaussian dynamics, including Gaussian unitaries and Gaussian channels. A Gaussian unitary is generated by a Hamiltonian at most quadratic in the quadratures (e.g. a single-mode $H$ which consists only of terms $\xh^2, \ph^2, \xh\ph, \ph\xh, \xh, \ph$), and its action on phase space is an affine symplectic transformation. Concretely, there exist a symplectic matrix $S \in \mathrm{Sp}(2n,\mathbb{R})$ and a displacement vector $d \in \mathbb{R}^{2n}$ such that
\begin{equation}
    U^\dagger \hat{R} U = S\hat{R} + d.
\end{equation}
At the level of moments,
\begin{equation}
    \bm m \mapsto S\bm m + d, \quad \bm\Sigma \mapsto S\bm\Sigma S^T.
\end{equation}
This separates the transformation into a displacement portion, which affects only first moments, and a symplectic portion, which reshapes the covariance ellipsoid in phase space.

More generally, a Gaussian channel is a completely positive trace-preserving map that takes Gaussian states to Gaussian states. Any such channel acts affinely on moments: there exist real matrices $X \in \mathbb{R}^{2n \times 2n}$ and $Y \in \mathbb{R}^{2n \times 2n}$ with $Y \succeq 0$, and a displacement vector $d \in \mathbb{R}^{2n}$, such that
\begin{equation}
    \bm m \mapsto X\bm m + d, \quad \bm\Sigma \mapsto X\bm\Sigma X^T + Y.
\end{equation}
The matrix $X$ captures the coherent, symplectic-like part of the evolution (including loss, amplification, and mode-mixing), while $Y$ captures injected Gaussian noise. Complete positivity imposes the constraint
\begin{equation}
    Y + \frac{i}{2}\Omega - \frac{i}{2}X\Omega X^T \succeq 0,
\end{equation}
which reduces to $X \in \mathrm{Sp}(2n,\mathbb{R})$ and $Y=0$ in the unitary case. In this sense, both Gaussian unitaries and Gaussian channels admit a natural decomposition into displacement and a linear phase-space map, with channels adding an additional incoherent term.

The symplectic portion of a Gaussian unitary admits further structure. By the Bloch--Messiah (symplectic singular value) decomposition, any $S \in \mathrm{Sp}(2n,\mathbb{R})$ can be written as
\begin{equation}
    S = O_1 \begin{pmatrix}
        Z & 0 \\
        0 & Z^{-1}
    \end{pmatrix} O_2,
\end{equation}
where $O_1,O_2 \in \mathrm{Sp}(2n,\mathbb{R}) \cap O(2n)$ are passive interferometric transformations and $Z=\mathrm{diag}(z_1,\dots,z_n)$ with $z_j \ge 1$ encodes single-mode squeezing.
Operationally, this means that the nontrivial resource in Gaussian unitaries is the ability to squeeze, while passive optics implement orthogonal symplectic transformations that preserve energy. 

For a single mode, the action of a symplectic transformation is easiest to visualize by how it maps the unit covariance circle into an ellipse. In $\mathrm{Sp}(2,\mathbb{R})$, one can classify symplectic transformations by their induced trajectories in phase space, noting that a quadratic Hamiltonian can only generate three kinds of phase-space trajectories because the Heisenberg equations of motion reduce to a second-order differential equation with either real, zero, or complex eigenvalues. Passive rotations correspond to elliptic motion, generated by quadratic Hamiltonians of the form $\xh^2 + \ph^2$, and act by
\begin{equation}
    S_{\mathrm{rot}}(\theta) = \begin{pmatrix}
        \cos\theta & \sin\theta \\
        -\sin\theta & \cos\theta
    \end{pmatrix},
\end{equation}
which preserves the circular symmetry of the vacuum covariance. Shearing transformations correspond to parabolic motion, generated by Hamiltonians proportional to $\xh^2$ or $\ph^2$, and act by
\begin{equation}
    S_{\mathrm{shear}}(s) = \begin{pmatrix}
        1 & 0 \\
        s & 1
    \end{pmatrix}
    \quad \textnormal{or} \quad
    S_{\mathrm{shear}}(s) = \begin{pmatrix}
        1 & s \\
        0 & 1
    \end{pmatrix},
\end{equation}
which preserve phase-space area while tilting the covariance ellipse. Finally, squeezing corresponds to hyperbolic motion, generated by Hamiltonians proportional to $\xh\ph + \ph\xh$, and acts by opposite rescalings of conjugate quadratures:
\begin{equation}
    S_{\mathrm{sq}}(r) = \begin{pmatrix}
        e^{-r} & 0 \\
        0 & e^{r}
    \end{pmatrix}.
\end{equation}
In this hyperbolic case, the symplectic map squeezes the basis itself, contracting one quadrature axis while expanding the conjugate axis, and hence transforms the vacuum covariance as
\begin{equation}
    \bm\Sigma_{\mathrm{vac}} \mapsto S_{\mathrm{sq}}(r)\bm\Sigma_{\mathrm{vac}} S_{\mathrm{sq}}(r)^T
    = \frac{1}{2}\begin{pmatrix}
        e^{-2r} & 0 \\
        0 & e^{2r}
    \end{pmatrix}.
\end{equation}
Equivalently, the squeezed vacuum $S_{\mathrm{sq}}(r)\ket{0}$ satisfies
\begin{equation}
    \Delta x^2 = \frac{1}{2}e^{-2r}, \quad \Delta p^2 = \frac{1}{2}e^{2r}.
\end{equation}

\vspace{-1em}
\subsection{Homodyne, Heterodyne, and Bell measurements}
\label{sec:cv_measurements}
\vspace{-1em}

The two most ubiquitous measurement strategies in quantum optics are \textit{homodyne} and \textit{heterodyne} measurements. In this section, we make precise the concept that homodyne measurements enable shot-noise limited measurements of a single quadrature axis, while heterodyne measurements simultaneously capture orthogonal quadratures with an additional noise cost. These measurements represent either extreme of the broader family of Gaussian, entanglement-free bosonic measurements, also known as generaldyne measurements, which interpolate between sharp estimates of single quadratures and noisy, simultaneous estimates of noncommuting quadratures. For the restricted-measurement separations discussed in this work, it will suffice to consider only the homodyne and heterodyne cases, as access to generaldyne measurements does not change the relevant asymptotics (namely, because measurement of noncommuting quadratures stays vacuum-noisy limited). Then, we review the concept of continuous-variable Bell measurement, the key tool in our algorithm.

\vspace{-1em}
\subsubsection{Homodyne measurements} 
\vspace{-1em}

Theoretically, homodyne measurement corresponds to projective measurement in the eigenbasis of a single quadrature. For a single mode and an angle $\theta$, define the rotated quadrature operator
\begin{equation}
    \hat{x}_\theta = \xh \cos\theta + \ph \sin\theta \ .
\end{equation}
Let $\ket{x_\theta}$ denote the generalized eigenstate satisfying $\hat{x}_\theta\ket{x_\theta} = x\ket{x_\theta}$. The homodyne POVM is
\begin{equation}
    \Pi_{\textnormal{hom}}^{\theta}(x) = \ketbra{x_\theta}{x_\theta}, \quad x\in \mathbb{R} \ ,
\end{equation}
with outcome distribution $P_\rho^\theta(x) = \tr(\rho \Pi_{\textnormal{hom}}^\theta(x))$. Homodyne measurements are thus the projection of the bosonic state onto a known phase-space axis. Identifying the state with its Wigner function, learning properties of a bosonic state equates to reconstructing properties of its Wigner function from the function's marginal.
\begin{fact}
\label{fact:homodyne_marginal}
    Let $\rho$ be any single-mode continuous-variable state with Wigner function $W_\rho(x,p)$. Then the outcome distribution of homodyne measurement at angle $\theta$ is the Wigner marginal along the axis conjugate to $\hat{x}_\theta$,
    \begin{equation}
        P_\rho^\theta(x) = \int_{\mathbb{R}} dp \ W_\rho(x\cos\theta - p\sin\theta, x\sin\theta + p\cos\theta) \ .
    \end{equation}
    In particular, if $\rho$ is Gaussian with mean $\bm m$ and covariance matrix $\bm\Sigma$, then $P_\rho^\theta(x)$ is a one-dimensional Gaussian with mean $\bm u_\theta^T \bm m$ and variance $\bm u_\theta^T \bm\Sigma \bm u_\theta$, where $\bm u_\theta = (\cos\theta,\sin\theta)^T$.
\end{fact}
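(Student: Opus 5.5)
The plan is to compute the homodyne outcome distribution from the characteristic function, using the Wigner--Weyl correspondence of Eq.~\eqref{eq:Wigner_func}. For a single mode, write the rotated quadrature as $\hat{x}_\theta = \bm u_\theta^T \hat R$. The projector-valued measure of $\hat{x}_\theta$ has Fourier transform $\int dx\, e^{ikx}\Pi^\theta_{\textnormal{hom}}(x) = e^{ik\hat{x}_\theta}$. Hence the characteristic function of the outcome law is
\begin{equation}
\int dx\, e^{ikx}P^\theta_\rho(x) = \tr\!\left(\rho\, e^{ik\hat{x}_\theta}\right).
\end{equation}
The operator $e^{ik\hat{x}_\theta}$ is a displacement (Weyl) operator. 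With $D(r)=\exp(-ir^T\Omega\hat R)$, choosing $r$ so that $-r^T\Omega = k\bm u_\theta^T$, i.e.\ $r = k\,\Omega\bm u_\theta$ up to sign convention, gives $e^{ik\hat{x}_\theta}=D(r)$. So the outcome characteristic function equals the Weyl characteristic function $\chi_\rho$ evaluated on the line $\{k\,\Omega\bm u_\theta : k\in\mathbb{R}\}$.

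Next I invoke the fact that $\chi_\rho$ is the symplectic Fourier transform of $W_\rho$. In real coordinates this reads $\chi_\rho(r)=\int d^2\xi\, W_\rho(\xi)\, e^{-ir^T\Omega\xi}$. Restricting to $r=k\Omega\bm u_\theta$ gives $e^{-ir^T\Omega\xi} = e^{ik\,\bm u_\theta^T\xi}$, since $\Omega^T\Omega=\mathds{1}$ and the sign follows from antisymmetry. Therefore
\begin{equation}
\tr(\rho e^{ik\hat{x}_\theta}) = \int d^2\xi\, W_\rho(\xi)\, e^{ik\,\bm u_\theta^T\xi}.
\end{equation}
The right-hand side is the one-dimensional Fourier transform of the pushforward of $W_\rho$ under the linear map $\xi\mapsto \bm u_\theta^T\xi$. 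By Fourier inversion, valid because both sides are bounded continuous functions of $k$ and $W_\rho\in L^1$ after a standard density/approximation argument, $P^\theta_\rho$ equals that pushforward. Parametrizing $\xi = x\bm u_\theta + p\bm u_\theta^\perp$ with $\bm u_\theta^\perp=(-\sin\theta,\cos\theta)^T$ is an orthogonal change of variables with unit Jacobian. It yields exactly $\int dp\, W_\rho(x\cos\theta-p\sin\theta,\, x\sin\theta+p\cos\theta)$.

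For the Gaussian case, the marginal of $\mathcal N(\bm m,\bm\Sigma)$ under the linear functional $\bm u_\theta^T$ is Gaussian with mean $\bm u_\theta^T\bm m$ and variance $\bm u_\theta^T\bm\Sigma\bm u_\theta$. This is immediate from the Gaussian characteristic function $e^{ik\bm u^T\bm m - k^2\bm u^T\bm\Sigma\bm u/2}$.

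The main obstacle is bookkeeping rather than substance: matching the sign and normalization conventions between $D(r)$, the complex form of Eq.~\eqref{eq:Wigner_func}, and the $\hbar=1$ quadrature definitions, so that the line in $\chi_\rho$ maps onto $\bm u_\theta$ rather than $\Omega\bm u_\theta$. I would check this first on the vacuum, where the marginal variance must come out as $1/2$ for every $\theta$. A secondary technical point is justifying the inversion for general (non-integrable) Wigner functions. I would handle this by proving the identity for trace-class operators with Schwartz Wigner functions and extending by continuity in trace norm.
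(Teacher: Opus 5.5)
Your argument is correct, but it follows a different route from the paper.

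The paper handles $\theta=0$ by quoting the position-marginal property $\bra{x}\rho\ket{x}=\int dp\,W_\rho(x,p)$. It then reduces general $\theta$ to that case through the rotation unitary $U_\theta$ with $U_\theta^\dagger\hat x U_\theta=\hat x_\theta$, using the symplectic covariance $W_{U_\theta\rho U_\theta^\dagger}(\xi)=W_\rho(S_{\mathrm{rot}}(\theta)^{-1}\xi)$.

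You instead treat all angles at once. The outcome characteristic function $\tr(\rho e^{ik\hat x_\theta})$ is the Weyl characteristic function restricted to a line. The Fourier-slice identity, together with injectivity of the Fourier transform on finite signed measures, then identifies the marginal. This is essentially the argument the paper itself uses for the next fact (Fact~\ref{fact:homodyne_charfunc}), so your route unifies the two statements. Compared with the paper's proof, it has three advantages:
\begin{itemize}
\item it needs neither the $\theta=0$ marginal property nor the rotation covariance of $W_\rho$, only the single identity $\chi_\rho(r)=\int d^2\xi\,W_\rho(\xi)e^{-ir^T\Omega\xi}$;
\item the Gaussian case falls out directly from the Gaussian characteristic function;
\item it makes explicit the integrability issue ($W_\rho$ need not lie in $L^1$), which the paper's proof passes over.
\end{itemize}
The paper's version is shorter and more geometric, at the price of resting on two quoted properties of the Wigner function.

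On the sign bookkeeping you flag: with $D(r)=\exp(-ir^T\Omega\hat R)$, the choice giving $D(r)=e^{ik\hat x_\theta}$ is $r=k\,\Omega^T\bm u_\theta=-k\,\Omega\bm u_\theta$, not $k\,\Omega\bm u_\theta$. The same $r$ enters both the operator identity and the restriction of $\chi_\rho$, so a wrong sign only reparametrizes $k\mapsto -k$ on both sides and cannot break the argument. You can also bypass $D(r)$ entirely by noting that the Weyl symbol of $e^{ik\hat x_\theta}$ is simply $e^{ik\bm u_\theta^T\xi}$.
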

\begin{proof}
    By the defining properties of the Wigner function, the probability density for measuring $\hat{x}$ is obtained by marginalizing over the conjugate phase-space coordinate,
    \begin{equation}
        P_\rho^0(x) = \bra{x}\rho\ket{x} = \int_{\mathbb{R}} dp \ W_\rho(x,p) \ .
    \end{equation}
    For general $\theta$, let $U_\theta$ denote the Gaussian unitary implementing the symplectic rotation $S_{\mathrm{rot}}(\theta)$, so that $U_\theta^\dagger \xh U_\theta = \hat{x}_\theta$. Measuring $\hat{x}_\theta$ on $\rho$ is equivalent to measuring $\hat{x}$ on $U_\theta \rho U_\theta^\dagger$, hence
    \begin{equation}
        P_\rho^\theta(x) = \int_{\mathbb{R}} dp \ W_{U_\theta \rho U_\theta^\dagger}(x,p) \ .
    \end{equation}
    Under a symplectic rotation, the Wigner function transforms by a linear change of variables, so $W_{U_\theta \rho U_\theta^\dagger}(x,p) = W_\rho(x\cos\theta - p\sin\theta, x\sin\theta + p\cos\theta)$, which gives the stated marginal formula. If $\rho$ is Gaussian then $W_\rho$ is Gaussian, and its marginal is again Gaussian with mean and variance given by the projected first and second moments, yielding the stated parameters.
\end{proof}
With these definitions, we can make precise the statement that homodyne measurements, while only capable of observing marginals of the full phase-space distribution, can do so with an uncertainty limited only by the intrinsic variance of the measured state.
\begin{fact}
\label{fact:homodyne_charfunc}
    Let $\rho$ be a Gaussian state with mean $\bm m$ and covariance matrix $\bm\Sigma$, and let $\bm u_\theta = (\cos\theta,\sin\theta)^T$. Then
    \begin{equation}
        P_\rho^\theta(x) = \mathcal{N}(x \ ; \bm u_\theta^T \bm m, \bm u_\theta^T \bm\Sigma \bm u_\theta),
    \end{equation}
    so the homodyne outcome variance is exactly the intrinsic quadrature variance of $\rho$ along $\bm u_\theta$, with no additional measurement noise penalty.
\end{fact}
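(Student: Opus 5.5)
The plan is to derive this directly from Fact~\ref{fact:homodyne_marginal}, which already identifies the homodyne outcome density $P_\rho^\theta$ with the Wigner marginal along $\bm u_\theta$. The only work is to show that, for a Gaussian $W_\rho = \mathcal{N}(\cdot;\bm m,\bm\Sigma)$, this marginal is exactly $\mathcal{N}(\bm u_\theta^T\bm m,\bm u_\theta^T\bm\Sigma\bm u_\theta)$. We then interpret the variance as the intrinsic quadrature variance $\tr(\rho(\hat x_\theta-\langle\hat x_\theta\rangle)^2)$.

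I would first rewrite the marginal integral as a pushforward. The change of variables $(x,p)\mapsto R_\theta(x,p)$ used in Fact~\ref{fact:homodyne_marginal} is an orthogonal rotation with unit Jacobian. So $P_\rho^\theta$ is the law of the linear functional $\bm u_\theta^T r$ when $r\sim W_\rho$; this holds because $W_\rho$ is a genuine probability density in the Gaussian case.

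Next I would compute the characteristic function of that law:
\begin{equation}
\E[e^{it\,\bm u_\theta^T r}] = \exp\!\left(it\,\bm u_\theta^T\bm m - \tfrac{t^2}{2}\bm u_\theta^T\bm\Sigma\bm u_\theta\right).
\end{equation}
This follows from the standard Gaussian Fourier transform in $\mathbb{R}^2$. By Fourier uniqueness, the law is the stated one-dimensional normal. Alternatively, one can write $\bm\Sigma$ in the rotated basis and integrate out the orthogonal coordinate explicitly.

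Finally, for the ``no additional noise'' clause, I would check that $\bm u_\theta^T\bm\Sigma\bm u_\theta$ equals the operator variance of $\hat x_\theta = \bm u_\theta^T\hat R$. The argument uses bilinearity and the symmetrized definition of $\bm\Sigma_{jk}$. The anticommutator symmetrization is harmless because $\hat x_\theta^2$ is its own symmetrization: $\tfrac12\{\hat x_\theta,\hat x_\theta\} = \hat x_\theta^2$. Hence the outcome variance equals $\Delta x_\theta^2$, with no added vacuum term, in contrast to heterodyne.

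I expect no real obstacle here. The only subtle point is that Wigner functions can be negative in general, so the pushforward interpretation needs justification. That concern disappears here because Gaussian Wigner functions are nonnegative densities. Even without that, the marginal identity from Fact~\ref{fact:homodyne_marginal} is linear and the Fourier computation goes through unchanged.
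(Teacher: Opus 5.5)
Your proof is correct, but it takes a different route from the paper's.

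The paper does not use Fact~\ref{fact:homodyne_marginal} at all; it works at the operator level. It writes the classical characteristic function of the outcomes as $\Phi_\theta(t)=\tr(\rho\,e^{it\hat{x}_\theta})$ and recognizes $e^{it\hat{x}_\theta}$ as the Weyl operator $D(r)$ with $r=t\Omega^T\bm u_\theta$. So $\Phi_\theta$ is the restriction of the Gaussian characteristic function $\chi_\rho$ to a line through the origin, hence a one-dimensional Gaussian characteristic function.

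You instead reuse Fact~\ref{fact:homodyne_marginal} and reduce everything to the classical law of $\bm u_\theta^T r$ under $r\sim\mathcal{N}(\bm m,\bm\Sigma)$. Note that the proof of Fact~\ref{fact:homodyne_marginal} already contains essentially your marginalization argument, and its statement already includes the Gaussian conclusion.

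The two routes buy different things:
\begin{itemize}
\item The paper's route is self-contained. It needs neither the Wigner-marginal formula nor nonnegativity of $W_\rho$. It also makes the link between outcome moments and operator moments of $\hat{x}_\theta$ automatic, because $\Phi_\theta$ is the Fourier transform of the spectral measure of $\hat{x}_\theta$ in $\rho$.
\item Your route avoids convention bookkeeping that the paper compresses into ``$\chi_\rho(t\bm u_\theta)$ under the stated identification.'' The relevant line actually runs through $\Omega^T\bm u_\theta$, and reading off the mean and variance requires the explicit Gaussian form of $\chi_\rho$. In your version the geometry is fixed by the marginal formula, and the Gaussian Fourier transform is standard.
\end{itemize}

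Your explicit check that $\bm u_\theta^T\bm\Sigma\bm u_\theta=\Var_\rho(\hat{x}_\theta)$, via the symmetrized definition of $\bm\Sigma$, is also a genuine addition. It is what actually justifies the ``no additional measurement noise'' clause, which the paper leaves implicit.
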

\begin{proof}
    Consider the classical characteristic function of the homodyne outcomes,
    \begin{equation}
        \Phi_\theta(t) = \int_{\mathbb{R}} dx \ P_\rho^\theta(x)e^{itx} = \tr(\rho e^{it\hat{x}_\theta}) \ .
    \end{equation}
    Using $\hat{x}_\theta = \bm u_\theta^T \hat R$ and the definition $D(r)=\exp(-ir^T\Omega \hat R)$, note that for $r=t\Omega^T \bm u_\theta$ we have $r^T\Omega \hat R = -t\bm u_\theta^T \hat R = -t\hat{x}_\theta$, and hence $e^{it\hat{x}_\theta} = D(r)$. Therefore $\Phi_\theta(t)$ is equal to $\chi_\rho$ evaluated at the corresponding displacement parameter, which yields $\Phi_\theta(t)=\chi_\rho(t\bm u_\theta)$ under the stated identification of real and complex coordinates. For Gaussian $\rho$, $\chi_\rho$ is a Gaussian function, and restricting it to a one-dimensional line produces a one-dimensional Gaussian characteristic function with mean $\bm u_\theta^T \bm m$ and variance $\bm u_\theta^T \bm\Sigma \bm u_\theta$, yielding the stated form for $P_\rho^\theta(x)$.
\end{proof}
Physically, balanced homodyne detection is implemented by interfering the signal mode with a strong coherent local oscillator on a 50:50 beamsplitter and subtracting the photocurrents at the two output photodiodes. In the limit of a large local oscillator amplitude, the difference current is proportional to a single quadrature of the signal field, with the measured quadrature selected by the relative optical phase between the signal and the local oscillator \cite{LvovskyRaymer2009}. This setup is standard in optical experiments and has close analogues in microwave platforms, where an IQ mixer and phase-sensitive amplification play the role of the optical local oscillator and photocurrent subtraction \cite{Wallraff2004,Schuster2005,PhysRevLett.60.764,PhysRevA.39.2519,CastellanosBeltran2008,PhysRevB.83.134501,RevModPhys.82.1155}.

Because homodyne measurement is sharp on a single quadrature, it can directly resolve squeezed fluctuations when the squeezing axis is aligned with the measured quadrature. Concretely, if we can prepare a single-mode squeezed probe whose squeezed quadrature coincides with $\hat{x}_\theta$, then the homodyne outcome variance satisfies $\Var(\hat{x}_\theta) = (1/2)e^{-2r}$, yielding the familiar $e^{-2r}$ reduction in shot variance relative to vacuum \cite{Schnabel2017}.

In practice, however, performing squeezed homodyne measurements along many quadrature axes within the same experiment is challenging, even though the local oscillator phase can be tuned. High-quality squeezing is typically generated by a sub-threshold optical parametric oscillator together with active phase control that locks the squeezed quadrature angle to a chosen reference field, and maintaining this lock over long times requires dedicated coherent control fields and feedback loops \cite{Vahlbruch2006, Chelkowski2007, Schnabel2017}. As a result, rapidly changing the effective squeezing angle between many values amounts to rapidly retuning or re-locking these control systems while preserving low phase noise, which is widely regarded as technically demanding in precision implementations of squeezed homodyne readout. For practical sensing experiments, homodyne measurements on a single mode can generally only be made along a small, fixed set of axes, rendering properties which are not captured well by this fixed set of marginals difficult to learn. This will be crucial to our discussion of practical quantum advantage in Section \ref{sec:transport_QA}.

\vspace{-1em}
\subsubsection{Heterodyne measurements} 
\vspace{-1em}

Heterodyne measurements simply involve projective measurement in the basis of coherent states. For a single mode, the heterodyne POVM consists of elements of the form
\begin{equation}
    \Pi_{\textnormal{het}}(\alpha) = \frac{1}{\pi}\ketbra{\alpha}{\alpha}, \quad \alpha\in \mathbb{C}
\end{equation}
From this definition, the vacuum noise cost of heterodyne tomography becomes clear.
\begin{fact} \label{fact:heterodyne_noise}
    Let $\rho$ be any continuous-variable quantum state, and let $\chi_\rho(\zeta)$ be its characteristic function. Then the outcome distribution of performing heterodyne measurement on $\rho$ has classical characteristic function $\chi_\rho(\zeta)e^{-|\zeta|^2/2}$. For a Gaussian state $\rho$ with (complex) mean vector $\bm{m}$ and covariance matrix $\bm{\Sigma}$, this simplifies to the outcome distribution
    \begin{equation}
        Q_{\rho}(\alpha) = \mathcal{N}_\mathbb{C}(\alpha \ ; \bm{m}, \bm{\Sigma} + \mathds{1}/2)
    \end{equation}
\end{fact}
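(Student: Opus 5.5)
The plan is to compute the heterodyne outcome distribution through its classical characteristic function. This reduces the claim to a single Gaussian identity, the characteristic function of the vacuum. Write the POVM as $\Pi_{\mathrm{het}}(\alpha) = \frac{1}{\pi} D(\alpha)\ketbra{0}{0}D(\alpha)^\dagger$, so that $Q_\rho(\alpha) = \frac{1}{\pi}\bra{0}D(\alpha)^\dagger \rho D(\alpha)\ket{0}$ is the Husimi function. The first step is to express this overlap in terms of characteristic functions. We use the Weyl expansion $\rho = \frac{1}{\pi}\int d^2\zeta\, \chi_\rho(\zeta) D(-\zeta)$, which is the inverse of the definition $\chi_\rho = \tr(\rho D(\cdot))$ and is equivalent to the invertibility of $\rho\mapsto W_\rho$ recalled above. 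Inserting it gives $Q_\rho(\alpha) = \frac{1}{\pi^2}\int d^2\zeta\,\chi_\rho(\zeta)\bra{0}D(\alpha)^\dagger D(-\zeta)D(\alpha)\ket{0}$.

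The second step applies the two transformation rules quoted earlier. Conjugating a displacement by $D(\alpha)$ produces only a phase, $D(\alpha)^\dagger D(-\zeta) D(\alpha) = e^{\alpha^*\zeta - \alpha\zeta^*}D(-\zeta)$ up to the sign convention. The vacuum expectation is $\bra{0}D(-\zeta)\ket{0} = e^{-|\zeta|^2/2}$, which is the Gaussian characteristic function of a state with covariance $\mathds{1}/2$ in the $\hbar=1$ convention. Hence $Q_\rho$ is, up to the $1/\pi^2$ normalization, the symplectic Fourier transform of $\chi_\rho(\zeta)e^{-|\zeta|^2/2}$.

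Computing $\E_{Q_\rho}[e^{\zeta^\dagger\alpha - \alpha^\dagger\zeta}]$ by Fourier inversion then returns exactly $\chi_\rho(\zeta)e^{-|\zeta|^2/2}$, which is the first claim. Operationally, the heterodyne outcome is the Wigner-distributed variable plus independent noise from a vacuum Wigner function.

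For the Gaussian case, $\chi_\rho(\zeta)$ is the exponential of a linear term in $\bm m$ minus a quadratic form in $\bm\Sigma$. Multiplying by $e^{-|\zeta|^2/2}$ adds $\mathds{1}/2$ to the quadratic form, because products of characteristic functions correspond to convolutions of densities and covariances of independent summands add. Uniqueness of characteristic functions then identifies $Q_\rho = \mathcal{N}_\mathbb{C}(\bm m, \bm\Sigma + \mathds{1}/2)$.

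The main obstacle is purely conventional. Signs in the phase $e^{\zeta^\dagger\alpha-\alpha^\dagger\zeta}$, the factor relating the complex variance parameter to the real covariance matrix (the factor of $2$ noted above for $\mathcal{N}_\mathbb{C}$), and the $1/\pi$ normalization all have to be tracked consistently so that the added noise comes out as $\mathds{1}/2$ in the real quadrature covariance. I would fix the conventions by checking the formula on $\rho=\ketbra{0}{0}$. For the vacuum, $Q(\alpha)=\frac{1}{\pi}e^{-|\alpha|^2}$ must have quadrature variances $1/2+1/2=1$, and matching this pins down all constants.
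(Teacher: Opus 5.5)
Your proof is correct, but it runs the paper's computation in the Fourier domain rather than in phase space.

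\textbf{The paper's route.} It writes $Q_\rho(\alpha)=\frac{1}{\pi}\bra{\alpha}\rho\ket{\alpha}$ as a Wigner overlap, using $\tr(\rho\sigma)=\pi\int W_\rho W_\sigma$ and the coherent-state Wigner function. This gives
$Q_\rho(\alpha)=\frac{2}{\pi}\int d^2\beta\, W_\rho(\beta)e^{-2|\alpha-\beta|^2}$, i.e.\ $Q_\rho=W_\rho*\mathcal{N}_\mathbb{C}(0,1/2)$. Only then does it invoke the convolution theorem to get $\chi_\rho(\zeta)e^{-|\zeta|^2/2}$, and the Gaussian case is a convolution of two Gaussians.

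\textbf{Your route.} You expand $\rho$ in displacement operators, so the product $\chi_\rho(\zeta)\bra{0}D(-\zeta)\ket{0}$ appears at once and only a Fourier inversion remains. That inversion is legitimate: $|\chi_\rho|\le 1$ makes $\chi_\rho e^{-|\zeta|^2/2}$ integrable, and $Q_\rho$ is a bounded continuous probability density.

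\textbf{What each buys.} Your version uses only the displacement algebra and the vacuum characteristic function, the two transformation rules the paper lists. It reaches the characteristic-function claim directly, without the coherent-state Wigner function or the overlap formula; that overlap formula is itself the Plancherel identity for your Weyl expansion. The paper's version reaches the convolution form directly. That form is what is used later as the ``additive vacuum noise'' model set against Bell sampling.

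\textbf{Two small remarks.}
\begin{itemize}
\item Your operational gloss, ``Wigner sample plus independent vacuum noise,'' is literal only when $W_\rho\ge 0$. The identity itself needs no positivity.
\item The vacuum check fixes the normalization and the added variance. It cannot fix the sign of the phase kernel, since the vacuum is symmetric under $\zeta\to-\zeta$. Fix that sign by requiring the classical transform of $Q_\rho$ to use the same kernel that maps $W_\rho$ to $\chi_\rho$, or by checking a coherent state.
\end{itemize}
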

\begin{proof}
    We work with single-mode states; the extension to $n$ modes follows immediately. Note here that we are using the complex Gaussian distribution for convenience, which is operationally equivalent to the real Gaussian distribution over $\xh, \ph$ coordinates up to a factor of two in the vacuum variance as discussed in Section \ref{subsubsec:Gaussian_states}. By definition, the output distribution is
    \begin{equation}
        Q_\rho(\alpha) = \frac{1}{\pi}\bra{\alpha}\!\rho\!\ket{\alpha} \ .
    \end{equation}
    It is well-known that for any two density matrices $\rho, \sigma$ with Wigner functions $W_\rho, W_\sigma$,
    \begin{equation}
        \tr(\rho\sigma) = \pi \int d^2\beta \ W_\rho(\beta)W_\sigma(\beta) \ .
    \end{equation}
    Using this identity and the Wigner function of a coherent state, we see
    \begin{equation}
        Q_\rho(\alpha) = \frac{2}{\pi}\int d^2\beta \ W_\rho(\beta)e^{-2|\alpha-\beta|^2}
    \end{equation}

    For general, non-Gaussian states $\rho$, we thus see  that the phase-space distribution of heterodyne measurements is obtained by convolving the state's Wigner function with a mean-zero Gaussian vacuum noise. The classical characteristic function of $Q_\rho$ is obtained by the product of the characteristic functions of $W_\rho(\beta)$ and the vacuum noise penalty, which yields  $\chi_\rho(\zeta)e^{-|\zeta|^2/2}$. Hence, in Fourier space, the heterodyne outcome distribution is simply the state's characteristic function, blurred by vacuum noise whose magnitude is independent of the state. This is particularly simple for Gaussian states; if $\rho$ is Gaussian with mean $\bm m$ and covariance $\Sigma$, its Wigner function is precisely 
    \begin{equation}
        W_\rho(\beta) = \mathcal{N}_\mathbb{C}(\beta \ ;  \bm m, \bm \Sigma) \ .
    \end{equation}
    Then $Q_\rho(\alpha)$ is the convolution of two Gaussians. The vacuum Gaussian is mean-0, with covariance matrix $\mathds{1}$, so the convolution yields another Gaussian distributed as $\mathcal{N}_\mathbb{C}(\bm m, \bm \Sigma + \mathds{1}/2)$.  
\end{proof}
Because the vacuum noise penalty is independent of the measured state, heterodyne measurements can never surpass the vacuum noise floor even if squeezed probe states are prepared. This implies, for instance, that heterodyne measurements cannot estimate a single unknown displacement with shot variance scaling like $e^{-2r}/2$, in the manner that homodyne measurements can. In exchange, heterodyne measurements are \textit{informationally complete}: a set of heterodyne outcomes can uniquely identify any quantum state. This is easy to see from Fact \ref{fact:heterodyne_noise}, because the outcome distribution $Q_\rho$ is directly proportional to the characteristic function $\chi_\rho$, which uniquely specifies $\rho$. 

Physically, heterodyne detection is a joint quadrature measurement implemented by mixing the signal with an auxiliary vacuum mode on a beamsplitter, and performing two homodyne measurements on the two output ports, with local oscillator phases differing by $\pi/2$. This setup is often described as double-homodyne or eight-port homodyne detection. In optical settings, heterodyne can also be realized by beating the signal against a frequency-shifted local oscillator and demodulating the resulting photocurrent to obtain in-phase and quadrature components simultaneously.

\vspace{-1em}
\subsubsection{Two-mode squeezed probes and Bell measurements} \label{sec:BM_prelim}
\vspace{-1em}

Beyond homodyne and heterodyne, a third measurement primitive that plays a central role in this work is the continuous-variable analogue of Bell (EPR) basis measurement, which we occasionally abbreviate as BM, performed on a two-mode squeezed vacuum (TMSV) probe. The underlying EPR correlations and their use for joint quadrature readout have been studied since the earliest days of continuous-variable entanglement, and became standard tools in quantum optics through the development of continuous-variable teleportation and related protocols \cite{Einstein1935,BraunsteinKimble1998,Furusawa1998,BraunsteinvanLoock2005}. More recently, the BM strategy was highlighted as a powerful way to estimate properties of bosonic displacement channels \cite{jiang2024bellfix}; our work leverages this primitive to go beyond proof-of-concept demonstrations of information-theoretic quantum advantage and establish BM as a practical tool for learning properties of physically relevant classical-field signals.

We define the two-mode squeezing unitary on modes $1$ and $2$ as
\begin{equation}
    S_2(r) = \exp\!\left(r\hat{a}_1\hat{a}_2 - r\hat{a}_1^\dagger \hat{a}_2^\dagger\right),
\end{equation}
and the corresponding TMSV state as $\ket{\mathrm{TMSV}_r}=S_2(r)\ket{0,0}$. A continuous-variable Bell measurement is a joint measurement of the commuting EPR quadratures
\begin{equation}
    \hat{x}_{\mathrm{BM}} = \frac{1}{\sqrt{2}}(\xh_1-\xh_2), \quad \hat{p}_{\mathrm{BM}} = \frac{1}{\sqrt{2}}(\ph_1+\ph_2),
\end{equation}
which can be realized experimentally by interfering two squeezed modes on a balanced beamsplitter (e.g. performing parameteric down-conversion) and then performing homodyne detection on the above conjugate quadratures at the two outputs \cite{BraunsteinKimble1998,BraunsteinvanLoock2005}. On the TMSV state, these commuting observables are simultaneously squeezed:
\begin{equation}
    \Var(\hat{x}_{\mathrm{BM}})=\Var(\hat{p}_{\mathrm{BM}})=\frac{1}{2}e^{-2r},
\end{equation}
and a complex phase-space displacement $\zeta$ can be reconstructed via $\zeta = \hat{x}_{\mathrm{BM}} + i\hat{p}_{\mathrm{BM}}$. Note that in the complex convention, $\Var(\zeta) = e^{-2r}$, dropping the factor of $1/2$.

The effective POVM for this measurement is given by the joint spectral measure of the commuting observables $\hat{x}_{\mathrm{BM}}$ and $\hat{p}_{\mathrm{BM}}$, with elements
\begin{equation}
    \Pi_{\mathrm{BM}}(x,p) = \ketbra{x,p}{x,p},
\end{equation}
where $\ket{x,p}$ denotes the (generalized) simultaneous eigenstate satisfying $\hat{x}_{\mathrm{BM}}\ket{x,p}=x\ket{x,p}$ and $\hat{p}_{\mathrm{BM}}\ket{x,p}=p\ket{x,p}$.

If an unknown displacement acts on mode $1$, then the BM outcomes provide simultaneous unbiased estimates of both  noncommuting quadratures with additive noise set by the EPR variances above, circumventing the Heisenberg uncertainty principle by sacrificing variance two of the four conjugate quadratures which we choose not to measure. In this sense, BM combines the key operational benefits of homodyne and heterodyne: it yields a two-dimensional outcome per mode (and is thus informationally complete for learning phase-space properties of displacement channels), while achieving sub-vacuum uncertainty in both quadratures at once.

This contrasts sharply with heterodyne measurement. As shown in Fact \ref{fact:heterodyne_noise}, heterodyne outcomes are unavoidably blurred by an additive vacuum contribution of covariance $\mathds{1}$, independent of the probe state. Consequently, in any regime where the intrinsic uncertainty scale relevant for the inference task is well below the vacuum noise floor, heterodyne behaves as a strictly weaker version of Bell sampling, namely the same simultaneous-quadrature readout but with a fixed, irreducible noise penalty. These intuitions are made precise in the following proposition and remark.

\begin{proposition}[Entanglement-enabled Bell sampling as low-noise two-dimensional readout]
Fix $r \ge 0$. Prepare a two-mode squeezed vacuum state with squeezing $r$ across modes $A$ and $B$, apply a displacement $D(\alpha)$ with $\alpha \sim p$ to mode $A$, and perform the standard CV Bell measurement of the commuting EPR quadratures
\begin{align}
\hat{x}_- = \frac{\hat{x}_A - \hat{x}_B}{\sqrt{2}}, \qquad \hat{p}_+ = \frac{\hat{p}_A + \hat{p}_B}{\sqrt{2}}.
\end{align}
Let the complex-valued measurement record be $\zeta = \hat{x}_- + i \hat{p}_+$. Then conditioned on $\alpha$, $\zeta$ is distributed as
\begin{align}
\zeta = \alpha + Z, \qquad Z \sim \mathcal N_{\mathbb C}(0, e^{-2 r}),
\label{eq:bell-additive-noise}
\end{align}
and hence the unconditional outcome distribution is the Gaussian convolution $p * \mathcal N_{\mathbb{C}}(0, e^{-2 r})$.
\end{proposition}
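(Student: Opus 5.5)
We work in the Heisenberg picture on the quadrature operators, conditioning first on a fixed $\alpha$ and averaging over $\alpha \sim p$ only at the end. The plan has three steps: compute how the displacement and the two-mode squeezer act on the EPR quadratures $\hat{x}_-$ and $\hat{p}_+$, read off the conditional joint law of $(\hat{x}_-,\hat{p}_+)$, and then mix over $\alpha$.

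\textbf{Step 1 (displacement).} Conjugation by $D(\alpha)$ on mode $A$ shifts the quadratures as
\begin{equation}
D(\alpha)^\dagger \hat{x}_A D(\alpha) = \hat{x}_A + \sqrt{2}\,\Re\alpha, \qquad D(\alpha)^\dagger \hat{p}_A D(\alpha) = \hat{p}_A + \sqrt{2}\,\Im\alpha,
\end{equation}
in the convention $\hat{x} = (\hat{a}+\hat{a}^\dagger)/\sqrt{2}$. Mode $B$ is untouched. It follows that $\hat{x}_-$ shifts by $\Re\alpha$ and $\hat{p}_+$ shifts by $\Im\alpha$. The factor $1/\sqrt{2}$ in the EPR combinations exactly cancels the $\sqrt{2}$ from the displacement, which is where the normalization $\zeta = \alpha + Z$ comes from. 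This bookkeeping is the step I expect to be the main obstacle, because the paper moves between real and complex conventions and the sign of $\ph$ in terms of $\hat{a}, \hat{a}^\dagger$ is not standard. I would fix the conventions once, at the operator level, and check that the shift has the stated sign. If it does not, the sign can be absorbed by relabeling which EPR pair is measured ($\hat{x}_A \pm \hat{x}_B$) or by redefining $\zeta$.

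\textbf{Step 2 (squeezing).} Under $S_2(r)$, the standard Bogoliubov transformation $\hat{a}_A \mapsto \cosh r\,\hat{a}_A - \sinh r\,\hat{a}_B^\dagger$ gives
\begin{equation}
\hat{x}_A - \hat{x}_B \mapsto e^{-r}(\hat{x}_A-\hat{x}_B), \qquad \hat{p}_A+\hat{p}_B \mapsto e^{-r}(\hat{p}_A+\hat{p}_B),
\end{equation}
again up to the sign convention for $r$, which is chosen so that these combinations are the squeezed ones. Hence on $\ket{0,0}$ the pair $(\hat{x}_-,\hat{p}_+)$ is Gaussian with mean zero and variance $\tfrac12 e^{-2r}$ each. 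The two operators commute, since $[\hat{x}_A-\hat{x}_B,\hat{p}_A+\hat{p}_B] = i - i = 0$, so their joint spectral measure $\Pi_{\mathrm{BM}}$ is well defined. Their covariance is $\tfrac12 e^{-2r}(\Sigma_{x_Ap_A}-\Sigma_{x_Bp_B}) = 0$ because the vacuum has no $x$--$p$ correlation. Joint Gaussianity together with zero covariance gives independence. This is most cleanly checked through the joint characteristic function
\begin{equation}
\E\bigl[e^{i(s\hat{x}_- + t\hat{p}_+)}\bigr],
\end{equation}
which is a Weyl operator evaluated on a Gaussian state and is therefore Gaussian in $(s,t)$.

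\textbf{Step 3 (conclusion).} Combining the two steps, conditioned on $\alpha$ the characteristic function of $(\hat{x}_-,\hat{p}_+)$ is
\begin{equation}
\exp\bigl(i(s\Re\alpha + t\Im\alpha)\bigr)\exp\bigl(-e^{-2r}(s^2+t^2)/4\bigr).
\end{equation}
This is exactly the law of $\alpha + Z$ with $Z \sim \mathcal{N}_{\mathbb{C}}(0,e^{-2r})$, whose real and imaginary parts are independent $\mathcal{N}(0,e^{-2r}/2)$ in the paper's complex convention. Since the channel $\int d^2\alpha\, p(\alpha)\, D(\alpha)\cdot D(\alpha)^\dagger$ is linear in the input state, the unconditional outcome density is $\int p(\alpha)\,\mathcal{N}_{\mathbb{C}}(\zeta;\alpha,e^{-2r})\,d^2\alpha$, which is the convolution $p * \mathcal{N}_{\mathbb{C}}(0,e^{-2r})$.
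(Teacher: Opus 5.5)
Your proposal is correct and follows the same route as the paper's proof: the displacement shifts $(\hat x_-,\hat p_+)$ by $(\Re\alpha,\Im\alpha)$, the TMSV supplies independent Gaussian fluctuations of variance $\tfrac12 e^{-2r}$ on each EPR quadrature, and averaging over $\alpha\sim p$ gives the convolution. Your version also spells out the commutation, zero-covariance/joint-Gaussianity and mixture steps that the paper simply asserts.

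Your hedge on the squeezing sign is genuinely needed. The map $\hat a_A\mapsto\cosh r\,\hat a_A-\sinh r\,\hat a_B^\dagger$ that you wrote is the Heisenberg action of the paper's $S_2(r)$, and it actually sends $\hat x_A-\hat x_B\mapsto e^{+r}(\hat x_A-\hat x_B)$. So $\hat x_-,\hat p_+$ are the squeezed pair only for the opposite-sign squeezer $\exp\bigl(r(\hat a_A^\dagger\hat a_B^\dagger-\hat a_A\hat a_B)\bigr)$.
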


\begin{proof}
For a two-mode squeezed vacuum state with parameter $r$, one has $\mathrm{Var}(\hat{x}_-) = \mathrm{Var}(\hat{p}_+) = e^{-2 r} / 2$. Displacing mode $A$ by $\alpha$ shifts $\hat{x}_A$ by $\sqrt{2} \, \Re(\alpha)$ and $\hat{p}_A$ by $\sqrt{2} \, \Im(\alpha)$, hence shifts $( \hat{x}_- , \hat{p}_+ )$ by $( \Re(\alpha) , \Im(\alpha) )$. Therefore $\zeta = \hat{x}_- + i \hat{p}_+$ has mean $\alpha$ and independent Gaussian fluctuations with variance $e^{-2 r} / 2$ in each quadrature, which is exactly $Z \sim \mathcal N_{\mathbb C}(0, e^{-2 r})$.
\end{proof}

\begin{remark}
Standard heterodyne readout yields an outcome $\zeta_{\mathrm{het}} = \alpha + Z_{\mathrm{het}}$ with $Z_{\mathrm{het}} \sim \mathcal N_{\mathbb C}(0,1)$, reflecting the vacuum noise penalty. The Bell-sampling model reduces the noise variance from $1$ to $e^{-2 r}$.
\end{remark}

Meanwhile, we have seen that homodyne measurement can match the $e^{-2r}$ variance reduction when squeezed probes are precisely aligned to the measured axis. However, these measurements can only access features of the phase space distribution which are captured in the marginals along the chosen homodyne axis. This observation is the basis for our theory of practical quantum advantage developed in App. \ref{sec:transport_QA} and the following examples in App. \ref{sec:prac_Gaussian_QA}.

\vspace{-1em}
\subsection{Review of optimal transport theory}
\label{sec:OT_overview}
\vspace{-1em}

Optimal transport endows the space of probability measures with a geometry that quantifies how much ``mass'' must be rearranged to transform one distribution into another (see e.g.~the encyclopedic reference~\cite{villani2008optimal}; for an introduction geared towards physicists, see~\cite{cotler2023renormalization}). In this work, the relevant measures live on bosonic phase space, so we freely identify $\C^n \simeq \R^{2n}$ and equip $\R^{2n}$ with the Euclidean norm $\|\cdot\|_2$.

Let $\mathcal{P}(\R^d)$ denote Borel probability measures on $\R^d$, and for $k \geq 1$ let $\mathcal{P}_k(\R^d)$ be the subset of measures with finite $k$th moment, $\int p(dx)\,\|x\|_2^k < \infty$. For $p,q \in \mathcal{P}(\R^d)$, write $\Pi(p,q)$ for the set of \emph{couplings} (transport plans) between $p$ and $q$, i.e.\ probability measures $\pi$ on $\R^d\times\R^d$ whose first and second marginals are $p$ and $q$. Equivalently, $\pi \in \Pi(p,q)$ if and only if there exist random variables $(X,Y)$ on a common probability space with $X\sim p$, $Y\sim q$, and $\pi = \mathrm{Law}(X,Y)$.

\begin{definition}[Wasserstein distances]\label{def:wasserstein}
Fix $k\in[1,\infty)$. For $p,q\in\mathcal{P}_k(\R^d)$, the $k$-Wasserstein distance is
\begin{align}
W_k(p,q) := \Bigg( \inf_{\pi\in\Pi(p,q)} \int_{\R^d\times\R^d} \pi(dx,dy)\,\|x-y\|_2^{\,k} \Bigg)^{1/k}.
\label{eq:Wk_primal}
\end{align}
Any optimizer $\pi^\star\in\Pi(p,q)$ {\rm (}which exists for $p,q\in\mathcal{P}_k(\R^d)${\rm )} is called an \emph{optimal transport plan}.
\end{definition}

\begin{remark}
The space $(\mathcal{P}_k(\R^d),W_k)$ is a Polish metric space, and $W_k(p_n,p)\to 0$ if and only if $p_n \to p$ weakly and $\int p_n(dx)\,\|x\|_2^k \to \int p(dx)\,\|x\|_2^k$. When $p,q$ are discrete, $W_1$ coincides with the classical earthmover distance.
\end{remark}

For a measurable map $T: \R^d \to \R^m$ and $p \in \mathcal{P}(\R^d)$, the \emph{pushforward} $T_\#p\in\mathcal{P}(\R^m)$ is defined by $(T_\#p)(A):=p(T^{-1}(A))$. If $p,\eta \in \mathcal{P}(\R^d)$, their convolution $p*\eta$ is the law of $X+Z$ for independent $X \sim p$ and $Z \sim \eta$.

\begin{lemma}[Wasserstein contractions]\label{lem:ot_contractions}
Let $k \geq 1$.
\begin{enumerate}
\item If $T:\R^d\to\R^m$ is $L$-Lipschitz, then for all $p,q\in\mathcal{P}_k(\R^d)$,
\begin{align}
W_k(T_\#p,T_\#q)\leq L\,W_k(p,q).
\label{eq:pushforward_contraction}
\end{align}
\item For any noise law $\eta \in \mathcal{P}_k(\R^d)$ and all $p,q\in\mathcal{P}_k(\R^d)$,
\begin{align}
W_k(p*\eta,q*\eta) \leq W_k(p,q).
\label{eq:convolution_contraction}
\end{align}
\end{enumerate}
\end{lemma}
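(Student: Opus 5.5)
The proof is a coupling argument in both parts: any transport plan for $(p,q)$ is pushed forward or translated into a plan for the transformed pair with no greater cost. Optimality of the original plan is then not needed; we only take an infimum at the end.

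\emph{Part 1.} Fix an optimal plan $\pi^\star \in \Pi(p,q)$, which exists by Definition~\ref{def:wasserstein}, or take an $\epsilon$-optimal plan if we want to avoid existence. Let $(X,Y)\sim\pi^\star$. Then $(T(X),T(Y))$ has marginals $T_\#p$ and $T_\#q$, so its law is a coupling in $\Pi(T_\#p,T_\#q)$. Lipschitz continuity gives $\|T(X)-T(Y)\|_2^k \le L^k\|X-Y\|_2^k$ pointwise. Taking expectations, then the infimum over couplings on the left and the $1/k$-th power, yields $W_k(T_\#p,T_\#q)\le L\,W_k(p,q)$.

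We should also check that $T_\#p\in\mathcal{P}_k(\R^m)$, so that the left side is defined. This follows from $\|T(x)\|_2\le \|T(0)\|_2+L\|x\|_2$ together with the elementary bound $(a+b)^k\le 2^{k-1}(a^k+b^k)$.

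\emph{Part 2.} Take $(X,Y)\sim\pi^\star\in\Pi(p,q)$ and, on an enlarged probability space, an independent $Z\sim\eta$. Then $X+Z\sim p*\eta$ and $Y+Z\sim q*\eta$, so the law of $(X+Z,Y+Z)$ is a coupling of $p*\eta$ and $q*\eta$. The noise cancels in the cost: $\|(X+Z)-(Y+Z)\|_2=\|X-Y\|_2$. Hence $W_k(p*\eta,q*\eta)^k\le \E\|X-Y\|_2^k=W_k(p,q)^k$.

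Finite $k$th moments of the convolutions follow from Minkowski's inequality, $\|X+Z\|_{L^k}\le\|X\|_{L^k}+\|Z\|_{L^k}$. This is exactly where the hypothesis $\eta\in\mathcal{P}_k$ is used.

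There is no real obstacle in either part. The only care points are measure-theoretic: realizing $Z$ independently of the pair $(X,Y)$ (via a product space), and confirming that the transformed measures lie in $\mathcal{P}_k$ so that the Wasserstein distances are well defined.
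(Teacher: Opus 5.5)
Your proposal is correct and takes the same route as the paper: for the first part you push a coupling forward by $(T,T)$, for the second you add a common independent $Z\sim\eta$ to both coordinates, and in each case you infimize over couplings. Your check that $T_\#p$ and $p*\eta$ lie in $\mathcal{P}_k$, so that the Wasserstein distances are defined, is a small addition that the paper leaves implicit.
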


\begin{proof}
For \eqref{eq:pushforward_contraction}, fix $\pi\in\Pi(p,q)$ and push it forward by $(T,T)$ to obtain
$\tilde\pi\in\Pi(T_\#p,T_\#q)$. Since $\|T(x)-T(y)\|_2\leq L\|x-y\|_2$,
\begin{align}
\int \tilde\pi(du,dv)\,\|u-v\|_2^k \leq L^k\int \pi(dx,dy)\,\|x-y\|_2^k\,.
\end{align}
Infimizing over $\pi$ yields \eqref{eq:pushforward_contraction}. For \eqref{eq:convolution_contraction}, take $(X,Y)\sim \pi \in \Pi(p,q)$ and $Z\sim\eta$ independent; then $(X+Z,Y+Z)$ is a coupling of $(p*\eta,q*\eta)$ with cost $\E\| (X+Z)-(Y+Z)\|_2^k=\E\|X-Y\|_2^k$. Infimizing over $\pi$ gives \eqref{eq:convolution_contraction}.
\end{proof}
\noindent The case $k=1$ admits a particularly clean dual characterization.
\begin{theorem}[Kantorovich--Rubinstein duality]\label{thm:KR}
For $p,q\in\mathcal{P}_1(\R^d)$,
\begin{align}
W_1(p,q) = \sup_{f:\,\Lip(f)\leq 1}
\Big(\E_{X\sim p}[f(X)]-\E_{Y\sim q}[f(Y)]\Big),
\label{eq:KR}
\end{align}
where $\Lip(f)$ is the smallest $L$ such that $|f(x)-f(y)|\leq L\|x-y\|_2$ for all $x,y$.
\end{theorem}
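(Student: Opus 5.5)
The plan is to prove the two inequalities separately: the "$\geq$" direction by a direct coupling argument, and the "$\leq$" direction by first treating finitely supported measures with linear-programming duality and then passing to general measures in $\mathcal{P}_1(\R^d)$ by approximation.

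\emph{Easy direction ($\geq$).} Fix a $1$-Lipschitz $f$ and any coupling $\pi\in\Pi(p,q)$, realized as $(X,Y)$. Since $p,q\in\mathcal{P}_1$ and $|f(x)|\le |f(0)|+\|x\|_2$, both expectations are finite. Then
\begin{equation}
\E f(X)-\E f(Y)=\E[f(X)-f(Y)]\le \E\|X-Y\|_2 .
\end{equation}
Taking the infimum over $\pi$ and then the supremum over $f$ gives $\sup_f(\cdots)\le W_1(p,q)$. As a byproduct, for every fixed $f$ the map $p\mapsto \E_p f$ is $1$-Lipschitz with respect to $W_1$. Consequently the right-hand side of \eqref{eq:KR}, as a function of $(p,q)$, is $1$-Lipschitz in each argument, and so is $W_1$ itself by the triangle inequality.

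\emph{Finitely supported case ($\leq$).} Let $p=\sum_i p_i\delta_{x_i}$ and $q=\sum_j q_j\delta_{y_j}$. Then $W_1(p,q)$ is the value of the finite linear program
\begin{equation}
\min \sum_{ij}\pi_{ij}\|x_i-y_j\|_2 \quad\text{subject to}\quad \pi\ge 0,\ \textstyle\sum_j\pi_{ij}=p_i,\ \sum_i\pi_{ij}=q_j .
\end{equation}
The program is feasible (take $\pi=p\otimes q$) and bounded, so strong LP duality gives
\begin{equation}
W_1(p,q)=\max\Big\{\textstyle\sum_i\phi_ip_i+\sum_j\psi_jq_j \;:\; \phi_i+\psi_j\le \|x_i-y_j\|_2\Big\}.
\end{equation}
Next I convert an optimal dual pair $(\phi,\psi)$ into a single $1$-Lipschitz function on $\R^d$. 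Define
\begin{equation}
f(z):=\min_j\big(\|z-y_j\|_2-\psi_j\big).
\end{equation}
As a minimum of $1$-Lipschitz functions, $f$ is $1$-Lipschitz. The dual constraint gives $f(x_i)\ge\phi_i$. Choosing $j$ in the minimum gives $f(y_j)\le-\psi_j$, that is, $-f(y_j)\ge\psi_j$. Therefore
\begin{equation}
\E_pf-\E_qf\ \ge\ \textstyle\sum_i\phi_ip_i+\sum_j\psi_jq_j\ =\ W_1(p,q),
\end{equation}
which is the required inequality for finitely supported measures.

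\emph{Approximation.} For general $p,q\in\mathcal{P}_1(\R^d)$, I choose finitely supported $p_n,q_n$ with $W_1(p_n,p)\to0$ and $W_1(q_n,q)\to0$. Concretely, I would truncate to a large ball (the tail cost vanishes by the finite first moment) and then quantize onto a fine grid, moving each point by at most the mesh size. Both sides of \eqref{eq:KR} are continuous under this convergence by the Lipschitz property noted in the easy direction, so the identity passes to the limit.

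I expect the main technical care to lie in the reduction step: verifying strong duality and attainment in the finite LP, and checking that the single potential $f$ dominates both dual variables. The approximation step is routine but requires the explicit construction of discrete approximants in $W_1$ for measures with only a finite first moment.
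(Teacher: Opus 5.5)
Your proof is correct and complete. The paper does not prove this theorem. It quotes Kantorovich--Rubinstein duality as a standard fact in its optimal-transport review. It only uses the result through $\eta$-approximate maximizers, in the proofs of Theorem~\ref{thm:prac_ot_minimax} and Corollary~\ref{cor:prac_nonid_omega}, so the supremum form you establish is exactly what is needed downstream.

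Compare your route with the usual textbook one. That route proves general Kantorovich duality for a lower-semicontinuous cost, via Fenchel--Rockafellar or a minimax argument. It then observes that for a metric cost every $c$-concave potential is $1$-Lipschitz with $\psi=-\phi$. Your proof is more elementary, and needs only three ingredients:
\begin{itemize}
\item finite-dimensional LP duality;
\item the $c$-transform $f(z)=\min_j(\|z-y_j\|_2-\psi_j)$, which is exactly the $c$-concavification step specialized to finitely many points;
\item density of finitely supported measures in $(\mathcal{P}_1(\R^d),W_1)$.
\end{itemize}
In exchange, the general route gives attainment of the supremum by an optimal potential and holds for arbitrary costs on Polish spaces. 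Your approximation step yields only the value of the supremum, but that is all the statement asserts.

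One point is worth making explicit. The limit step uses the triangle inequality for $W_1$, which in general rests on the gluing lemma. Here it is elementary, because your $p_n,q_n$ are pushforwards of $p,q$ under finitely-valued maps. You can therefore glue an optimal finite coupling of $(p_n,q_n)$ back to $p$ and $q$ by conditioning on the finitely many cells of positive mass.
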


\begin{remark}[Immediate corollaries]\label{rem:KR_cor}
For any $L$-Lipschitz $f$ and $p,q\in\mathcal{P}_1(\R^d)$,
\begin{align}
\big|\E_p[f]-\E_q[f]\big|\leq L\,W_1(p,q).
\end{align}
In particular, taking $f(x)=u^\top x$ with $\|u\|_2\leq 1$ gives $\|\E_p[X]-\E_q[X]\|_2\leq W_1(p,q)$.
\end{remark}

When $p$ and $q$ are measures on $\R$, the optimal coupling is monotone and $W_k$ admits closed forms in terms
of quantiles. Let $F_p,F_q$ be the CDFs and $F_p^{-1},F_q^{-1}$ their generalized inverses. Then
\begin{align}
W_k(p,q) &= \Big(\int_0^1 dt\,|F_p^{-1}(t)-F_q^{-1}(t)|^k\Big)^{1/k},
\label{eq:quantile_formula} \\
W_1(p,q) &= \int_{\R} dx\,|F_p(x)-F_q(x)|.
\label{eq:cdf_formula}
\end{align}
These expressions are especially convenient when only one-dimensional marginals of a high-dimensional distribution are observable. For completeness, we record the closed form for $W_2$ between Gaussians. If $p=\mathcal{N}(m_0,\Sigma_0)$ and $q=\mathcal{N}(m_1,\Sigma_1)$ on $\R^d$, then
\begin{align}
W_2^2(p,q) = \|m_0-m_1\|_2^2 + \tr(\Sigma_0)+\tr(\Sigma_1) - 2\,\tr\Big( (\Sigma_0^{1/2}\Sigma_1\Sigma_0^{1/2})^{1/2}\Big).
\label{eq:W2_gaussian}
\end{align}

Next we study a variant of the Wasserstein distance that will be useful for us. Let $u\in \mathbb{S}^{d-1}$, let $\pi_u : \R^d \to \R$ be the projection $\pi_u(x)=u^\top x$. For $k\geq 1$ and $p,q\in\mathcal{P}_k(\R^d)$, the \emph{sliced} Wasserstein distance is
\begin{align}
\mathrm{SW}_k(p,q) := \Bigg(\int_{\mathbb{S}^{d-1}} du\,W_k\big((\pi_u)_\#p,(\pi_u)_\#q\big)^{k} \Bigg)^{1/k},
\label{eq:SW_continuous}
\end{align}
where $du$ denotes the uniform measure on the sphere. For a finite set of directions
$\mathcal U=\{u_1,\dots,u_m\}\subset \mathbb{S}^{d-1}$, define the discrete analogue
\begin{align}
\mathrm{SW}_{k,\mathcal U}(p,q) :=
\Bigg(\frac{1}{m}\sum_{j=1}^m W_k\big((\pi_{u_j})_\#p,(\pi_{u_j})_\#q\big)^k \Bigg)^{1/k}.
\label{eq:SW_discrete}
\end{align}
Since each $\pi_u$ is $1$-Lipschitz, Lemma~\ref{lem:ot_contractions} implies $\mathrm{SW}_k(p,q)\leq W_k(p,q)$ and $\mathrm{SW}_{k,\mathcal U}(p,q)\leq W_k(p,q)$.

For our analysis in this paper, we will see that restricted-angle homodyne data naturally controls distances of the form $\mathrm{SW}_{k,\mathcal U}$ (after a known Gaussian convolution), while Bell-sensing measurements provide direct access to the full $d$-dimensional geometry captured by $W_k$.

\section{Quantum Signal Learning}
\label{sec:QSL_protocol}

In this section we formalize Quantum Signal Learning (QSL) and the Bell QSL protocol that underlies our main-text results. 
QSL is designed to capture a practical regime common in quantum-noise-limited platforms: an unknown classical field induces a distribution of phase-space responses, and one wishes to estimate many physically meaningful properties of this distribution
from a classical measurement record.

\vspace{-1em}
\subsection{The Quantum Signal Learning problem}
\label{sec:QSL_problem}
\vspace{-1em}
First, we collect definitions required for our estimator.

\begin{definition} [Bell smoothing operator]
    Let $\varphi_r$ denote the density $\mathcal{N}_{\mathbb{C}}(0, e^{-2r}\mathds{1}_n)$ over $\mathbb{C}^n$, and let $p * q$ denote the convolution between two densities over $\mathbb{C}^n$,
    \begin{equation}
        (p * q)(\xi) = \int d^{2n}\alpha \ p(\alpha) q(\xi - \alpha) \ .
    \end{equation}
    Then the Bell smoothing operator $T_r$ is the Gaussian convolution acting on functions $f:\mathbb{C}^n\rightarrow \mathbb{C}$ as 
    \begin{equation}
        (T_rf)(\alpha) = (f*\varphi_r)(\alpha) \ .
    \end{equation}
\end{definition}

Akin to how the Husimi Q function of a quantum state yielded its heterodyne outcome distribution in Fact \ref{fact:heterodyne_noise}, the Bell smoothing operator allows us to compute the outcome distribution over samples of a function on phase space drawn from Bell measurement.

Next, we see that any linear Hamiltonian can be described by a classical probability density over phase-space displacements, which will allow us to formalize a property-learning problem as learning characteristics of this density.

\begin{lemma}
\label{lemma:ham_to_rdc}
    Consider any $N$-mode Hamiltonian of the form
    \begin{equation}
        H(t) = \sum_{i=1}^N f_{x, i}(t) \hat{x}_i + f_{p, i}(t)\hat{p}_i \ ,
    \end{equation}
    i.e. consisting of only linear terms in single-mode quadrature operators with arbitrary time-dependent (or stochastic) coefficients. We call such Hamiltonians linear. Let $\mathcal{E}_H^{(T)}$ denote the quantum channel corresponding to time evolution under $H(t)$ up to a time $T$. Then the action of $\mathcal{E}_H$ on an $N$-mode density matrix $\rho$ can always be written in the form
    \begin{equation}
        \mathcal{E}_H^{(T)}(\rho) = \int d^{2N}\alpha \ P_H^{(T)}(\alpha)\disp(\alpha)\rho\disp(\alpha)^\dagger \ .
    \end{equation} 
\end{lemma}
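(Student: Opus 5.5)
The plan is to show that, for each fixed realization of the coefficient functions $f_{x,i}(t), f_{p,i}(t)$, the time-ordered evolution operator $U(T)=\mathcal{T}\exp\!\big(-i\int_0^T H(t)\,dt\big)$ equals a single displacement operator times a global phase. Averaging over the randomness of the coefficients then produces the claimed mixture of displacement channels.

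\textbf{Step 1 (deterministic coefficients).} The key structural fact is that the generators $\hat{x}_i,\hat{p}_i$ together with the identity span the Heisenberg--Weyl Lie algebra. In this algebra all commutators are central: $[\hat{R}_j,\hat{R}_k]=i\Omega_{jk}\mathds{1}$. I will therefore apply the Magnus expansion to $U(T)$. The first Magnus term is
\begin{equation}
\Omega_1=-i\int_0^T dt\,H(t)=-i\sum_i\big(F_{x,i}\hat{x}_i+F_{p,i}\hat{p}_i\big),
\end{equation}
with $F_{\cdot,i}=\int_0^T f_{\cdot,i}(t)\,dt$. The second Magnus term is $-\tfrac12\int_0^T\!\int_0^{t_1}[H(t_1),H(t_2)]\,dt_2\,dt_1$, which is a real multiple of $i\mathds{1}$, i.e.\ a c-number phase $e^{i\phi(T)}$. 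All higher Magnus terms involve nested commutators of central elements and therefore vanish, so the series truncates exactly and no convergence issue arises.

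As an alternative that avoids the Magnus series, I can make the ansatz $U(t)=e^{i\phi(t)}\disp(\alpha(t))$ and verify it directly. Differentiating with the BCH identity $\disp(\alpha)\disp(\beta)=e^{(\beta^\dagger\alpha-\alpha^\dagger\beta)/2}\disp(\alpha+\beta)$ gives ODEs for $\alpha(t)$ and $\phi(t)$ that are solvable by quadrature. Uniqueness of solutions to the Schrödinger equation then identifies this product with $U(t)$.

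\textbf{Step 2 (identify the displacement).} I next rewrite $\exp(\Omega_1)$ in the form $D(r)=\exp(-ir^T\Omega\hat{R})$. This requires solving $r^T\Omega=(F_x,F_p)$, which gives $r=\Omega(F_x,F_p)^T$ up to sign conventions. I then convert $r$ to the complex amplitude $\alpha_H\in\mathbb{C}^N$. The global phase cancels under conjugation:
\begin{equation}
U\rho U^\dagger=\disp(\alpha_H)\rho\disp(\alpha_H)^\dagger .
\end{equation}

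\textbf{Step 3 (stochastic coefficients).} When the coefficients are random processes, the channel is the average $\mathcal{E}_H^{(T)}(\rho)=\E[U\rho U^\dagger]$. Here $\alpha_H$ is a measurable function of the sample path; it is a linear functional of the integrated drives. Define $P_H^{(T)}$ as the law of $\alpha_H$, i.e.\ the pushforward of the path measure. The change-of-variables formula then gives the integral representation. If the law has no density, I interpret $d^{2N}\alpha\,P_H^{(T)}(\alpha)$ as a general probability measure; the deterministic case, for instance, is a delta function.

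\textbf{Main obstacle.} I expect the hardest part to be rigor with unbounded operators: justifying the time-ordered exponential and the Magnus truncation for unbounded $\hat{x},\hat{p}$. I would handle this with the ODE-ansatz route, checking the Schrödinger equation on the dense domain of finite-energy (Schwartz) vectors, where displacements act strongly continuously. Measurability of the path-to-$\alpha$ map is routine once the coefficients are assumed locally integrable.
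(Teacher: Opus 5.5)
Your proof is correct, but it works in the Schr\"odinger picture while the paper's works in the Heisenberg picture.

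\textbf{The paper's route.} It never constructs the time-ordered exponential. It computes $\frac{d}{dt}\hat R(t)=i[H(t),\hat R(t)]=\Omega F(t)$, notes that this is a c-number, and integrates to $\hat R(T)=\hat R(0)+r(T)$ with $r(T)=\int_0^T \Omega F(t)\,dt$. It then compares this with $\disp(r)^\dagger\hat R\,\disp(r)=\hat R+r$ to conclude $U(T)\rho U(T)^\dagger=\disp(r(T))\rho\,\disp(r(T))^\dagger$. The stochastic case is handled exactly as in your Step 3, by pushing the trajectory law forward to the endpoint displacement.

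\textbf{What each route buys.} The paper's argument is shorter and gives $r(T)$ with no expansion at all. However, the passage from ``same conjugation action on the quadratures'' to ``same unitary up to a phase'' tacitly uses irreducibility of the canonical commutation relations: $U(T)\disp(r(T))^\dagger$ commutes with every $\hat R_j$, hence is a scalar. Your Magnus truncation, or the ansatz $U(t)=e^{i\phi(t)}\disp(\alpha(t))$, instead gives the operator identity $U(T)=e^{i\phi(T)}\disp(\alpha_H)$ directly. The phase comes from the central second-order term, so no Schur-type step is needed. The cost is the time-ordering and domain bookkeeping for unbounded generators, which you correctly flag and which the ODE-ansatz-plus-uniqueness route handles.

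Your solution $r=\Omega F$ of $r^T\Omega=F^T$ (with $F$ the time-integrated drive) coincides with the paper's $r(T)$. Both arguments therefore produce the same displacement amplitude $\tilde\alpha(T)$.
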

\begin{proof}
We first treat a fixed, deterministic realization of the functions $\{f_{x,i}(t),f_{p,i}(t)\}_{i=1}^N$, and then average over randomness.

Write the quadrature vector as $\hat R = (\xh_1,\dots,\xh_N,\ph_1,\dots,\ph_N)^T \in \mathbb{R}^{2N}$, and define the classical coefficient vector
\begin{equation}
    F(t) = (f_{x,1}(t),\dots,f_{x,N}(t),f_{p,1}(t),\dots,f_{p,N}(t))^T \in \mathbb{R}^{2N}.
\end{equation}
Then the Hamiltonian can be written compactly as
\begin{equation}
    H(t) = F(t)^T \hat R \ .
\end{equation}
Let $U(T)$ denote the time-ordered unitary generated by $H(t)$,
\begin{equation}
    U(T) = \mathcal{T}\exp\!\left(-i\int_0^T dt \ H(t)\right) \ .
\end{equation}

The Heisenberg equation of motion for $\hat R$ is
\begin{equation}
    \frac{d}{dt}\hat R(t) = i[H(t),\hat R(t)] \ .
\end{equation}
Using $H(t)=\sum_{j=1}^{2N}F_j(t)\hat R_j$ and the canonical commutation relations $[\hat R_j,\hat R_k]=i\Omega_{jk}$, we obtain
\begin{equation}
    i[H(t),\hat R_k(t)] = i\sum_{j=1}^{2N}F_j(t)[\hat R_j(t),\hat R_k(t)]
    = i\sum_{j=1}^{2N}F_j(t)i\Omega_{jk}
    = -(\Omega^T F(t))_k \ .
\end{equation}
Equivalently, in vector form,
\begin{equation}
\label{eq:heis_linear_ham}
    \frac{d}{dt}\hat R(t) = \Omega F(t) \ ,
\end{equation}
where we used $\Omega^T=-\Omega$. Importantly, the right-hand side is a $c$-number vector, so we obtain:
\begin{equation}
\label{eq:R_shift}
    \hat R(T) = \hat R(0) + r(T), \quad r(T) = \int_0^T dt \ \Omega F(t) \in \mathbb{R}^{2N}.
\end{equation}

By the defining property of the Weyl displacement operator $\disp(r)=\exp(-ir^T\Omega \hat R)$, one has
\begin{equation}
\label{eq:disp_conj}
    \disp(r)^\dagger \hat R \disp(r) = \hat R + r \ .
\end{equation}
Comparing Eq.~\eqref{eq:R_shift} and Eq.~\eqref{eq:disp_conj}, we conclude that the unitary $U(T)$ implements a displacement by $r(T)$ in the Heisenberg picture. Hence, for any state $\rho$ the Schr\"odinger-picture evolution is simply
\begin{equation}
\label{eq:deterministic_channel}
    \mathcal{E}_H^{(T)}(\rho) = U(T)\rho U(T)^\dagger = \disp(r(T))\rho \disp(r(T))^\dagger \ .
\end{equation}
Decompose $r=(r_x,r_p)$ with $r_x,r_p\in\mathbb{R}^N$, and define the corresponding complex displacement amplitude $\alpha\in\mathbb{C}^N$ by
\begin{equation}
\label{eq:r_to_alpha}
    \alpha = \frac{1}{\sqrt{2}}(r_x + i r_p).
\end{equation}
Using the isomorphism between annihilation-operator coordinates and phase space, we see that the displacement operators are identical: 
\begin{equation}
    D(r) = \exp(-ir^T\Omega \hat R) = \exp(\hat A^\dagger\alpha- \alpha^\dagger \hat A) = D(\alpha),
\end{equation}
so Eq.~\eqref{eq:deterministic_channel} can be written as $\mathcal{E}_H^{(T)}(\rho)=D(\tilde{\alpha}(T))\rho D(\tilde{\alpha}(T))^\dagger$ with
\begin{equation}
\label{eq:ham_displacement}
    \tilde{\alpha}_i(T) = \frac{1}{\sqrt{2}}\left(\int_0^T dt \ f_{p,i}(t) - i\int_0^T dt \ f_{x,i}(t)\right). 
\end{equation}

Thus, for a deterministic signal the distribution $P_H^{(T)}$ is a delta function supported at the realized displacement:
\begin{equation}
\label{eq:delta_P}
    P_H^{(T)}(\alpha) = \delta(\alpha-\tilde{\alpha}(T)).
\end{equation}

Now suppose the coefficient functions $F(t)$ are drawn from an arbitrary distribution over trajectories, or more generally depend on an unobserved random variable $\omega$. For each realization $\omega$, the induced evolution is the unitary displacement $\disp(r_\omega(T))$ with
\begin{equation}
    r_\omega(T) = \int_0^T dt \ \Omega F_\omega(t).
\end{equation}
which maps to $\tilde{\alpha}_\omega(T)$ as before. Averaging over realizations of the stochastic signal, the resulting quantum channel is
\begin{equation}
    \mathcal{E}_H^{(T)}(\rho) = \mathbb{E}_\omega\!\left[\disp(\tilde{\alpha}_\omega(T))\rho \disp(\tilde{\alpha}_\omega(T))^\dagger\right]. \label{eq:ham_channel}
\end{equation}
This is a convex mixture of displacements and can be written as an integral over phase space by pushing forward the trajectory distribution to a distribution over endpoint displacements. Let $P_H^{(T)}$ denote the induced probability density on $\mathbb{C}^{N}$ of the random vector $\tilde{\alpha}_\omega(T)$. Then
\begin{equation}
    \mathcal{E}_H^{(T)}(\rho) = \int_{\mathbb{R}^{2N}} d^{2N}\alpha \ P_H^{(T)}(\alpha)\disp(\alpha)\rho\disp(\alpha)^\dagger \ ,
\end{equation}
which is the claimed form. In particular, Eq.~\eqref{eq:delta_P} is recovered when the field is deterministic, while more general distributions over the coefficients $\{f_{x,i}(t),f_{p,i}(t)\}$ induce a correspondingly broader $P_H^{(T)}$ that is fully determined by the distribution of the time-integrated drive in Eq.~\eqref{eq:R_shift}.
\end{proof}

\begin{remark}[Linear Hamiltonians correspond to classical fields.] Outside the strong-coupling limit, experiments in which quantum fields couple to bosonic probes can be modeled, to leading order, by coupling terms of the form $a_{\mathrm{Field}}^\dagger b_{\mathrm{Probe}}$, and other relevant pairwise combinations. Conceptually, taking the classical limit of the interaction corresponds to replacing the field operator $a_{\mathrm{Field}}$ with an expectation value determined by classical field amplitude and phase terms $\langle a_{\mathrm{Field}}\rangle$, which constitutes a possibly time-dependent c-number coupling in front of a linear probe operator. This is why linear Hamiltonians precisely correspond to usual classical signal interactions.
\end{remark}

Through Lemma \ref{lemma:ham_to_rdc}, we can associate bounded-time evolution under a linear Hamiltonian to a quantum channel generated by a distribution over phase-space displacements. Learning properties of the signal then equates to estimating a linear functional of this distribution, for any property recoverable from the Bell-smoothed outcome distribution.

\begin{definition}[Property of a classical-field signal] 
Fix an $N$-mode linear Hamiltonian $H$ generated by an unknown classical field and an evolution time $T$. Moreover, fix a model class $\mathcal{P}$ of distributions, such that any $P_H^{(T)}\in\mathcal{P}$. A property of the signal is the linear functional
\begin{equation}
    \Psi\!\left(P_H^{(T)}\right) = \int d^{2N}\alpha \  P_H^{(T)}(\alpha)\psi(\alpha) 
    \label{eq:property_defn}
\end{equation}
where we refer to the function $\psi(\alpha)$ as the property kernel. The property kernel must satisfy that
\begin{enumerate}
    \item \textnormal{($\psi$ can be recovered from the measurement outcomes)} There exists a measurable function $g_\psi$ such that $$T_rg_\psi = \psi$$
    \item \textnormal{(The recovery map has bounded variance for all relevant signals)} For every $P\in \mathcal{P}$, 
    \begin{equation}
    \mathbb{E}_{\xi\sim(P*\varphi_r)}\left[|g_\psi(\xi)|^2\right] < \infty \ .
    \end{equation}
\end{enumerate}
\end{definition}
With this, we define the core sensing task studied in this work. 
\begin{definition}[Quantum Signal Learning] \label{def:QSL_formal}
Fix an $N$-mode linear Hamiltonian $H$ and an evolution time $T$, and a property $\Psi$. The Quantum Signal Learning problem \textnormal{\textsc{QSL}}$(H, T, \Psi, \epsilon, \delta)$ is to estimate $\Psi(P_H^{(T)})$ to within absolute error $\epsilon$, with success probability at least $1-\delta$, given query access to $\mathcal{E}_{H}^{(T)}$. 
\end{definition}
QSL is a particularly natural formulation for quantum sensing. In typical quantum-noise-limited platforms, prior uncertainty, stochastic fields, and drift naturally appear as randomness in the linear coefficients $\{f_{x,i}(t),f_{p,i}(t)\}$, so that both the signal and ambient fluctuations are absorbed into the induced displacement law $P_H^{(T)}$ without requiring a separate noise model. This differs from Hamiltonian learning (which aims to reconstruct a detailed generator description) and from QFI-style metrology (which optimizes for a fixed parametric family): QSL instead targets physically meaningful properties of the response distribution, such as correlations, matched-filter scores, and low-order moments, which can be estimated post-hoc from the same measurement record.

Still, Definition \ref{def:QSL_formal} includes as a special case the canonical metrology problem of single-parameter estimation of a deterministic signal, assuming additional information about the structure of $H$ is also provided. In practice, however, QSL is intended for the setting in which a sensing experiment produces a \emph{dataset} and one subsequently asks many downstream questions about an uncertain environment, rather than committing in advance to a single prespecified parameterization assumed to be known in advance. A core benefit of the algorithm introduced in the following section is that even while the formal statement of QSL allows for measurement bases tailored to specific choices of properties $\Psi$, Bell QSL collects a classical record that supports post-hoc measurement while estimating individual properties with optimal quadrature uncertainty.

Moreover, additional implementation imperfections in the Bell-sensing primitive (e.g.\ loss, excess Gaussian noise, or small classical readout errors) can be modeled as an extra known channel composed with $\mathcal{E}_H^{(T)}$, and therefore enter as additional smoothing of the observed distribution, leaving the estimator framework unchanged and allowing for simultaneous classical estimation of noise and signal. We discuss several examples of QSL in this work, highlighting physical properties which cannot be practically learned to high accuracy in the absence of our entanglement-enhanced approach.

\vspace{-1em}
\subsection{General form of Bell QSL algorithm}
\label{sec:alg}
\vspace{-1em}

In this section we present an algorithm which leverages the BM primitive and classical postprocessing to solve the QSL task with a sample complexity which asymptotically achieves at minimum the performance of the optimal entanglement-free strategy, and achieves substantial quantum advantage in practical sensing regimes. The protocol only requires shallow quantum information processing in the form of two-mode entangled state preparation and passive beamsplitters, making measurements in a static homodyne basis. Our algorithm also enables post-hoc learning of a signal's properties, as the collected dataset is informationally complete with respect to the channel applied by the signal, and can thus be used to classically reconstruct any desired property given enough samples. In Figure~\ref{fig:unsupervised}, we show that this dataset is robust for unstructured learning even beyond estimating specific properties with our estimators, and can be used as inputs to supervised or unsupervised machine learning algorithms.

First, we present the Bell QSL algorithm in full generality, which encompasses the learning algorithm for all physically reconstructible properties. However, at this stage, the estimator construction is ambiguous, because it is unclear how to invert an abstract property kernel under the Bell smoothing operator, as described shortly. After giving the general algorithm, we will give simple closed-form expressions for the estimator in the relevant cases.

Fix an instance \textsc{QSL}($H, T, \psi, \epsilon, \delta$), where $H$ is a linear Hamiltonian on $n$ modes. The Bell QSL algorithm proceeds as follows.
\begin{enumerate}
    \item \label{alg:QSL_step_1} \textit{Single-shot data collection}. Prepare the state $\ket{\mathrm{TMSV_r}}_{AB}^{\otimes n}$, where the subscript $AB$ denotes entangled single-mode subsystems. There are $n$ squeezed modes in the partition $A$ which share no entanglement with one another, and likewise with $B$; all entanglement lies across the partition. Evolve subsystem $A$ by $\mathcal{E}_H^{(T)}$. Perform BM (as in Section \ref{sec:BM_prelim}) jointly on $(A,B)$ by performing 50:50 beamsplitters on paired modes and measuring $\xh_1, \ph_2$ on the beamsplitter outputs, and collect a sample $\hat{\zeta}^{(1)} \in \mathbb{C}^n$. 
    \item \textit{I.I.D.\ repeated measurements}.
    Repeat $N$ times to collect samples $\bm \zeta = \{\hat{\zeta}^{(1)}, \hat{\zeta}^{(2)}, ... , \hat{\zeta}^{(N)}\}$. 
    \item \textit{Compute estimator.} Return the property estimator 
    \begin{equation} \label{eq:general_estimator}
        \hat{\Psi} = \frac{1}{N}\sum_{i=1}^N g_\psi(\zeta^{(i)}) \ .
    \end{equation}
    where $g_\psi$ is the inverse of the property kernel $\psi$ under the Bell smoothing operator $T_r$. 
\end{enumerate}

\vspace{-1em}
\subsection{Bell QSL for Fourier-defined kernels}
\label{sec:Fourier_alg}
\vspace{-1em}

It is relatively simple to obtain closed-form expressions for $g_\psi$ given $\psi$ when we analyze two complementary classes of property kernels: those with a well-defined symplectic Fourier transform, and those without. We consider the former case first.

\begin{theorem} [The estimator for Fourier-defined kernels] \label{thm:Fourier_estimator}
    Consider a Bell measurement dataset $\bm \zeta$ of size $N$ as collected in the Bell QSL algorithm. Suppose the property kernel $\psi \in L^1(\mathbb{C}^n)$ and let 
    \begin{equation}
        \hat{\psi}(\beta) = \frac{1}{\pi^{2n}} \int d^{2n}\alpha \ \psi(\alpha)\exp\left(\beta^\dagger\alpha - \alpha^\dagger\beta\right)
    \end{equation}
    denote its Fourier transform. Moreover, let $\chi_P$ denote the characteristic function of $P_H^{(T)}$ and let $\hat{\chi}_P$ denote its estimator given by
    \begin{equation}
        \hat{\chi}_{P}(\beta) = \exp\left(e^{-2r}|\beta|^2\right)\frac{1}{N}\sum_{i=1}^N \exp\left(\zeta^{(i)\dagger}\beta - \beta^\dagger\zeta^{(i)}\right) \ .
    \end{equation}
    Then if 
    \begin{equation}
        \int d^{2n}\beta \ |\hat{\psi}(\beta)|\exp(e^{-2r}|\beta|^2) < \infty \ , \label{eq:Fourier_condition}
    \end{equation}
    the following estimator
    \begin{equation}\label{eq:Fourier_estimator}
        \hat{\Psi} = \frac{1}{\pi^{2n}}\int d^{2n}\beta \ \hat{\psi}(\beta)\hat{\chi}_P(\beta) 
    \end{equation}
    is an unbiased estimator for $\Psi$. Note, this implies that 
    \begin{equation}
        g_\psi(\zeta) = \int d^{2n}\beta \ \hat{\psi}(\beta)\exp(e^{-2r}|\beta|^2)\exp\left(\zeta^\dagger\beta - \beta^\dagger\zeta\right) 
    \end{equation}
    is the inverse of the Bell-smoothing operator.
\end{theorem}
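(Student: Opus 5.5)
The plan is to verify unbiasedness at the level of a single shot, using linearity of expectation. Then we identify the resulting $g_\psi$ as the inverse of $T_r$ by comparing expectations.

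\textbf{Step 1: the empirical characteristic function is unbiased.} Fix $\beta$ and one sample $\zeta = \alpha + Z$. By the Bell-sampling proposition, $\alpha \sim P_H^{(T)}$ and $Z \sim \mathcal{N}_\mathbb{C}(0,e^{-2r}\mathds{1}_n)$ are independent. Since $\zeta^\dagger\beta - \beta^\dagger\zeta$ is purely imaginary, the phase factorizes:
\begin{equation}
\E\!\left[e^{\zeta^\dagger\beta-\beta^\dagger\zeta}\right]=\E_\alpha\!\left[e^{\alpha^\dagger\beta-\beta^\dagger\alpha}\right]\E_Z\!\left[e^{Z^\dagger\beta-\beta^\dagger Z}\right].
\end{equation}
The exponent satisfies $Z^\dagger\beta-\beta^\dagger Z = -2i\,\Im(\beta^\dagger Z)$. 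The variable $\Im(\beta^\dagger Z)$ is a real Gaussian with variance $e^{-2r}|\beta|^2/2$. The Gaussian factor is therefore $\exp(-2\cdot e^{-2r}|\beta|^2/2)=\exp(-e^{-2r}|\beta|^2)$, and the prefactor in $\hat\chi_P$ cancels it exactly. Hence $\E[\hat\chi_P(\beta)]=\chi_P(\beta)$, where $\chi_P(\beta):=\E_\alpha[e^{\alpha^\dagger\beta-\beta^\dagger\alpha}]$. The sign convention should be checked against the transform used in $\hat\psi$; I will define $\chi_P$ so that Step 3 closes.

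\textbf{Step 2: exchange expectation and the $\beta$-integral.} The random integrand $\hat\psi(\beta)\hat\chi_P(\beta)$ is bounded in modulus by $|\hat\psi(\beta)|\exp(e^{-2r}|\beta|^2)$, deterministically, because each phase has unit modulus. Condition \eqref{eq:Fourier_condition} makes this dominating function integrable. Two consequences follow. First, $\hat\Psi$ is well-defined, and so is $g_\psi(\zeta)$ for every $\zeta$, with a bound uniform in $\zeta$. Second, Fubini applies, giving
\begin{equation}
\E\hat\Psi=\pi^{-2n}\int d^{2n}\beta\,\hat\psi(\beta)\chi_P(\beta).
\end{equation}
Because $g_\psi$ is bounded, the bounded-variance requirement in the property definition holds automatically.

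\textbf{Step 3: Fourier inversion.} Insert $\chi_P(\beta)=\int d^{2n}\alpha\,P(\alpha)e^{\alpha^\dagger\beta-\beta^\dagger\alpha}$ and swap the integrals again. This is justified because $\hat\psi\in L^1$, which follows from \eqref{eq:Fourier_condition} since the weight is at least one. The inner integral $\pi^{-2n}\int d^{2n}\beta\,\hat\psi(\beta)e^{\alpha^\dagger\beta-\beta^\dagger\alpha}$ is the inverse symplectic Fourier transform of $\hat\psi$. It equals $\psi(\alpha)$ by Fourier inversion, which is valid for $\psi\in L^1$ with $\hat\psi\in L^1$, almost everywhere and everywhere at points of continuity. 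We then obtain $\E\hat\Psi=\int P\psi=\Psi$.

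The main obstacle I expect is bookkeeping, not analysis. The $\pi^{-2n}$ normalizations, the factor of 2 hidden in the complex-variable convention, and the orientation of the symplectic phase must all agree, so that $\hat\psi$ followed by the stated inversion really returns $\psi$. I will fix the pair by checking it on a Gaussian $\psi$ and adjust the constants if needed. A secondary subtlety is that inversion holds only almost everywhere, so $P$ should be absolutely continuous, or $\psi$ should be taken continuous.

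\textbf{Step 4: $g_\psi$ inverts $T_r$.} Reading off Step 2 gives $\hat\Psi=\frac1N\sum_i g_\psi(\zeta^{(i)})$, with $g_\psi$ carrying the $\pi^{-2n}$ absorbed. The identity $\E_{\zeta\sim P*\varphi_r}g_\psi=\int P\psi$ holds for every $P$, and in particular for point masses $P=\delta_\alpha$. Since $\E_{\zeta\sim\delta_\alpha*\varphi_r}[g_\psi(\zeta)]=(T_rg_\psi)(\alpha)$ by symmetry of $\varphi_r$, we get $T_rg_\psi=\psi$. One can also see this directly: $T_r$ acts on the plane wave $e^{\zeta^\dagger\beta-\beta^\dagger\zeta}$ as multiplication by $e^{-e^{-2r}|\beta|^2}$, and the weight in $g_\psi$ undoes this factor.
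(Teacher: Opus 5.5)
Your route is the paper's: unbiasedness of $\hat\chi_P$, exchange of expectation with the $\beta$-integral, then Fourier inversion. You derive the first step directly from the additive-noise model, where the paper just cites Jiang et al. Your domination argument for Fubini and your remark about taking the continuous representative of $\psi$ are genuine improvements.

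The problem is the constant you deferred in Step 3, and it cannot be adjusted away within the statement. With $\hat\psi$ normalized by $\pi^{-2n}$ as given, the correct inversion is $\psi(\alpha)=\int d^{2n}\beta\,\hat\psi(\beta)\,e^{\alpha^\dagger\beta-\beta^\dagger\alpha}$ with \emph{no} prefactor. To see this, write $\beta^\dagger\alpha-\alpha^\dagger\beta=i\,k\cdot a$ with $k_j=(-2\Im\beta_j,2\Re\beta_j)$ and $d^{2n}\beta=4^{-n}d^{2n}k$. The constants then combine to $\pi^{-2n}\cdot 4^{-n}\cdot(2\pi)^{2n}=1$. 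The Gaussian check you proposed confirms it: for $n=1$, $\psi=e^{-|\alpha|^2}$ gives $\hat\psi=\pi^{-1}e^{-|\beta|^2}$, and $\int d^2\beta\,\hat\psi(\beta)e^{\alpha^\dagger\beta-\beta^\dagger\alpha}=e^{-|\alpha|^2}$.

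So your Step 3 actually yields $\E[\hat\Psi]=\pi^{-2n}\Psi(P)$ for every $P$, and the displayed $\hat\Psi$ is biased by that factor. The theorem's two conclusions also contradict each other. With the displayed $g_\psi$ one has $\hat\Psi=\pi^{-2n}\frac1N\sum_i g_\psi(\zeta^{(i)})$, so they cannot both be unbiased.

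What your argument does prove is the corrected statement: $\int d^{2n}\beta\,\hat\psi(\beta)\hat\chi_P(\beta)=\frac1N\sum_i g_\psi(\zeta^{(i)})$ is unbiased, with $g_\psi$ exactly as displayed. In Step 4 you should therefore not absorb a $\pi^{-2n}$ into $g_\psi$. Your plane-wave computation, in which $T_r$ multiplies $e^{\zeta^\dagger\beta-\beta^\dagger\zeta}$ by $e^{-e^{-2r}|\beta|^2}$, then gives $T_rg_\psi=\psi$ exactly.

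The paper's proof has the same slip and does not resolve it. Its inversion formula carries the prefactor $\pi^{-2n}$ and uses the same phase orientation as the forward transform, which in fact returns $\pi^{-2n}\psi(-\alpha)$. Your orientation $e^{\alpha^\dagger\beta-\beta^\dagger\alpha}$ is the right one, since it matches $\chi_P(\beta)=\E_P[e^{\alpha^\dagger\beta-\beta^\dagger\alpha}]$, which is what $\hat\chi_P$ estimates.
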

\begin{proof}
The intuition behind this estimator is simple: when the conditions that $\psi\in L^1(\mathbb{C}^n)$ and equation \eqref{eq:Fourier_condition} are satisfied, the desired property has a well-defined, bounded Fourier transform that can be recovered from the measurement data. Hence, we first go into Fourier space by constructing an estimator for the distribution's characteristic function, then convolve that against this well-defined Fourier transform in classical postprocessing to obtain an unbiased estimator that is equivalent to equation \eqref{eq:general_estimator} for Fourier-defined properties. This equivalence is simple to see: $g_\psi$ is the expression to compute the inverse fourier transform of $\hat{\psi}(\beta)$, but smoothed by squeezed shot noise. That is, when the Fourier transform is well defined, convolving it against the noise kernel when inverting it is equivalent to inverting the noise kernel. We now make these intuitions precise.

To prove unbiasedness, we utilize the fact that BM outcomes can be used to estimate the characteristic function of a displacement channel, as proven in \cite{jiang2024bellfix}.

\begin{lemma}[Theorem 1 of \cite{jiang2024bellfix}]
\label{lem:bell_identity}
Let $P(\alpha)\coloneqq P_H^{(T)}(\alpha)$, and let $\chi_P(\beta)$ denote its characteristic function obtained via the standard symplectic Fourier transform. Suppose $\{\zeta^{(1)}, \zeta^{(2)}, ..., \zeta^{(N)}\}$ are samples drawn from Step 1 of the Bell QSL algorithm. Then the estimator
\begin{equation}
    \hat{\chi}_P(\beta) 
    = \frac{1}{N}\sum_{i=1}^N \exp\!\left(e^{-2r}|\beta|^2\right)\exp\!\left(\zeta^{(i)\dagger}\beta - \beta^\dagger\zeta^{(i)}\right)
    \label{eq:lambda_tilde_def}
\end{equation}
satisfies $\E[\hat{\chi}_P(\beta)]=\chi_P(\beta)$ for all $\beta \in \mathbb{C}^n$.
\end{lemma}
This lemma allows us to convolve an estimator for the characteristic function against the property kernel in Fourier space to estimate the desired property.

Now, by the Fourier-definedness assumption, the symplectic Fourier transform $\hat{\psi}$ of $\psi$ exists as an absolutely integrable function, and moreover the weighted transform
\begin{align}
\beta \mapsto e^{e^{-2r}|\beta|^2}\hat{\psi}(-\beta)
\end{align}
is integrable, allowing us all standard integral manipulations including exchanging order of integration. By Fourier inversion under our convention,
\begin{equation}
    \psi(\alpha) = \frac{1}{\pi^{2n}}\int d^{2n}\beta \ \hat{\psi}(\beta)\exp\!\left(\beta^\dagger\alpha - \alpha^\dagger\beta\right).
    \label{eq:psi_inversion}
\end{equation}
Substituting this into the definition of $\Psi$ from Equation \eqref{eq:property_defn} and exchanging integrals yields
\begin{equation}
    \Psi(P) = \frac{1}{\pi^{2n}}\int d^{2n}\beta \ \hat{\psi}(\beta)\chi_P(\beta).
\end{equation}
On the other hand, by linearity of expectation and Lemma~\ref{lem:bell_identity},
\begin{equation}
    \E[\hat{\Psi}]
    = \frac{1}{\pi^{2n}}\int d^{2n}\beta \ \hat{\psi}(\beta)\E\!\left[\hat{\chi}_P(\beta)\right]
    = \frac{1}{\pi^{2n}}\int d^{2n}\beta \ \hat{\psi}(\beta)\chi_P(\beta)
    = \Psi(P) \ ,
\end{equation}
completing the proof. This further implies that the sample mean $\hat{\Psi} \coloneqq \frac{1}{N}\sum_{i=1}^N g_\psi(\zeta^{(i)})$ is unbiased.
\end{proof}

Having established unbiasedness, we prove a sample complexity guarantee in the Fourier-defined case.

\begin{lemma}[Continuation of Theorem 1 in \cite{jiang2024bellfix}]
\label{lem:pointwise_conc}
Fix $\beta\in\mathbb{C}^n$ and $\epsilon,\delta\in(0,1)$. Let $\hat{\chi}_P(\beta)$ be defined as in Eq.~\eqref{eq:lambda_tilde_def}. Then
\begin{equation}
    \Pr\!\left(| \hat{\chi}_P(\beta) - \chi_P(\beta) |\le \epsilon\right) \ge 1-\delta
\end{equation}
provided
\begin{equation}
    N \ \ge\ 8 \exp\!\left(2e^{-2r}|\beta|^2\right)\epsilon^{-2}\log\!\left(\frac{4}{\delta}\right).
    \label{eq:pointwise_N}
\end{equation}
\end{lemma}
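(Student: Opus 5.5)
The estimator is an empirical mean of i.i.d.\ bounded complex random variables, so a Hoeffding bound applied separately to the real and imaginary parts, followed by a union bound, should suffice. For each shot define
\begin{equation}
    X_i := \exp\!\left(e^{-2r}|\beta|^2\right)\exp\!\left(\zeta^{(i)\dagger}\beta - \beta^\dagger\zeta^{(i)}\right).
\end{equation}
Then $\hat{\chi}_P(\beta) = \frac{1}{N}\sum_{i=1}^N X_i$. The $X_i$ are i.i.d.\ because the shots in Step 1 of Bell QSL are independent repetitions.

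By Lemma~\ref{lem:bell_identity}, $\E[X_i] = \chi_P(\beta)$. The exponent $\zeta^\dagger\beta - \beta^\dagger\zeta = 2i\,\Im(\zeta^\dagger\beta)$ is purely imaginary, so the second factor in $X_i$ is a unit-modulus phase. Hence $|X_i| \le M := \exp(e^{-2r}|\beta|^2)$ almost surely, and both $\Re X_i$ and $\Im X_i$ lie in the interval $[-M, M]$, which has width $2M$.

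Apply Hoeffding's inequality to the real part:
\begin{equation}
    \Pr\!\left(\left|\tfrac{1}{N}\textstyle\sum_i \Re X_i - \Re\chi_P(\beta)\right| \ge t\right) \le 2\exp\!\left(-\frac{2N t^2}{(2M)^2}\right) = 2\exp\!\left(-\frac{N t^2}{2M^2}\right),
\end{equation}
and the same bound holds for the imaginary part. Take $t = \epsilon/\sqrt{2}$. If both deviations are at most $t$, then $|\hat{\chi}_P(\beta) - \chi_P(\beta)| \le \sqrt{2}\,t = \epsilon$. A union bound gives total failure probability at most $4\exp(-N\epsilon^2/(4M^2))$.

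Setting $4\exp(-N\epsilon^2/(4M^2)) \le \delta$ requires $N \ge 4M^2\epsilon^{-2}\log(4/\delta)$. Since $M^2 = \exp(2e^{-2r}|\beta|^2)$, this is implied by the stated condition \eqref{eq:pointwise_N}, which has constant $8$ in place of $4$.

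The only step needing care is the almost-sure bound on $|X_i|$. It relies on the exponent being purely imaginary under the paper's complex convention, so that all sample-to-sample fluctuation sits in a unit phase and the only amplification is the deterministic factor $M$. The remaining steps are routine constant bookkeeping.
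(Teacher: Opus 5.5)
Your proof is correct and is essentially the paper's approach. The paper cites \cite{jiang2024bellfix} for this lemma rather than proving it, and its proof of Corollary~\ref{cor:property_sample_complexity} uses the same argument: a unit-modulus phase times the deterministic factor, Hoeffding on the real and imaginary parts, then a union bound. Your choice $t=\epsilon/\sqrt{2}$ gives the constant $4$ instead of the paper's $8$ (which corresponds to allotting $\epsilon/2$ to each component), so the stated condition is implied with room to spare.
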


\begin{corollary}[Sample complexity for estimating $\Psi(P)$]
\label{cor:property_sample_complexity}
Define the weighted Fourier norm
\begin{equation}
    \mathcal{K}_r(\psi) := \frac{1}{\pi^{2n}}\int d^{2n}\beta \ |\hat{\psi}(\beta)|\exp\!\left(e^{-2r}|\beta|^2\right),
    \label{eq:Kr_def}
\end{equation}
and assume $\mathcal{K}_r(\psi)<\infty$. Then
\begin{equation}
    \Pr\!\left(|\hat{\Psi}-\Psi(P)|\le \epsilon\right)\ge 1-\delta
\end{equation}
provided
\begin{equation}
    N \ \ge\ 8\,\mathcal{K}_r(\psi)^2\,\epsilon^{-2}\log\!\left(\frac{4}{\delta}\right).
    \label{eq:Fourier_sample_complexity_gen}
\end{equation}
\end{corollary}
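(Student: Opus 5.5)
The plan is to write $\hat{\Psi}-\Psi(P)$ as a $\hat\psi$-weighted average of the pointwise characteristic-function errors, and then show that each single-shot contribution is a bounded random variable. Concretely, by Theorem~\ref{thm:Fourier_estimator} we have $\hat{\Psi}=\frac{1}{N}\sum_{i=1}^N g_\psi(\zeta^{(i)})/\pi^{2n}$. With the normalization of \eqref{eq:Fourier_estimator}, each summand is
\begin{equation}
X_i := \frac{1}{\pi^{2n}}\int d^{2n}\beta \ \hat{\psi}(\beta)\, e^{e^{-2r}|\beta|^2}\, e^{\zeta^{(i)\dagger}\beta - \beta^\dagger\zeta^{(i)}} .
\end{equation}
The phase $e^{\zeta^\dagger\beta-\beta^\dagger\zeta}$ has unit modulus because its exponent is purely imaginary. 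Hence $|X_i|\le \mathcal{K}_r(\psi)$ almost surely, by definition \eqref{eq:Kr_def}. Moreover, $\E[X_i]=\Psi(P)$: this follows from Lemma~\ref{lem:bell_identity} together with Fubini, which is justified by $\mathcal{K}_r(\psi)<\infty$, exactly as in the unbiasedness proof.

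Next, I would treat real and imaginary parts separately. The variables $\Re X_i$ are i.i.d.\ and take values in $[-\mathcal{K}_r,\mathcal{K}_r]$, so Hoeffding's inequality gives
\begin{equation}
\Pr\big(|\tfrac1N\textstyle\sum_i \Re X_i - \Re\Psi|\ge \epsilon/\sqrt2\big)\le 2\exp\big(-N\epsilon^2/(4\mathcal{K}_r^2)\big).
\end{equation}
The same bound holds for the imaginary part. A union bound then gives $|\hat\Psi-\Psi|\le\epsilon$ with failure probability at most $4\exp(-N\epsilon^2/(4\mathcal{K}_r^2))$. This is at most $\delta$ once $N\ge 4\mathcal{K}_r^2\epsilon^{-2}\log(4/\delta)$, which is implied by the stated condition \eqref{eq:Fourier_sample_complexity_gen} with its constant $8$. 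This mirrors the derivation of Lemma~\ref{lem:pointwise_conc}, where the single-shot bound is $e^{e^{-2r}|\beta|^2}$. In effect we replace that bound by its $|\hat\psi|$-weighted integral.

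I expect the only delicate step to be bookkeeping of the $\pi^{-2n}$ normalization between $g_\psi$ in Theorem~\ref{thm:Fourier_estimator} and $\hat\Psi$ in \eqref{eq:Fourier_estimator}. I would fix the convention so that $\hat\Psi=\frac1N\sum_i X_i$. Then the almost-sure bound is exactly $\mathcal{K}_r(\psi)$ as defined, and any leftover constant is absorbed by the slack factor of 2 between $4$ and $8$.

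An alternative route would be to integrate the pointwise bound of Lemma~\ref{lem:pointwise_conc} over $\beta$. However, that would require a uniform-in-$\beta$ union bound over infinitely many points. The direct boundedness argument avoids this, so it is the route I would take.
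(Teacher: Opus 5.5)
Your proposal is correct and follows the paper's proof essentially step for step. The paper also writes $\hat\Psi$ as the empirical mean of $Y(\zeta)=\pi^{-2n}\int d^{2n}\beta\,\hat\psi(\beta)e^{e^{-2r}|\beta|^2}e^{\zeta^\dagger\beta-\beta^\dagger\zeta}$, uses the unit-modulus phase to get $|Y|\le\mathcal{K}_r(\psi)$, takes unbiasedness from Theorem~\ref{thm:Fourier_estimator}, and applies Hoeffding to the real and imaginary parts with a union bound. Your split at $\epsilon/\sqrt{2}$ simply gives the sharper constant $4$ in place of the paper's $8$, which implies the stated condition.
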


\begin{proof}
Rewrite \eqref{eq:Fourier_estimator} by pulling the sample average outside the $\beta$-integral:
\begin{equation}
    \hat{\Psi} = \frac{1}{N}\sum_{i=1}^N Y(\zeta^{(i)}), \qquad
    Y(\zeta) := \frac{1}{\pi^{2n}}\int d^{2n}\beta \ \hat{\psi}(\beta)\exp\!\left(e^{-2r}|\beta|^2\right)\exp\!\left(\zeta^\dagger\beta-\beta^\dagger\zeta\right).
\end{equation}
Using $|\exp(\zeta^\dagger\beta-\beta^\dagger\zeta)|=1$, we have the uniform bound $|Y(\zeta)|\le \mathcal{K}_r(\psi)$ for all $\zeta$. Applying Hoeffding's inequality to $\Re(Y)$ and $\Im(Y)$ and then a union bound yields
\begin{equation}
    \Pr\!\left(|\hat{\Psi}-\E[\hat{\Psi}]|\ge \epsilon\right)\le 4\exp\!\left(-\frac{N\epsilon^2}{8\mathcal{K}_r(\psi)^2}\right),
\end{equation}
and substituting $\E[\hat{\Psi}]=\Psi(P)$ from Theorem~\ref{thm:Fourier_estimator} gives \eqref{eq:Fourier_sample_complexity_gen}.
\end{proof}

To illustrate the performance of Bell sensing for a simple and practical Fourier-defined kernel, we consider the example of estimating the characteristic function of the state itself. 

\begin{example}[Fourier-based estimator for the characteristic function]
    \textnormal{
    For some fixed $H,T$, let the property $\Psi = \chi_P(\beta_0)$ for a fixed $\beta\in \mathbb{C}^n$, where $\chi_P(\beta_0)$ is the characteristic function of $P_H^T$ evaluated at $\beta_0$. Then, choosing the property kernel
    \begin{equation}
        \psi_{\beta_0}(\alpha) = \exp\!\left(\alpha^\dagger \beta_0 - \beta_0^\dagger \alpha\right)
    \end{equation}
    immediately gives us $\Psi(P_H^T) = \chi_P(\beta_0)$ from equation \eqref{eq:Fourier_estimator}. Then
    \begin{equation}
        \hat{\psi}_{\beta_0}(\beta) = \frac{1}{\pi^{2n}}\int d^{2n}\alpha \exp\!\left(\alpha^\dagger \beta_0 - \beta_0^\dagger \alpha\right)\exp\!\left(\beta^\dagger \alpha - \alpha^\dagger \beta\right) = \pi^{2n}\delta(\beta-\beta_0) \ .
    \end{equation}
    Substituting this into the definition of $\mathcal{K}_r$, we see
    \begin{equation}
        \mathcal{K}_r(\psi_{\beta_0}) = \int d^{2n}\beta \ \delta(\beta-\beta_0) \exp\left(e^{-2r}|\beta|^2\right) = \exp(e^{-2r}|\beta_0|^2) \ .
    \end{equation}
    From \eqref{eq:Fourier_sample_complexity_gen}, we obtain a sample complexity of
    \begin{equation}
        N = O\left(\frac{\exp\!\left(2e^{-2r}|\beta_0|^2\right)}{\epsilon^2}\log\left(\frac{1}{\delta}\right)\right)
    \end{equation}
    to solve QSL. In this special case, we recover the result from \cite{jiang2024bellfix}. In physical settings, we expect each mode to have a frequency cutoff (as physical channels will not generally act with arbitrarily high energy on individual modes), and in this case we can take the bound $|\beta_0|^2 \leq n|b|^2$, where $b = O(1)$ is the per-mode frequency cutoff. This gives us  
    \begin{equation}
        N = O\left(\frac{\exp\!\left(2e^{-2r}n\right)}{\epsilon^2}\log\left(\frac{1}{\delta}\right)\right) \ , \label{eq:charfun_sample_complexity}
    \end{equation}
    illustrating that the exponential dependence on mode number is damped by squeezing. \cite{jiang2024bellfix} illustrates that this phenomenon cannot be matched by any entanglement-free strategy in the worst case. In section \ref{sec:worst_case_QA}, we show that this example leads to a worst-case quantum advantage for matched filtering, a prominent error-mitigation and signal-processing strategy in shot-noise limited quantum experiments, including gravitational wave detection \cite{Allen2005ChiSquare, Allen2012FINDCHIRP}. 
    }
\end{example}

\vspace{-1em}
\subsection{Bell QSL for polynomial kernels}
\label{sec:distribution_alg}
\vspace{-1em}

When the Fourier-definedness conditions of section \ref{sec:Fourier_alg} are not satisfied but our properties remain physical, we fall into property kernels from the class of \textit{tempered distributions}. In this work, the main class of such kernels we will need are \emph{polynomial} kernels in phase-space coordinates (equivalently, derivatives of Dirac deltas in Fourier space). These kernels do not admit a well-behaved Fourier inversion of the form in equation \eqref{eq:Fourier_condition}, but they can still be inverted in closed form under the Bell smoothing operator because Gaussian convolution acts algebraically on polynomials. 

More simply, a natural and physically ubiquitous example is the covariance kernel $\hat{x}\hat{p}$, which when convolved against a (mean-0) single-mode phase-space distribution yields its mode covariance. While this example is trivial to estimate in practice by simply taking the sample covariance of BM outcomes, in this section we generalize to more complex tempered distribution such as higher moments. However, for intuition, we first expand on this representative example. 

\begin{example}[Estimating mode covariance with Bell QSL] \label{ex:gaussian_covar}
\normalfont
Consider a single-mode signal which induces a displacement distribution $P$ and write the displacement coordinate as \(\alpha = \alpha_x + i \alpha_p\), where \(\alpha_x = \Re(\alpha)\) and \(\alpha_p = \Im(\alpha)\). As usual, a Bell measurement dataset $\bm \zeta = \{\zeta^{(i)}\}_{i=1}^N$ obtained at squeezing parameter $r$ obeys the additive
noise model
\begin{equation}
\zeta = \alpha + Z, \qquad
Z = Z_x + i Z_p, \qquad
Z_x, Z_p \ \text{i.i.d.} \sim \mathcal{N}(0,\nu_r), \qquad
\nu_r := \frac{1}{2}e^{-2r} \ . \label{eq:covariance_additive_noise}
\end{equation}
where decomposing the complex $\alpha$ into real coordinates reintroduces the factor of $1/2$ in $\nu_r$. Assume for simplicity that $P$ is mean-zero so that the covariance equals the raw second moment
$\mathrm{Cov}(\alpha_x,\alpha_p) = \mathbb{E}[\alpha_x \alpha_p]$ (when this is not true, we need only estimate and subtract the product of the means of $\alpha_x$, $\alpha_p$). This corresponds to the property kernel
\begin{equation}
\psi_{\mathrm{cov}}(\alpha) := \alpha_x \alpha_p,
\qquad
\Psi(P) = \int d^2\alpha \, P(\alpha)\,\psi_{\mathrm{cov}}(\alpha) = \mathbb{E}_{\alpha \sim P}[\alpha_x \alpha_p].
\end{equation}
In this case there is an immediate recovery map: define \(g_{\mathrm{cov}}(\zeta) := \zeta_x \zeta_p\), where
\(\zeta_x := \Re(\zeta)\) and \(\zeta_p := \Im(\zeta)\). Then, conditioning on \(\alpha\) and using independence and
zero-mean of \((Z_x,Z_p)\),
\begin{equation}
\mathbb{E}\!\left[g_{\mathrm{cov}}(\zeta)\mid \alpha\right]
=
\mathbb{E}\!\left[(\alpha_x+Z_x)(\alpha_p+Z_p)\mid \alpha\right]
=
\alpha_x \alpha_p,
\end{equation}
so \(T_r g_{\mathrm{cov}} = \psi_{\mathrm{cov}}\) and the estimator
\begin{equation}
\widehat{\Psi}_{\mathrm{cov}}
=
\frac{1}{N}\sum_{i=1}^N \zeta_x^{(i)} \zeta_p^{(i)}
\end{equation}
is unbiased for \(\Psi(P)\) (equivalently, it is exactly \eqref{eq:general_estimator} with \(g_\psi = g_{\mathrm{cov}}\)).

Computing the sample complexity is simple. Unlike the Fourier-defined case, polynomial recovery maps are typically unbounded, so the uniform boundedness assumption used for Hoeffding-type bounds in Sec.~C.3 does not apply. Instead, we use standard concentration. Let \(Y := g_{\mathrm{cov}}(\zeta) = \zeta_x \zeta_p\). A direct expansion using \eqref{eq:covariance_additive_noise} gives
\begin{align}
\mathrm{Var}(Y) &=
\mathbb{E}\!\left[\zeta_x^2 \zeta_p^2\right] - \Big(\mathbb{E}[\alpha_x\alpha_p]\Big)^2 \nonumber =
\mathbb{E}\!\left[\alpha_x^2 \alpha_p^2\right]
+ \nu_r \mathbb{E}[\alpha_x^2]
+ \nu_r \mathbb{E}[\alpha_p^2]
+ \nu_r^2
- \Big(\mathbb{E}[\alpha_x\alpha_p]\Big)^2,
\end{align}
which is finite whenever \(P\) has finite fourth moment. In the special case where \(P\) is Gaussian with
\(\mathrm{Var}(\alpha_x)=\sigma_x^2\), \(\mathrm{Var}(\alpha_p)=\sigma_p^2\), and \(\mathrm{Cov}(\alpha_x,\alpha_p)=c\), one obtains
the closed form \(\mathrm{Var}(Y) = (\sigma_x^2+\nu_r)(\sigma_p^2+\nu_r)+c^2\), exhibiting the explicit improvement
as \(\nu_r \to 0\). In particular, for shot-noise limited quantum experiments, we expect the parameters $\sigma_x, \sigma_p, c \lesssim \nu_r$, often comparable to or smaller than the squeezed-noise floor.

To obtain failure probability \(\delta\), we may simply use the sample mean estimator
\(\widehat{\Psi}_{\mathrm{cov}}=\frac{1}{N}\sum_{i=1}^N Y_i\) with \(Y_i=\zeta_x^{(i)}\zeta_p^{(i)}\). In the Gaussian case,
\(Y\) has sub-exponential tails, so a standard Bernstein concentration bound yields that for a universal constant
\(c>0\),
\begin{equation}
\Pr\!\left(\left|\widehat{\Psi}_{\mathrm{cov}}-\Psi(P)\right|>\epsilon\right) \le O\!\left(\exp\!\left(-cN\frac{\epsilon^2}{\mathrm{Var}(Y)}\right)\right),
\end{equation}
In the shot-noise limited regime, where $\nu_r$ is the relevant variance scale, we thus obtain a sample complexity
\begin{equation}
N \sim O\!\left(\frac{e^{-4r}}{\epsilon^2}\log\!\left(\frac{1}{\delta}\right)\right)
\end{equation}
to estimate \(\Psi(P)\) to accuracy \(\epsilon\) with success probability at least \(1-\delta\). In the following appendices, we will see that even for this simple, physically ubiquitous instance of estimating a Gaussian covariance, conventional measurement schemes can incur sample complexity exponentially worse in $\epsilon$. 
\end{example}

We now generalize the preceding covariance example to arbitrary polynomial kernels. For clarity, we pass to real
coordinates. Given \(\alpha \in \mathbb{C}^n\), define the real vector
\begin{equation}
u(\alpha) := (\Re(\alpha_1),\dots,\Re(\alpha_n), \Im(\alpha_1),\dots,\Im(\alpha_n)) \in \mathbb{R}^{2n},
\end{equation}
Under Bell sensing, \(u(\zeta)=u(\alpha)+G\) is the sample quadrature vector, where \(G\sim \mathcal{N}(0,\nu_r I_{2n})\).
Therefore, for any measurable \(f:\mathbb{R}^{2n}\to \mathbb{C}\),
\begin{equation}
(T_r f)(u) = \mathbb{E}_{G\sim \mathcal{N}(0,\nu_r I_{2n})}\!\left[f(u+G)\right],
\end{equation}

The inverse \(T_r^{-1}\) is ill-posed on general function classes, but it is perfectly well-defined
on the finite-dimensional space of polynomials. The relevant basis is given by Hermite polynomials.
For \(k\in \mathbb{N}\), define
\begin{equation}
\mathrm{He}_k(t) := (-1)^k e^{t^2/2}\frac{d^k}{dt^k}e^{-t^2/2},
\end{equation}
and for a variance parameter \(\nu>0\), define the scaled Hermite polynomial
\begin{equation}
H_k^{(\nu)}(t) := \nu^{k/2}\,\mathrm{He}_k\!\left(\frac{t}{\sqrt{\nu}}\right).
\end{equation}
These satisfy the generating function identity
\begin{equation}
\exp\!\left(st - \frac{\nu s^2}{2}\right)
=
\sum_{k=0}^\infty \frac{s^k}{k!} H_k^{(\nu)}(t).
\label{eq:hermite_mgf}
\end{equation}

\begin{lemma}[Hermite inversion of Gaussian smoothing on monomials]
\label{lem:hermite_inversion}
Let \(X\in \mathbb{R}\) be arbitrary and let \(Z\sim \mathcal{N}(0,\nu)\) be independent. Then for every \(k\in \mathbb{N}\),
\begin{equation}
\mathbb{E}\!\left[H_k^{(\nu)}(X+Z)\mid X\right] = X^k.
\end{equation}
\end{lemma}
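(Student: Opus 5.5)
The plan is to use the generating function identity \eqref{eq:hermite_mgf} and compare coefficients in the formal parameter $s$. Fix $X$ and condition on it throughout, so that the only randomness is $Z\sim\mathcal{N}(0,\nu)$. Substituting $t = X+Z$ into \eqref{eq:hermite_mgf} gives
\begin{equation}
\exp\!\left(s(X+Z) - \frac{\nu s^2}{2}\right) = \sum_{k=0}^\infty \frac{s^k}{k!} H_k^{(\nu)}(X+Z).
\end{equation}
Taking the conditional expectation of the left-hand side and using the Gaussian moment generating function $\mathbb{E}[e^{sZ}] = e^{\nu s^2/2}$, the factor $e^{-\nu s^2/2}$ cancels exactly. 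This leaves $e^{sX} = \sum_k s^k X^k/k!$.

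On the right-hand side, I will interchange expectation and summation to get $\sum_k \frac{s^k}{k!}\mathbb{E}[H_k^{(\nu)}(X+Z)\mid X]$. Both sides are then power series in $s$ that agree for all real $s$. Uniqueness of power series coefficients gives $\mathbb{E}[H_k^{(\nu)}(X+Z)\mid X] = X^k$ for every $k$.

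The only step that needs care is justifying the interchange of $\mathbb{E}$ and $\sum_k$. I would handle it by dominated convergence (or Fubini) using the bound
\begin{equation}
\sum_k \frac{|s|^k}{k!}\,|H_k^{(\nu)}(t)| \le C\, e^{|s||t| + \nu s^2/2},
\end{equation}
which has finite Gaussian expectation in $t = X+Z$ for fixed $X$. This bound follows from the explicit expansion
\begin{equation}
H_k^{(\nu)}(t) = \sum_j \frac{k!}{j!\,(k-2j)!}\left(-\frac{\nu}{2}\right)^j t^{k-2j},
\end{equation}
after bounding each term in absolute value and resumming.

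Alternatively, I could avoid the analytic issue entirely by induction on $k$. The recurrence is $H_{k+1}^{(\nu)}(t) = t\,H_k^{(\nu)}(t) - \nu\,\frac{d}{dt}H_k^{(\nu)}(t)$, together with $\frac{d}{dt}H_k^{(\nu)} = k\,H_{k-1}^{(\nu)}$. Combined with Stein's identity $\mathbb{E}[Z f(X+Z)] = \nu\,\mathbb{E}[f'(X+Z)]$, this gives
\begin{equation}
\mathbb{E}[H_{k+1}^{(\nu)}(X+Z)] = X\,\mathbb{E}[H_k^{(\nu)}(X+Z)] + \mathbb{E}[Z H_k^{(\nu)}(X+Z)] - \nu k\,\mathbb{E}[H_{k-1}^{(\nu)}(X+Z)].
\end{equation}
Stein's identity turns the middle term into $\nu k\,\mathbb{E}[H_{k-1}^{(\nu)}(X+Z)]$, so it cancels the last term. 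By the induction hypothesis the right-hand side is then $X\cdot X^k$. I expect to present the generating function route as the main argument and mention this induction as a fully elementary check.
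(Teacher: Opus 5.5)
Your proposal is correct and follows the same route as the paper. Both arguments substitute $t = X+Z$ into the generating function \eqref{eq:hermite_mgf}, use $\mathbb{E}[e^{sZ}] = e^{\nu s^2/2}$ to reduce the left side to $e^{sX}$, and then match coefficients in $s$. Your bound $\sum_k \frac{|s|^k}{k!}|H_k^{(\nu)}(t)| \le e^{|s||t|+\nu s^2/2}$, which in fact holds with $C=1$, justifies exchanging expectation and summation, a step the paper leaves implicit. Your Stein-identity induction is also a valid independent check.
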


\noindent
\begin{proof}
Taking the conditional expectation of \eqref{eq:hermite_mgf} with \(t=X+Z\) and using the Gaussian identity \(\mathbb{E}[e^{sZ}]=e^{\nu s^2/2}\) gives
\begin{equation}
\mathbb{E}\!\left[\exp\!\left(s(X+Z)-\frac{\nu s^2}{2}\right)\middle | X\right]
= \exp\!\left(sX-\frac{\nu s^2}{2}\right)\mathbb{E}[e^{sZ}] = 
e^{sX}.
\tag{C26}
\end{equation}
Expanding both sides as a power series in $s$ gives 
\begin{equation}
    \sum_{k=0}^\infty \frac{s^k}{k!} \mathbb{E}[H_k^{(\nu)}(X+Z)|X] = \sum_{k=0}^\infty \frac{s^k}{k!}X^k
\end{equation}
and matching coefficients completes the proof.
\end{proof}

This lets us generalize to polynomials easily. Let \(\kappa=(\kappa_1,\dots,\kappa_{2n})\in \mathbb{N}^{2n}\) be a multi-index, and define
\begin{equation}
H_{\kappa}^{(\nu)}(u) := \prod_{j=1}^{2n} H_{\kappa_j}^{(\nu)}(u_j),
\qquad
u^\kappa := \prod_{j=1}^{2n} u_j^{\kappa_j}.
\end{equation}
These relations allow us to pick out a polynomial of a desired degree at each mode index, because upon taking expectations, Lemma \ref{lem:hermite_inversion} will precipitate the desired polynomial kernel at each mode. To be precise, because the Bell noise is i.i.d.\ across the $2n$ real coordinates, Lemma~\ref{lem:hermite_inversion} implies
\begin{equation}
\mathbb{E}\!\left[H_{\kappa}^{(\nu_r)}(u(\alpha)+G)\middle| u(\alpha)\right] = u(\alpha)^\kappa.
\label{eq:monomial_exp2}
\end{equation}
which allows us to generate any desired polynomial of the $n$ modes. We are now prepared to state the generalized polynomial estimator and prove its unbiasedness and sample complexity within the Bell QSL framework.

\begin{theorem}[The estimator for polynomial kernels]
\label{thm:tempered_kernels}
Let \(\psi\) be a polynomial property kernel in \(2n\) real variables, i.e.\ \(\psi(u)=\sum_{\kappa} c_\kappa u^\kappa\) for a
finite set of multi-indices \(\kappa\). Define the recovery map \(g_\psi\) by
\begin{equation}
g_\psi(\zeta)
:=
\sum_{\kappa} c_\kappa\, H_{\kappa}^{(\nu_r)}(u(\zeta)),
\qquad
\nu_r=\frac{1}{2}e^{-2r}.
\label{eq:polynomial_estimator}
\end{equation}
Then \(T_r g_\psi = \psi\). Consequently, the Bell estimator
\begin{equation}
\hat{\Psi}
=
\frac{1}{N}\sum_{i=1}^N g_\psi(\zeta^{(i)})
\end{equation}
is unbiased for \(\Psi(P)=\int d^{2n}\alpha\, P(\alpha)\psi(\alpha)\).
\end{theorem}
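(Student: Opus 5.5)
The plan is to reduce everything to the monomial identity \eqref{eq:monomial_exp2}, which follows from Lemma~\ref{lem:hermite_inversion}, and then to use linearity twice: once in the finite sum over multi-indices $\kappa$, and once in the expectation over the signal distribution $P$. Concretely, $T_r g_\psi = \psi$ amounts to showing $\mathbb{E}_{G\sim\mathcal N(0,\nu_r I_{2n})}[g_\psi(u+G)] = \psi(u)$ for every fixed $u\in\R^{2n}$. Here $u=u(\alpha)$, and $u(\zeta) = u(\alpha)+G$ by the additive-noise model of the Bell-sampling Proposition.

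The first step is to establish \eqref{eq:monomial_exp2} carefully for a single multi-index $\kappa$. The function $H^{(\nu_r)}_\kappa(u+G)$ factors as $\prod_j H^{(\nu_r)}_{\kappa_j}(u_j+G_j)$. Since the real coordinates $G_j$ are independent, each with law $\mathcal N(0,\nu_r)$, the expectation of the product is the product of expectations. Applying Lemma~\ref{lem:hermite_inversion} to each factor with $X=u_j$ and $Z=G_j$ gives $u_j^{\kappa_j}$, so the product is $u^\kappa$.

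One should check that the noise variance in the Lemma is exactly $\nu_r=\tfrac12 e^{-2r}$ per real quadrature. This matches the convention $Z\sim\mathcal N_\C(0,e^{-2r})$, whose real and imaginary parts are each $\mathcal N(0,e^{-2r}/2)$. All moments involved are finite, because we are dealing with polynomials against a Gaussian, so no integrability issues arise at this stage.

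Summing over the finite set of $\kappa$ with coefficients $c_\kappa$ then gives $T_r g_\psi(u)=\sum_\kappa c_\kappa u^\kappa=\psi(u)$.

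For unbiasedness, I would condition on $\alpha$ and apply the tower property:
\[
\mathbb{E}[g_\psi(\zeta)] = \mathbb{E}_{\alpha\sim P}\big[\mathbb{E}[g_\psi(u(\alpha)+G)\mid\alpha]\big] = \mathbb{E}_{\alpha\sim P}[\psi(\alpha)] = \Psi(P).
\]
Averaging over the $N$ i.i.d. shots preserves the expectation.

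The only genuine obstacle is integrability. We need $g_\psi(\zeta)$ to be integrable under $P*\varphi_r$ so that the tower property and the exchange of expectations are legitimate. I would handle this by assuming, as in the bounded-variance condition in the definition of a property kernel, that $P$ has finite moments up to $\max_\kappa|\kappa|$. Then $|g_\psi(\zeta)|$ is bounded by a polynomial of that degree in $|u(\alpha)|+|G|$, and that polynomial is integrable.
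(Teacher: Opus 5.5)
Your proposal is correct and follows essentially the same route as the paper. It shows that $T_r$ maps $H^{(\nu_r)}_\kappa(u(\cdot))$ to $u^\kappa$ using Lemma~\ref{lem:hermite_inversion} and the independence of the Bell noise across the $2n$ real quadratures, extends by linearity over the finite set of $\kappa$, and gets unbiasedness by conditioning on $\alpha$. Your explicit integrability hypothesis (finite moments of $P$ up to the degree of $\psi$, which justifies the tower-property step) is a sensible addition that the paper leaves implicit.
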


\noindent
\begin{proof}
For a monomial \(\psi(u)=u^\kappa\), the choice \(g_\psi(\zeta)=H_\kappa^{(\nu_r)}(u(\zeta))\) immediately
gives \(T_r g_\psi = \psi\) by our generating function identity resulting from Lemma \ref{lem:hermite_inversion}. Summing over \(\kappa\), we obtain \eqref{eq:polynomial_estimator}. Unbiasedness is then immediate from \eqref{eq:general_estimator}.
\end{proof}

Sample complexity follows from basic Chebyshev-type concentration. We simply assume bounds on the moments of $P$, which are set by the problem as in the covariance example, which immediately yields the desired sample bound. 

\begin{lemma}[Sample complexity for polynomial kernels with the mean estimator]
\label{lem:poly_mean_sc}
Let \(\psi\) be a polynomial kernel and let \(g_\psi\) be the corresponding recovery map defined in
Theorem~\ref{thm:tempered_kernels}. Let \(\zeta^{(1)},\dots,\zeta^{(N)}\) be i.i.d.\ Bell outcomes with squeezing $r$, and define the mean estimator
\begin{equation}
\widehat{\Psi}
=
\frac{1}{N}\sum_{i=1}^N g_\psi(\zeta^{(i)}).
\end{equation}
Assume there exists a variance bound \(V_r(\psi)\) such that
\begin{equation}
\sup_{P\in \mathcal{P}} \mathrm{Var}\left(g_\psi(\zeta)\right) \le V_r(\psi)
\end{equation}
where $\mathcal{P}$ is the model class for valid distributions. 
Then for every \(\epsilon>0\) and \(\delta\in(0,1)\),
\begin{equation}
\Pr\!\left(\left|\widehat{\Psi}-\Psi(P)\right|>\epsilon\right)
\le
\frac{V_r(\psi)}{N\epsilon^2}.
\end{equation}
which follows immediately from Chebyshev's inequality. In particular, it suffices to take \(N \ge V_r(\psi)/(\epsilon^2\delta)\) to ensure
\(\Pr(|\widehat{\Psi}-\Psi(P)|>\epsilon)\le \delta\).
\end{lemma}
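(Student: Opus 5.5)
The statement to prove is Lemma~\ref{lem:poly_mean_sc}. The plan is a direct application of Chebyshev's inequality to the empirical mean of the i.i.d.\ random variables $Y_i := g_\psi(\zeta^{(i)})$. There are three ingredients: unbiasedness of each $Y_i$, additivity of variance for independent summands, and Chebyshev's inequality. If $g_\psi$ is complex-valued, we interpret $\Var$ as $\E|Y-\E Y|^2$; Chebyshev's inequality holds verbatim in that form.

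\textbf{Step 1 (unbiasedness).} By the Proposition on Bell sampling, the $\zeta^{(i)}$ are i.i.d.\ with law $P*\varphi_r$. Conditioning on $\alpha$ and invoking Theorem~\ref{thm:tempered_kernels} gives $\E[g_\psi(\zeta)\mid\alpha] = (T_r g_\psi)(\alpha) = \psi(\alpha)$. The tower property then yields $\E[Y_i] = \E_{\alpha\sim P}[\psi(\alpha)] = \Psi(P)$. This step needs $\psi$ to be $P$-integrable, which follows from finiteness of the variance bound: finite variance of $g_\psi(\zeta)$ implies a finite first moment. Fubini then justifies swapping the Gaussian and $P$ integrals.

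\textbf{Step 2 (variance of the mean).} By independence, $\Var(\widehat\Psi) = \frac{1}{N}\Var(g_\psi(\zeta))$. The hypothesis gives $\Var(g_\psi(\zeta)) \le V_r(\psi)$ uniformly over $P\in\mathcal P$, so $\Var(\widehat\Psi) \le V_r(\psi)/N$.

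\textbf{Step 3 (concentration).} Chebyshev's inequality gives
\begin{equation}
\Pr\big(|\widehat\Psi-\Psi(P)|>\epsilon\big) \le \frac{\Var(\widehat\Psi)}{\epsilon^2} \le \frac{V_r(\psi)}{N\epsilon^2}.
\end{equation}
Setting the right-hand side at most $\delta$ yields the stated sufficient condition $N \ge V_r(\psi)/(\epsilon^2\delta)$.

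The only point needing care is the integrability in Step 1. Polynomial recovery maps are unbounded, so we must confirm that $P$ has enough moments for the expectations and the exchange of integration to be valid. I would handle this by noting that the variance hypothesis already presupposes a finite second moment of $g_\psi(\zeta)$ under $P*\varphi_r$. Together with the Gaussian tails of the noise, this makes the Fubini step routine. Everything else is standard.
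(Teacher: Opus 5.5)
Your proof is correct and follows the same route as the paper: unbiasedness of $g_\psi(\zeta)$ from Theorem~\ref{thm:tempered_kernels}, the $1/N$ variance reduction for i.i.d.\ samples, and Chebyshev's inequality (the paper simply states that the bound ``follows immediately from Chebyshev's inequality''). Your extra care about integrability fills in a detail the paper leaves implicit: the finite second moment of $g_\psi(\zeta)$ controls $\E|\psi(\alpha)|$ via conditional Jensen and justifies the Fubini exchange.
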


\begin{remark}[Robust aggregation for heavy-tailed regimes]
Note that Lemma~\ref{lem:poly_mean_sc} yields a $1/\delta$ dependence on the failure probability, as it uses only a second-moment bound. For the sample mean estimator, this is fundamental because polynomial kernels can yield $g_\psi$ which are heavy-tailed, so despite finite variance we cannot hope for a universal sub-exponential bound. While the general statement of Bell QSL uses this sample-mean estimator, in such heavy-tailed settings it is preferable to use a robust aggregator such as the median-of-means estimator, which under the same finite-variance assumption
recovers the sharper high-probability scaling \(N = O\!\left(\frac{V_r(\psi)}{\epsilon^2}\log\!\left(\frac{1}{\delta}\right)\right)\). Thus, in practice, the standard $\epsilon^{-2}\log(1/\delta)$ scaling is always achievable.
\end{remark}

Finally, notice that the choice of
\begin{equation}
g_{u_j u_k}(\zeta) = u_j(\zeta)u_k(\zeta)\quad (j\neq k),
\end{equation}
is exactly the special case of covariance, so our polynomial-kernel estimator \eqref{eq:polynomial_estimator} indeed generalizes this instance.

\section{Quantum Advantage in Learning Noisy Classical Signals}
\label{sec:qa_theory}
In this section we establish complementary quantum advantage statements for QSL: first, an information-theoretic \emph{worst-case} advantage of Bell QSL for
post-hoc matched filtering/beamforming, in which scoring a template bank reduces to characteristic-function queries of the induced
displacement law; second, a theory of \emph{practical} advantage against experimentally restricted baselines, formalizing how limited
measurement access can make learning ill-conditioned even when the underlying phase-space structure is large. 
The latter is quantified via an optimal-transport stability modulus, which makes explicit the tradeoff between the size of an advantage regime and the strength of the achievable separation.

\vspace{-1em}
\subsection{Worst-case matched filtering and beamforming}
\label{sec:worst_case_QA}
\vspace{-1em}

Matched filtering and beamforming are ubiquitous primitives in classical sensing. One acquires a data record and compares it against a bank of candidate templates, assigning each template a scalar score that quantifies agreement with the data. In the simplest Gaussian model this score reduces, up to known normalizations, to a whitened inner product, and optimality follows from standard likelihood-ratio arguments. For our purposes, the key feature is operational rather than algorithmic. Data are collected once, while the template bank is explored later, often at large scale, so the sensing stage must produce a dataset that supports many post-hoc template queries.

In the bosonic setting studied here, a classical field induces a random displacement vector $\alpha\in\mathbb{C}^n$ across a mode basis, and one channel use produces a noisy record $\zeta$. A template $w\in\mathbb{C}^n$ plays the role of a beamforming vector and defines the matched-filter output $Y_w=\Re(\alpha^\dagger w)$. Rather than restricting to the linear statistic $Y_w$ itself, we view matched filtering as post-hoc scoring of the \emph{law} of $Y_w$ using a single dataset. In particular, we consider bounded Fourier features $\varphi_{Y_w}(t)=\mathbb{E}[e^{itY_w}]$, which are natural in likelihood-based pipelines and which admit clean post-hoc evaluation. The crucial point for us is that these Fourier scores reduce exactly to characteristic-function queries of the underlying displacement distribution, which is what allows worst-case separations to be imported from distribution-learning lower bounds.

A simple application of matched filtering is to estimate amplitudes of overlapping sinusoidal signals. For instance, consider a waveform 
\begin{equation}
    F(t) = \sum_{k=1}^n \sqrt{\omega_k} (A_k\cos(\omega_k t) + B_k\sin(\omega_kt)) \ .
    \label{eq:matched_filtering_f}
\end{equation}
As we saw in e.g. \eqref{eq:ham_displacement}, a Hamiltonian which realizes this drive acts on an $n$-mode bosonic probe as a sum over displacements $D(\alpha_k)$, with
\begin{equation}
    \alpha_k = -\frac{i}{\sqrt{2\omega_k}} \int_0^{2\pi} dt \ F(t)\exp(i\omega_kt) = \frac{\pi}{\sqrt{2}}(B_k - iA_k) \ .
    \label{eq:matched_filtering_iso}
\end{equation}
In matched filtering, it is generally presumed that parameters such as $A_k$ are static realizations of a sample from an underlying distribution specified by theoretical priors and uncertainties. This viewpoint implies that the displacement vector of $\alpha_k$'s, $\alpha = (\alpha_1, \alpha_2, ..., \alpha_n)$, is sampled from a distribution which we denote by $P(\alpha)$, in line with our previous notation. 

To learn these parameters $A_k, B_k$ via matched filtering, we first collect multichannel observations $\zeta$, given by 
\begin{equation}
    \zeta = \alpha + \eta, \qquad \alpha\sim P, \qquad \eta\sim \mathcal{N}_\mathbb{C}^n(0, \nu)
\end{equation}
where $\nu$ denotes the shot-noise variance. Classically, we keep a template bank of hypotheses for possible tuples $(A_1, B_1, ..., A_n, B_n)$, which can equivalently be stored as vectors $w\in\mathbb{C}^n$. For each template $w$, the underlying matched-filter output is given by the random variable $Y_w = \Re(\alpha^\dagger w)$ with $\alpha\sim P$.

If one restricts attention to linear scores based only on $Y_w$ (for example, estimating $\mathbb{E}[Y_w]$), then separable probes and measurements already achieve the standard $\Theta(\epsilon^{-2})$ sample complexity for additive Gaussian noise.  Indeed, given observations $\zeta=\alpha+\eta$ with $\eta$ Gaussian, the quantity $\Re(\zeta^\dagger w)$ is an unbiased noisy sample of $Y_w$, so $\mathbb{E}[Y_w]$ can be estimated by simple averaging with $O(\epsilon^{-2})$ samples.  Consequently, any superpolynomial worst-case separation in the dimension $n$ must arise from post-hoc nonlinear template scores that probe finer structure in the law
of $Y_w$, rather than only its first moment.

The classical post-processing then assigns to each template $w$ a score that is a chosen functional of the empirical distribution of the matched-filter outputs $\{\Re(\zeta^{(j)\dagger}w)\}_{j=1}^N$, and then selects the maximizer over the template bank. In the classical signal-processing literature, ``matched filtering'' is often used to mean the linear correlation statistic that is optimal under a Gaussian model after whitening.  Here we adopt a broader viewpoint that is common in likelihood-based pipelines, in which template scores are allowed to be nonlinear functionals of the matched-filter output distribution.  The Fourier scores we use below, $\varphi_{Y_w}(t)=\mathbb{E}[e^{itY_w}]$, form a canonical bounded family of such post-hoc queries, and they reduce exactly to characteristic-function evaluations. It is crucial that the collected dataset supports post-hoc evaluation to compute a score for every template in the bank, a property not arbitrarily satisfied by e.g. homodyne-based tomography.

Given that different frequencies in our coefficient-determination problem are independent in Fourier space, it is natural to consider a Fourier-based score. In many sensing pipelines, template scores are ultimately derived from likelihood ratios or Bayes factors rather than from a single linear statistic.  Even in the simplest Gaussian model, log-likelihoods take the form $\langle d, s_w\rangle - \frac{1}{2}\langle s_w,s_w\rangle$, and the resulting likelihood ratio involves an \emph{exponential} of a matched-filter inner product.  This observation motivates studying bounded exponential features of the matched-filter output as primitive post-hoc queries.  In particular, the complex Fourier feature $e^{itY_w}$ is a numerically stable, bounded surrogate that still captures the high-frequency structure that becomes
hard to access under vacuum-limited readout.

With the above in mind, let
\begin{equation}
    \varphi_{Y_w}(t) = \mathbb{E}[e^{itY_w}]
\end{equation}
be the one-dimensional characteristic function of $Y_w$. Then an unbiased estimator for $\varphi_{Y_w}(t)$ is
\begin{equation}
    \hat\varphi_{Y_w}(t) = \exp\left(\frac{\nu t^2\|w\|_2^2}{4}\right)N^{-1}\sum_{j=1}^N e^{it\Re(\zeta^{(j)\dagger}w)}
\end{equation}
where the Gaussian prefactor is required to remove the biased caused by shot noise, and any score that admits a Fourier representation in $Y_w$ can be expressed in terms of $\varphi_{Y_w}(t)$. Notice, however, that 
\begin{equation}
    \mathbb{E}[e^{itY_w}] = \int d^{2n}\alpha \ P(\alpha)\exp(\alpha^\dagger (itw/2) - (itw/2)^\dagger\alpha) = \chi_P\left(\frac{itw}{2}\right) \ ,
\end{equation}
indicating that a matched-filter query $(w, t)$ is equivalent to evaluating the characteristic function of $P$ at a point $\beta = itw/2$. This is the key identity which bridges post-hoc matched filtering with post-hoc learning of $\chi_P(\beta)$. We make this precise in the following. 

\begin{fact} \label{fact:filtering_bank} Let $F(t)$ be a multimodal waveform and $P$ be its induced displacement distribution. Fix a template bank $W$ of size $M$. Then for any $t\in (0, 2\pi]$, the problem of scoring all templates $w\in W$ reduces to estimating the characteristic function $\chi_P$ at a bank of points $\mathcal{Q} = \{\beta_1,...,\beta_M\}$ with $\beta_i = itw_i/2$.
\end{fact}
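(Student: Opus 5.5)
The plan is to establish, for each template $w$ in the bank, the pointwise identity $\varphi_{Y_w}(t)=\chi_P(itw/2)$. Scoring the bank then reduces to evaluating $\chi_P$ at the $M$ points $\beta_i=itw_i/2$. Because the identity is exact for every $t$, the reduction holds for any fixed $t\in(0,2\pi]$. The argument is a direct computation from the definitions, in three steps.

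\emph{Step 1: rewrite the matched-filter output as a symplectic phase.} Write $Y_w=\Re(\alpha^\dagger w)=\tfrac12(\alpha^\dagger w+w^\dagger\alpha)$. Then
\begin{equation}
itY_w=\tfrac{it}{2}\alpha^\dagger w+\tfrac{it}{2}w^\dagger\alpha .
\end{equation}
Setting $\beta=itw/2$ gives $\beta^\dagger=-itw^\dagger/2$, so that
\begin{equation}
\alpha^\dagger\beta-\beta^\dagger\alpha=\tfrac{it}{2}\alpha^\dagger w+\tfrac{it}{2}w^\dagger\alpha=itY_w .
\end{equation}
Hence $e^{itY_w}=\exp(\alpha^\dagger\beta-\beta^\dagger\alpha)$. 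This is exactly the kernel $\psi_{\beta}$ from the Fourier-based characteristic-function example.

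\emph{Step 2: take expectations.} Integrate against $P$, which is a probability measure; the integrand has modulus one, so the integral is well defined. This gives $\varphi_{Y_w}(t)=\int d^{2n}\alpha\,P(\alpha)\exp(\alpha^\dagger\beta-\beta^\dagger\alpha)=\chi_P(\beta)$, under the sign convention for $\chi_P$ used in the paper's matched-filtering discussion. If the convention elsewhere has the opposite sign, this step yields $\chi_P(-\beta)=\overline{\chi_P(\beta)}$ instead. I will check the convention here, because this sign bookkeeping is the only genuine obstacle. It is harmless either way: conjugation is a fixed bijection on the estimated values, so estimating $\chi_P$ on $\mathcal{Q}$ is equivalent to estimating it on $-\mathcal{Q}$.

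\emph{Step 3: assemble the reduction.} Repeat the computation for each $w_i\in W$. Knowing $\{\chi_P(\beta_i)\}_{i=1}^M$ to accuracy $\epsilon$ gives every Fourier score $\varphi_{Y_{w_i}}(t)$ to the same accuracy, with no additional samples. Conversely, any scoring procedure computes these characteristic-function values, so the two tasks are equivalent.

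This also clarifies the sample cost, although the Fact itself does not require it. Through Lemma~\ref{lem:pointwise_conc} and a union bound over the $M$ points, the cost is governed by $\max_i|\beta_i|^2=t^2\max_i\|w_i\|^2/4$. This is the quantity that the subsequent worst-case separation exploits.
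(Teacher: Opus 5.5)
Your proposal is correct and is essentially the paper's argument: the paper justifies the Fact by the single identity $\E[e^{itY_w}] = \int d^{2n}\alpha\, P(\alpha)\exp\bigl(\alpha^\dagger (itw/2) - (itw/2)^\dagger\alpha\bigr) = \chi_P(itw/2)$, which is exactly your Steps 1--2 applied template by template. Your sign concern does not arise, since the paper's convention (used in Lemma~\ref{lem:bell_identity} and the $\psi_{\beta_0}$ example) is $\chi_P(\beta)=\E_{\alpha\sim P}[\exp(\alpha^\dagger\beta-\beta^\dagger\alpha)]$, so the identity holds with $\beta_i = itw_i/2$ exactly as stated.
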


Leveraging this observation, we can view the matched filtering problem backwards, by discussing the equivalent characteristic-function learning problem and fixing a template bank of points. With this, we are ready to demonstrate that there is a worst-case instance of distributional priors for the physically ubiquitous family of resonant waveforms that results in a superpolynomial advantage of Bell measurement over any unentangled strategy for data collection in matched filtering.

Having reframed the problem in terms of characteristic-function learning, our proof relies on results from \cite{jiang2024bellfix} which define a family of displacement distributions that are hard to learn without entanglement.

\begin{definition}[Hard family of distributions] Fix $\sigma > 0$ and $\epsilon_0 \in [0, 1/4]$. For each $\gamma \in \mathbb{C}^n$, define
\begin{equation}
    P_\gamma(\alpha) = \left(\frac{2\sigma^2}{\pi}\right)^n e^{-2\sigma^2|\alpha|^2} \left(1 + 4\epsilon_0\sin(2\Im(\gamma^\dagger \alpha))\right)
\end{equation}
This is a valid probability density since the sine term is bounded and the Gaussian factor normalizes the density, and physically corresponds to a Gaussian density modulated by oscillations of frequency controlled by $\gamma$. \cite{jiang2024bellfix} demonstrates that
\begin{equation}
    \chi_{P_\gamma}(\beta) = \exp\left(-\frac{|\beta|^2}{2\sigma^2}\right) -2i\epsilon_0 \exp\left(-\frac{|\beta-\gamma|^2}{2\sigma^2}\right) + 2i\epsilon_0\exp\left(-\frac{|\beta+\gamma|^2}{2\sigma^2}\right) \,
\end{equation}
a function with three Gaussian bumps centered at $\beta=0$ and $\beta=\pm \gamma$.
\end{definition}
\noindent As one simple diagnostic, differentiating the closed form for $\chi_{P_\gamma}$ at $\beta=0$ yields a mean displacement whose magnitude scales as
\begin{align}
\big\|\mathbb{E}_{\alpha\sim P_\gamma}[\alpha]\big\|_2 = O\!\left(\frac{\epsilon_0}{\sigma^2}\,\|\gamma\|_2\,e^{-|\gamma|^2/(2\sigma^2)}\right).
\end{align}
In the regime $|\gamma|^2=\Theta(n)$, this dependence is exponentially small in $n$, so any mean-based linear matched-filter statistic would require exponentially many samples to resolve and cannot witness the advantage captured by characteristic-function queries.

The two off-center bumps at $\beta=\pm\gamma$ have amplitude $\Theta(\epsilon_0)$, and when $\gamma$ is unknown until after data collection they are hard to locate under vacuum-limited smoothing. In displacement coordinates, larger $\gamma$ results in finer oscillations, making features difficult to characterize when smoothed with vacuum noise. We will formalize these intuitions shortly.

Returning to the task of learning the coefficients of $F(t)$, we observe that \eqref{eq:matched_filtering_iso} indicates that the map $(A,B)\rightarrow\alpha$ is a linear isomorphism, so any choice of distribution on $\alpha$ can be pulled back to a distribution on Fourier coefficients $(A, B)$. Then one can realize the distribution $P_\gamma(\alpha)$ in the displacement distribution of the standard sinusoid by sampling each tuple $(A_1, B_1, ..., A_n, B_n)$ from the pullback of $P_\gamma(\alpha)$ under \eqref{eq:matched_filtering_iso}. In this picture, $\gamma$ is a spectral template across the frequency bank: the parameter $\gamma$ plays the role of a hidden template that specifies which collective mode carries the non-Gaussian feature.

\begin{theorem}
    Fix an unknown $\gamma \in \mathbb{C}^n$ with $|\gamma|^2 \leq 1.03n$. Let $F(t)$ be a resonant waveform given by \eqref{eq:matched_filtering_f}, with coefficients $(A_1, B_1, ..., A_n, B_n)$ sampled from the pullback of $P_\gamma(\alpha)$ under \eqref{eq:matched_filtering_iso}. Let $\mathcal{Q} = \{\beta_1, ...,\beta_M\}$ be a template bank satisfying $|\beta_j|^2 \leq  1.03n$ for all $j$, noting this may be fixed by an underlying template bank as in Fact \ref{fact:filtering_bank}. Then, for any $\epsilon \leq 0.24, \delta \leq 1/3$, the following are true.
    \begin{enumerate}
        \item Any entanglement-free protocol (possibly with squeezing) for collecting the dataset $\{\zeta^{(1)}, ... \zeta^{(N)}\}$ that estimates any single template score $\chi_P(\beta_i)$ to within accuracy $\epsilon$ with success probability $2/3$ requires at least
        \begin{equation}
            N \geq \Omega\left(\frac{3^n}{\epsilon^2}\right)
        \end{equation}
        samples.
        \item If the dataset $\{\zeta^{(1)}, ... \zeta^{(N)}\}$ is collected via Bell QSL, the estimates $\chi_{P_\gamma}(\beta_j)$ obtained via Bell QSL satisfy
        \begin{equation}
            \Pr\left[\max_{1\leq j\leq M} \left|\hat{\chi}_{P_\gamma}(\beta_j) - \chi_{P_\gamma} (\beta_j)\right| \leq \epsilon\right]\geq 1-\delta
        \end{equation}
        provided
        \begin{equation}
            N \geq O\left(\frac{\exp\!\left(2.06e^{-2r}n\right)}{\epsilon^2}\log\left(\frac{M}{\delta}\right)\right)
        \end{equation}
    \end{enumerate}
\end{theorem}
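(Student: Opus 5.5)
The plan is to reduce both parts to the characteristic-function learning problem of \cite{jiang2024bellfix} and then import its lower bound and its Bell-sampling upper bound. The reduction comes from \eqref{eq:matched_filtering_iso}. The map $(A,B)\mapsto\alpha$ is a fixed, known linear isomorphism, so sampling the coefficients from the pullback of $P_\gamma$ makes the channel $\mathcal{E}_H^{(T)}$ exactly the random displacement channel $\int d^{2n}\alpha\,P_\gamma(\alpha)\disp(\alpha)\rho\disp(\alpha)^\dagger$ (Lemma~\ref{lemma:ham_to_rdc}). By Fact~\ref{fact:filtering_bank}, scoring template $w_j$ is then the same task as estimating $\chi_{P_\gamma}(\beta_j)$. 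Any protocol for the waveform problem, adaptive or not, with or without squeezed probes, is therefore a protocol with query access to this displacement channel, and conversely. Both claims transfer verbatim.

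For part 1, I would invoke the entanglement-free lower bound of \cite{jiang2024bellfix} for the family $\{P_\gamma\}$ with $|\gamma|^2\le 1.03n$. The strategy is a learning-tree, or Le Cam, two-point argument. Take the null hypothesis $P_0$, the pure Gaussian with $\epsilon_0=0$, against the mixture over $\gamma$ drawn uniformly from a suitable ensemble, for instance $\gamma$ uniform on a sphere or from a discrete set of norm-squared roughly $n$. Without entanglement, each channel use acts on a single probe state followed by a measurement. Averaging over $\gamma$, the likelihood ratio per round deviates from $1$ only by terms controlled by $|\chi_{\rho}(\gamma)|$-type overlaps. For any (even squeezed) probe, these are bounded by Gaussian factors $e^{-\Theta(|\gamma|^2)}$, and after averaging over the $\gamma$ ensemble they are at most $(1/3)^n$ in the relevant sense. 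Summing the TV/$\chi^2$ contributions over $N$ rounds gives $\mathrm{TV}\le O(N\epsilon^2 3^{-n})$. Here I take the perturbation strength to scale with the target accuracy, noting that $\chi_{P_\gamma}(\gamma)$ differs from $\chi_{P_0}(\gamma)$ by $\approx 2\epsilon_0$. Distinguishing the hypotheses is necessary to estimate $\chi_P(\beta_i)$ to within $\epsilon\le0.24$, which yields $N\ge\Omega(3^n/\epsilon^2)$. Because we only need the query at the hidden point, the bound holds for a single template in the bank lying at $\beta_i=\pm\gamma$. The main obstacle is this step: checking that the constants in \cite{jiang2024bellfix} (the $1.03n$ radius, $\epsilon\le 0.24$, $\epsilon_0\le1/4$) match our normalization after the isomorphism. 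The rescaling by $\pi/\sqrt2$ only relabels $\sigma$, and it does not change the $\chi_P$ values at the queried $\beta$. In practice I would cite their Theorem for the statement and verify the parameter identification rather than re-derive the tree bound.

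For part 2, I would apply Lemma~\ref{lem:pointwise_conc} at each $\beta_j$ with failure probability $\delta/M$ and take a union bound over the $M$ templates. Since $|\beta_j|^2\le 1.03n$, each requires $N\ge 8\exp(2e^{-2r}\cdot1.03n)\epsilon^{-2}\log(4M/\delta)$. This gives $N=O(\exp(2.06e^{-2r}n)\epsilon^{-2}\log(M/\delta))$. The same dataset is reused for every template, which is exactly the post-hoc property claimed. Finally, I would note that choosing $r=\tfrac12\log n+O(1)$ makes the exponential factor $O(1)$, recovering the informal Theorem~\ref{thm:matched_filtering_informal}.
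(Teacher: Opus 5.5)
Your proposal follows the paper's proof. Part 2 is Lemma~\ref{lem:pointwise_conc} applied at each $\beta_j$ with failure probability $\delta/M$, followed by a union bound and the constraint $|\beta_j|^2\le 1.03n$. Part 1 is, as in the paper, a direct citation of Theorem 2 of \cite{jiang2024bellfix} with $\kappa=1.03$, after the linear-isomorphism reduction to the displacement channel with law $P_\gamma$.

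One caution about your heuristic lower-bound sketch, which does not affect the cited argument. The overlaps are \emph{not} pointwise bounded by $e^{-\Theta(|\gamma|^2)}$ for squeezed probes: a probe squeezed along $\gamma$ has $|\chi_\rho(\gamma)|\approx 1$. This is why a template fixed before data collection is easy without entanglement. The exponential suppression comes only from averaging over the hidden $\gamma$, so the lower bound depends on the scored template being unknown until after the data are collected.
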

\begin{proof}
    We derived that the sample complexity of characteristic-function estimation via Bell QSL in \eqref{eq:charfun_sample_complexity}; replacing $n$ with $1.03 n$ and $\delta$ with $\delta/M$ yields the stated upper bound. The lower bound is a direct application of \cite{jiang2024bellfix}, Theorem 2, with $\kappa = 1.03$.
\end{proof}

When $r \geq \frac{1}{2}\log n$, the Bell QSL approach to matched filtering requires $\sim O(\epsilon^{-2}\log(M/\delta))$. Hence, we see that Bell QSL always enables matched filtering of $n$ overlapping signals with $M$ templates with logarithmic dependence on $M$ and independent of $n$. Meanwhile, scoring even a single template using an entanglement-free data collection requires exponentially many shots in the number of signals. 

This discussion recovers the argument of \cite{jiang2024bellfix} in the special case where Bell QSL is tailored to estimate the characteristic function, reframing their quantum advantage in the sensing setting, and establishes an information-theoretic advantage of Bell QSL over any entanglement-free strategy for worst-case matched filtering.

\vspace{-1em}
\subsection{Theory of practical advantage against restricted learners}
\label{sec:transport_QA}
\vspace{-1em}

In Section~\ref{sec:worst_case_QA}, we established that the Bell-sensing protocol can achieve an information-theoretic quantum advantage against \emph{arbitrary} entanglement-free measurement strategies. While this provides the strongest possible baseline comparison, such worst-case guarantees are typically witnessed by highly structured instances and measurement models that are not representative of the constraints faced in precision sensing experiments.

In this subsection, we therefore turn to \emph{practical} quantum advantage, in the sense of benchmarking against experimentally realistic restricted learners. We focus on a canonical laboratory constraint in which an entanglement-free baseline is limited to homodyne detection at a fixed finite set of quadrature angles, even allowing optimal phase-sensitive squeezing aligned to each available measurement axis. This restriction captures the common situation in which rapid, high-fidelity scanning over many squeezed homodyne angles is technically costly, due to the challenge of repeatedly re-locking a quadrature squeezing axes during the sensing window, so the classical learner effectively observes only finitely many noisy one-dimensional projections of the underlying phase-space displacement distribution. In the precision shot-noise limited regime, where the vacuum penalty of heterodyne measurements is fatal, this homodyne benchmark is the practically meaningful entanglement-free comparison for Bell QSL.

We formalize this restricted baseline and show how it can be provably ill-conditioned. There exist broad families of displacement distributions that are indistinguishable, or only weakly distinguishable, from finite-angle homodyne data, yet are efficiently learnable from Bell QSL data. The resulting separations are robust in the sense that they persist under small perturbations away from exactly non-identifiable instances, yielding quantitative sample-complexity gaps that depend on a natural ``distance-to-hard-instances'' parameter.

Throughout this subsection, we write $P$ for the displacement distribution induced by the signal (Lemma~\ref{lemma:ham_to_rdc}). Bell QSL yields samples obeying the additive-noise model in Eq.~\eqref{eq:bell-additive-noise}, while heterodyne corresponds to the same observation model with vacuum noise (Fact~\ref{fact:heterodyne_noise}). We refer to Lemma~\ref{lem:bell_identity} and Corollary~\ref{cor:property_sample_complexity} for the corresponding estimation guarantees from Bell data. Our goal here is instead to quantify what can be learned when the entanglement-free baseline is restricted to finite-angle homodyne, and to identify explicit instance families where this restriction is provably ill-conditioned. The framework developed here will then be instantiated in specific, physically-motivated examples where Bell QSL realizes quantum advantage, in Appendix \ref{sec:prac_Gaussian_QA}.

\vspace{-1em}
\subsubsection{Definitions}
\vspace{-1em}

In contrast to Bell sensing, which yields a two-dimensional record per mode, finite-angle homodyne provides only noisy scalar projections of each displacement draw along the accessible quadrature axes. Fix a homodyne angle $\theta$ and write $\hat{x}_\theta = \hat{x}\cos\theta + \hat{p}\sin\theta$. If a displacement $D(\alpha)$ acts on the probe mode, then $\hat{x}_\theta$ is shifted by $\sqrt{2}\,\Re(e^{- i\theta}\alpha)$. If the probe is squeezed along $\hat{x}_\theta$ with variance
\begin{align}
\nu_r = \frac{1}{2} e^{-2 r},
\end{align}
then conditioned on a displacement draw $\alpha \sim P$, the homodyne outcome admits the additive-noise model
\begin{align}
Y_\theta = T_\theta(\alpha) + G, \qquad T_\theta(\alpha) := \sqrt{2}\,\Re\!\left(e^{- i \theta}\alpha\right), \qquad G \sim \mathcal{N}(0,\nu_r).
\label{eq:prac_hom_model}
\end{align}
Equivalently, the one-shot outcome distribution is the projected-and-blurred marginal
\begin{align}
\mu_\theta(P) = (T_\theta)_{\#}P \ast \mathcal{N}(0,\nu_r).
\label{eq:prac_hom_push}
\end{align}
We summarize this restricted baseline as follows.

\begin{definition}[Restricted-angle homodyne experiment]
Let $\Theta = \{\theta_1,\dots,\theta_m\}$ be a fixed finite set of accessible angles. A single \emph{round} of $\Theta$-restricted homodyne data consists of independent samples $Y_{\theta_j} \sim \mu_{\theta_j}(P)$ for each $\theta_j \in \Theta$ (each sample consumes an independent channel use).
The induced one-round outcome distribution is
\begin{align}
\mathcal{M}_\Theta(P) = \bigotimes_{j=1}^{m} \mu_{\theta_j}(P).
\label{eq:prac_M_theta}
\end{align}
Observing $N$ rounds corresponds to i.i.d.~samples from $\mathcal{M}_\Theta(P)^{\otimes N}$.
\end{definition}

\begin{remark}[Fourier restriction under homodyne]
Let $\chi_{P}$ denote the characteristic function of $P$ under the same convention used in Lemma~\ref{lem:bell_identity}. For $t \in \R$, the noiseless projected characteristic function satisfies
\begin{align}
\E\!\left[e^{i t\,T_\theta(\alpha)}\right] = \chi_{P}\!\left(i\,\frac{t}{\sqrt{2}}\,e^{i \theta}\right),
\end{align}
and with additive noise $G \sim \mathcal{N}(0,\nu_r)$ the observed characteristic function is multiplied by
\begin{align}
\exp\!\left(-\frac{\nu_r t^2}{2}\right) = \exp\!\left(-\frac{e^{-2 r}}{4}\,t^2\right).
\end{align}
Thus a finite $\Theta$ reveals $\chi_{P}$ only on finitely many rays, with additional Gaussian radial damping.
\end{remark}

We cast restricted homodyne as an inverse problem on a hypothesis class of displacement distributions. The next definitions package its conditioning into a single quantity that yields minimax lower bounds for learning Lipschitz properties.
\begin{definition}[Measurement-induced pseudometric and OT ambiguity modulus] 
Fix an experiment $\mathcal{M}$ mapping a displacement law $P$ to a one-round outcome distribution $\mathcal{M}(P)$. Let $\mathrm{TV}$ denote total variation distance:
\begin{align}
\mathrm{TV}(Q_1,Q_2) := \sup_{A} |Q_1(A)-Q_2(A)|.
\end{align}
Define the measurement-induced pseudometric
\begin{align}
d_{\mathcal{M}}(P,Q) := \mathrm{TV}\!\left(\mathcal{M}(P),\mathcal{M}(Q)\right).
\end{align}
Given a hypothesis class $\mathcal{C}$ of displacement laws on $\R^2$, define the OT ambiguity modulus
\begin{align}
\omega_{\mathcal{M},\mathcal{C}}(\varepsilon) := \sup\Big\{ W_1(P,Q)\,:\,P,Q \in \mathcal{C},\,d_{\mathcal{M}}(P,Q) \leq \varepsilon\Big\}.
\label{eq:prac_ambiguity}
\end{align}
\end{definition}
\noindent Intuitively, $\omega_{\mathcal{M},\mathcal{C}}(\varepsilon)$ quantifies how much Wasserstein separation can be hidden inside an $\varepsilon$-ball of measurement indistinguishability.

\vspace{-1em}
\subsubsection{Optimal transport and minimax lower bounds}
\vspace{-1em}

To convert such ``indistinguishable-but-far'' pairs into statistical lower bounds, we use a standard two-point reduction based on hypothesis testing.

\begin{lemma}[Two-point Le Cam bound for absolute-error estimation]
\label{lem:prac_lecam}
Let $Q_0$ and $Q_1$ be distributions on a data space and let $\tau(Q_0),\tau(Q_1) \in \R$. For any estimator $\widehat{\tau}$ based on one draw,
\begin{align}
\max\Big\{ \E_{X \sim Q_0}\!\big[|\widehat{\tau}(X)-\tau(Q_0)|\big], \, \E_{X \sim Q_1}\!\big[|\widehat{\tau}(X)-\tau(Q_1)|\big] \Big\} \geq \frac{|\tau(Q_0)-\tau(Q_1)|}{4}\Big(1-\mathrm{TV}(Q_0,Q_1)\Big).
\label{eq:prac_lecam_bound}
\end{align}
The same inequality holds with $Q_0,Q_1$ replaced by product distributions $Q_0^{\otimes N},Q_1^{\otimes N}$.
\end{lemma}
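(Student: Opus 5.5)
The plan is to reduce estimation to binary hypothesis testing via the standard Le Cam argument. Suppose $\widehat{\tau}$ is any estimator based on one draw $X$, and set $\Delta := |\tau(Q_0)-\tau(Q_1)|$. If $\Delta = 0$ there is nothing to prove, so assume $\Delta>0$. From $\widehat\tau$ I will build the test $\phi(X) := 1$ if $|\widehat{\tau}(X)-\tau(Q_1)| < |\widehat{\tau}(X)-\tau(Q_0)|$ and $\phi(X):=0$ otherwise, which is the nearest-hypothesis rule.

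First, I will record the geometric fact that whenever $\phi$ errs, the estimator is far from the truth. If $\phi(X)=1$, then $|\widehat\tau - \tau(Q_0)| > |\widehat\tau-\tau(Q_1)|$. By the triangle inequality the two distances sum to at least $\Delta$, so $|\widehat\tau-\tau(Q_0)|\ge \Delta/2$. Symmetrically, $\phi(X)=0$ forces $|\widehat\tau-\tau(Q_1)|\ge\Delta/2$. Markov's inequality, in the pointwise form $|\cdot|\ge (\Delta/2)\mathbf{1}\{\cdot\}$, then gives
\begin{align}
\E_{Q_0}|\widehat\tau-\tau(Q_0)| + \E_{Q_1}|\widehat\tau-\tau(Q_1)| \ \ge\ \frac{\Delta}{2}\Big(Q_0(\phi=1)+Q_1(\phi=0)\Big).
\end{align}

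Second, I will lower bound the sum of the two testing errors using total variation. With $A=\{\phi=1\}$, we have $Q_0(A)+Q_1(A^c) = 1-(Q_1(A)-Q_0(A))\ge 1-\mathrm{TV}(Q_0,Q_1)$, directly from the sup definition of $\mathrm{TV}$ given above. The maximum of the two risks is at least their average, so combining the displays yields
\begin{align}
\max\{\cdot,\cdot\}\ \ge\ \frac{\Delta}{4}\big(1-\mathrm{TV}(Q_0,Q_1)\big),
\end{align}
which is exactly \eqref{eq:prac_lecam_bound}. Measurability of $\phi$ follows from the measurability of $\widehat\tau$.

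For the product statement, nothing in the argument used the structure of the data space. I will apply the same reasoning verbatim with $X=(X_1,\dots,X_N)$ drawn from $Q_0^{\otimes N}$ or $Q_1^{\otimes N}$, with $\tau$ values unchanged, replacing $\mathrm{TV}(Q_0,Q_1)$ by $\mathrm{TV}(Q_0^{\otimes N},Q_1^{\otimes N})$. There is no genuine obstacle; the only care needed is the tie-breaking convention in $\phi$, since a tie counts as choosing $Q_0$ and the $\Delta/2$ bound still holds, together with the factor-of-two bookkeeping between the sum and the max. Subadditivity $\mathrm{TV}(Q_0^{\otimes N},Q_1^{\otimes N})\le N\,\mathrm{TV}(Q_0,Q_1)$, which follows from a coupling argument, is what later converts this into the $\omega(1/N)$ rate of Theorem~\ref{thm:OT_informal}, but it is not needed for the lemma itself.
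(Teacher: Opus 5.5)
Your proof is correct and is essentially the paper's argument. Your nearest-hypothesis rule coincides with the paper's midpoint threshold $A=\{\widehat\tau\ge m\}$, $m=(\tau(Q_0)+\tau(Q_1))/2$, up to tie-breaking at $m$; after that you use the same pointwise $\Delta/2$ bound, the inequality $Q_0(A)+Q_1(A^c)\ge 1-\mathrm{TV}(Q_0,Q_1)$, and $\max\ge$ average. Using the triangle inequality in place of ordering $\tau(Q_0)\le\tau(Q_1)$ only spares you the w.l.o.g.\ step.
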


\begin{proof}
Without loss of generality assume $\tau(Q_0)\le \tau(Q_1)$, and set $m := (\tau(Q_0)+\tau(Q_1))/2$. Define the measurable set $A := \{x:\widehat{\tau}(x)\ge m\}$. If $x\in A$ then $\widehat{\tau}(x)-\tau(Q_0)\geq m-\tau(Q_0)=|\tau(Q_1)-\tau(Q_0)|/2$, so $|\widehat{\tau}(x)-\tau(Q_0)|\ge |\tau(Q_1)-\tau(Q_0)|/2$. If $x\notin A$ then $\tau(Q_1)-\widehat{\tau}(x)\ge \tau(Q_1)-m=|\tau(Q_1)-\tau(Q_0)|/2$, so $|\widehat{\tau}(x)-\tau(Q_1)|\ge |\tau(Q_1)-\tau(Q_0)|/2$. Therefore
\begin{align}
\E_{Q_0}\!\big[|\widehat{\tau}-\tau(Q_0)|\big] + \E_{Q_1}\!\big[|\widehat{\tau}-\tau(Q_1)|\big] \geq \frac{|\tau(Q_1)-\tau(Q_0)|}{2}\Big(Q_0(A)+Q_1(A^{c})\Big).
\end{align}
Minimizing $Q_0(A)+Q_1(A^{c})$ over measurable $A$ yields $1-\mathrm{TV}(Q_0,Q_1)$. Finally, $\max\{u,v\}\ge (u+v)/2$ gives \eqref{eq:prac_lecam_bound}. The product case is identical.
\end{proof}

Lemma~\ref{lem:prac_lecam} reduces minimax lower bounds to exhibiting two hypotheses whose induced data distributions have small total variation distance but whose target values differ by a constant amount. In our setting, the ambiguity modulus supplies such pairs in Wasserstein distance, while Kantorovich-Rubinstein duality (Theorem~\ref{thm:KR}) converts a large $W_1(P,Q)$ gap into a $1$-Lipschitz observable whose mean differs by the same order. Combining these ingredients with the product bound $\mathrm{TV}(Q_0^{\otimes N},Q_1^{\otimes N}) \leq N\,\mathrm{TV}(Q_0,Q_1)$ yields the following minimax statement.

\begin{theorem}[OT ambiguity implies minimax lower bounds for restricted experiments]
\label{thm:prac_ot_minimax}
Fix a class $\mathcal{C}$ of displacement laws on $\R^2$ with finite first moment and fix an experiment $\mathcal{M}$.
Let $d_{\mathcal{M}}(P,Q) = \mathrm{TV}(\mathcal{M}(P),\mathcal{M}(Q))$.
For $N \geq 1$, define
\begin{align}
\varepsilon_N := \frac{1}{4 N}, \qquad \Delta_N := \omega_{\mathcal{M},\mathcal{C}}(\varepsilon_N).
\end{align}
Then there exists a $1$-Lipschitz function $f : \R^2 \to \R$ such that any estimator $\widehat{\mu}$ of the mean functional $\mu_f(P) = \E_{X \sim P}[f(X)]$, based on $N$ i.i.d.~samples from $\mathcal{M}(P)$,
satisfies
\begin{align}
\inf_{\widehat{\mu}}\ \sup_{P \in \mathcal{C}} \E\!\left[|\widehat{\mu}-\mu_f(P)|\right] \geq \frac{3}{16}\,\Delta_N.
\label{eq:prac_ot_minimax_bound}
\end{align}
In particular, if $\omega_{\mathcal{M},\mathcal{C}}(0) > 0$, then the minimax risk is bounded below by a positive constant for all $N$.
\end{theorem}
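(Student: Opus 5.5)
The plan is a two-point argument. I choose an almost-extremal pair from the supremum defining $\omega_{\mathcal{M},\mathcal{C}}(\varepsilon_N)$, convert their Wasserstein separation into a $1$-Lipschitz observable via Kantorovich--Rubinstein duality (Theorem~\ref{thm:KR}), and then apply Lemma~\ref{lem:prac_lecam} to the $N$-fold product experiment.

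\textbf{Near-optimal pair.} Assume first that $\Delta_N>0$; the case $\Delta_N=0$ is trivial. Fix $\eta\in(0,1)$. By the definition of the supremum there exist $P,Q\in\mathcal{C}$ with
\begin{equation}
d_{\mathcal{M}}(P,Q)\le \varepsilon_N
\qquad\text{and}\qquad
W_1(P,Q)\ge (1-\eta)\Delta_N .
\end{equation}
If $\Delta_N=\infty$, I instead take pairs with $W_1(P,Q)$ arbitrarily large. Since $\mathcal{C}$ has finite first moment, $P,Q\in\mathcal{P}_1(\mathbb{R}^2)$ and Theorem~\ref{thm:KR} applies. It gives a $1$-Lipschitz $f$ with
\begin{equation}
\E_P[f]-\E_Q[f]\ge (1-\eta)^2\Delta_N ,
\end{equation}
where the extra factor $(1-\eta)$ absorbs the fact that the supremum in Theorem~\ref{thm:KR} may not be attained.

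\textbf{Subtlety about $f$.} The statement says "there exists $f$", chosen before the estimator. The $f$ above depends only on the pair $(P,Q)$, not on the estimator, so it is legitimate to fix it first. Since $\eta$ is arbitrary but $f$ depends on $\eta$, I will fix one small $\eta$ and note that the constant $3/16$ leaves room to spare.

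\textbf{Le Cam step.} Set $Q_0=\mathcal{M}(P)^{\otimes N}$, $Q_1=\mathcal{M}(Q)^{\otimes N}$, $\tau(Q_0)=\mu_f(P)$ and $\tau(Q_1)=\mu_f(Q)$. The standard subadditivity of total variation under products (via coupling) gives
\begin{equation}
\mathrm{TV}(Q_0,Q_1)\le N\, d_{\mathcal{M}}(P,Q)\le N\varepsilon_N=\tfrac14 .
\end{equation}
Lemma~\ref{lem:prac_lecam} then bounds the worst-case risk over $\{P,Q\}\subset\mathcal{C}$, and hence over all of $\mathcal{C}$, from below by
\begin{equation}
\frac{|\mu_f(P)-\mu_f(Q)|}{4}\cdot\frac34 \;\ge\; \frac{3}{16}(1-\eta)^2\Delta_N .
\end{equation}

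\textbf{Recovering the exact constant (main obstacle).} The only delicate point is getting exactly $3/16$ rather than $3/16$ minus a slack. If both suprema are attained, I take $\eta=0$. Attainment holds in Theorem~\ref{thm:KR} for $\mathcal{P}_1$, since an optimal Lipschitz potential exists by compactness (Arzel\`a--Ascoli after normalizing $f(0)=0$). Attainment of the outer supremum is not guaranteed, so otherwise I would either state the bound as $\frac{3}{16}\Delta_N-o(1)$ or read $\Delta_N$ as the supremum achieved in the limit. Equivalently, the claim holds for every $\Delta<\Delta_N$ with a corresponding $f$, and this is how I would phrase the rigorous version.

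\textbf{Infinite and zero-$N$ cases.} For $\Delta_N=\infty$, the same argument gives risks exceeding any bound. For the final claim, if $\omega_{\mathcal{M},\mathcal{C}}(0)>0$, then by monotonicity $\Delta_N\ge\omega_{\mathcal{M},\mathcal{C}}(0)$ for every $N$. In fact a single pair with $d_{\mathcal{M}}=0$ works uniformly: it gives $\mathrm{TV}(Q_0,Q_1)=0$ for all $N$, and one fixed $f$ yields a risk of at least $\frac14 W_1(P,Q)$ for all $N$.
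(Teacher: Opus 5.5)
Your proposal is correct and follows the paper's proof essentially step for step. You take a near-extremal pair from the supremum defining $\omega_{\mathcal{M},\mathcal{C}}(\varepsilon_N)$, use Kantorovich--Rubinstein duality to extract a $1$-Lipschitz $f$, bound $\mathrm{TV}(Q_0,Q_1)\le N\varepsilon_N=1/4$ for the product experiment, and apply Lemma~\ref{lem:prac_lecam}. You also note that $f$ depends on the slack $\eta$, so the exact constant $\frac{3}{16}\Delta_N$ with a single fixed $f$ needs the outer supremum to be attained, and otherwise holds for every $\Delta<\Delta_N$; the paper passes over this with ``since $\eta>0$ is arbitrary,'' so your phrasing is the more careful one.
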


\begin{proof}
Fix $\eta > 0$. By definition of $\Delta_N = \omega_{\mathcal{M},\mathcal{C}}(\varepsilon_N)$ with $\varepsilon_N = \frac{1}{4 N}$, there exist $P,Q \in \mathcal{C}$ such that
\begin{align}
d_{\mathcal{M}}(P,Q) \leq \varepsilon_N, \qquad W_1(P,Q) \geq \Delta_N - \eta.
\end{align}
By Kantorovich-Rubinstein duality (Theorem~\ref{thm:KR}), there exists a $1$-Lipschitz function $f:\R^2\to\R$ such that
\begin{align}
\left|\E_{P}[f]-\E_{Q}[f]\right| \geq W_1(P,Q) - \eta \geq \Delta_N - 2\eta.
\end{align}
Let $Q_P := \mathcal{M}(P)$ and $Q_Q := \mathcal{M}(Q)$. Since $d_{\mathcal{M}}(P,Q) = \mathrm{TV}(Q_P,Q_Q) \leq \frac{1}{4 N}$, we have for products
\begin{align}
\mathrm{TV}\!\left(Q_P^{\otimes N},Q_Q^{\otimes N}\right) \leq N\,\mathrm{TV}(Q_P,Q_Q) \leq \frac{1}{4}.
\end{align}
Apply Lemma~\ref{lem:prac_lecam} to $Q_P^{\otimes N}$ and $Q_Q^{\otimes N}$ with $\tau(Q_P^{\otimes N}) = \E_{P}[f]$ and $\tau(Q_Q^{\otimes N})=\E_{Q}[f]$ to obtain
\begin{align}
\max\Big\{\E_{Q_P^{\otimes N}}\!\big[|\widehat{\mu} - \E_{P}[f]|\big],\,\E_{Q_Q^{\otimes N}}\!\big[|\widehat{\mu} - \E_{Q}[f]|\big]\Big\} \geq \frac{|\E_{P}[f] - \E_{Q}[f]|}{4}\Big(1-\mathrm{TV}\!\left(Q_P^{\otimes N},Q_Q^{\otimes N}\right)\Big) \geq \frac{\Delta_N - 2\eta}{4}\cdot\frac{3}{4} = \frac{3}{16}\,(\Delta_N - 2\eta).
\end{align}
Since $\eta > 0$ is arbitrary, this implies \eqref{eq:prac_ot_minimax_bound}.
\end{proof}

\vspace{-1em}
\subsubsection{Examples of restricted measurement models}
\vspace{-1em}

We begin by exhibiting families of displacement laws that are exactly non-identifiable for the most common restricted baseline, $\Theta = \{0,\pi/2\}$, even though they are separated in Wasserstein distance.

\begin{proposition}[Axis-marginal dependence of $\Theta = \{0,\pi/2\}$ homodyne]
\label{prop:prac_axis_only}
Let $\Theta = \{0,\pi/2\}$. The restricted homodyne experiment $\mathcal{M}_{\Theta}(P)$ depends only on the (noisy) $\Re(\alpha)$-marginal and $\Im(\alpha)$-marginal of $P$ as measures on $\R$. In particular, if $P$ and $Q$ have identical $\Re(\alpha)$- and $\Im(\alpha)$-marginals, then
\begin{align}
\mathcal{M}_{\Theta}(P) = \mathcal{M}_{\Theta}(Q),
\end{align}
even after convolution with the same known Gaussian measurement noise in \eqref{eq:prac_hom_push}.
\end{proposition}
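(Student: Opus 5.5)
The plan is to unwind the definitions of the restricted experiment in \eqref{eq:prac_M_theta} and \eqref{eq:prac_hom_push}. By definition, $\mathcal{M}_\Theta(P) = \mu_0(P)\otimes\mu_{\pi/2}(P)$. Each factor has the form $\mu_\theta(P) = (T_\theta)_\# P * \mathcal{N}(0,\nu_r)$. It therefore suffices to show that each pushforward $(T_\theta)_\#P$, for $\theta\in\{0,\pi/2\}$, is determined by one coordinate marginal of $P$.

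\textbf{Step 1: identify the projections.} Compute $T_0(\alpha) = \sqrt{2}\,\Re(\alpha)$ and $T_{\pi/2}(\alpha) = \sqrt{2}\,\Re(-i\alpha) = \sqrt{2}\,\Im(\alpha)$. Write $\pi_x(\alpha)=\Re(\alpha)$ and $\pi_p(\alpha)=\Im(\alpha)$, and let $s(t)=\sqrt{2}t$. Then $T_0 = s\circ\pi_x$ and $T_{\pi/2} = s\circ \pi_p$.

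\textbf{Step 2: factor the pushforwards.} Functoriality of pushforwards gives $(T_0)_\#P = s_\#\big((\pi_x)_\#P\big)$, and similarly $(T_{\pi/2})_\#P = s_\#\big((\pi_p)_\#P\big)$. This follows directly from $(g\circ h)_\#P(A) = P(h^{-1}(g^{-1}(A)))$. Hence each $(T_\theta)_\#P$ is a fixed deterministic image of the $\Re(\alpha)$- or $\Im(\alpha)$-marginal.

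\textbf{Step 3: apply the noise and take the product.} Convolution with the fixed law $\mathcal{N}(0,\nu_r)$ is a well-defined map on $\mathcal{P}(\R)$, since $p*\eta$ is the law of $X+Z$ with $Z$ independent. So $\mu_\theta(P)$ is a function of the corresponding marginal alone, and $\mathcal{M}_\Theta(P)$, being the product of the two factors, is a function of the pair of marginals. If $(\pi_x)_\#P=(\pi_x)_\#Q$ and $(\pi_p)_\#P=(\pi_p)_\#Q$, then each factor coincides, and therefore $\mathcal{M}_\Theta(P)=\mathcal{M}_\Theta(Q)$. The same holds for the $N$-round product $\mathcal{M}_\Theta(\cdot)^{\otimes N}$.

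There is no real obstacle here. The only point requiring care is Step 1, specifically the sign and phase convention for $T_{\pi/2}$. Under the convention $e^{-i\theta}$ the result is $+\sqrt{2}\Im(\alpha)$. Under the opposite convention it would be $-\sqrt{2}\Im(\alpha)$, which is still a deterministic linear image of the $\Im(\alpha)$-marginal, so the conclusion is unaffected.
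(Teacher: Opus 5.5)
Your proposal is correct and follows the paper's proof. Both identify $T_0$ and $T_{\pi/2}$ as scaled projections onto $\Re(\alpha)$ and $\Im(\alpha)$, note that each factor $\mu_\theta(P)$ depends only on the corresponding marginal convolved with the fixed noise, and conclude by taking the product. Your explicit pushforward-composition step and the remark on the sign convention just spell out what the paper leaves implicit.
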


\begin{proof}
For $\theta = 0$, \eqref{eq:prac_hom_push} shows $\mu_{0}(P)$ depends only on the pushforward of $P$ under $\alpha \mapsto \sqrt{2}\,\Re(\alpha)$, convolved with $\mathcal{N}(0,\nu_r)$. For $\theta = \pi/2$ it depends only on $\alpha \mapsto \sqrt{2}\,\Im(\alpha)$ convolved with the same noise. The one-round distribution is their product.
\end{proof}

Proposition~\ref{prop:prac_axis_only} shows that $\Theta = \{0,\pi/2\}$ homodyne is blind to any change in the \emph{dependence structure} between $\Re(\alpha)$ and $\Im(\alpha)$ that preserves the two axis marginals. The next two propositions give concrete instances of this phenomenon.

\begin{proposition}[A discrete XOR pair]
\label{prop:prac_xor_pair}
Fix $a > 0$ and define distributions on $\R^2$,
\begin{align}
P = \frac{1}{2}\,\delta_{(a,a)} + \frac{1}{2}\,\delta_{(-a,-a)}, \qquad Q = \frac{1}{2}\,\delta_{(a,-a)} + \frac{1}{2}\,\delta_{(-a,a)}.
\end{align}
Then $P$ and $Q$ have identical $\Re(\alpha)$- and $\Im(\alpha)$-marginals and are indistinguishable by $\Theta = \{0,\pi/2\}$ homodyne. Moreover,
\begin{align}
W_1(P,Q) = 2 a, \qquad W_2(P,Q) = 2 a.
\end{align}
\end{proposition}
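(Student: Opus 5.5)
The proof splits into three short parts: the marginals, the indistinguishability, and the two Wasserstein distances.

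\emph{Marginals and indistinguishability.} Under the identification $\C\simeq\R^2$, $\alpha\mapsto(\Re\alpha,\Im\alpha)$, the $\Re(\alpha)$-marginal of $P$ is $\tfrac12\delta_a+\tfrac12\delta_{-a}$, and so is its $\Im(\alpha)$-marginal. The same holds for $Q$: its atoms $(a,-a)$ and $(-a,a)$ have first coordinates $\pm a$ and second coordinates $\mp a$, each with mass $1/2$. Since both pairs of axis marginals coincide, Proposition~\ref{prop:prac_axis_only} gives $\mathcal{M}_{\{0,\pi/2\}}(P)=\mathcal{M}_{\{0,\pi/2\}}(Q)$. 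Hence $d_{\mathcal{M}}(P,Q)=0$.

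\emph{Wasserstein upper bound.} Every atom of $P$ is at Euclidean distance either $2a$ or $2\sqrt2\,a$ from each atom of $Q$. Concretely, $(a,a)$ is at distance $2a$ from both $(a,-a)$ and $(-a,a)$, and likewise for $(-a,-a)$. Consider the coupling that sends $(a,a)\mapsto(a,-a)$ and $(-a,-a)\mapsto(-a,a)$, each with mass $1/2$. Every unit of mass moves exactly distance $2a$. Therefore $W_1\le 2a$ and $W_2\le 2a$.

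\emph{Wasserstein lower bound.} For any coupling $\pi\in\Pi(P,Q)$, the supports are disjoint and every support pair is at distance at least $2a$. So $\|x-y\|_2\ge 2a$ holds $\pi$-almost surely, which gives $\int\pi\,\|x-y\|^k\ge(2a)^k$ for $k=1,2$. Combining the two bounds yields $W_1=W_2=2a$.

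The only point needing care is confirming that the minimal pairwise distance between the supports really is $2a$ and not something smaller. I will check this by listing all four distances $\{2a,2a,2a,2a\}$ explicitly. The largest possible separation, $2\sqrt2\,a$ between antipodal points such as $(a,a)$ and $(-a,a)$... wait, that pair is in fact at distance $2a$. The diagonal pair $(a,a)$ to $(-a,-a)$ is at distance $2\sqrt2\,a$, but both of those atoms belong to $P$, so no coupling ever uses that distance. As a cross-check, Kantorovich–Rubinstein duality (Theorem~\ref{thm:KR}) with the $1$-Lipschitz witness $f(x)=\operatorname{dist}(x,\operatorname{supp}Q)$ gives $\E_P f-\E_Q f=2a$, confirming the value of $W_1$.
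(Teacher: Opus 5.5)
Your proposal is correct and follows the paper's proof: identical axis marginals give $\mathcal{M}_\Theta(P)=\mathcal{M}_\Theta(Q)$, the coupling $(a,a)\mapsto(a,-a)$, $(-a,-a)\mapsto(-a,a)$ gives the upper bound, and since every cross pair of atoms is at distance exactly $2a$, every coupling costs exactly $(2a)^k$, so $W_1=W_2=2a$. The Kantorovich--Rubinstein check is a harmless extra. The only blemish is your opening sentence suggesting some $P$--$Q$ distances equal $2\sqrt{2}\,a$. As you notice mid-argument, that distance occurs only between the two atoms of $P$, so state directly that all four cross distances equal $2a$.
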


\begin{proof}
Both $P$ and $Q$ place $\Re(\alpha)$-mass equally on $\{+a,-a\}$ and similarly for $\Im(\alpha)$. Any coupling must move $(a,a)$ to either $(a,-a)$ or $(-a,a)$, each at Euclidean distance $2 a$. The coupling $(a,a) \mapsto (a,-a)$ and $(-a,-a) \mapsto (-a,a)$ achieves average transport cost $2 a$, proving $W_1 = 2 a$, and since all transported mass moves exactly distance $2 a$, also $W_2 = 2 a$.
\end{proof}

The discrete XOR pair changes only how the two coordinates are \emph{paired}. A smooth analogue is obtained by flipping the sign of an off-diagonal covariance, wherein the axis marginals (diagonal entries) remain fixed but the correlation structure changes. We expand on this smooth analogue in Appendix \ref{sec:ex_sum_of_deltas}.

\begin{proposition}[A Gaussian correlation-hiding pair]
\label{prop:prac_gauss_hide}
Fix $c \in (0,1)$ and define
\begin{align}
\Sigma_{\pm} = \begin{pmatrix} \sigma^2 & \pm c \\ \pm c & \sigma^2 \end{pmatrix}, \qquad P = \mathcal{N}(0,\Sigma_{+}), \qquad Q = \mathcal{N}(0,\Sigma_{-}).
\end{align}
Then $P$ and $Q$ have identical axis marginals and are indistinguishable by $\Theta = \{0,\pi/2\}$ homodyne. Their Wasserstein-2 distance satisfies
\begin{align}
W_2^2(P,Q) = 4\Big(\sigma^2-\sqrt{\sigma^4-c^2}\Big),
\label{eq:prac_W2_gauss_hide}
\end{align}
and hence $W_2(P,Q) \asymp \sqrt{2}\,|c|$ when $|c| \ll 1$.
\end{proposition}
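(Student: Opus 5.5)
The plan has three parts: show the two hypotheses are invisible to axis homodyne, compute $W_2$ from the Gaussian closed form in \eqref{eq:W2_gaussian}, and expand for small $c$. Implicitly we need $c<\sigma^2$ so that $\Sigma_\pm \succ 0$; the hypothesis $c\in(0,1)$ should be read together with this condition.

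\textbf{Identical marginals.} The marginal of $\mathcal{N}(0,\Sigma_\pm)$ along the $\Re(\alpha)$ axis is $\mathcal{N}(0,\sigma^2)$, and the same holds along the $\Im(\alpha)$ axis. Both are independent of the sign of the off-diagonal entry. Proposition~\ref{prop:prac_axis_only} then gives $\mathcal{M}_\Theta(P)=\mathcal{M}_\Theta(Q)$ for $\Theta=\{0,\pi/2\}$.

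\textbf{The Wasserstein distance.} Both means vanish and $\tr\Sigma_\pm = 2\sigma^2$, so \eqref{eq:W2_gaussian} reduces to
\[
W_2^2 = 4\sigma^2 - 2\,\tr\bigl((\Sigma_+^{1/2}\Sigma_-\Sigma_+^{1/2})^{1/2}\bigr).
\]
Computing this trace is the only real step. The key observation is that $\Sigma_+$ and $\Sigma_-$ commute, since both are diagonal in the basis $e_\pm=(1,\pm1)/\sqrt{2}$. In that basis:
\begin{itemize}
\item $\Sigma_+$ has eigenvalues $\sigma^2+c$ and $\sigma^2-c$;
\item $\Sigma_-$ has eigenvalues $\sigma^2-c$ and $\sigma^2+c$, in the same order.
\end{itemize}
Hence $\Sigma_+^{1/2}\Sigma_-\Sigma_+^{1/2}=\Sigma_+\Sigma_-$, whose eigenvalues are both equal to $\sigma^4-c^2$. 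Its square root is therefore $\sqrt{\sigma^4-c^2}\,I$, with trace $2\sqrt{\sigma^4-c^2}$. Substituting gives
\[
W_2^2 = 4\bigl(\sigma^2-\sqrt{\sigma^4-c^2}\bigr),
\]
which is \eqref{eq:prac_W2_gauss_hide}.

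\textbf{Small-$c$ asymptotics.} Expand $\sqrt{\sigma^4-c^2}=\sigma^2-\frac{c^2}{2\sigma^2}+O(c^4/\sigma^6)$. This gives
\[
W_2^2 = \frac{2c^2}{\sigma^2}+O\!\left(\frac{c^4}{\sigma^6}\right),
\]
so $W_2\asymp\sqrt{2}\,|c|/\sigma$.

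The stated form $\sqrt{2}|c|$ therefore corresponds to the normalization $\sigma=1$, or to reading $\asymp$ as equality up to constants depending on $\sigma$; I would note this explicitly. A complementary check is to take the maximal case $c\to\sigma^2$, where the formula gives $W_2^2\to 4\sigma^2$. More generally, the rational-square-root bound gives $\sigma^2-\sqrt{\sigma^4-c^2}\ge c^2/(2\sigma^2)$, which yields a two-sided estimate $W_2 \asymp |c|/\sigma$ uniformly in $c<\sigma^2$.
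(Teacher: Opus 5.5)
Your proof is correct and follows the paper's argument. Equal diagonal entries give identical axis marginals, and the Gaussian $W_2$ formula reduces to a trace once $\Sigma_+$ and $\Sigma_-$ are seen to commute; the paper simply multiplies them out to get $\Sigma_+\Sigma_- = (\sigma^4-c^2)$ times the identity, rather than diagonalizing in the $e_\pm$ basis. Your added caveats are also right, and the paper's proof does not address either: the statement implicitly needs $c<\sigma^2$, and the small-$c$ behavior is $W_2\approx\sqrt{2}\,|c|/\sigma$, so the stated $\sqrt{2}\,|c|$ holds only for $\sigma=1$ or up to $\sigma$-dependent constants.
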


\begin{proof}
The axis marginals of a centered Gaussian are determined by the diagonal entries of its covariance, which coincide. For $W_2$ between centered Gaussians, use the closed form \eqref{eq:W2_gaussian}. A direct computation gives $\Sigma_{+}\Sigma_{-} = (\sigma^4-c^2) \mathds{1}_2$, hence $\big(\Sigma_{+}^{1/2}\Sigma_{-}\Sigma_{+}^{1/2}\big)^{1/2} = \sqrt{\sigma^4-c^2}\,\mathds{1}_2$ and \eqref{eq:prac_W2_gauss_hide} follows.
\end{proof}

\noindent These exactly non-identifiable pairs certify that the OT ambiguity modulus can be strictly positive at $\varepsilon = 0$ for broad hypothesis classes, and therefore force a non-vanishing minimax risk for estimating some Lipschitz functional from $\Theta$-restricted homodyne data.

\begin{corollary}[Non-identifiability implies $\omega_{\mathcal{M}_\Theta,\mathcal{C}}(0) > 0$ for broad classes]
\label{cor:prac_nonid_omega}
Let $\mathcal{C}$ contain the Gaussian family $\{\mathcal{N}(0,\Sigma_{\pm}) : c \in [c_0,1)\}$ for some $c_0 > 0$. For restricted homodyne with $\Theta = \{0,\pi/2\}$, one has $\omega_{\mathcal{M}_\Theta,\mathcal{C}}(0) > 0$. Consequently, there exists a $1$-Lipschitz observable $f$ whose mean $\E_{P}[f]$ cannot be consistently estimated from $\Theta$-restricted homodyne data, regardless of the number of samples.
\end{corollary}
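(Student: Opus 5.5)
The plan is to use the Gaussian correlation-hiding pair of Proposition~\ref{prop:prac_gauss_hide} to show that the modulus at $\varepsilon=0$ is strictly positive, and then pass to the lower bound through Theorem~\ref{thm:prac_ot_minimax}.

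\emph{Step 1: an indistinguishable pair.} Fix any admissible $c\in[c_0,1)$ with $\sigma^2 > c$, so that $\Sigma_\pm$ are positive definite. Set $P=\mathcal{N}(0,\Sigma_+)$ and $Q=\mathcal{N}(0,\Sigma_-)$, both of which lie in $\mathcal{C}$ by assumption. By Proposition~\ref{prop:prac_axis_only}, together with the identical axis marginals recorded in Proposition~\ref{prop:prac_gauss_hide}, we have $\mathcal{M}_\Theta(P)=\mathcal{M}_\Theta(Q)$. Hence $d_{\mathcal{M}_\Theta}(P,Q)=0$, and the pair is feasible in the supremum defining $\omega_{\mathcal{M}_\Theta,\mathcal{C}}(0)$.

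\emph{Step 2: strictly positive $W_1$ separation.} Proposition~\ref{prop:prac_gauss_hide} gives $W_2(P,Q)>0$. However, $W_1\le W_2$ points the wrong way, so I would bound $W_1$ directly through Kantorovich--Rubinstein duality (Theorem~\ref{thm:KR}). The test function I would use is $f(x,p)=|x+p|/\sqrt{2}$, which is $1$-Lipschitz. Under $P$, the variable $(x+p)/\sqrt{2}$ is centered Gaussian with variance $\sigma^2+c$; under $Q$ its variance is $\sigma^2-c$. Since $\E|Z|=s\sqrt{2/\pi}$ for $Z\sim\mathcal{N}(0,s^2)$, this gives
\begin{align}
W_1(P,Q)\ \ge\ \sqrt{2/\pi}\,\big(\sqrt{\sigma^2+c}-\sqrt{\sigma^2-c}\big)\ >\ 0.
\end{align}
Consequently $\omega_{\mathcal{M}_\Theta,\mathcal{C}}(0)\ge W_1(P,Q)>0$.

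\emph{Step 3: non-estimability.} Monotonicity of $\omega$ in $\varepsilon$ gives $\Delta_N=\omega(1/(4N))\ge\omega(0)>0$ for every $N$. Theorem~\ref{thm:prac_ot_minimax} then produces a $1$-Lipschitz $f$ with minimax risk at least $\tfrac{3}{16}\omega(0)$, uniformly in $N$. Alternatively, the explicit $f$ from Step~2 works without appealing to the supremum: since the data laws coincide exactly, Lemma~\ref{lem:prac_lecam} with $\mathrm{TV}=0$ gives risk at least $|\E_P f-\E_Q f|/4$ for all $N$. This rules out consistent estimation.

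The only delicate point is Step~2. One must avoid inferring $W_1>0$ from the $W_2$ formula alone; the explicit Lipschitz witness above handles this. A minor additional check is that $\sigma^2>c$ holds for the pair chosen, so that the covariances are valid.
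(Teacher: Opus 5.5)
Your proof is correct and follows the paper's argument essentially step for step: the same $\Sigma_\pm$ pair with $d_{\mathcal{M}_\Theta}(P,Q)=0$, then Theorem~\ref{thm:prac_ot_minimax} plus the fixed-$f$ Le Cam argument at $\mathrm{TV}=0$. That second argument is what actually yields a single $f$ for all $N$, since the theorem's $f$ may depend on $N$. Your explicit witness $f(x,p)=|x+p|/\sqrt{2}$ adds a quantitative lower bound on $W_1$, but the paper's shortcut is not actually delicate: $W_2(P,Q)>0$ already forces $P\neq Q$, and $W_1$ is a metric on $\mathcal{P}_1$, so $W_1(P,Q)>0$ regardless of the direction of $W_1\le W_2$.
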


\begin{proof}
Take $P = \mathcal{N}(0,\Sigma_{+})$ and $Q = \mathcal{N}(0,\Sigma_{-})$ for any fixed $c \in [c_0,1)$. Then $\mathcal{M}_\Theta(P) = \mathcal{M}_\Theta(Q)$ so $d_{\mathcal{M}_\Theta}(P,Q)=0$, but $W_2(P,Q) > 0$ and hence $W_1(P,Q) > 0$ under finite-moment control. Therefore $\omega_{\mathcal{M}_\Theta,\mathcal{C}}(0) \geq W_1(P,Q) > 0$. Theorem~\ref{thm:prac_ot_minimax} implies the existence of a $1$-Lipschitz $f$ with non-vanishing minimax risk.

Moreover, since $\mathcal{M}_\Theta(P)=\mathcal{M}_\Theta(Q)$, we have $\mathcal{M}_\Theta(P)^{\otimes N}=\mathcal{M}_\Theta(Q)^{\otimes N}$ for every $N$, hence the total variation distance is identically zero at all sample sizes. By Kantorovich-Rubinstein duality, for any $\eta>0$ there exists a $1$-Lipschitz $f$ with $|\E_P[f]-\E_Q[f]|\ge W_1(P,Q)-\eta$; taking $\eta=W_1(P,Q)/2$ yields a fixed $f$ with a positive mean gap. Applying Lemma~\ref{lem:prac_lecam} with $\mathrm{TV}=0$ then gives a constant minimax lower bound uniformly in $N$, so $\mu_f(P)$ cannot be consistently estimated from
$\Theta$-restricted homodyne data.
\end{proof}

In contrast, Bell sensing yields a two-dimensional record (up to additive Gaussian noise), so it can directly estimate such hidden correlations.

\begin{proposition}[Bell sampling learns the hidden correlation efficiently]
\label{prop:prac_bell_cov}
Consider the Gaussian family in Proposition~\ref{prop:prac_gauss_hide}. Under Bell sampling with squeezing $r$, the outcome $\zeta$ satisfies Eq.~\eqref{eq:bell-additive-noise} under $\C\simeq\R^2$, so one has
\begin{align}
(\Re(\zeta),\Im(\zeta)) \sim \mathcal{N}\!\left(0,\Sigma_{\pm} + \nu_r \mathds{1}_2\right), \qquad \nu_r = \frac{1}{2} e^{-2 r}.
\end{align}
Let $\widehat{c} := \frac{1}{N}\sum_{i=1}^{N}\Re(\zeta^{(i)})\,\Im(\zeta^{(i)})$. Then for centered Gaussians
\begin{align}
\Var\!\left(\Re(\zeta)\,\Im(\zeta)\right) = (\sigma^2+\nu_r)(\sigma^2+\nu_r) + c^2,
\end{align}
and achieving mean-squared error $\E[(\widehat{c}-c)^2] \leq \epsilon^2$ requires
\begin{align}
N = O\!\left(\frac{(\sigma^2+\nu_r)^2 + c^2}{\epsilon^2}\right).
\label{eq:prac_bell_cov_sc}
\end{align}
This scales as $O\left(e^{-4r}/{\epsilon^2}\right)$ in the shot-noise-limited regime.
\end{proposition}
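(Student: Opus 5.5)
The plan is to work conditionally on the displacement draw and then compute moments of the product $Y := \Re(\zeta)\Im(\zeta)$. The additive-noise model \eqref{eq:bell-additive-noise} gives $\zeta = \alpha + Z$ with $Z$ independent of $\alpha$. In real coordinates, Example~\ref{ex:gaussian_covar} writes this as $(\Re Z, \Im Z)$ i.i.d.\ $\mathcal{N}(0,\nu_r)$ with $\nu_r = \tfrac12 e^{-2r}$. For $\alpha \sim \mathcal{N}(0,\Sigma_\pm)$, the sum of independent centered Gaussians is Gaussian with the summed covariance. This gives $(\Re\zeta,\Im\zeta) \sim \mathcal{N}(0,\Sigma_\pm + \nu_r\mathds{1}_2)$ immediately.

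Next I use unbiasedness. Either the conditional computation in Example~\ref{ex:gaussian_covar} or Theorem~\ref{thm:tempered_kernels} with the covariance kernel $u_1u_2$ shows $\E[Y] = \E[\alpha_x\alpha_p] = \pm c$. Here the Hermite correction vanishes because the two indices are distinct.

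For the variance, I apply Isserlis' (Wick's) theorem to the centered bivariate Gaussian $(U,V) = (\Re\zeta,\Im\zeta)$ with covariance entries $s_{11} = s_{22} = \sigma^2+\nu_r$ and $s_{12} = \pm c$. It gives
\begin{align}
\E[U^2V^2] = s_{11}s_{22} + 2 s_{12}^2,
\end{align}
so $\Var(UV) = s_{11}s_{22} + s_{12}^2 = (\sigma^2+\nu_r)^2 + c^2$, as claimed. This also agrees with the general expansion in Example~\ref{ex:gaussian_covar}, which provides a consistency check.

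Since $\widehat c$ is an average of $N$ i.i.d.\ copies of $Y$ and is unbiased, its mean-squared error is exactly $\Var(Y)/N$. Setting this at most $\epsilon^2$ yields $N = \lceil((\sigma^2+\nu_r)^2+c^2)/\epsilon^2\rceil$, which is \eqref{eq:prac_bell_cov_sc}. Here I read ``requires'' as ``suffices'', since the statement is an upper bound.

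The only step needing care is the final regime statement. In the shot-noise-limited regime, $\sigma^2, |c| \lesssim \nu_r$, as assumed in Example~\ref{ex:gaussian_covar}. Then $(\sigma^2+\nu_r)^2 + c^2 \le 4\nu_r^2 + \nu_r^2 = O(e^{-4r})$, giving $N = O(e^{-4r}/\epsilon^2)$. No concentration argument is needed because the guarantee is stated in MSE.
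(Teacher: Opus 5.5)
Your proposal is correct and follows essentially the same route as the paper, which quotes the identity $\Var(XY)=\Var(X)\Var(Y)+\mathrm{Cov}(X,Y)^2$ for centered jointly Gaussian pairs (you derive it via Isserlis) and reads off the $e^{-4r}$ scaling in the shot-noise regime. The one minor difference is that the paper appeals to Chebyshev or sub-exponential concentration for the $1/\epsilon^2$ scaling, whereas you use that the MSE of the unbiased sample mean is exactly $\Var(Y)/N$, which fits the MSE formulation of the statement more directly.
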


\begin{proof}
For centered jointly Gaussian $(X,Y)$, one has $\Var(XY) = \Var(X)\Var(Y) + \text{Cov}(X,Y)^2$, giving the stated variance. Chebyshev or standard sub-exponential concentration for products of Gaussians yields the $1/\epsilon^2$ scaling. In the shot-noise-limited regime where $\sigma, c \ll 1$ and comparable to the squeezing, the $e^{-4r}$ dependence follows.
\end{proof}

The preceding constructions are explicit, but exact non-identifiability is not special to $\Theta = \{0,\pi/2\}$. In fact, for any finite angle set, there are smooth ``ghost'' directions that vanish under all the corresponding projections. One convenient way to exhibit this kernel is via Fourier support.

\begin{lemma}[Finite-angle invisibility via Fourier support]
\label{lem:prac_fourier_invisible}
Fix a finite angle set $\Theta \subset [0,2\pi)$. There exists a nonzero smooth, compactly supported signed density $g:\R^2 \to \R$ with
\begin{align}
\int_{\R^2} g(x)\,dx = 0
\end{align}
such that for every $\theta \in \Theta$, one has
\begin{align}
(\pi_{u_\theta})_{\#} g \equiv 0
\end{align}
as a signed measure on $\R$, where $u_\theta = (\cos\theta,\sin\theta)$ and $\pi_{u_\theta}(x)=u_\theta^{\top}x$. Consequently, for any strictly positive base density $p_0$ and sufficiently small $\varepsilon > 0$, the two densities
\begin{align}
p_{+} = p_0 + \varepsilon g, \qquad p_{-} = p_0 - \varepsilon g
\end{align}
are distinct valid probability densities and satisfy $\mathcal{M}_\Theta(p_{+}) = \mathcal{M}_\Theta(p_{-})$ for noiseless homodyne projections along $\Theta$, and also after convolution with any fixed known Gaussian noise.
\end{lemma}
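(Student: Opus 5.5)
The plan is to build $g$ on the Fourier side, using the projection–slice theorem. For a signed density $g$ on $\R^2$ with Fourier transform $\hat g(k)=\int g(x)e^{-ik\cdot x}dx$, the pushforward $(\pi_{u_\theta})_\# g$ is a signed measure on $\R$. Its one-dimensional Fourier transform is $s\mapsto \hat g(s u_\theta)$, the restriction of $\hat g$ to the line $\R u_\theta$. This is the same ray-restriction observation made in the remark on Fourier restriction under homodyne. So it suffices to produce a nonzero $g\in C_c^\infty(\R^2)$ whose Fourier transform vanishes identically on the finite union of lines $\bigcup_{\theta\in\Theta}\R u_\theta$. The condition $\int g=0$ then comes for free, since $\hat g(0)=0$.

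To do this I will use differential operators. For each $\theta$, let $v_\theta=u_\theta$ and consider the directional derivative $\partial_{v_\theta}$ along $u_\theta$. Its symbol is $i\,k\cdot u_\theta$, which is nonzero on the line $\R u_\theta$, so it is the wrong choice. I want instead a symbol that vanishes on $\R u_\theta$: the derivative along the orthogonal direction $u_\theta^\perp=(-\sin\theta,\cos\theta)$, with symbol $i\,k\cdot u_\theta^\perp$, which is zero exactly on $\R u_\theta$.

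Fix any nonzero bump $h\in C_c^\infty(\R^2)$, for instance a standard mollifier, and set
\begin{align}
g := \Big(\prod_{\theta\in\Theta}\partial_{u_\theta^\perp}\Big) h .
\end{align}
Then $g$ is smooth and compactly supported (its support lies inside that of $h$), and
\begin{align}
\hat g(k)=\prod_{\theta}\big(i\,k\cdot u_\theta^\perp\big)\hat h(k).
\end{align}
This vanishes on every line $\R u_\theta$, so each projection has identically zero Fourier transform. A finite signed measure is determined by its Fourier transform, so each projection is the zero measure. The same conclusion can be reached without Fourier analysis: integrating $\partial_{u_\theta^\perp}(\cdot)$ along the fibres of $\pi_{u_\theta}$, which are lines parallel to $u_\theta^\perp$, gives zero by compact support, and the operators commute.

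It remains to check that $g\neq0$. The product of symbols is a nonzero polynomial, and $\hat h$ is real-analytic and not identically zero, so $\hat g\not\equiv 0$.

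For the consequence, let $p_\pm=p_0\pm\varepsilon g$. Both integrate to $1$. Positivity needs care, and I expect it to be the only real obstacle. The statement says \enquote{strictly positive}, but for $p_\pm\ge0$ we need $\varepsilon|g|\le p_0$ on $\mathrm{supp}\,g$. Since $\mathrm{supp}\,g$ is compact and $p_0$ is continuous and strictly positive, $p_0$ has a positive minimum $m$ there. Taking $\varepsilon\le m/\|g\|_\infty$ suffices. If $p_0$ is only measurable, I would assume it is bounded below on compacts, or translate $h$ into a region where it is; I will state this continuity/local lower bound assumption explicitly.

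Distinctness follows from $g\neq0$. Finally, linearity of the pushforward gives $(T_\theta)_\# p_+=(T_\theta)_\# p_-$: the map $T_\theta$ is $\pi_{u_\theta}$ up to the scale $\sqrt2$, and a rescaled zero measure is still zero. Convolving both sides with the same Gaussian preserves the equality, and taking the product over $\Theta$ as in \eqref{eq:prac_M_theta} yields $\mathcal{M}_\Theta(p_+)=\mathcal{M}_\Theta(p_-)$.
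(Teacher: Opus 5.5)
Your proposal is correct and follows essentially the same route as the paper, which sets $g=\prod_{\theta\in\Theta}(u_\theta^\perp\cdot\nabla)^2\varphi$ for a bump $\varphi$ and uses that the Fourier symbol vanishes on each ray $\R u_\theta$. Your first-order product works just as well, and your proof that $g\neq 0$ for every nonzero bump is cleaner than the paper's appeal to a generic choice. Your positivity caveat is also well taken: the paper's step $p_\pm\ge 0$ tacitly requires $p_0$ to be bounded below on $\operatorname{supp} g$ (e.g.\ $p_0$ continuous), which strict positivity alone does not guarantee.
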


\begin{proof}
Fix any nonzero $\varphi\in C_c^\infty(\R^2)$. For each $\theta\in\Theta$, let $u_\theta=(\cos\theta,\sin\theta)$ and $u_\theta^\perp:=(-\sin\theta,\cos\theta)$. Define the constant-coefficient differential operator
\begin{align}
L := \prod_{\theta\in\Theta}\big(u_\theta^\perp\cdot\nabla\big)^2,
\end{align}
and set $g := L\varphi$. Then $g\in C_c^\infty(\R^2)$, and since $L$ annihilates constants we have $\int_{\R^2} g(x)\,dx = 0$. Moreover, for a generic choice of $\varphi$ one has $g\not\equiv 0$ (and if a particular $\varphi$ happens to lie in $\ker L$, choose another nonzero bump).

Let $\widehat{g}(k)=\int_{\R^2} g(x)e^{-ik\cdot x}\,dx$ be the Fourier transform. Because differentiation corresponds to multiplication in Fourier space,
\begin{align}
\widehat{g}(k) = \Big(\prod_{\theta\in\Theta} (u_\theta^\perp\cdot k)^2\Big)\,\widehat{\varphi}(k).
\end{align}
Hence for every $\theta\in\Theta$ and every $t\in\R$,
\begin{align}
\widehat{g}(t u_\theta)=0,
\end{align}
since $u_\theta^\perp\cdot u_\theta=0$. For each $\theta$, the characteristic function of the one-dimensional pushforward $(\pi_{u_\theta})_{\#}g$ is $t \mapsto \widehat{g}(t u_\theta)$, so it vanishes identically and therefore $(\pi_{u_\theta})_{\#} g \equiv 0$ as a signed measure on $\R$.

Now take any strictly positive base density $p_0$. Since $g$ is compactly supported and bounded, choosing $\varepsilon>0$ sufficiently small ensures $p_\pm=p_0\pm \varepsilon g\ge 0$ pointwise, and $\int p_\pm = 1$ because $\int g=0$. Equality of noiseless projections along $\Theta$ implies $\mathcal{M}_\Theta(p_+)=\mathcal{M}_\Theta(p_-)$, and convolving each projected marginal with the same known Gaussian noise preserves equality.
\end{proof}

Exact invisibility is a knife-edge phenomenon. To capture \emph{conditioning} rather than identifiability, we perturb the exactly invisible instances so that $\Theta$-restricted homodyne becomes informative only through an $\varepsilon$-scale leakage, which forces a $1/\varepsilon^2$ sample complexity blowup. By relaxing around exactly invisible instances, we are able to quantify the wedge of the problem space associated a family of distributions which admits a gap between Bell and homodyne learners. In Appendix \ref{sec:prac_Gaussian_QA}, we identify the problem wedges for the Gaussian and discrete XOR cases, and associate them with problems related to learning classical electromagnetic fields and noise sources. The problem wedge concept is depicted in Figure \ref{fig:deltas_prob_wedge}.

\indent Finally, we give an OT perspective on why such hard (but not impossible) instances are easy to find in practice. As reviewed in Section~\ref{sec:OT_overview}, both projections and common-noise convolution are Wasserstein contractions (Lemma~\ref{lem:ot_contractions}). The restricted experiment $\mathcal{M}_{\Theta}$ therefore controls at best a finite-angle projected-and-blurred geometry, whereas Bell sensing provides a two-dimensional record that constrains the full joint law up to the Gaussian blur in Eq.~\eqref{eq:bell-additive-noise}. Geometrically, each projection $\pi_{u_\theta}$ collapses mass along affine fibers $\{x : u_\theta^{\top}x=t\}$, so rearrangements of mass within these fibers leave the corresponding marginal unchanged. With finitely many angles, there remains substantial freedom to modify the dependence structure of $(\Re(\alpha),\Im(\alpha))$ while leaving all observed marginals nearly invariant. This is exactly the mechanism exploited by the invisible-correlation constructions and their $\varepsilon$-deformations in the following sections: the per-sample information available to $\Theta$-restricted homodyne scales quadratically in the leakage parameter (e.g.~$D_{\mathrm{KL}} \asymp \varepsilon^2$), forcing a $1/\varepsilon^2$ sample-complexity blowup, while Bell sensing remains governed by the intrinsic two-dimensional separation. This behavior is summarized abstractly by the OT ambiguity modulus \eqref{eq:prac_ambiguity} and converted into minimax lower bounds by Theorem~\ref{thm:prac_ot_minimax}.

\section{Examples of practical quantum advantage}
\label{sec:prac_Gaussian_QA}

Having built a comprehensive theory of practical quantum advantage in Section \ref{sec:transport_QA}, we turn to simple, yet physically crucial sensing tasks which fall under the QSL framework, and for which Bell QSL provides meaningful gains over current conventional strategies.

\vspace{-1em}
\subsection{A classical mixture of peaked displacements}
\label{sec:ex_sum_of_deltas}
\vspace{-1em}

To build intuition for how practical quantum advantages arise from relaxing exact marginal blindness, we first return to the illustrative example from Proposition \ref{prop:prac_xor_pair}. 

\begin{figure}
    \centering
    \includegraphics[width=0.99\linewidth]{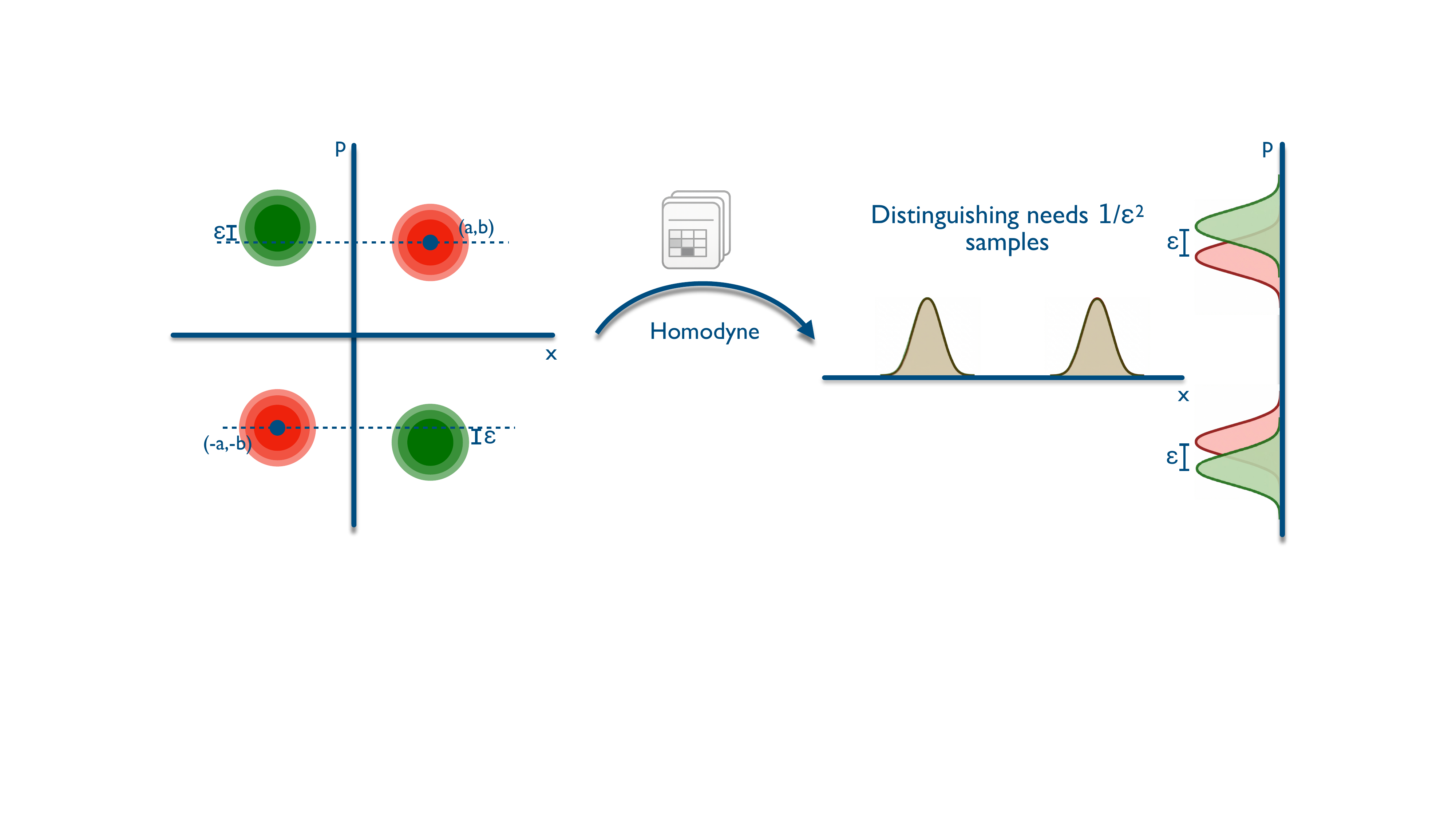}
    \caption{Measurement distributions corresponding to peaked displacement channels in \eqref{eq:prac_eps_discrete_hyps}. $P_{\mathrm{diag}}$ is depicted in red and $Q_\varepsilon$ in green. The homodyne marginals upon measuring along axes $\{0, \pi/2\}$ are shown to the right. Their only distinguishable feature is a slight displacement in means along the $p$ axis.} 
    \label{fig:deltas_pic}
\end{figure}

Fix $a,b > 0$ and $0 < \varepsilon \leq b/2$ and define
\begin{align}
P_{\mathrm{diag}} = \frac{\delta_{(a,b)}+\delta_{(-a,-b)}}{2}, \ \ 
Q_{\varepsilon} = \frac{\delta_{(a,-b-\varepsilon)}+\delta_{(-a, b+\varepsilon)}}{2}\,.
\label{eq:prac_eps_discrete_hyps}
\end{align}
Write $\nu_r = \frac{1}{2} e^{-2 r}$ for the homodyne noise variance in \eqref{eq:prac_hom_model}. Bell sensing remains robust to this perturbation, in the sense that the quadrant-sign statistic still separates the hypotheses at fixed SNR.

\begin{proposition}[Bell sampling distinguishes \eqref{eq:prac_eps_discrete_hyps} with $O(1)$ samples]
\label{prop:prac_eps_discrete_bell}
Consider the test
\begin{align}
\phi(\zeta) = \indicator\!\left\{\Re(\zeta)\,\Im(\zeta) \geq 0\right\},
\end{align}
which decides $P_{\mathrm{diag}}$ when $\phi(\zeta)=1$ and decides $Q_{\varepsilon}$ when $\phi(\zeta)=0$. Then under Bell sampling with squeezing $r$,
\begin{align}
\max\Big\{\Pr_{P_{\mathrm{diag}}}\!\big[\phi(\zeta)=0\big],\,\Pr_{Q_{\varepsilon}}\!\big[\phi(\zeta)=1\big]\Big\} \leq 2\exp\!\left(-\frac{\min\{a, b\}^2}{2 \nu_r}\right).
\label{eq:prac_eps_discrete_bell_err}
\end{align}
\end{proposition}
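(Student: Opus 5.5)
The plan is to condition on which atom of the mixture was drawn, and then reduce each error event to a Gaussian tail bound. Under Bell sampling, Eq.~\eqref{eq:bell-additive-noise} gives $\zeta = \alpha + Z$ with $(\Re Z, \Im Z) = (G_x, G_p)$. Here $G_x$ and $G_p$ are i.i.d.\ $\mathcal{N}(0,\nu_r)$. The test errs only if the sign of $\Re(\zeta)\Im(\zeta)$ differs from the sign of $\Re(\alpha)\Im(\alpha)$. Since every atom of $P_{\mathrm{diag}}$ has $\Re\alpha\,\Im\alpha = ab>0$ and every atom of $Q_\varepsilon$ has $\Re\alpha\,\Im\alpha = -a(b+\varepsilon)<0$, each error probability is a mixture, with weights $1/2$, of per-atom sign-flip probabilities. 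It therefore suffices to bound each per-atom probability by the right-hand side of \eqref{eq:prac_eps_discrete_bell_err}.

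Fix an atom $\alpha=(s_x,s_p)$ with $|s_x|\ge a$ and $|s_p|\ge \min\{b, b+\varepsilon\} = b$. If neither coordinate changes sign, that is, if $\operatorname{sgn}(s_x+G_x)=\operatorname{sgn}(s_x)$ and $\operatorname{sgn}(s_p+G_p)=\operatorname{sgn}(s_p)$, then the product keeps its sign and the test is correct. The error event is therefore contained in $\{|G_x|\ge |s_x|\}\cup\{|G_p|\ge |s_p|\}$, up to the boundary case of an exactly zero product, which has probability zero. It would even suffice to use the one-sided events $\{-\operatorname{sgn}(s_x)G_x \ge |s_x|\}$, since the sign can flip in only one direction. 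By a union bound together with the Gaussian tail $\Pr[G\ge t]\le e^{-t^2/(2\nu_r)}$ for $G\sim\mathcal{N}(0,\nu_r)$, the error is at most
\begin{equation}
e^{-a^2/(2\nu_r)} + e^{-b^2/(2\nu_r)} \le 2\exp\!\left(-\frac{\min\{a,b\}^2}{2\nu_r}\right).
\end{equation}
For $Q_\varepsilon$ one uses $|s_p| = b+\varepsilon \ge b$, so the perturbation only helps. Averaging over the two atoms of each hypothesis and taking the maximum gives the claim.

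There is no real obstacle. The only steps needing care are the reduction of the sign error to a one-coordinate sign flip, and the check that the $\varepsilon$-perturbed atoms still sit at distance at least $\min\{a,b\}$ from both axes, which holds because $\varepsilon>0$. A remark could then add that repeating the test over $N$ shots with a majority vote gives error $e^{-\Omega(N)}$ whenever $\min\{a,b\}^2\gtrsim \nu_r$, so $O(1)$ samples suffice, or $O(\nu_r/b^2)$ samples in the low-SNR regime after refining the single-shot bound.
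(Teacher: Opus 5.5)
Your proof is correct and follows the paper's argument: an error requires one coordinate of $\zeta$ to flip sign, and a union bound over the two one-sided Gaussian tails gives the bound, using $|\Re\alpha| = a$ and $|\Im\alpha| \ge b$ for every atom (including $b+\varepsilon$ under $Q_\varepsilon$); you just spell this out in more detail. One caveat concerns only your closing remark: the $O(\nu_r/b^2)$ low-SNR count for the sign test implicitly needs $a \gtrsim \sqrt{\nu_r}$, since otherwise the per-shot bias of $\phi$ scales like $ab/\nu_r$ and the count becomes $O(\nu_r^2/(a^2 b^2))$.
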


\begin{proof}
Under $P_{\mathrm{diag}}$, $(\Re(\alpha),\Im(\alpha))$ lies in quadrants I or III, so $\Re(\alpha)$ and $\Im(\alpha)$ have the same sign. An error occurs only if exactly one of $\Re(\zeta)$ or $\Im(\zeta)$ flips sign due to additive noise $\mathcal{N}(0,\nu_r)$ in each coordinate. A union bound gives the stated estimate. The same argument applies under $Q_{\varepsilon}$, where the points lie in quadrants II or IV with magnitudes at least $\min\{a, b\}$.
\end{proof}

For $\Theta = \{0,\pi/2\}$ homodyne, the $\theta=0$ outcomes are identically distributed under $P_{\mathrm{diag}}$ and $Q_{\varepsilon}$, so testing reduces to the informative $\theta=\pi/2$ outcomes. The resulting one-shot $\theta=\pi/2$ laws are the symmetric Gaussian mixtures
\begin{align}
P_0 &= \frac{1}{2}\,\mathcal{N}\!\left(\sqrt{2}\,b,\nu_r\right) + \frac{1}{2}\,\mathcal{N}\!\left(-\sqrt{2}\,b,\nu_r\right), \\
P_{\varepsilon} &= \frac{1}{2}\,\mathcal{N}\!\left(\sqrt{2}\,(b+\varepsilon),\nu_r\right) + \frac{1}{2}\,\mathcal{N}\!\left(-\sqrt{2}\,(b+\varepsilon),\nu_r\right).
\label{eq:prac_eps_discrete_xmarg}
\end{align}

The next lemma bounds the \emph{per-sample} information available to this baseline by controlling the one-shot KL divergence.

\begin{lemma}[A KL upper bound for \eqref{eq:prac_eps_discrete_xmarg}]
\label{lem:prac_eps_discrete_kl}
One has
\begin{align}
D_{\mathrm{KL}}\!\left(P_0 \,\|\, P_{\varepsilon}\right) \leq \frac{\varepsilon^2}{\nu_r}.
\label{eq:prac_eps_discrete_kl}
\end{align}
\end{lemma}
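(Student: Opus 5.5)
The natural route is joint convexity of the KL divergence. Write $m_0 := \sqrt{2}\,b$ and $m_\varepsilon := \sqrt{2}\,(b+\varepsilon)$, so that
\begin{align}
P_0 = \tfrac12\,\mathcal{N}(m_0,\nu_r) + \tfrac12\,\mathcal{N}(-m_0,\nu_r), \qquad P_\varepsilon = \tfrac12\,\mathcal{N}(m_\varepsilon,\nu_r) + \tfrac12\,\mathcal{N}(-m_\varepsilon,\nu_r).
\end{align}
We pair the components with matching sign. Joint convexity of $D_{\mathrm{KL}}$ in its two arguments, applied to the mixture weights $(\tfrac12,\tfrac12)$, gives
\begin{align}
D_{\mathrm{KL}}(P_0\,\|\,P_\varepsilon) \le \tfrac12 D_{\mathrm{KL}}\big(\mathcal{N}(m_0,\nu_r)\,\|\,\mathcal{N}(m_\varepsilon,\nu_r)\big) + \tfrac12 D_{\mathrm{KL}}\big(\mathcal{N}(-m_0,\nu_r)\,\|\,\mathcal{N}(-m_\varepsilon,\nu_r)\big).
\end{align}

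Both terms on the right are divergences between equal-variance Gaussians. Using the closed form $D_{\mathrm{KL}}(\mathcal{N}(\mu_1,\nu)\,\|\,\mathcal{N}(\mu_2,\nu)) = (\mu_1-\mu_2)^2/(2\nu)$, each term equals $(m_0-m_\varepsilon)^2/(2\nu_r) = 2\varepsilon^2/(2\nu_r) = \varepsilon^2/\nu_r$. The average of the two is therefore $\varepsilon^2/\nu_r$, which is exactly \eqref{eq:prac_eps_discrete_kl}.

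The only point needing care is the choice of coupling: the same-sign pairing is what keeps the mean shift at $\sqrt{2}\,\varepsilon$ rather than something of order $b$. This pairing is legitimate because joint convexity holds for any common mixing index. It is worth remarking that the bound does not exploit the symmetric cancellation that would make the true KL smaller (of order $\varepsilon^2 b^2/\nu_r^2$ times a factor) when $b \ll \sqrt{\nu_r}$. For the intended use, namely a $\Theta(\nu_r/\varepsilon^2)$ lower bound via $\mathrm{TV}(P_0^{\otimes N},P_\varepsilon^{\otimes N}) \le \sqrt{N\varepsilon^2/(2\nu_r)}$ from Pinsker's inequality and tensorization, this upper bound suffices. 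No step here is a genuine obstacle.
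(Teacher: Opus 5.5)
Your proof is correct and is essentially the paper's argument. The paper introduces a latent sign $S$, shows that the joint laws of $(S,Y)$ have divergence $(\sqrt{2}\,\varepsilon)^2/(2\nu_r)=\varepsilon^2/\nu_r$, and then marginalizes $S$ using data processing; this is precisely the standard proof of the joint-convexity step you invoke with the same-sign pairing.
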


\begin{proof}
Introduce a latent sign $S \in \{\pm 1\}$ uniform. Under $P_0$, let $Y = S \sqrt{2}\,b + G$; under $P_{\varepsilon}$, let $Y = S \sqrt{2}\,(b+\varepsilon) + G$, where $G \sim \mathcal{N}(0,\nu_r)$ is independent. Conditioned on $S$, the two laws are Gaussians with identical variance and means differing by $\pm\sqrt{2}\,\varepsilon$, hence
\begin{align}
D_{\mathrm{KL}}\!\left(\mathcal{N}\!\left(S\sqrt{2}\,b,\nu_r\right)\,\Big\|\,\mathcal{N}\!\left(S\sqrt{2}\,(b+\varepsilon),\nu_r\right)\right) = \frac{(\sqrt{2}\,\varepsilon)^2}{2\nu_r} = \frac{\varepsilon^2}{\nu_r}.
\end{align}
Therefore $D_{\mathrm{KL}}\!\left((S,Y)_0 \,\|\, (S,Y)_{\varepsilon}\right) = \varepsilon^2/\nu_r$. Marginalizing out $S$ is a stochastic map, so KL cannot increase, giving \eqref{eq:prac_eps_discrete_kl}.
\end{proof}

For these explicit two-hypothesis deformations, we use the standard total-variation characterization of testing risk together with Pinsker's inequality. This is complementary to Lemma~\ref{lem:prac_lecam}; above we used Le Cam in an absolute-error form to obtain a general minimax statement for some Lipschitz functional, whereas here we specialize to concrete binary tests and bound their distinguishability directly via $\mathrm{TV}$.

\begin{lemma}[Testing bounds from total variation and Pinsker]
\label{lem:prac_tv_pinsker}
For any test $\psi$ between distributions $Q_0,Q_1$,
\begin{align}
\max\Big\{ Q_0[\psi=1],\,Q_1[\psi=0]\Big\} \geq \frac{1}{2}\Big(1-\mathrm{TV}(Q_0,Q_1)\Big).
\label{eq:prac_testing_tv}
\end{align}
Moreover,
\begin{align}
\mathrm{TV}(Q_0,Q_1)^2 \leq \frac{1}{2}\,D_{\mathrm{KL}}(Q_0\,\|\,Q_1).
\label{eq:prac_pinsker}
\end{align}
\end{lemma}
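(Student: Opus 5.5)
The plan is to prove the two claims of Lemma~\ref{lem:prac_tv_pinsker} separately. The first is a short computation with the rejection region. The second is Pinsker's inequality; I would prove it by reducing to a binary comparison via the data-processing inequality for KL.

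For \eqref{eq:prac_testing_tv}, fix a test $\psi$ and let $A := \{\psi = 1\}$, a measurable set. Then
\begin{align}
Q_0[\psi=1] + Q_1[\psi=0] = Q_0(A) + 1 - Q_1(A) \geq 1 - |Q_1(A)-Q_0(A)| \geq 1 - \mathrm{TV}(Q_0,Q_1),
\end{align}
where the last step uses the supremum definition of $\mathrm{TV}$. Since $\max\{u,v\} \geq (u+v)/2$, the bound follows. This is the same minimization over $A$ used in the proof of Lemma~\ref{lem:prac_lecam}. If $\psi$ is a randomized test, replace $\indicator_A$ by a $[0,1]$-valued function $\phi$. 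The same bound holds because $|\E_{Q_0}\phi - \E_{Q_1}\phi| \leq \mathrm{TV}(Q_0,Q_1)$ for such $\phi$.

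For \eqref{eq:prac_pinsker}, I would first reduce to Bernoulli distributions. If $D_{\mathrm{KL}}(Q_0\|Q_1)=\infty$ there is nothing to prove. Otherwise, for any measurable $A$, apply the map $x \mapsto \indicator_A(x)$. This sends $Q_0, Q_1$ to $\mathrm{Bern}(p)$ and $\mathrm{Bern}(q)$ with $p = Q_0(A)$ and $q = Q_1(A)$. By the data-processing inequality (the same "stochastic maps cannot increase KL" fact used in Lemma~\ref{lem:prac_eps_discrete_kl}), $D_{\mathrm{KL}}(Q_0\|Q_1) \geq d(p\|q)$, where $d(p\|q)$ is the binary KL divergence. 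It therefore suffices to show the binary inequality
\begin{align}
d(p\|q) \geq 2(p-q)^2.
\end{align}
Taking the supremum of $|p-q| = |Q_0(A)-Q_1(A)|$ over $A$ then gives $\mathrm{TV}^2 \leq \tfrac12 D_{\mathrm{KL}}$.

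The only real work is the binary inequality. Fix $p\in[0,1]$ and set $h(q) := d(p\|q) - 2(p-q)^2$ for $q\in(0,1)$. Then $h(p)=0$ and
\begin{align}
h'(q) = -\frac{p}{q} + \frac{1-p}{1-q} + 4(p-q) = (q-p)\Big(\frac{1}{q(1-q)} - 4\Big).
\end{align}
Since $q(1-q)\leq 1/4$, the second factor is nonnegative. So $h$ is nonincreasing for $q<p$ and nondecreasing for $q>p$, and its minimum is $h(p)=0$. At the endpoints $q\in\{0,1\}$, either $d(p\|q)=\infty$ or $p=q$, so the inequality holds trivially. The main care needed is this boundary bookkeeping, together with conventions such as $0\log 0 = 0$.
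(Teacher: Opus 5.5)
Your proof is correct. The paper never proves this lemma. It states both parts as standard facts and applies them directly in Theorems~\ref{thm:prac_eps_discrete_lower} and~\ref{thm:prac_eps_gauss_lower}, so there is no in-paper argument to compare against beyond partial overlap.

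Your first part is exactly the set-minimization argument the paper does give in the proof of Lemma~\ref{lem:prac_lecam}: minimizing $Q_0(A)+Q_1(A^c)$ yields $1-\mathrm{TV}$, followed by $\max\geq$ average. Your second part is the standard textbook proof of Pinsker's inequality: data processing down to a Bernoulli pair, then $d(p\|q)\geq 2(p-q)^2$ from the sign of $h'(q)=(q-p)\big(\frac{1}{q(1-q)}-4\big)$. It supplies what the paper leaves to citation, with the normalization matching the paper's sup-over-sets definition of $\mathrm{TV}$.

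One piece of bookkeeping is worth stating explicitly. When $p\in\{0,1\}$, the minimizer $q=p$ lies outside $(0,1)$. There you conclude by monotonicity of $h$ on $(0,1)$ together with continuity of $d(p\|\cdot)$ as $q\to p$, which is routine.
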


Combining Lemma~\ref{lem:prac_eps_discrete_kl} with Lemma~\ref{lem:prac_tv_pinsker} yields the following sample-complexity lower bound for the restricted baseline.

\begin{theorem}[$\Theta=\{0,\pi/2\}$ homodyne needs $\Omega(\nu_r/\varepsilon^2)$ samples for \eqref{eq:prac_eps_discrete_hyps}]
\label{thm:prac_eps_discrete_lower}
Let $N \geq 1$ and consider testing $P_{\mathrm{diag}}$ versus $Q_{\varepsilon}$ using $\Theta=\{0,\pi/2\}$ homodyne rounds. If
\begin{align}
N \leq \frac{\nu_r}{8 \varepsilon^2},
\end{align}
then every $\Theta$-restricted homodyne test has error probability at least $3/8$.
\end{theorem}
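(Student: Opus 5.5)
The argument is a two-point testing bound. I will show that the full $N$-round data laws under the two hypotheses are close in total variation, and then invoke Lemma~\ref{lem:prac_tv_pinsker}.

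First I reduce to the informative coordinate. One $\Theta$-restricted round consists of an independent pair $(Y_0, Y_{\pi/2})$, so
\[
\mathcal{M}_\Theta(P)=\mu_0(P)\otimes\mu_{\pi/2}(P).
\]
The $\Re(\alpha)$-marginals of $P_{\mathrm{diag}}$ and $Q_\varepsilon$ coincide: both are uniform on $\{\pm a\}$. Hence $\mu_0(P_{\mathrm{diag}})=\mu_0(Q_\varepsilon)$. The $\theta=\pi/2$ laws are exactly $P_0$ and $P_\varepsilon$ from \eqref{eq:prac_eps_discrete_xmarg}. By the chain rule and tensorization of KL,
\[
D_{\mathrm{KL}}\!\big(\mathcal{M}_\Theta(P_{\mathrm{diag}})^{\otimes N}\,\big\|\,\mathcal{M}_\Theta(Q_\varepsilon)^{\otimes N}\big)=N\,D_{\mathrm{KL}}(P_0\|P_\varepsilon).
\]
Lemma~\ref{lem:prac_eps_discrete_kl} bounds the right-hand side by $N\varepsilon^2/\nu_r$.

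Next I apply Pinsker from Lemma~\ref{lem:prac_tv_pinsker} to get
\[
\mathrm{TV}^2\le \tfrac12 N\varepsilon^2/\nu_r.
\]
Under the hypothesis $N\le \nu_r/(8\varepsilon^2)$ this gives $\mathrm{TV}^2\le 1/16$, so $\mathrm{TV}\le 1/4$. Then \eqref{eq:prac_testing_tv} gives a worst-case error of at least
\[
\tfrac12\big(1-\tfrac14\big)=\tfrac38,
\]
which is the claimed bound.

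\textbf{Adaptivity.} The one point needing care is whether tests may be adaptive. The $\Theta=\{0,\pi/2\}$ model fixes the angles in each round, so the data are i.i.d.\ product samples and there is nothing to adapt. If adaptive choices of angle order or stopping were allowed, I would handle them with the KL chain rule. Each sample's conditional KL is either $0$ (for $\theta=0$) or at most $\varepsilon^2/\nu_r$ (for $\theta=\pi/2$). With at most $N$ uses of $\theta=\pi/2$, the same total KL bound holds, and the rest of the argument is unchanged.
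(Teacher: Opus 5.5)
Your proof is correct and follows the paper's argument: discard the uninformative coordinate, tensorize the per-round KL bound of Lemma~\ref{lem:prac_eps_discrete_kl}, apply Pinsker to get $\mathrm{TV}\le 1/4$, and conclude via \eqref{eq:prac_testing_tv}. You correctly identify the $\theta=0$ outcomes as the identical ones; the paper's proof text says ``$\theta=\pi/2$'' there, which is a slip. Your adaptivity remark is a harmless addition that the paper does not include.
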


\begin{proof}
The $\theta=\pi/2$ outcomes are identical under both hypotheses, so testing reduces to distinguishing $P_0^{\otimes N}$ from $P_{\varepsilon}^{\otimes N}$ where $P_0,P_{\varepsilon}$ are \eqref{eq:prac_eps_discrete_xmarg}. By \eqref{eq:prac_pinsker} and additivity of KL under products,
\begin{align}
\mathrm{TV}\!\left(P_0^{\otimes N},P_{\varepsilon}^{\otimes N}\right)^2 &\leq \frac{1}{2}\,D_{\mathrm{KL}}\!\left(P_0^{\otimes N}\,\|\,P_{\varepsilon}^{\otimes N}\right) = \frac{N}{2}D_{\mathrm{KL}}\!\left(P_0\,\|\,P_{\varepsilon}\right) \leq \frac{N}{2}\cdot\frac{\varepsilon^2}{\nu_r},
\end{align}
using Lemma~\ref{lem:prac_eps_discrete_kl}. If $N \leq \nu_r/(8\varepsilon^2)$ then $\mathrm{TV}\leq 1/4$. Lemma~\ref{lem:prac_tv_pinsker} then gives minimax error at least $(1-1/4)/2 = 3/8$.
\end{proof}

\noindent The lower bound is essentially tight in the high-SNR regime.  In particular, after resolving the latent sign of the mixture component, the problem reduces to distinguishing two nearby Gaussian means.

\begin{proposition}[A matching $\Theta=\{0,\pi/2\}$ homodyne upper bound at high SNR]
\label{prop:prac_eps_discrete_upper}
Fix $\delta \in (0,1/2)$ and suppose
\begin{align}
N \exp\!\left(-\frac{b^2}{\nu_r}\right) \leq \frac{\delta}{2}.
\label{eq:prac_eps_discrete_snr}
\end{align}
Then there exists an explicit $\Theta=\{0,\pi/2\}$ homodyne test using only $\theta=\pi/2$ outcomes with the following guarantee: if
\begin{align}
N \geq \frac{4 \nu_r}{\varepsilon^2}\log\!\left(\frac{4}{\delta}\right),
\label{eq:prac_eps_discrete_upperN}
\end{align}
then the error probability is at most $\delta$.
\end{proposition}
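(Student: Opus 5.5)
The test I would build estimates the magnitude of the $\theta=\pi/2$ outcomes after stripping the latent sign. For each outcome $Y_i$, set $\hat s_i=\mathrm{sign}(Y_i)$ and $V_i=\hat s_i Y_i=|Y_i|$. Let $\bar V=\frac1N\sum_i V_i$. Decide $Q_\varepsilon$ if $\bar V\ge \sqrt2\,(b+\varepsilon/2)$, and decide $P_{\mathrm{diag}}$ otherwise. The idea is that once the component sign is known, the problem is a Gaussian mean test between $\sqrt2 b$ and $\sqrt2(b+\varepsilon)$ with noise variance $\nu_r$. That test needs $O(\nu_r\varepsilon^{-2}\log(1/\delta))$ samples, matching \eqref{eq:prac_eps_discrete_upperN}.

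The analysis conditions on a good event and then applies Gaussian concentration.

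\textbf{Sign-error event.} Write $Y_i=S_i\sqrt2 m+G_i$, with $m\in\{b,b+\varepsilon\}$ and $G_i\sim\mathcal N(0,\nu_r)$. A sign error requires $|G_i|\ge\sqrt2 m\ge\sqrt2 b$. Its probability is at most $2\Pr[G\ge\sqrt2 b]\le \exp(-b^2/\nu_r)$ by the Gaussian tail bound $\Pr[G\ge t]\le\frac12 e^{-t^2/(2\nu_r)}$. A union bound over the $N$ samples, together with \eqref{eq:prac_eps_discrete_snr}, shows the event $E$ that some sign is wrong has probability at most $\delta/2$.

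\textbf{Mean test on $E^c$.} On $E^c$ we have $V_i=\sqrt2 m+S_iG_i$. Since $S_i$ is independent of $G_i$ and $G_i$ is symmetric, the variables $S_iG_i$ are i.i.d.\ $\mathcal N(0,\nu_r)$. To keep the conditioning clean, couple to the ideal statistic $\tilde V=\sqrt2 m+\frac1N\sum_i S_iG_i$, which equals $\bar V$ on $E^c$. Then
\[
\Pr[\text{error}]\le \Pr[E]+\Pr[\tilde V\text{ on the wrong side}].
\]
The threshold sits $\varepsilon/\sqrt2$ from each mean, and $\tilde V-\sqrt2 m\sim\mathcal N(0,\nu_r/N)$. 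The second term is therefore at most $\exp(-N\varepsilon^2/(4\nu_r))$. Under \eqref{eq:prac_eps_discrete_upperN} this is at most $\delta/4\le\delta/2$, so the total error is at most $\delta$.

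\textbf{Main obstacle.} The delicate step is handling the dependence between the sign errors and the noise. The coupling to $\tilde V$ avoids conditioning on $E^c$ altogether, since it reduces everything to a union bound. I would also check the constant in the tail bound, so that \eqref{eq:prac_eps_discrete_snr} with $\exp(-b^2/\nu_r)$ indeed suffices. If the constants come out slightly off, the fix is to adjust the threshold or absorb a factor of $2$ into $\log(4/\delta)$.
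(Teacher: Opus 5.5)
Your proposal is correct and is essentially the paper's proof. It uses the same sign-stripped statistic $\mathrm{sign}(Y_i)\,Y_i$, the same threshold $\sqrt2(b+\varepsilon/2)$, a union bound on sign errors via \eqref{eq:prac_eps_discrete_snr}, and a Gaussian tail bound for the resulting mean test.

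Your two refinements are valid and slightly tighten the paper's argument:
\begin{itemize}
\item The sharper tail bound $\Pr[G\ge t]\le\tfrac12 e^{-t^2/(2\nu_r)}$ makes the constants close exactly ($\delta/2+\delta/4\le\delta$), with no ``absolute-factor slack.''
\item Bounding the error by $\Pr[\text{some sign error}]+\Pr[\tilde V\text{ on the wrong side}]$ replaces the paper's bound on $\Pr(\text{error}\mid E)$ by an unconditional Gaussian tail. That step is not literally justified, since conditioning on correct signs shifts the law of $S_iG_i$.
\end{itemize}
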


\begin{proof}
Draw $N$ independent $\theta=\pi/2$ outcomes $Y_1,\dots,Y_N$. Under either hypothesis we can represent
\begin{align}
Y_i = S_i \sqrt{2}\,B + G_i,
\end{align}
where $S_i \in \{\pm 1\}$ are i.i.d.~uniform signs, $G_i \sim \mathcal{N}(0,\nu_r)$ are i.i.d.~and $B=b$ under $P_{\mathrm{diag}}$ while $B=b+\varepsilon$ under $Q_{\varepsilon}$. Define $\widehat{S}_i := \mathrm{sign}(Y_i)$ and the ideal sign-corrected variables $\widetilde{W}_i := S_i Y_i$. Then unconditionally,
\begin{align}
\widetilde{W}_i \sim \mathcal{N}(\sqrt{2}\,b,\nu_r) \quad \text{under } P_{\mathrm{diag}}, \qquad \widetilde{W}_i \sim \mathcal{N}(\sqrt{2}\,(b+\varepsilon),\nu_r) \quad \text{under } Q_{\varepsilon}.
\end{align}
Let $E$ be the event that no sign errors occur, i.e. $E := \{\widehat{S}_i = S_i \text{ for all } i\}$. If $E$ occurs then $\widehat{S}_i Y_i = S_i Y_i$ for all $i$, hence $W_i := \widehat{S}_i Y_i$ satisfies $W_i = \widetilde{W}_i$ for all $i$.

We first bound $\Pr(E^c)$. For each $i$, a sign error occurs only if $G_i \geq \sqrt{2}\,b$ or $G_i \leq -\sqrt{2}\,b$, hence
\begin{align}
\Pr(\widehat{S}_i \neq S_i) \leq 2 \Pr\!\left(G_i \geq \sqrt{2}\,b\right) \leq 2 \exp\!\left(-\frac{b^2}{\nu_r}\right),
\end{align}
and by a union bound,
\begin{align}
\Pr(E^c) \leq 2 N \exp\!\left(-\frac{b^2}{\nu_r}\right).
\end{align}
Under the high-SNR condition \eqref{eq:prac_eps_discrete_snr}, this gives $\Pr(E^c) \leq \delta$ up to an absolute-factor slack (absorbed by the constants in \eqref{eq:prac_eps_discrete_snr}).

Now consider the threshold test based on the empirical mean of $W_i$:
\begin{align}
\text{decide } Q_{\varepsilon} \,\, \text{if and only if} \,\, \frac{1}{N}\sum_{i=1}^{N} W_i \geq \sqrt{2}\Big(b+\frac{\varepsilon}{2}\Big).
\end{align}
Since $W_i=\widetilde{W}_i$ on $E$, the conditional error on $E$ is upper bounded by the error of the same threshold test applied to the Gaussian sample mean of i.i.d.~$\widetilde{W}_i$. A standard Gaussian tail bound yields
\begin{align}
\Pr\!\left(\left.\textnormal{error}\ \right|\ E\right) \leq 2 \exp\!\left(-\frac{N \varepsilon^2}{4 \nu_r}\right).
\end{align}
Therefore
\begin{align}
\Pr(\textnormal{error}) \leq \Pr(E^c) + \Pr\!\left(\left.\textnormal{error}\ \right|\ E\right) \leq 2 N \exp\!\left(-\frac{b^2}{\nu_r}\right) + 2 \exp\!\left(-\frac{N \varepsilon^2}{4 \nu_r}\right).
\end{align}
Combining \eqref{eq:prac_eps_discrete_snr} with the sample size condition \eqref{eq:prac_eps_discrete_upperN} implies the right-hand side is at most $\delta$ (up to absolute-factor slack absorbed by the constants), completing the proof.
\end{proof}

\noindent An analogous deformation exists for the Gaussian correlation-hiding family, since we perturb what $\Theta=\{0,\pi/2\}$ homodyne can see (a diagonal variance) while keeping the correlation sign fixed.

\vspace{-1em}
\subsection{Learning Gaussian correlations}
\label{sec:ex_gaussian_covar}
\vspace{-1em}

Fix $c \lesssim e^{-2r}$ and small $\varepsilon > 0$ and define
\begin{align}
\Sigma_{+} = \begin{pmatrix} \sigma^2 & c \\ c & \sigma^2 \end{pmatrix}, \qquad \Sigma_{-,\varepsilon} = \begin{pmatrix} \sigma^2+\varepsilon & -c \\ -c & \sigma^2 \end{pmatrix},
\end{align}
with hypotheses $P_{+} = \mathcal{N}(0,\Sigma_{+})$ and $Q_{\varepsilon} = \mathcal{N}(0,\Sigma_{-,\varepsilon})$, and $\sigma^2 = O(e^{-2r})$. This is precisely the shot-noise limited regime in the Gaussian setting discussed earlier in Example \ref{ex:gaussian_covar}, in which quantum-enhanced precision measurements are capable of providing advantage over unentangled ones. Under $\theta=0$ homodyne, $Y_0 = \sqrt{2}\,X + G$ with $X \sim \mathcal{N}(0,\Sigma_{11})$ and $G \sim \mathcal{N}(0,\nu_r)$ independent, hence the informative one-shot laws are
\begin{align}
P_0 = \mathcal{N}(0,v_0), \qquad v_0 = 2\sigma^2 + \nu_r, \qquad P_{\varepsilon} = \mathcal{N}(0,v_1), \qquad v_1 = 2(\sigma^2+\varepsilon) + \nu_r = v_0 + 2\varepsilon.
\label{eq:prac_eps_gauss_xmarg}
\end{align}
The $\theta=\pi/2$ homodyne outcomes agree under $P_{+}$ and $Q_{\varepsilon}$. The next lemma makes explicit that the one-shot information scales as $\varepsilon^2$.

\begin{lemma}[KL for the variance-shifted marginals \eqref{eq:prac_eps_gauss_xmarg}]
\label{lem:prac_eps_gauss_kl}
One has
\begin{align}
D_{\mathrm{KL}}\!\left(\mathcal{N}(0,v_0)\,\|\,\mathcal{N}(0,v_1)\right) = \frac{1}{2}\left(\frac{v_0}{v_1}-1-\log\!\left(\frac{v_0}{v_1}\right)\right).
\label{eq:prac_eps_gauss_kl_exact}
\end{align}
Moreover, if $0 < 2\varepsilon \leq v_0$, then
\begin{align}
\frac{\varepsilon^2}{(v_0+2\varepsilon)^2} \leq D_{\mathrm{KL}}\!\left(\mathcal{N}(0,v_0)\,\|\,\mathcal{N}(0,v_1)\right) \leq \frac{2\varepsilon^2}{v_0^2}.
\label{eq:prac_eps_gauss_kl_bounds}
\end{align}
\end{lemma}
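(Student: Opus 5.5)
The plan is to first derive the exact formula \eqref{eq:prac_eps_gauss_kl_exact} from the closed form for the KL divergence between two centered one-dimensional Gaussians. Then, writing $\rho := v_0/v_1 \in (0,1)$, we obtain both bounds in \eqref{eq:prac_eps_gauss_kl_bounds} from elementary inequalities on the scalar function $h(\rho) := \rho - 1 - \log\rho$.

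For the exact formula, we compute $D_{\mathrm{KL}}(\mathcal{N}(0,v_0)\|\mathcal{N}(0,v_1)) = \E_{X\sim\mathcal{N}(0,v_0)}[\log p_0(X) - \log p_1(X)]$. The log-normalizers contribute $\frac12\log(v_1/v_0)$. The quadratic terms contribute $-\frac{1}{2v_0}\E[X^2] + \frac{1}{2v_1}\E[X^2] = -\frac12 + \frac{v_0}{2v_1}$. Summing gives $\frac12(\rho - 1 - \log\rho)$, which is \eqref{eq:prac_eps_gauss_kl_exact}.

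For the bounds, set $u := 1-\rho = 2\varepsilon/v_1 = 2\varepsilon/(v_0+2\varepsilon)$. The hypothesis $2\varepsilon \le v_0$ gives $u \in (0,1/2]$. Then $h(\rho) = -u - \log(1-u) = \sum_{k\ge2} u^k/k$.

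The lower bound follows from keeping only the $k=2$ term: $h \ge u^2/2$, so $D_{\mathrm{KL}} \ge u^2/4 = \varepsilon^2/(v_0+2\varepsilon)^2$.

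For the upper bound, we use $\sum_{k\ge2}u^k/k \le \frac{u^2}{2}\sum_{j\ge0}u^j = \frac{u^2}{2(1-u)}$. Hence
\[
D_{\mathrm{KL}} \le \frac{u^2}{4(1-u)} = \frac{4\varepsilon^2/v_1^2}{4\,v_0/v_1} = \frac{\varepsilon^2}{v_0 v_1} \le \frac{\varepsilon^2}{v_0^2} \le \frac{2\varepsilon^2}{v_0^2},
\]
since $1-u = v_0/v_1$ and $v_1 \ge v_0$. This is in fact a factor of two sharper than claimed.

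The only step needing care is the upper bound: a naive Taylor remainder bound must be controlled uniformly on $u\in(0,1/2]$, and the geometric-series bound handles this cleanly. The rest is routine.
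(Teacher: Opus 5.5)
Your proof is correct, and its skeleton is the paper's: derive the closed form, then reduce both bounds to inequalities in one variable. What differs is the choice of variable and the tool used.

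The paper sets $x=2\varepsilon/v_0$ and writes $D_{\mathrm{KL}}=\frac12\bigl(\log(1+x)-\frac{x}{1+x}\bigr)$. It gets the upper bound from the crude estimates $\log(1+x)\le x$ and $\frac{x}{1+x}\ge x-x^2$. It gets the lower bound by checking that $h(x)=\log(1+x)-\frac{x}{1+x}-\frac{x^2}{2(1+x)^2}$ satisfies $h(0)=0$ and $h'(x)=x^2/(1+x)^3\ge 0$.

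Your variable $u=1-v_0/v_1$ is exactly $x/(1+x)$. In it, the paper's auxiliary function is literally $\sum_{k\ge 3}u^k/k$, so your series truncation proves the same lower bound $\varepsilon^2/v_1^2$ with no derivative computation.

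For the upper bound, your geometric-series estimate is genuinely sharper. The bound $\varepsilon^2/(v_0v_1)$ matches the true leading behavior $D_{\mathrm{KL}}\sim\varepsilon^2/v_0^2$ as $\varepsilon\to0$. You therefore get $\varepsilon^2/v_1^2\le D_{\mathrm{KL}}\le\varepsilon^2/(v_0v_1)$, tight to leading order on both sides, whereas the paper's $2\varepsilon^2/v_0^2$ is off by a factor of $2$.

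Finally, neither argument needs the hypothesis $2\varepsilon\le v_0$. Your bounds hold for every $u\in(0,1)$, i.e.\ every $\varepsilon>0$. The paper's inequality $\frac{x}{1+x}\ge x-x^2$ is equivalent to $x^3\ge 0$, so it also holds for all $x\ge0$.
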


\begin{proof}
The identity \eqref{eq:prac_eps_gauss_kl_exact} is standard for Gaussians. Write $v_1=v_0(1+x)$ with $x=2\varepsilon/v_0\in(0,1]$, so
\begin{align}
D_{\mathrm{KL}} =\frac{1}{2}\left(\frac{1}{1+x}-1+\log(1+x)\right) =\frac{1}{2}\left(\log(1+x)-\frac{x}{1+x}\right).
\end{align}
For the upper bound, use $\log(1+x)\le x$ and $\frac{x}{1+x}\ge x-x^2$ (valid for $x\in(0,1]$), giving $D_{\mathrm{KL}}\le \frac{1}{2}(x-(x-x^2))=x^2/2 = 2\varepsilon^2/v_0^2$.

For the lower bound, define
\begin{align}
h(x):=\log(1+x)-\frac{x}{1+x}-\frac{x^2}{2(1+x)^2}.
\end{align}
Then $h(0)=0$ and
\begin{align}
h'(x)=\frac{1}{1+x}-\frac{1}{(1+x)^2}-\frac{x}{(1+x)^3}=\frac{x^2}{(1+x)^3}\geq 0,
\end{align}
so $h(x)\geq 0$ for $x\geq 0$. Thus
\begin{align}
D_{\mathrm{KL}}
\geq \frac{1}{2}\cdot \frac{x^2}{2(1+x)^2} = \frac{x^2}{4(1+x)^2} = \frac{\varepsilon^2}{v_1^2} = \frac{\varepsilon^2}{(v_0+2\varepsilon)^2}.
\end{align}
\end{proof}

Plugging Lemma~\ref{lem:prac_eps_gauss_kl} into Pinsker's inequality and the testing bound from Lemma~\ref{lem:prac_tv_pinsker} yields the corresponding sample-complexity lower bound, with a matching upper bound given by an empirical second-moment test.

\begin{theorem}[$\Theta=\{0,\pi/2\}$ homodyne needs $\Omega(v_0^2/\varepsilon^2)$ samples for the deformed Gaussian pair]
\label{thm:prac_eps_gauss_lower}
Consider testing $P_{+}$ versus $Q_{\varepsilon}$ using $\Theta=\{0,\pi/2\}$ homodyne rounds. If
\begin{align}
N \leq \frac{v_0^2}{64 \varepsilon^2},
\end{align}
then every $\Theta$-restricted homodyne test has error probability at least $3/8$. Conversely, a $\theta=0$ variance-threshold test achieves error at most $\delta$ using
\begin{align}
N = O\!\left(\frac{v_0^2}{\varepsilon^2}\log\!\left(\frac{1}{\delta}\right)\right).
\end{align}
\end{theorem}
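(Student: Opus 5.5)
The plan follows the template of Theorem~\ref{thm:prac_eps_discrete_lower} and Proposition~\ref{prop:prac_eps_discrete_upper}: first reduce to the informative quadrature, then bound the per-shot KL divergence, and finally apply Pinsker with the testing bound.

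\textbf{Reduction to $\theta=0$.} Under $\Theta=\{0,\pi/2\}$ homodyne, Proposition~\ref{prop:prac_axis_only} says a round depends only on the two noisy axis marginals. The $\Im(\alpha)$-marginal is $\mathcal{N}(0,\sigma^2)$ under both $P_+$ and $Q_\varepsilon$, since $(\Sigma_{+})_{22}=(\Sigma_{-,\varepsilon})_{22}=\sigma^2$. Hence the $\theta=\pi/2$ outcomes have identical laws. Because each round is a product over angles, the full $N$-round likelihood ratio factors, and the $\theta=\pi/2$ factor is identically one. Testing therefore reduces exactly to distinguishing $P_0^{\otimes N}$ from $P_\varepsilon^{\otimes N}$, with $P_0,P_\varepsilon$ as in \eqref{eq:prac_eps_gauss_xmarg}. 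If adaptivity across rounds is allowed, the same conclusion holds: with only two fixed angles the adaptive choice is just an ordering, and the chain rule for KL gives the same additive bound.

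\textbf{Lower bound.} Additivity of KL and Pinsker \eqref{eq:prac_pinsker} give
\[
\mathrm{TV}(P_0^{\otimes N},P_\varepsilon^{\otimes N})^2\le \tfrac N2 D_{\mathrm{KL}}(P_0\|P_\varepsilon).
\]
The upper bound in Lemma~\ref{lem:prac_eps_gauss_kl} gives $D_{\mathrm{KL}}(P_0\|P_\varepsilon)\le 2\varepsilon^2/v_0^2$, provided $2\varepsilon\le v_0$. Then $\mathrm{TV}^2\le N\varepsilon^2/v_0^2\le 1/64$ whenever $N\le v_0^2/(64\varepsilon^2)$, so $\mathrm{TV}\le 1/8\le 1/4$. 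By \eqref{eq:prac_testing_tv}, every test has error at least $(1-1/4)/2=3/8$. The complementary regime $2\varepsilon>v_0$ needs a remark. There $v_0^2/(64\varepsilon^2)<1/16$, so no $N\ge 1$ satisfies the hypothesis and the claim is vacuous.

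\textbf{Upper bound.} Use the statistic $\hat v=\frac1N\sum_i Y_{0,i}^2$ from $\theta=0$ outcomes, and decide $Q_\varepsilon$ iff $\hat v\ge v_0+\varepsilon$, the midpoint between $v_0$ and $v_1=v_0+2\varepsilon$. Under either hypothesis, $N\hat v/v_j\sim\chi^2_N$. The Laurent--Massart chi-square tail bounds give
\[
\Pr\bigl(|\hat v-v_j|\ge \varepsilon\bigr)\le 2\exp\!\bigl(-cN\min\{\varepsilon^2/v_j^2,\ \varepsilon/v_j\}\bigr).
\]
In the relevant regime $\varepsilon\le v_0$ the quadratic term dominates, and $v_1\le 2v_0$. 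Hence $N=O\bigl((v_0^2/\varepsilon^2)\log(1/\delta)\bigr)$ makes both error probabilities at most $\delta$. Alternatively, Bernstein's inequality for the sub-exponential variables $Y^2$ gives the same rate.

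\textbf{Main obstacle.} The only delicate point is the upper bound's constant bookkeeping. One must check that the threshold at $v_0+\varepsilon$ stays within the sub-Gaussian part of the $\chi^2$ tail under $Q_\varepsilon$, where the variance is $v_1$ rather than $v_0$; this is handled by $v_1\le 2v_0$. The remaining bookkeeping is the restriction to $\varepsilon\lesssim v_0$ required by Lemma~\ref{lem:prac_eps_gauss_kl}.
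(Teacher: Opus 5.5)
Your proposal is correct and follows the paper's proof essentially step for step. You discard the identically distributed $\theta=\pi/2$ outcomes, bound $\mathrm{TV}$ via Pinsker and the upper bound of Lemma~\ref{lem:prac_eps_gauss_kl}, and for the converse threshold the empirical second moment of the $\theta=0$ outcomes at the midpoint $v_0+\varepsilon$ using $\chi^2$/sub-exponential concentration; your remark that the case $2\varepsilon>v_0$ is vacuous is a small gain in rigor over the paper. One trivial slip: $\varepsilon\le v_0$ only gives $v_1\le 3v_0$, and you need $\varepsilon\le v_0/2$ for $v_1\le 2v_0$, but this changes nothing beyond constants.
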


\begin{proof}
The $\theta=\pi/2$ outcomes are identical, so testing reduces to distinguishing $\mathcal{N}(0,v_0)^{\otimes N}$ from $\mathcal{N}(0,v_1)^{\otimes N}$. By Pinsker,
\begin{align}
\mathrm{TV}^2 \leq \frac{1}{2}D_{\mathrm{KL}}^{(N)} = \frac{N}{2}D_{\mathrm{KL}}\!\left(\mathcal{N}(0,v_0)\,\|\,\mathcal{N}(0,v_1)\right) \leq \frac{N}{2}\cdot\frac{2\varepsilon^2}{v_0^2},
\end{align}
using \eqref{eq:prac_eps_gauss_kl_bounds}. If $N \leq v_0^2/(64\varepsilon^2)$ then $\mathrm{TV}\leq 1/4$ and Lemma~\ref{lem:prac_tv_pinsker} yields error at least $3/8$. For the upper bound, take $T = \frac{1}{N}\sum_{i=1}^{N}Y_i^2$ for $\theta=0$ outcomes. Under $\mathcal{N}(0,v_j)$, we have $\E[T]=v_j$ and $\Var(T)=2v_j^2/N$. A threshold at $(v_0+v_1)/2$ and standard concentration for $Y_i^2$ gives the stated scaling.
\end{proof}

Bell sensing remains sensitive to the correlation sign $\pm c$ through the joint samples, so its sample complexity is governed by $c$ rather than by the leakage scale $\varepsilon$. The following Corollary is obtained directly from Proposition \ref{prop:prac_bell_cov}, setting the precision level to $c/2$ to solve the distinguishing problem.

\begin{corollary}[Bell sampling distinguishes the deformed Gaussian pair with $O(1/c^2)$ samples]
\label{cor:prac_eps_gauss_bell}
Fix $\delta \in (0,1/2)$ and suppose $\varepsilon \leq 1$. In the shot-noise-limited regime $\sigma^2 \lesssim e^{-2r} \ll 1$, there is a Bell-sampling test with error probability at most $\delta$ using
\begin{align}
N = O\!\left(\frac{(\sigma^2+\nu_r)(\sigma^2+\nu_r + \varepsilon)}{c^2}\log\!\left(\frac{1}{\delta}\right)\right)\sim O\!\left(\frac{e^{-4r}}{c^2}\log\!\left(\frac{1}{\delta}\right)\right).
\end{align}
In particular, for fixed $c$, the Bell sample complexity remains bounded as $\varepsilon \to 0$, while $\Theta=\{0,\pi/2\}$ homodyne requires $\Omega(v_0^2/\varepsilon^2)$ samples for any fixed constant advantage in the distinguishing task.
\end{corollary}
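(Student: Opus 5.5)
The plan is to reduce the distinguishing task to estimating the off-diagonal covariance entry $c$ to precision $c/2$. Under $P_{+}$ the correlation is $+c$, and under $Q_{\varepsilon}$ it is $-c$, so a sign test on the empirical correlation suffices. Under Bell sampling, the additive-noise model \eqref{eq:bell-additive-noise} gives $(\Re\zeta,\Im\zeta)\sim\mathcal{N}(0,\Sigma+\nu_r\mathds{1}_2)$, with $\Sigma\in\{\Sigma_{+},\Sigma_{-,\varepsilon}\}$. The noise is isotropic, so it leaves the off-diagonal entry unchanged. The estimator is $\widehat{c}=\frac1N\sum_i\Re(\zeta^{(i)})\Im(\zeta^{(i)})$, and the test decides $P_{+}$ iff $\widehat{c}\ge 0$. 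An error requires $|\widehat{c}-(\pm c)|\ge c$, so it suffices to concentrate $\widehat{c}$ to accuracy $c$, or $c/2$ for slack.

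First I compute the per-sample variance, as in Proposition~\ref{prop:prac_bell_cov}, using $\Var(XY)=\Var(X)\Var(Y)+\mathrm{Cov}(X,Y)^2$ for centered jointly Gaussian $(X,Y)$. Under $P_{+}$ this gives $(\sigma^2+\nu_r)^2+c^2$. Under $Q_{\varepsilon}$ it gives $(\sigma^2+\varepsilon+\nu_r)(\sigma^2+\nu_r)+c^2$. Both are bounded by $V:=(\sigma^2+\nu_r)(\sigma^2+\nu_r+\varepsilon)+c^2$.

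To get a $\log(1/\delta)$ dependence rather than Chebyshev's $1/\delta$, I use the fact that $XY$ is a product of Gaussians. It is therefore sub-exponential, with $\psi_1$-norm at most a constant times $\sqrt{\Var X\,\Var Y}\lesssim\sqrt V$. Bernstein's inequality then gives
\[
\Pr\bigl(|\widehat{c}-\E\widehat{c}|\ge t\bigr)\le 2\exp\bigl(-C N\min\{t^2/V,\ t/\sqrt V\}\bigr).
\]
Setting $t=c/2$, and noting $t\lesssim\sqrt V$ since $c^2\le V$, the quadratic branch is active. This yields
\[
N=O\bigl((V/c^2)\log(1/\delta)\bigr)=O\bigl(((\sigma^2+\nu_r)(\sigma^2+\nu_r+\varepsilon)/c^2+1)\log(1/\delta)\bigr),
\]
and the additive constant is absorbed into the stated bound. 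An alternative that avoids tracking $\psi_1$ norms is a median-of-means estimator with Chebyshev on each block, which gives the same scaling.

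Finally, the regime assumptions $\sigma^2\lesssim e^{-2r}$ and $\nu_r=\tfrac12e^{-2r}$ give $(\sigma^2+\nu_r)^2=O(e^{-4r})$. If additionally $\varepsilon\lesssim e^{-2r}$, the numerator is $O(e^{-4r})$, giving the displayed $O(e^{-4r}c^{-2}\log(1/\delta))$. For fixed $c$ the bound stays finite as $\varepsilon\to0$. Combining this with Theorem~\ref{thm:prac_eps_gauss_lower} yields the homodyne comparison.

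The main obstacle is the $\varepsilon$ term in the $\sim$ simplification. The hypothesis stated is only $\varepsilon\le1$, which does not give $e^{-4r}$ scaling by itself. I would make explicit that the asymptotic form uses $\varepsilon\lesssim e^{-2r}$, the interesting near-isotropic wedge where homodyne fails. The exact bound with the $\varepsilon$ factor holds regardless.
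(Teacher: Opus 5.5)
Your proposal is correct and is essentially the paper's argument. The paper gets this corollary directly from Proposition~\ref{prop:prac_bell_cov} by estimating the cross-moment $\E[\Re(\zeta)\Im(\zeta)]$ to precision $c/2$ and testing its sign; your variance bound under $Q_{\varepsilon}$, the Bernstein step for the $\log(1/\delta)$ dependence, and the absorption of the $+1$ (valid since $c^2\le\sigma^4$ by positive semidefiniteness of $\Sigma_{+}$) fill in details the paper leaves implicit. Your caveat that the final $e^{-4r}$ form needs $\varepsilon\lesssim e^{-2r}$, not merely $\varepsilon\le 1$, is also accurate, and matches the paper's standing assumption of small $\varepsilon$ and the condition $2\varepsilon\le v_0$ in Lemma~\ref{lem:prac_eps_gauss_kl}.
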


\vspace{-1em}
\subsection{Sensing electromagnetic fields and cavity phase noise}
\label{sec:prac_Hams}
\vspace{-1em}

We have established that conventional sensing protocols which are restricted to fixed-angle homodyne tomography can be ill-conditioned for a sizable fraction of problem instances, even within simple Gaussian families. For these examples, conventional sensing below the vacuum heterodyne limit is challenging, and Bell QSL yields meaningful quantum advantages. In this section, we exhibit two very common real-world sensing tasks that generate such learning problems.

\vspace{-1em}
\subsubsection{Sensing electromagnetic fields}
\label{sec:D4-EM-transducer}
\vspace{-1em}

Consider a single bosonic mode realized as an $LC$ resonator (or any narrowband cavity mode) used as an
electromagnetic sensor. Let $\hat{\Phi}$ and $\hat{Q}$ denote the canonical flux and charge operators,
$[\hat{\Phi},\hat{Q}]=i\hbar$, so that the free Hamiltonian is
\begin{equation}
H_0=\frac{\hat Q^2}{2C}+\frac{\hat\Phi^2}{2L},\qquad \omega_0=\frac{1}{\sqrt{LC}}.
\end{equation}
An incident classical electromagnetic field induces an effective electromotive force and flux bias,
determined by the sensor geometry (antenna/loop/cavity coupling),
\begin{equation}
V_{\rm eff}(t)\,\equiv\,\oint_{\mathcal C}\mathbf E(\mathbf r,t)\!\cdot\! d\boldsymbol\ell,
\qquad
\Phi_{\rm ext}(t)\,\equiv\,\int_{\mathcal S}\mathbf B(\mathbf r,t)\!\cdot\! d\mathbf A,
\end{equation}
for appropriate pickup curve/surface $(\mathcal C,\mathcal S)$. In lumped-element form, the driven Hamiltonian
can be written (up to an irrelevant $c$-number term) as
\begin{align}
H(t)
&= \frac{\hat Q^2}{2C}+\frac{(\hat\Phi-\Phi_{\rm ext}(t))^2}{2L}-V_{\rm eff}(t)\hat Q \\
&= H_0 -V_{\rm eff}(t)\hat Q - I_{\rm eff}(t)\hat\Phi \ ,
\label{eq:D4-EM-H}
\end{align}
where $I_{\rm eff}(t):=\Phi_{\rm ext}(t)/{L}$. Equation~\eqref{eq:D4-EM-H} illustrates that the field enters through \emph{two} linear
couplings, one ``electric-like'' ($V_{\rm eff}\hat Q$) and one ``magnetic-like'' ($I_{\rm eff}\hat\Phi$). In practice,
$V_{\rm eff}(t)$ and $I_{\rm eff}(t)$ may be treated as random processes, to reflect stochastic signals, clutter, or uncertainty in priors.

We can obtain a familiar Hamiltonian by defining the dimensionless quadratures
\begin{equation}
\hat x := \frac{\hat\Phi}{\sqrt{2}\,\Phi_{\rm zpf}},
\qquad
\hat p := \frac{\hat Q}{\sqrt{2}\,Q_{\rm zpf}},
\end{equation}
which satisfy the canonical commutation relations, and where $\Phi_{\rm zpf}$ and $Q_{\rm zpf}$ are the zero-point fluctuations of the resonator. In the interaction
picture, the signal appears as a linear Hamiltonian of the
form
\begin{equation}
H_{\rm sig}(t)= f_x(t)\,\hat x + f_p(t)\,\hat p,
\qquad
f_x(t)=-\sqrt{2}\,\Phi_{\rm zpf}\,I_{\rm eff}(t),
\qquad
f_p(t)=-\sqrt{2}\,Q_{\rm zpf}\,V_{\rm eff}(t).
\label{eq:D4-EM-linear}
\end{equation}
Therefore, the channel induced by evolution for time $T$ is a pure displacement channel,
\begin{equation}
\mathcal{E}_H^{(T)}(\rho)=\int d^2\alpha\,P_H^{(T)}(\alpha)\,\hat D(\alpha)\rho \hat D(\alpha)^\dagger,
\end{equation}
with complex displacement amplitude given by
\begin{equation}
\tilde\alpha(T)
=\frac{1}{\sqrt{2}}\left(\int_0^T\!dt\,f_p(t)-i\int_0^T\!dt\,f_x(t)\right)
= -Q_{\rm zpf}\!\int_0^T\!dt\,V_{\rm eff}(t)\,+\,i\,\Phi_{\rm zpf}\!\int_0^T\!dt\,I_{\rm eff}(t).
\label{eq:D4-alpha}
\end{equation}
If $(V_{\rm eff},I_{\rm eff})$ are stochastic, then $\tilde\alpha(T)$ is a random variable and
$P_H^{(T)}$ is the pushforward of the classical field distribution under~\eqref{eq:D4-alpha}.

Many physically relevant electromagnetic environments produce \emph{correlated} electric- and magnetic-like
drives. In the simplest narrowband model, the endpoint displacement $\alpha$ is approximately Gaussian on phase
space with a covariance matrix (in $(x,p)$ coordinates)
\begin{equation}
\Sigma(c)=
\begin{pmatrix}
\sigma_x^2 & c\\
c & \sigma_p^2
\end{pmatrix},
\qquad c=\mathrm{Cov}(x,p),
\label{eq:D4-Sigma}
\end{equation}
where $c$ encodes the \emph{joint} (electric--magnetic) quadrature correlation of the incident field as seen by the
transducer. Hence, we see that the example of learning a Gaussian covariance discussed in Sec. \ref{sec:ex_gaussian_covar} can be meaningfully instantiated as a precision measurement of an external electromagnetic field. As derived in Proposition \ref{prop:prac_bell_cov}, Bell QSL  achieves mean-squared error $\mathbb E[(\widehat c-c)^2]\le \epsilon^2$ with a number of samples
\begin{equation}
\label{eq:D4-Nbell-cov}
N_{\rm Bell}
=O\!\left(\frac{(\sigma_x^2+\nu_r)(\sigma_p^2+\nu_r)+c^2}{\epsilon^2}\right) \sim O(e^{-4r}/\epsilon^2)
\end{equation}
in the shot-noise-limited regime. By contrast, heterodyne incurs the irreducible vacuum penalty $\nu_r\mapsto \nu_{\rm het}:=\tfrac12$, yielding
\begin{equation}
N_{\rm het}
=O\!\left(\frac{(\sigma_x^2+\tfrac12)(\sigma_p^2+\tfrac12)+c^2}{\epsilon^2}\right) \sim O\!\left(1/\epsilon^2\right).
\end{equation}
while homodyne measurements remain non-identifiable. If we once again introduce a symmetry-breaking parameter, denoted here by $\Delta = \sigma_x^2 - \sigma^2_p$, encoding the variance anisotropy of the field, homodyne strategies have their sample complexity limited not by precision-dependence, but by $\Delta^{-2}$ in the near-isotropic regime where $\Delta < \epsilon$. As $\Delta\rightarrow 0$, approaching the homodyne-invisible regime, the resulting restricted-homodyne sample complexity can become parametrically (and even exponentially) larger than the Bell-sensing scaling~\eqref{eq:D4-Nbell-cov}, reproducing the qualitative behavior sketched in Fig.~\ref{fig:deltas_prob_wedge} and the empirical data plotted in Fig.~\ref{fig:em_field}.

Equation~\eqref{eq:D4-EM-H} is a standard effective model for how unknown classical electromagnetic fields enter bosonic transducers. The salient point is the identification of a physically interpretable feature of the classical EM field, namely its $x$--$p$ covariance, that is directly accessible to Bell QSL but ill-conditioned for practical conventional benchmarks.

\vspace{-1em}
\subsubsection{Learning phase-rotation noise in interferometric cavities}
\label{sec:D4-interferometer-noise}
\vspace{-1em}
A common source of error in interferometric cavities is the population of undesirable quadrature modes due to an unknown quadrature drift. Detecting and correcting this drift in real time is often challenging, as the effect can be once again hidden in quadrature correlations and thus difficult to capture over the short timescales of an experiment. Here we argue that Bell QSL provides a speedup that can enable this form of real-time error mitigation. 

We model quadrature noise as a rotation,
\begin{equation}
\label{eq:D4-rot-noise}
\begin{pmatrix}\hat x\\ \hat p\end{pmatrix}
\longmapsto
R(\theta)\begin{pmatrix}\hat x\\ \hat p\end{pmatrix},
\qquad
R(\theta)=
\begin{pmatrix}
\cos\theta & -\sin\theta\\
\sin\theta & \cos\theta
\end{pmatrix}.
\end{equation}
To perform feedback-control, one wants to estimate $\theta$, including its sign, in few shots. However, we note that probe states centered at the origin of phase space will struggle to estimate $\theta$, as the effect of rotating around the origin is most muted for states themselves lying at the origin. Hence, we make a simple, experimentally tractable modification: upon preparing a two-mode squeezed probe with squeezing parameter $r$, imprint a known displacement $\beta$ on the injected arm of the entangled sideband, retain the other arm as an idler phase reference, and perform Bell measurement on the returned signal and the idler. Upon displacing the probe state from the origin of phase space, rotations now change the phase-space structure of our state.

Denoting the commuting Bell quadratures by
\begin{equation}
\label{eq:D4-BM-quads}
\hat x^{\rm BM}=\frac{\hat x_S-\hat x_I}{\sqrt2},
\qquad \hat p^{\rm BM}=\frac{\hat p_S+\hat p_I}{\sqrt2},
\end{equation}
the complex Bell record $\zeta^{(i)}:=x_{\rm BM}^{(i)}+ip_{\rm BM}^{(i)}$ is i.i.d.\ Gaussian with mean $\E[\zeta]=e^{i\theta}\beta$ and (per-quadrature) variance
\begin{equation}
\label{eq:D4-BM-var}
\Var(x_{\rm BM})=\Var(p_{\rm BM})=:v(\theta)=\frac12\!\left(\cosh 2r-\sinh 2r\,\cos\theta\right),
\qquad
\mathrm{Cov}(x_{\rm BM},p_{\rm BM})=0,
\end{equation}
so realistic, small $\theta$, one has $v(\theta)\approx \nu_r:=\tfrac12 e^{-2r}$.

The sample mean is an efficient estimator of $e^{i\theta}\beta$. To isolate $\theta$ from the phase of $\beta$, we write $\beta=|\beta|e^{i\phi_0}$, and may estimate the signed rotation angle directly by the argument of the phase-aligned sample mean,
\begin{equation}
\label{eq:D4-theta-est}
\hat\theta:=\arg\!\left(e^{-i\phi_0}\bar\zeta\right),
\end{equation}
Equivalently, $\hat\theta=\operatorname{atan2}\!\big(\Im(e^{-i\phi_0}\bar\zeta),\Re(e^{-i\phi_0}\bar\zeta)\big)$. As in the
covariance-learning estimators of App.~\ref{sec:ex_gaussian_covar}, this estimator can be constructed from a single classical record resulting from Bell QSL.

Our estimator can be written as $\bar\zeta=e^{i\theta}\beta+w$, with $w$ a circular Gaussian of per-quadrature
variance $v(\theta)/N$. For any $0<\epsilon\le \pi/2$,
\begin{equation}
\label{eq:D4-theta-tail}
\Pr\big(|\hat\theta-\theta|>\epsilon\big) \le \Pr\big(|w|>|\beta|\sin\epsilon\big) =\exp\!\left(-\frac{N|\beta|^2\sin^2\epsilon}{2v(\theta)}\right),
\end{equation}
Therefore, to guarantee $|\hat\theta-\theta|\le \epsilon$ with probability at least $1-\delta$, it suffices to take
\begin{equation}
\label{eq:D4-N-theta}
N \ge \frac{2v(\theta)}{|\beta|^2\sin^2\epsilon}\log\left(\frac{1}{\delta}\right)
= O\left(\frac{v(\theta)}{\epsilon^2|\beta|^2}\log\left(\frac{1}{\delta}\right)\right) \sim \frac{e^{-2r}}{\epsilon^2|\beta|^2}\log\left(\frac{1}{\delta}\right),
\end{equation}

One may also view \eqref{eq:D4-theta-est} as
simultaneously estimating $\sin\theta$ and $\cos\theta$ from one classical record, which can be done with only small constant overhead due to Bell QSL supporting post-hoc estimation. Meanwhile, conventional heterodyne processing incurs sample complexity larger by a factor $\nu_{\rm het}/\nu_r\sim e^{2r}$ and cannot utilize the quantum enhancement provided by squeezing. Homodyne access, even with a pilot displacement, requires multiple, adaptively-chosen angles. In fast detection, such as capturing an incident gravitational wave, re-locking squeezing axes to perform real-time feedback control is not feasible. In this case, a single fixed homodyne quadrature cannot recover the sign of $\theta$, relegating error-mitigation strategies to the vacuum-noise limit.

Operationally, two-mode-squeezed light may be injected into optical sidebands at each analysis frequency, while the idler can be kept outside the main readout band. This enables noninvasive monitoring while still tracking the phase/control fluctuations that rotate the signal quadratures \cite{McCuller2020FDSAdvLIGO,Yap2020EPRFDS,Gould2021EntangledWitnessQNC}.

\vspace{-1em}
\subsection{Interpreting quantum advantage without entanglement}
\label{sec:entanglement_req}
\vspace{-1em}

In this section, we explore a different operational definition of quantum advantage in sensing of classical fields. While our prior results take entanglement to be the core quantum-enhancing resource, treating displacement and squeezing on equal footing, we now consider treating squeezing itself as a quantum-enhancing resource. In this setting, we show that it is possible to engineer entanglement-free, squeezing-enhanced strategies that perform exponentially better than any squeezing-free strategy relative to instance size. We then interpret this separation: the advantage is generated precisely by utilizing probe states with an \textit{energy} directly proportional to the advantage obtained. For canonical metrological tasks, the bosonic standard quantum limit and Heisenberg limits are quoted in terms of mode number, yielding \textit{energy-dependent} limits. Hence, strategies which achieve entanglement-free quantum advantage are functionally expanding the probe Hilbert space by accessing higher-energy sectors. 

Through this discussion, we establish that to achieve quantum advantage under the canonical \textit{energy-conserved} definition (where ``conventional'' strategies are permitted squeezed probes), entanglement and squeezing must coincide. Bell QSL is suited to precisely this well-studied regime. 
\vspace{-1em}
\subsubsection{Sensing a fluctuating random background}
\vspace{-1em}

Consider a single bosonic mode and a linear Hamiltonian $H(t)=f(t)\ph$. By Lemma~\ref{lemma:ham_to_rdc}, evolution for a fixed time interval induces a displacement channel that shifts $\xh$ by a random classical displacement $X \in \mathbb{R}$ determined by $f(t)$, i.e.~$\rho\mapsto \disp_q(X)\rho \disp_q(X)^\dagger$ with $\disp_q(X)=\exp(-iX\ph)$.

Now consider the setting in which a sensor is placed in a stochastic background, and experiences many momentum kicks during its active time. This could be instantiated, say, by strong thermal noise which acts randomly on the sensor several times during the shortest resolvable sensing time. To model this, let us fix an integer $n\ge 1$ and partition the evolution time into $n$ slices, such that in slice $j$, a ``kick'' of magnitude $B>0$ with a random sign $z_j\in\{+1,-1\}$ is applied (our argument will apply to more physically-motivated $z_j$, such as standard normal draws). The resulting net displacement is
\begin{equation}
    X = B \sum_{j=1}^n z_j .
    \label{eq:false_sep_X_def}
\end{equation}
For this toy example, we impose one of two promises on the kick string $z=(z_1,\dots,z_n)$:
\begin{enumerate}
    \item \textbf{Uniform:} $z_1,\dots,z_n$ are i.i.d.\ uniform on $\{\pm 1\}$.
    \item \textbf{Parity-constrained:} $z$ is uniform over all strings in $\{\pm 1\}^n$ satisfying $\prod_{j=1}^n z_j = +1$.
\end{enumerate}
The task is to decide which promise holds from $N$ independent uses of the induced displacement channel.

Now, note that the distinguishing feature of these two distributions is a single Fourier coefficient of the displacement distribution. Defining the usual characteristic function
\begin{equation}
    \chi(k) = \E\!\left[\exp(i k X)\right], \qquad k\in\mathbb{R}
    \label{eq:false_sep_lambda_def}
\end{equation}
we see that under a uniform ground truth (hypothesis $H_0$), $\chi_0(k)=\cos^n(kB)$. However, under the parity-constrained promise (hypothesis $H_1$), a direct computation using the identity $\indicator\{\prod_j z_j=+1\}=(1+\prod_j z_j)/2$ yields
\begin{equation}
    \chi_1(k)=\cos^n(kB) + i^n\sin^n(kB) \ .
    \label{eq:false_sep_lambda_plus}
\end{equation}
In particular, at the ``Nyquist'' frequency $k_\star=\pi/(2B)$, these expressions are at their most distinguishable:
\begin{equation}
    \chi_0(k_\star)=0,
    \qquad
    \chi_1(k_\star)=i^n \ .
    \label{eq:false_sep_gap}
\end{equation}
Thus the characteristic-function witnesses differ by a constant-sized Fourier component at a frequency that scales as $1/B$.

\vspace{-1em}
\subsubsection{An entanglement-free squeezed-homodyne test}
\vspace{-1em}

We use an entanglement-free protocol: prepare a single-mode probe state $\rho$, apply one channel use (i.e. one random displacement $X$), and perform homodyne measurement of $\xh$. If the probe has $\xh$-quadrature variance $V_x = \Var_\rho(\xh)$, then the homodyne outcome $\zeta$ admits the additive-noise representation
\begin{equation}
    \zeta = X + \nu,
    \qquad
    \nu\sim \mathcal{N}(0,V_x),
    \label{eq:false_sep_additive_noise}
\end{equation}
where $N$ is independent of $X$ (for Gaussian probes this is exact, and for the squeezed-vacuum probes used below it is standard).

Consider the bounded random variable $Y=\exp(i k_\star \zeta)$, which satisfies $|Y|=1$. Then
\begin{equation}
\E[Y] = \E[\exp(i k_\star X)] \E[\exp(i k_\star \nu)]  = \chi(k_\star)\exp\!\left(-\frac{V_x k_\star^2}{2}\right).
    \label{eq:false_sep_blur_factor}
\end{equation}
By \eqref{eq:false_sep_gap}, under the uniform promise $\E[Y]=0$, while under the parity-constrained promise
\begin{equation}
    \E[Y] = i^n \exp\!\left(-\frac{V_x k_\star^2}{2}\right).
    \label{eq:false_sep_mean_under_H1}
\end{equation}
Hence the distinguishing bias is
\begin{equation}
    \mu := \left|\E[Y]\right|
    = \exp\!\left(-\frac{V_x k_\star^2}{2}\right).
    \label{eq:false_sep_mu_def}
\end{equation}

A simple decision rule is to collect i.i.d.~samples $Y_1,\dots,Y_N$ from repeated trials and compute the empirical mean $\overline{Y}=N^{-1}\sum_{i=1}^N Y_i$, deciding ``parity-constrained'' if $|\overline{Y}|>\mu/2$ and ``uniform'' otherwise. Hoeffding's inequality applied to $\Re(Y)$ and $\Im(Y)$ and a union bound implies that this test succeeds with probability at least $1-\delta$ whenever
\begin{equation}
N \ge O\left(\frac{1}{\mu^2}\log\!\left(\frac{1}{\delta}\right)\right).
\label{eq:false_sep_N_hoeffding}
\end{equation}

With the following field parameters, we obtain an exponential separation.
\begin{equation}
    B = \frac{1}{\sqrt{n}}
    \quad \Longrightarrow \quad
    k_\star = \frac{\pi}{2}\sqrt{n}.
    \label{eq:false_sep_delta_choice}
\end{equation}

\newpage
\begin{proposition}[Squeezing-enabled exponential speedup]
\label{prop:false_sep_scaling}
Let $B$ be as in \eqref{eq:false_sep_delta_choice}. Consider the parity-discrimination task above.
\begin{enumerate}
    \item If the probe is any coherent state (i.e. any classical light), then distinguishing $H_0$ and $H_1$ with constant success probability $> 2/3$ using homodyne measurements requires
    \begin{equation}
        N = \Omega\left(\frac{(e/2)^{n/2}}{\sqrt{n}}\right)
        \label{eq:false_sep_N_unsqueezed}
    \end{equation}
    samples.
    \item If the probe is an $\xh$-squeezed vacuum with squeezing parameter $r$, so that $V_x=\frac{1}{2}e^{-2r}$, then upon choosing $r=\frac{1}{2}\log n$ there is an algorithm which distinguishes $H_0$ and $H_1$ with success probability at least $1-\delta$ using
    \begin{equation}
        N = O\!\left(\log\!\left(\frac{1}{\delta}\right)\right),
        \label{eq:false_sep_N_squeezed}
    \end{equation}
    samples.
\end{enumerate}
\end{proposition}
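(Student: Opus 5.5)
Part 2 is a direct calculation with the estimator already built above, so I would do it first. For the $\xh$-squeezed vacuum, $V_x=\tfrac12 e^{-2r}$, and choosing $r=\tfrac12\log n$ gives $V_x=1/(2n)$. With $k_\star=\tfrac{\pi}{2}\sqrt n$, the bias in \eqref{eq:false_sep_mu_def} is
\begin{equation}
\mu=\exp\!\left(-\tfrac{1}{2}\cdot\tfrac{1}{2n}\cdot\tfrac{\pi^2 n}{4}\right)=e^{-\pi^2/16},
\end{equation}
which is an absolute constant. Then \eqref{eq:false_sep_N_hoeffding} gives $N=O(\log(1/\delta))$ for the threshold test on $|\overline{Y}|$. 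The only point to check is that the additive-noise model \eqref{eq:false_sep_additive_noise} is exact for a Gaussian squeezed vacuum. It is, since the $\xh$-marginal is $\mathcal N(0,V_x)$ and displacement shifts it.

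For part 1, I would first use that a coherent-state probe followed by homodyne detection at any angle $\theta$ gives outcome $\cos\theta\, X + c_\theta + G$. Here $c_\theta$ is a known offset and $G\sim\mathcal N(0,1/2)$ is vacuum noise independent of $X$. This holds because the displacement $\exp(-iX\ph)$ shifts only $\xh$. Adaptivity does not help: I would allow the per-shot angle to be chosen adaptively, and note that the coefficient $|\cos\theta|\le 1$ can be absorbed as extra Gaussian post-processing noise. The reason is that observing $sX+G$ with $|s|\le1$ is a garbling of $X+G$. So I reduce to $N$ i.i.d.\ samples of $X+G$ with $\Var G=1/2$ under $H_0$ versus $H_1$, and bound $N\cdot\mathrm{TV}$ or the $\chi^2$-divergence.

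The key computation is the divergence between the laws $p_0=\mathrm{Law}(X_0)*\mathcal N(0,\tfrac12)$ and $p_1=\mathrm{Law}(X_1)*\mathcal N(0,\tfrac12)$. By \eqref{eq:false_sep_lambda_plus}, their characteristic functions differ by
\begin{equation}
i^n\sin^n(kB)\,e^{-k^2/4}.
\end{equation}
I would bound $\mathrm{TV}(p_0,p_1)$ through the $L^1$ Fourier bound $\mathrm{TV}\le \tfrac{1}{4\pi}\int |\chi_1-\chi_0|\,dk$, or via the $L^2$/$\chi^2$ route. Either way this reduces to estimating
\begin{equation}
\int_{\R} |\sin(k/\sqrt n)|^n e^{-k^2/4}\,dk .
\end{equation}
Substituting $k=\sqrt n\,u$ turns this into $\sqrt n\int \exp\!\big(n[\log|\sin u| - u^2/4]\big)\,du$, and I would apply Laplace's method. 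Maximizing $\log|\sin u|-u^2/4$ yields a per-mode decay factor. The claimed rate $(2/e)^{n/2}$ corresponds to the maximum value $\tfrac12\log(2/e)$. I would verify this against the stationarity condition $2\cot u=u$, and I am prepared to adjust constants if the true optimum differs. The Gaussian width of the peak supplies the polynomial $\sqrt n$ factor.

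Finally, the product bound $\mathrm{TV}(p_0^{\otimes N},p_1^{\otimes N})\le N\,\mathrm{TV}(p_0,p_1)$, together with Lemma~\ref{lem:prac_tv_pinsker}, gives $N=\Omega\big((e/2)^{n/2}/\sqrt n\big)$. Adaptivity is handled by the chain rule over rounds, since each round's divergence is bounded uniformly in the chosen angle.

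I expect the main obstacle to be the Laplace-method step: getting the exact exponent $(e/2)^{n/2}$ requires locating the maximizer of $|\sin u|e^{-u^2/4}$ and controlling the contributions from the periodic side lobes of $|\sin u|$. I would bound those lobes crudely by the central peak times a convergent Gaussian tail sum. A secondary subtlety is rigorously justifying the garbling argument for adaptive angles and for arbitrary (mixtures of) coherent states. I would handle this by noting that any such probe's $\xh_\theta$ statistics are a convolution containing a vacuum-variance Gaussian.
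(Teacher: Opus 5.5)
Your Part 2 is correct and matches the paper. Your reduction in Part 1 is tighter than the paper's. Observing $sX+G$ with $|s|\le 1$ is obtained from $X+G'$ by rescaling by $s$ and adding independent $\mathcal N(0,(1-s^2)/2)$ noise. So arbitrary or adaptively chosen angles, and mixtures of coherent states, reduce to $N$ i.i.d.\ copies of $X+G$. The paper only asserts that every homodyne basis is a ``weighted combination'' of $\xh$ and $\ph$ measurements.

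The gap is in the step from Fourier data to total variation. The inequality $\mathrm{TV}(p_0,p_1)\le\frac{1}{4\pi}\int|\hat p_1-\hat p_0|\,dk$ (with $\hat p_b$ the characteristic function of $p_b$) is false. The quantity $\int|\hat f|\,dk$ controls $\|f\|_\infty$, not $\|f\|_1$, and the two sides scale differently under $y\mapsto\lambda y$. For instance, $\mathcal N(\sigma,\sigma^2)$ and $\mathcal N(-\sigma,\sigma^2)$ have $\mathrm{TV}\approx 0.68$ for every $\sigma$, while the right-hand side is $O(1/\sigma)$. Your fallback ``$L^2/\chi^2$ route'' has the same hole. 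Plancherel gives $\|p_0-p_1\|_2$, not $\|p_0-p_1\|_1$. A $\chi^2$ bound puts $p_0$ in a denominator that no characteristic-function computation controls in the tails. So it is not true that ``either way'' the problem reduces to your integral.

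The missing idea is a localization step, which is what the paper does. Write $\|p_0-p_1\|_1\le\sqrt{2R}\,\|p_0-p_1\|_2+\Pr_{p_0}(|Y|>R)+\Pr_{p_1}(|Y|>R)$. If you prefer your $L^1$ integral, use $2R\|p_0-p_1\|_\infty\le\frac{R}{\pi}\int|\hat p_0-\hat p_1|\,dk$ on the window instead. Control the tails using $|X|\le Bn=\sqrt n$ almost surely, so that $\Pr(|X+G|>R)\le 2e^{-(R-\sqrt n)^2}$. Taking $R$ a constant multiple of $\sqrt n$ (the paper takes $R\approx n$) costs only the polynomial factor that appears as the $\sqrt n$ in the statement.

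Second, your expectation about the exponent is mistaken, though in a harmless direction. The value $\frac12\log(2/e)$ is the maximum of $\log|u|-u^2/4$, attained at $u=\sqrt2$, which does not satisfy $2\cot u=u$. It is not the maximum of $\log|\sin u|-u^2/4$, which is attained at $u\approx1.08$ with value $\approx-0.42$. The paper's $(e/2)^{n/2}$ comes from the crude bound $|\sin x|\le|x|$, followed by an exact Gaussian moment integral and Stirling, not from a Laplace optimum.

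So you do not need sharp Laplace asymptotics or side-lobe bookkeeping; any upper bound on the integral suffices. Either use $|\sin x|\le|x|$ as in the paper, or use $e^{n\phi(u)}\le e^{(n-1)\phi^\ast}e^{\phi(u)}$ with $\phi^\ast=\max_u\phi$ and $\int e^{\phi}<\infty$. Carried out correctly, your sharper route gives $N=\Omega(c^n/\mathrm{poly}(n))$ with $c=e^{-\phi^\ast}\approx1.5>\sqrt{e/2}$, which implies the stated bound.
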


\begin{proof}
The upper bound follows by computing \eqref{eq:false_sep_mu_def} at $k_*$ and substituting into \eqref{eq:false_sep_N_hoeffding}. Then we turn to the lower bound against squeezing-free strategies. Any homodyne basis can be written as a weighted combination of $\xh$ and $\ph$-axis measurements, and since the channel acts only to displace the $\hat{x}$ quadrature, any $\ph$-axis measurement contains no information about the hidden promise. It is then sufficient to consider only homodyne measurements along the $\hat{x}$ axis.  Let $q_b(\zeta)$ denote the outcome distribution of such homodyne measurements for $b=0, 1$ under $H_0$ and $H_1$.

Let $\chi_b(k) = \mathbb{E}[e^{ikX}]$ denote the characteristic function of the displacement distribution. Then $q_b$ has characteristic function
\begin{equation}
    \hat{q}_b(k) = \chi_b(k)e^{-V_xk^2/2}
\end{equation}
Our goal is to bound the KL divergence between $q_0, q_1$. Using this characteristic-function representation,
\begin{equation}
    \hat{\Delta}(k)\coloneqq\hat{q}_0(k) - \hat{q_1}(k) = i^n\sin^n\left(\frac{k}{\sqrt{n}}\right)e^{-V_xk^2/2}
\end{equation}
By Plancherel's theorem, the $L^2$ norm of $\Delta$ is given by
\begin{equation}
    \|\Delta\|_2^2 = \int dk\  |\hat \Delta(k)|^2 = 2\pi\int dy \ |\Delta(y)|^2
\end{equation}
We can exactly evaluate,
\begin{equation}
    \|\Delta\|_2^2 = 2\pi\int dk \ \sin^{2n}\left(\frac{k}{\sqrt{n}}\right)e^{-V_xk^2}
\end{equation}
Using $|\sin(x)| \leq x$, 
\begin{equation}
    \|\Delta\|_2^2 \leq \frac{2\pi}{n^n} \int dk \ |k|^{2n}e^{-V_xk^2}
\end{equation}
Evaluating the Gaussian moment integral exactly and simplifying gives us
\begin{equation}
    \|\Delta\|_2^2 \leq \frac{2\pi^2}{\sqrt{\pi}}\frac{(2n)!}{4^nn!n^n}V_x^{-(n+(1/2))} \leq CV_x^{-1/2}(eV_x)^{-n}
\end{equation}
where the last inequality following by Stirling and we absorb constants into $C$. Now note that by definition, total variation $\mathrm{TV(q_0, q_1)} = \frac{1}{2}\|\Delta\|_1$. We can fix some cutoff $R > 0$ and write
\begin{equation}
    \|\Delta\|_1 = \int_{|y| \leq R} |\Delta(y)| + \int_{|y| > R} |\Delta(y)| \ 
\end{equation}
By Cauchy-Schwarz, the first term is bounded by $\sqrt{2R}\|\Delta\|_2$. The second term is a tail probability, because 
\begin{equation}
    \int_{|y| > R} |\Delta(y)|  = \int_{|y| > R} |q_0(y) - q_1(y)| \leq \int_{|y| > R} q_0(y) + q_1(y)  = \Pr_{q_0}(|Y| > R) + \Pr_{q_1}(|Y| > R)
\end{equation}
Noting $|Y| \leq Bn +\nu= \sqrt{n}+\nu$, we can select $R = n+\sqrt{n}$ and find
\begin{equation}
    \int_{|y| > R} |\Delta(y)|  \leq 4\exp\left(-\frac{n^2}{2V_x}\right) \ .
\end{equation}
Combining,
\begin{equation}
    \mathrm{TV}(q_0^{\otimes N}, q_1^{\otimes N})\leq \frac{N}{2}\cdot \left[\sqrt{CV_x^{-1/2}}(eV_x)^{-n/2}+ 4\exp\left(-\frac{n^2}{2V_x}\right)\right] \ ,
\end{equation}
so for the TV to be $\Omega(1)$ (i.e capable of achieving arbitrarily high success probability) we require

\begin{equation}
N\geq \Omega\!\left(\min\left\{\sqrt{CV_x^{-1/2}}\frac{(eV_x)^{n/2}}{\sqrt{n}}, \exp\left(\frac{n^2}{2V_x}\right)\right\}\right)    
\end{equation}
When no squeezing is present, we set $V_x = 1/2$, and noting that the first of the two terms is much smaller beyond very small $n$, we obtain that
\begin{equation}
    N \geq \Omega((e/2)^{n/2}/\sqrt{n})
\end{equation}
are necessary to distinguish $H_0$ and $H_1$ with high probability.
\end{proof}

Proposition~\ref{prop:false_sep_scaling} argues that squeezing is a quantum resource that enables superpolynomial quantum advantages, even in the absence of entanglement.

\vspace{-1em}
\subsubsection{Energy as a resource}
\vspace{-1em}

The canonical resource normalization in bosonic sensing fixes the probe energy, typically quantified by the mean photon number $\bar{n}=\E[\hat{a}^\dagger \hat{a}]$ (or equivalently total energy across modes), and compares strategies at equal $\bar{n}$ \cite{Caves1981,Giovannetti2011}. For a single-mode squeezed vacuum, the mean photon number is
\begin{equation}
    \bar{n}(r) = \sinh^2(r),
    \label{eq:false_sep_photon_number}
\end{equation}
so the choice $r=\frac{1}{2}\log n$ used in Proposition~\ref{prop:false_sep_scaling} corresponds to energy scaling as $\bar{n}=\Theta(n)$ \cite{Weedbrook2012}. In other words, the apparent speedup is obtained by allowing the squeezed strategy to use a probe whose energy grows with the instance size. 

To make this precise, fix an energy budget $\bar{n}\le \bar{n}_0$ independent of $n$. Then $r$ is bounded by a constant (since $\sinh^2(r)\le \bar{n}_0$ implies $r\le \operatorname{arcsinh}(\sqrt{\bar{n}_0})$), and hence $V_x=(1/2)e^{-2r}$ is bounded below by a constant depending only on $\bar{n}_0$. Notice that in the proof of Proposition \ref{prop:false_sep_scaling}, we kept the probe variance $V_x$ fully explicit until the final step. If the energy budget is entirely spent on squeezing, we see that for $V_x < 1/e$ (due to the lower bound being somewhat loose, we see only a constant requirement for squeezing), our sample complexity lower bound essentially becomes $\Omega(1)$. Our lower bound, more precisely, is not $N\gtrsim \Omega(\exp(n))$, but $N \gtrsim \Omega(\exp(nc(\bar{n}_0))$, where for $n_0 \sim \mathrm{poly}(n)$, we expect an algorithm requiring only $O(1)$ samples. Hence, if a probe energy budget is fixed and can be distributed between squeezing and displacement (which encompasses all single-mode Gaussian state-preparation strategies), this lower bound loses meaning. 

Hence, a squeezing-enabled quantum advantage assumes that squeezed probes are treated as a separate resource from displaced coherent probes, departing from the canonical bounded-energy setting considered in bosonic metrology.

\vspace{-1em}
\subsubsection{Effective Hilbert space dimension}
\vspace{-1em}

The preceding discussion can be rephrased in a  transparent way by considering the effective Hilbert space dimension. A single bosonic mode with energy budget $\bar{n}$ effectively occupies a subspace of the Fock basis of dimension on the order of $\bar{n}+1$. In a qubit model, enlarging the accessible Hilbert space by a factor of $\bar{n}$ corresponds to supplying an additional $\log_2(\bar{n})$ qubits. Since $\bar{n}(r)$ grows exponentially in the squeezing parameter $r$ (Eq.~\eqref{eq:false_sep_photon_number}), allowing $r$ to scale as $\frac{1}{2}\log n$ is equivalent to granting the squeezed learner a Hilbert-space enlargement by a factor polynomial in $n$ relative to a fixed-energy baseline. In qubit language, this is analogous to a learner with a an increased number of qubits as the instance size grows, thereby changing the resource budget.

This is why the standard metrological benchmark fixes energy when quoting the standard quantum limit and Heisenberg scaling, and it is likewise the conventional benchmark for bosonic learning experiments \cite{Caves1981,Giovannetti2011}. Under a fixed-energy comparison where squeezing is treated on equal footing with other energy-non-conserving operations, squeezing alone cannot generate a meaningful quantum advantage in the sense relevant to this work. The practical advantages established in the main text instead arise from entanglement as a quantum resource compatible with fixed-energy operation.

\end{document}